\pgfplotsset{width=5.5cm,compat=1.9}
\newtheorem{theorem}{Theorem}[section]
\newtheorem{claim}[theorem]{Claim}
\newcommand{\code}[1]{{\texttt{#1}}}
\newcommand{\barwidth}{3.5pt}
\newcommand{\ynote}[1]{}
\newcommand{\ignore}[1]{}
\begin{document}
\newcommand{\TITLE}{\text{Efficient Lock-Free Durable Sets}}
%Lock Free Durable Sets: Reducing Flushes and Improving Efficiency
\newcommand{\SOFT}{\textsc{soft}}
\newcommand{\maxImprovement}{3.3x}
\definecolor{LogFree}{RGB}{1, 66, 37}
\definecolor{LinkFree}{RGB}{191, 0, 255}
\definecolor{SOFT}{RGB}{255, 88, 0}
%% Title information
\title[\TITLE]{\TITLE}         %% [Short Title] is optional;
                                        %% when present, will be used in
                                        %% header instead of Full Title.
\titlenote{This work was supported by the Israel Science Foundation grant No. 274/14}             %% \titlenote is optional;
                                        %% can be repeated if necessary;
                                        %% contents suppressed with 'anonymous'

%% Author information
%% Contents and number of authors suppressed with 'anonymous'.
%% Each author should be introduced by \author, followed by
%% \authornote (optional), \orcid (optional), \affiliation, and
%% \email.
%% An author may have multiple affiliations and/or emails; repeat the
%% appropriate command.
%% Many elements are not rendered, but should be provided for metadata
%% extraction tools.

%% Author with single affiliation.
\author{Yoav Zuriel}
\affiliation{
  \department{CS Department}              %% \department is recommended
  \institution{Technion}            %% \institution is required
  \country{Israel}                    %% \country is recommended
}
\email{yoavzuriel@cs.technion.ac.il}          %% \email is recommended

%% Author with two affiliations and emails.
\author{Michal Friedman}
\affiliation{
  \department{CS Department}             %% \department is recommended
  \institution{Technion}           %% \institution is required
  \country{Israel}                   %% \country is recommended
}
\email{michal.f@cs.technion.ac.il}         %% \email is recommended

\author{Gali Sheffi}
\affiliation{
  \department{CS Department}             %% \department is recommended
  \institution{Technion}           %% \institution is required
  \country{Israel}                   %% \country is recommended
}
\email{galish@cs.technion.ac.il}         %% \email is recommended

\author{Nachshon Cohen}
\affiliation{
  \institution{Amazon}           %% \institution is required
  \country{Israel}                   %% \country is recommended
}
\email{nachshonc@gmail.com}         %% \email is
\authornote{Work done in cooperation with the Technion (external and not related to the author's work at Amazon, but done during employment time).}          %% \authornote is optional;
                                        %% can be repeated if necessary

\author{Erez Petrank}
\affiliation{
  \department{CS Department}             %% \department is recommended
  \institution{Technion}           %% \institution is required
  \country{Israel}                   %% \country is recommended
}
\email{erez@cs.technion.ac.il}         %% \email is recommended

%% Abstract
\begin{abstract}
Non-volatile memory is expected to co-exist or replace DRAM in upcoming architectures.
Durable concurrent data structures for non-volatile memories are essential building blocks for constructing adequate software for use with these architectures.
In this paper, we propose a new approach for durable concurrent sets and use this approach to build the most efficient durable hash tables available today.
Evaluation shows a performance improvement factor of up to \maxImprovement{} over existing technology.
\end{abstract}

%% 2012 ACM Computing Classification System (CSS) concepts
%% Generate at 'http://dl.acm.org/ccs/ccs.cfm'.
\begin{CCSXML}
<ccs2012>
<concept>
<concept_id>10010583.10010600.10010607.10010610</concept_id>
<concept_desc>Hardware~Non-volatile memory</concept_desc>
<concept_significance>500</concept_significance>
</concept>
<concept>
<concept_id>10011007.10011006.10011008.10011024.10011034</concept_id>
<concept_desc>Software and its engineering~Concurrent programming structures</concept_desc>
<concept_significance>500</concept_significance>
</concept>
</ccs2012>
\end{CCSXML}

\ccsdesc[500]{Hardware~Non-volatile memory}
\ccsdesc[500]{Software and its engineering~Concurrent programming structures}
%% End of generated code

%% Keywords
%% comma separated list
\keywords{Concurrent Data Structures, Non-Volatile Memory, Lock Freedom, Hash Maps, Durable Linearizability, Durable Sets}  %% \keywords are mandatory in final camera-ready submission

%% \maketitle
%% Note: \maketitle command must come after title commands, author
%% commands, abstract environment, Computing Classification System
%% environment and commands, and keywords command.
\maketitle

\section{Introduction}\label{sec.intro}
An up-and-coming innovative technological advancement is non-volatile RAM (NVRAM).
This new memory architecture combines the advantages of DRAM and SSD.
The latencies of NVRAM are expected to come close to DRAM, and it can be accessed at the byte level using standard \code{store} and \code{load} operations, in contrast to SSD, which is much slower and can be accessed only at a block level.
Unlike DRAM, the storage of NVRAM is persistent, meaning that after a power failure and a reset, all data written to the NVRAM is saved~\citep{Zhang:7208275}.
That data, in turn, can be used to reconstruct a state similar to the one before the crash, allowing continued computation.

Nevertheless, it is expected that caches and registers will remain volatile~\citep{10.1007/978-3-662-53426-7_23}.
Therefore, the state of data structures underlying standard algorithms might not be complete in the NVRAM view, and after a crash this view might not be consistent because of missed writes that were in the caches but did not reach the memory.
Moreover, for better performance, the processor may change the order in which writes reach the NVRAM, making it difficult for the NVRAM to even reflect a consistent prefix of the computation. 
In simpler words, the order in which values are written to the memory may be different from the program order.
Thus, the implementations and the correctness conditions for programs become more involved. 

Harnessing durable storage requires the development of new algorithms that can ensure a consistent state of the program in memory when a crash occurs and the development of corresponding recovery mechanisms.
These algorithms need to write back cache lines explicitly to the NVRAM, to ensure that important stores persist in an adequate order. 
The latter can be obtained using a \code{FLUSH} instruction that explicitly writes back cache lines to the DRAM.
Flushes typically need to be accompanied by a memory fence in order to guarantee that the write back is executed before continuing the execution. This combination of instructions is denoted \code{psync}.
The cost of flushes and memory fences is high, hence their use should be minimized to improve performance.

When dealing with concurrent data structures, \emph{linearizability} is often used as the correctness definition \citep{Herlihy:1990:LCC:78969.78972}.
An execution is \emph{linearizable} if every operation seems to take effect instantaneously at a point between its invocation and response. 
Various definitions of correctness for durable algorithms have been proposed.
These definitions extend linearizability to the setting that includes crashes, recoveries, and flush events.
In this work, we adopt the definition of \citet{10.1007/978-3-662-53426-7_23} denoted {\em durable linearizability}. 
Executions in this case also include crashes alongside invocations and responses of operations.
Intuitively, an execution is \emph{durable linearizable} if all operations that survive the crashes are linearizable.

This work is about implementing efficient set data structures for NVRAM.
Sets (most notably hash maps) are widely used, e.g., for key-value storage \citep{nishtala2013scaling, raju2017pebblesdb, Debnath:2010:FHT:1920841.1921015}.
It is, therefore, expected that durable sets would be of high importance when NVRAMs reach mass production. 
The durable sets proposed in this paper are the most efficient available today and can yield better throughput for systems that require fault-tolerance.
Our proposed data structures are all lock-free, which make them particularly adequate for the setting. First, lock-free data structures are naturally efficient and scalable \citep{herlihy2011art}.
Second, the use of locks in the face of crashes requires costly logging to undo instructions executed in a critical section that did not complete before the crash.
Nesting of locks may complicate this task substantially \citep{Chakrabarti:2014:ALL:2714064.2660224}.

State-of-the-art constructions of durable lock-free sets, denoted \emph{Log-Free Data Structures}, were recently presented by~\citet{David:215967}.
They proposed two clever techniques to optimize durable structures and built four implementations of sets.
Their techniques were aimed at reducing the number of required explicit write backs (\code{psync} operations) to the non-volatile memory.

In this paper, we present a new idea with two algorithms for durable lock-free sets, 
which reduce the required flushes substantially.
Whereas previous work attempted to reduce flushes that were not absolutely necessary for recovery, we propose to completely avoid persisting any pointer in the data structure.
In a crash-free execution, we can use the pointers to access data quickly, but when a crash occurs, we do not need to access a specific key fast.
We only need a way to find all nodes to be able to decide which belong to the set and which do not. 
This idea is applicable to a set because for a set we only care if a node (which represents a key) belongs to the data structure or not.
Thus, we only persist the nodes that represent set members by flushing their content to the NVRAM, but we do not worry about persisting pointers that link these nodes -{}- hence the name {\em link-free}. 
The persistent information on the nodes allows determining (after a crash) whether a node belongs to the set or not.
We also allow access to all potential data structure nodes after a crash so that during recovery we can find all the members of the set and reconstruct the set data structure.
We do that by keeping all potential set nodes in special designated areas, which are accessible after a crash. 

In volatile memory, we still use the original pointers of the data structure to allow fast access to the set nodes, e.g., by keeping a hash map (in the volatile memory) that allows fast access to members of the set.
Not persisting pointers significantly reduces the number of flushes (and associated fences), 
thereby, drastically improving the performance of the obtained durable data structure.
To recover from a crash, the recovery algorithm traverses all potential set nodes to determine which belong to the set.
The recovery procedure reconstructs the full set data structure in the volatile space, enabling further efficient computation.

The first algorithm that we propose, called {\em link-free}, implements the idea outlined in the above discussion in a straightforward manner.
The second algorithm, called \SOFT{}, attempts to further reduce the number of fences to the minimum theoretical bound.
This achievement comes at the expense of algorithmic complication.
Without flushes, the first (link-free) algorithm would probably be more performant, as it executes fewer instructions.
Nevertheless, in the presence of flushes and fences, the second (\SOFT{}) algorithm often outperforms link-free.
Interestingly, \SOFT{} executes at most one fence per thread per update operation.
It has been shown in \citep{Cohen:3210377.3210400} that there are no durable data structures that can execute fewer fences in the worst case.
Thus, \SOFT{} matches the theoretical lower bound, and is also efficient in practice.

On top of the innovative proposal to avoid persisting pointers (and its involved implementation), 
we also adopt many clever techniques from previous work. 
Among them, we employ the link-and-persist technique from~\citet{David:215967} that uses a flag to signify that an address has already been flushed so that further redundant \code{psync} operations can be avoided.  
Another innovative technique follows an observation in~\citet{Cohen:3152284.3133891} that flushes can be elided when writing several times to the same cache line. 
In such case, it is sufficient to use fences (or, on a TSO platform, only compiler fences) to ensure the order of writes to cache and the same order is guaranteed also when writing to the NVRAM.
Each write back of this cache line to the memory always reflects a prefix of the writes as executed on the cache line. 

Both schemes are applicable to linked lists, hash tables, skip lists and binary search trees and both guarantee lock-freedom and maintain a consistent state upon a failure.
We implemented a basic durable lock-free linked list and a durable lock-free hash table based on these two schemes and evaluated them against the durable lock-free linked list and hash map of \citet{David:215967}.
The code for these implementations is publicly available in GitHub at \href{https://github.com/yoavz1997/Efficient-Lock-Free-Durable-Sets}{https://github.com/yoavz1997/Efficient-Lock-Free-Durable-Sets}.
Our algorithms outperform previous state-of-the-art durable hash maps by a factor of up to \maxImprovement{}.

The basic assumption in this work (as well as previous work mentioned) is that crashes are infrequent, as is the case for servers, desktops, laptops, smartphones, etc.
Therefore, efficiency is due to low overhead on data structures operation.
The algorithms proposed here do not fit a scenario where crashes are frequent.
Substantial work on dealing with scenarios in which crashes are frequent has been done.
The research focuses on energy harvesting devices in which power failures are an integral part of the execution, e.g.,~\citep{van2016intermittent, maeng2017alpaca, colin2016chain, lucia2017intermittent, ruppel2019transactional,  maeng2018adaptive, jayakumar2015quickrecall, yildirim2018ink}. Some of these devices also have a non-volatile memory (FRAM) and volatile registers. To deal with the
frequent crashes, programs are executed by using checkpoints (enforced by the programmer, by the compiler, by run time, or by special hardware), and thus achieve persistent execution.
Currently, those approaches do not deal with concurrency or with durable linearizability.

The rest of the paper is organized as follows.
In Section~\ref{sec.set_overview} we provide an overview of the set algorithms.
Sections~\ref{sec.lf_details} and \ref{sec.soft_details} provide the details of the \emph{link-free} and \emph{\SOFT{}} algorithms, respectively.
In Section~\ref{sec.mm} we discuss the memory management scheme used.
The evaluation is laid out in Section~\ref{sec.results}.
Finally, we examine related work in Section~\ref{sec.related_work}, and conclude in Section~\ref{sec.conclusion}. 
Formal correctness proofs for the link-free list and the \SOFT{} list are laid out in Appendices 
%B and C
~\ref{chap:link_free_correctness_gali} and~\ref{chap:soft_correctness_gali}.
%correspondingly, in the full version of the paper (can be found in at \href{https://arxiv.org/abs/1909.02852}{https://arxiv.org/abs/1909.02852}).

\section{Overview of The Proposed Data Structures}\label{sec.set_overview}
A \emph{set} is an abstract data structure that maintains a collection of unique keys.
It supports three basic operations: \emph{insert}, \emph{remove}, and \emph{contains}.
The \emph{insert} operation adds a key to the set if the key is not already in the set. The \emph{remove} operation deletes the given key from the set (if the key belongs to the set) and the \emph{contains} operation checks whether a given key is in the set.
A key in a set is usually associated with some data.
In our implementation we assume this data comprise one word.
Our scheme can easily be extended to support other forms of data or no data at all.

A typical implementation of a lock-free set relies on a lock-free linked graph, such as a linked list, a skip list, a hash table, or a binary search tree (e.g., \citep{Harris:10.1007/3-540-45414-4_21, shalev2006split, natarajan2014fast, herlihy2011art, Michael:564870.564881}). 
Each node typically represents a single key and consists of a key, a value, and a \emph{next} pointer(s) to one (or more) additional nodes in the set. 
The structure of the linking pointers determines the set complexity, from a simple linked list (i.e., a single \emph{next} pointer) to skip lists or binary search trees. 

One way to transform a lock-free set into a durable one\footnote{When saying an algorithm is durable we mean the algorithm is durable linearizable~\citep{10.1007/978-3-662-53426-7_23}.} is to ensure that the entire structure is kept consistent in the NVRAM \citep{10.1007/978-3-662-53426-7_23}.
Using this method, each modification to the set has to be written immediately to the NVRAM. 
When reading from the set, readers are also required to flush the read content, to avoid acting according to values that would not survive a crash. 
Upon recovery, the content of the data structure in the non-volatile memory matches a consistent prefix of the execution.
The problem with this approach is that the large number of flushes imposes a high performance overhead. 

In this paper, we take a different approach that fits data structures that represent sets. 
Instead of keeping the entire structure in NVRAM, we only ensure that the key and the value of each node are stored durably. 
In addition, we maintain a persistent state in each node, which lets the recovery procedure determine whether the insertion of a specific node has been completed and whether this node has not been removed.
By providing such per-node information, we avoid needing to keep the linking structure (i.e., \emph{next} pointers) of the set.

Both of our set algorithms maintain a basic unit called the \emph{persistent node}, consisting of a key, a value and a Boolean method for determining whether the key in the node is a valid member of the set.
The persistent nodes are allocated in special \emph{durable areas}, which only contain persistent nodes. 
During execution, the system manages a collection of durable areas from which persistent nodes are allocated. 
Following a crash, the recovery procedure iterates over the durable areas and reconstructs the data structure with all its volatile links from all valid nodes. 

A major challenge we face in the design of our algorithms is to ensure that the order in which operations take effect in the non-volatile memory view matches some linearization order of the operations executed in the volatile memory.
This match is required to guarantee the durable linearizability of the algorithms.

One standard techniques employed in the proposed algorithms is the marking of nodes as {\em removed} by setting the least significant bit of one of the node's pointers.
This method was presented by \citet{Harris:10.1007/3-540-45414-4_21} and was used in many subsequent algorithms.
The algorithms we propose extend lock-free algorithms that employ this method.
In the description, we say ``mark a node'' to mean that a node is marked for removal in this manner.

\subsection{Recovery}\label{sub_sec.recovery}
The recovery procedure traverses all areas that contain persistent nodes.
It determines the nodes that currently belong to the set and reconstructs the linked data structure in the volatile 
memory to allow subsequent fast access to the nodes.
Note that this construction does not need to use \code{psync} operations.
Moreover, the reconstructed set may have a different structure from the one prior to the crash (for example, as a randomized skip list).
The sole purpose of the structure is to  make normal operations efficient.

The proposed algorithms require the recovery execution to complete before further operations can be applied.
Before completing recovery of the data structure on the volatile memory, the data structure is not coherent and cannot be used.
This is unlike some previous algorithms, such as \citep{Friedman:3178487.3178490, David:215967}, which allow the recovery and subsequent operations to run concurrently.
This requirement works well in a natural setting where crashes are infrequent.

\subsection{Link-Free Sets}\label{sub_sec.lf_overview}
The first algorithm we propose for implementing a durable lock-free set is called \emph{link-free}, as it does not persist links.
This algorithm keeps two validity bits in each node, allowing making a node as invalid while it is in a transient state before being inserted into the list.
A node is considered valid only if the value of both bits match.
Deciding if a node is in the set depends on whether it is valid and not logically deleted.
We follow \citet{Harris:10.1007/3-540-45414-4_21} and mark a node to make it logically deleted.
The complementary case is when the validity bits do not match, making the node invalid.
An invalid node is not in the set.

To determine whether a node is in the set or not, the {\em contains} operation checks that it is in the volatile set structure, i.e., that it is not marked as deleted.
If this is the case, the contains operation makes sure this node is valid and flushed so that this node will be resurrected if a crash and a recovery occur.
This ensures that the returned value of the contains matches the NVRAM view of the data structure's state. 

To insert a node, the node first needs to be initialized.
To this end, one validity bit is flipped, making the node invalid, and then the key and value are written into it. Intermediate states do not 
affect a future recovery because an invalid node is not recovered. 
Afterwards, the node is inserted into the linked structure and is made valid by flipping the second validity bit.
The insertion completes by executing a \code{psync} on the new node, making the node durably in the set.
If a node with the same key already exists, the previous insert is first helped by making the previously inserted node valid, and ensuring its content is flushed.
At this point, the {\em insert} can return and report failure due to the key already existing in the set.

To remove a node, the removal first helps complete the insertion of the target node. 
The node is made valid and then its \emph{next} pointer can be marked, so that it becomes logically deleted.
The removal is completed by executing a \code{psync} on the marked node.
If the node is already logically deleted, it is flushed using a \code{psync} and the thread returns reporting failure (as it was already deleted). 
During recovery, a marked node is considered not in the set.

Note that \code{psync} may be called multiple times on the same node.
To further reduce the number of \code{psync} operations, we employ an optimization.
Since the proposed algorithm persists a newly inserted node and a newly marked one, we use two flags to indicate whether a \code{psync} was executed
after inserting the node or after deleting it.
The first flag indicates that a new node was written to the NVRAM, and the second flag indicates that a deleted node was written back.
Before actually calling \code{psync} on the node, the insert (or remove, correspondingly) flag is checked to minimize the number of redundant \code{psync} operations.
After calling \code{psync} on a node, the insert (or remove, correspondingly) flag is set.
This way threads coming in a later point see that the flags are set, and they do not execute an unnecessary \code{psync}.
This is an extension of the \emph{link-and-persist} optimization presented by \citet{David:215967}.

\subsection{SOFT: Sets with an Optimal Flushing Technique} \label{sub_sec.soft_overview}
The second algorithm we introduce is \emph{\SOFT{}} (Sets with an Optimal Flushing Technique).
\textsc{Soft} is also a durable lock-free algorithm for a set.
It requires the minimal theoretical number of fences per operation. Specifically, each thread performs at most one fence per update and zero fences per read operation \citep{Cohen:3210377.3210400}.

Each key in the set has two separate representations in memory: the persistent node and the volatile node.
Similarly to our link-free algorithm, \emph{persistent nodes} (PNodes) are stored in the durable areas.
They contain a key and its associated value and three validity bits used for a similar but extended validity scheme.
Each time we wish to write to the NVRAM, we do so via a PNode method.
The PNode methods are described in further detail in Section~\ref{sub_sec.pnode}.

The volatile node takes part in the volatile-linked graph of the set.
In addition to holding the key and value, it has a pointer to a PNode with the same key and value, and pointers to its descendants in the linked structure.
The pointer, which is usually used for marking, is used to keep a state that indicates the condition the node is in.
A node can be in one of the following four states:
\begin{enumerate}
    \item Inserted: The node is in the set, is linked to the structure in the volatile memory and its PNode has been written to the NVRAM. \label{state.i}
    \item Deleted: The node is not in the set.
    In this case, the node can be unlinked from the volatile structure and later freed. \label{state.d}
    \item Intention to Insert: The node is in the middle of being inserted, and its PNode is not yet guaranteed to be written to the NVRAM. \label{state.iti}
    \item Inserted with Intention to Delete: The node is in the middle of being removed, and its removed condition is not yet guaranteed to be written to the NVRAM. \label{state.itd}
\end{enumerate}

The read operation (contains) executes on the volatile structure and does not require any \code{psync} operations, which is in line with the bound.
A contains operation only reads the state of the relevant node and acts accordingly.
A node that is either ``inserted'' or ``inserted with intention to delete'' is considered a part of the set, so contains returns \code{true}.
Nodes with one of the remaining states (``intention to insert'' or ``deleted'') cause the contains operation to return \code{false}.

To add a node to the set, \SOFT{} allocates a volatile node and a PNode, links them together, and fixes the volatile node's state to be ``intention to insert''.
Next, the insert operation adds the node to the volatile structure.
Read operations seeing the node in this state do not consider it as a part of the set. 
Thereafter, the associated PNode is written to the NVRAM and the state of the volatile node is changed to ``inserted''.
When the state is ``inserted'', other operations see the key of this node as a part of the set.

When trying to insert a node into the volatile structure, if there is a node with the same key in the set, the node's state is checked.
If the state of this node is ``inserted'' or ``inserted with intention to delete'', the node might be in the set in the event of a crash, so the thread fails right away.
If the state is ``intention to insert'', then the old node is not yet in the set, so the current thread helps complete the insertion before failing.
Just as many other algorithms, in \SOFT{}, deleted nodes are trimmed when traversing the linked-structure of the set, so there is no need to consider the scenario of seeing a node with the ``deleted'' state.
Either way, only a single \code{psync} operation is executed, following the theoretical bound.

When a remove operation wishes to remove a node, it must ensure the relevant node is in the set.
A remove operation changes the node's state from ``inserted'' to ``inserted with intention to delete''.
In this case, read operations do acknowledge the node because the removal has not finished yet.
Then the removal is written to the NVRAM and, finally, the state changes to ``deleted''.
A node with the state ``intention to insert'' cannot be removed because it is not yet in the set.
In this case, the remove operation can return a failure: there is no node in the set with the given key.
Alternatively, the state of the node the thread wishes to remove may already be ``inserted with intention to delete''.
In this case, before failing, the thread helps completing the removal and persisting it.
Just as before, this operation is done using only a single \code{psync}.

The goal of the states is to make threads help each other complete operations and reduce the number of \code{psync} operations to the minimum. 
States~\ref{state.iti} and \ref{state.itd}, described above, are used as flags to indicate the beginning of an operation so other threads are able to help.

Both insert and remove use the same logic.
They first update the non-volatile memory, and only then execute the operation (reaching a linearization point) on the volatile structure.
In other words, the state a thread sees in \SOFT{} already resides in the NVRAM, unlike link-free in which a node has to be written back to the NVRAM.
This logic follows the upper bound of \citet{Cohen:3210377.3210400}.

\section{The Details of the Link-Free Algorithm}\label{sec.lf_details}
In this section we described the \emph{link-free} linked list.
A link-free hash table is constructed simply as a table of buckets, where each bucket uses the link-free list to hold its items.
Extending this algorithm to a skip list is straightforward.

The link-free linked list uses a node to store an item in the set; see Listing~\ref{algo.lf_node}.
Unlike \SOFT, each key has a single representation in both volatile and non-volatile space.
Each node has two validity bits, two flags to reduce the number of \code{psync} operations, a key, a value and a \emph{next} pointer that also contains a marking bit to indicate a logical deletion \citep{Harris:10.1007/3-540-45414-4_21}.
\begin{lstlisting}[caption={Node Structure}, label=algo.lf_node]
class Node{
    atomic<byte> validityBits;
    atomic<bool> insertFlushFlag, deleteFlushFlag;
    long key;
    long value;
    atomic<Node*> next;
} aligned(cache line size);
\end{lstlisting}

Building on the implementation of \citet{Harris:10.1007/3-540-45414-4_21}, the list is initialized with a head sentinel node with key $-\infty$, and a tail sentinel node with key $\infty$. All the other nodes are inserted between these two, and are sorted in an ascending order.

\subsection{Auxiliary Functions}\label{sub_sec.lf_help}

Before explaining each operation, we first discuss the auxiliary functions.
We use the functions \code{isMarked}, \code{getRef}, and \code{mark} without providing their implementations since these are only bit operations, to clean, mark, or test the least significant bit of a pointer.
In addition, we use \code{FLUSH\_DELETE} and \code{FLUSH\_INSERT} to execute a \code{psync} operation to write the content of a node to the NVRAM when removing or inserting it from or into the list.
Before executing the \code{psync}, the appropriate (insert or delete) flag is used to check whether the latest modification to this node has already been written to the NVRAM so avoid repeated flushing.
Next, \code{flipV1} and \code{makeValid} modify the validity of a node: \code{flipV1} flips the value of the first validity bit, making the node invalid, and \code{makeValid} makes the node valid by equating the value of the second bit to the value of the first bit.

The auxiliary function \code{trim} (Listing~\ref{algo:lf_help}) unlinks \code{curr} from the list.
Just prior to the unlinking CAS (line~\ref{code.trim_unlink_cas}), node \code{curr} is flushed to make the delete mark on it persistent (line~\ref{code.trim_flush}).
The return value signifies whether the unlinking succeeded or not.

The \code{find} function (Listing~\ref{algo:lf_help}) traverses the list in order to locate nodes \code{curr} and \code{pred}.
The key of \code{curr} is greater or equal to the given key, and \code{pred} is the predecessor of \code{curr} in the list.
During its search of the list, \code{find} invokes \code{trim} on any marked (logically deleted) node (line~\ref{code.call_trim}).

\begin{lstlisting}[caption={Auxiliary functions}, label=algo:lf_help]
bool trim(Node *pred, Node *curr){
	FLUSH_DELETE(curr); @\label{code.trim_flush}@
	Node *succ = getRef(curr->next.load());
	return pred->next.compare_exchange_strong(curr, succ); @\label{code.trim_unlink_cas}@
}

Node*, Node* find(long key){//method returns two pointers, pred and curr.
    Node* pred = head, *curr = head->next.load();
    while(true){
        if(!isMarked(curr->next.load())){
            if(curr->key >= key) @\label{code.lf_if_bigger}@
                break;
            pred = curr;
        }
        else
            trim(pred, curr); @\label{code.call_trim}@
        curr = getRef(curr->next.load()); @\label{code.lf_find_read}@
    }
    return pred, curr;
}
\end{lstlisting}

\subsection{The contains Operation}\label{sub_sec.lf_contains}

The contains operation, based on the optimization of \citet{heller2005lazy}, is wait-free unlike the lock-free insert and remove operations.
Given a key, it returns \code{true} if a node with that key is in the list and \code{false} otherwise.

In lines~\ref{code.lf_start_loop} -- \ref{code.lf_end_loop} (Listing~\ref{algo:lf_contains}), the list is traversed in order to find the requested key.
If a node with the given key is not found, then the operation returns \code{false} (line~\ref{code.lf_key_not_found}).
If the node exists but has been marked, it is flushed and the thread returns \code{false} (line~\ref{code.lf_node_deleted}).
The last possible case is that the node exists and has not been marked as removed.
In this case, the node is made valid, is flushed to make its insertion visible after a crash, and \code{true} is returned (line~\ref{code.lf_node_exists}).

\begin{lstlisting}[caption={Link-Free List contains}, label=algo:lf_contains]
bool contains(long key){
    Node* curr = head->next.load();
    while(curr->key < key) @\label{code.lf_start_loop}@
        curr = getRef(curr->next.load()); @\label{code.lf_end_loop}@
    if(curr->key != key) @\label{code.lf_key_not_found}@
        return false; @\label{code.lf_key_not_found_return_false}@
    if(isMarked(curr->next.load())){@\label{code.lf_node_deleted}@
        FLUSH_DELETE(curr); @\label{code.contains_flush_marked}@
        return false; @\label{code.contains_flush_marked_return_false}@
    }
    makeValid(curr);  @\label{code.lf_node_exists}@
    FLUSH_INSERT(curr); @\label{code.lf_contains_line_after_makevalid}@
    return true;
}
\end{lstlisting}

\subsection{The insert Operation}\label{sub_sec.lf_insert}
The insert operation adds a key-value pair to the list.
It returns \code{true} if the insertion succeeds (i.e., the key was not in the list) and \code{false} otherwise.

The insert initiates a call to find, in order to know where to link the newly created node (line~\ref{code.insert_find}).
If the key does not exist, the operation allocates a new node out of a durable area using \code{allocFromArea()}.
The allocation procedure (Section~\ref{sec.mm}) returns a node that is available for use and whose validity state is valid, i.e., both validity bits have the same value.
The insert operation then makes the node invalid by changing the first validity bit (line~\ref{code.make_node_invalid} Listing~\ref{algo:lf_insert}).
The subsequent memory fence keeps the order between the writes and guarantees that the node becomes invalid before its initialization.
This ensures that an incomplete node initialization will not confuse the recovery.
Next, the operation initializes the node's fields, including the \emph{next} pointer of the node (line~\ref{code.init_next}), and then the operation tries to link the new node using a CAS (line~\ref{code.insert_cas}).
Note that the node is still invalid when linking it to the list.
If the CAS fails, the entire operation is restarted and, if successful, the new node is made valid by flipping the second validity bit (line~\ref{code.insert_make_valid}).
It is then flushed to persist the insertion and \code{true} is returned.

If the key exists in the list, the existing node is made valid, then flushed and the operation returns \code{false} (lines~\ref{code.lf_insert_node_exists} -- \ref{code.lf_insert_node_exists_returning_false}).
When finding a node with the same key, the existing node might not be valid yet because the node is linked to the list in an invalid state.
It has to be made valid and persistent before \code{false} can be returned.
Otherwise, a subsequent crash may reflect this failed insert but not reflect the preceding insert that caused this failure.
This ensures durable linearizability.

The order between making the node valid and linking it is important.
Making a node valid first and then linking it may cause inconsistencies.
Consider a scenario with two threads trying to insert a node with a key $k$ but with different values.
Both threads may finish initializing their nodes and make them valid, but then the system crashes.
During recovery, both nodes are found in a valid state (they may appear in the NVRAM even if an explicit flush was not executed), and there is no way to determine which should be in the set and which should not.

\begin{lstlisting}[caption={Link-Free List insert}, label=algo:lf_insert]
bool insert(long key, long value){
    while(true){
        Node *pred, *curr;
        pred, curr = find(key); @\label{code.insert_find}@
        if(curr->key == key){
            makeValid(curr); @\label{code.lf_insert_node_exists}@
            FLUSH_INSERT(curr); @\label{code.lf_insert_line_after_makevalid}@
            return false; @\label{code.lf_insert_node_exists_returning_false}@
        }
        
        Node* newNode = allocFromArea(); @\label{code.link_free_alloc_node}@
        flipV1(newNode); @\label{code.make_node_invalid}@
        atomic_thread_fence(memory_order_release);
        newNode->key = key; @\label{code.init_node_fields}@
        newNode->value = value;
        newNode->next.store(curr, memory_order_relaxed); @\label{code.init_next}@
        if(pred->next.compare_exchange_strong(curr, newNode)){ @\label{code.insert_cas}@
            makeValid(newNode); @\label{code.insert_make_valid}@
            FLUSH_INSERT(newNode); @\label{code.insert_flush_newnode}@
            return true; @\label{code.insert_last_line}@
        }
    }
}
\end{lstlisting}

\subsection{The remove Operation}\label{sub_sec.lf_remove}
Given a key, the remove operation deletes the node with that key from the set.
The return value is \code{true} when the removal was successful, i.e., there was such a node in the list, and now there is not, and \code{false} otherwise.

First, the requested node and its predecessor are found (line~\ref{code.remove_find} Listing~\ref{algo:lf_remove}).
If the node found does not contain the given key, the thread returns \code{false}.
Otherwise, the node is made valid and then its \emph{next} pointer is marked using a CAS (line~\ref{code.logically_remove}).
All along the code (and also here) we maintain the invariant that a marked node is valid.
If the CAS succeeds, the operation finishes by calling \code{trim} to physically remove the node, and otherwise the removal is restarted.

There is no need for a \code{psync} operation between making \code{curr} valid (line~\ref{code.remove_make_valid}) and the logical removal (line~\ref{code.logically_remove}).
Both modify the same cache line and the writes to the cache are ordered by the CAS (with default \code{memory\_order\_seq\_cst}), implying the same order to the NVRAM.
Therefore, the view of the node can be invalid and not marked (prior to line~\ref{code.remove_make_valid}), valid and not marked (between lines~\ref{code.remove_make_valid} and \ref{code.logically_remove}), or valid and marked (after line~\ref{code.logically_remove}). 
The node can never be in an inconsistent state (marked and invalid).

\begin{lstlisting}[caption={Link-Free List remove}, label=algo:lf_remove]
bool remove(long key){
    bool result = false;
    while(!result){
        Node *pred, *curr;
        pred, curr = find(key); @\label{code.remove_find}@
        if(curr->key != key) @\label{code.remove_node_does_not_exist}@
            return false;
        Node* succ = getRef(curr->next.load());
        Node* markedSucc = mark(succ);
        makeValid(curr); @\label{code.remove_make_valid}@
        result = curr->next.compare_exchange_strong(succ, markedSucc); @\label{code.logically_remove}@
    }
    trim(pred, curr);
    return true;
}
\end{lstlisting}

\subsection{Recovery}\label{sub_sec:lf_recovery}
The validity scheme we use helps us determine whether a node was linked to the list before a crash occurred.
This is possible because before initializing a node, it is made invalid so no partial writes are observed.
If a remove operation manages to mark a node, we can know for sure it is removed.

The recovery takes place after a crash and the data it sees is data that was flushed to the NVRAM 
prior to the crash.
The procedure starts by initializing an empty list with a head and a tail.
Afterwards, it scans the durable areas of the threads for nodes.
All nodes that are valid and unmarked are inserted, one by one, to an initially empty link-free list. All other nodes (invalid nodes and valid and marked nodes) are sent to the memory manager for reclamation.
The linking of the valid nodes is done without any \code{psync} operations since all the data in the nodes is already stored in the NVRAM.
\section{The Details of SOFT}\label{sec.soft_details}
The second algorithm we present is \SOFT{}, whose number of \code{psync} operations matches the theoretical lower bound (of \citep{Cohen:3210377.3210400}).
It does so by dividing each update operation into two stages: intention and completion.
In the intention stage, a thread triggers helping mechanisms by other threads, while not changing the logical state of the data structure.
After the intention is declared, the operation becomes durable, in the sense that a subsequent crash will trigger a recovery that will attempt to execute the operation.
In this section, we start by describing the nodes of the \SOFT{} list (Sections~\ref{sub_sec.pnode} and \ref{sub_sec.vnode}), then we discuss the implementation details of each set operation (Sections~\ref{sub_sec.soft_conatins}, \ref{sub_sec.soft_insert} and \ref{sub_sec.soft_remove}), and finally in Section~\ref{sub_sec:soft_recovery}, we explain the recovery scheme.

\subsection{PNode}\label{sub_sec.pnode}
At the core of \SOFT{} there is a \emph{persistent node} (PNode) that captures the state of a given key in the NVRAM.
It has a key, a value and three flags, which are described next. The structure of the persistent node is provided in Listing~\ref{algo.soft_pnode}. 

\begin{lstlisting}[caption={PNode}, label=algo.soft_pnode]
class PNode{
    atomic<bool> validStart, validEnd, deleted;
    atomic<long> key;
    atomic<long> value;
} aligned(cache line size);
\end{lstlisting}

The PNode's three flags indicate the state of the node in the NVRAM. 
The first two flags have a similar meaning to the ones used by the link-free algorithm.
When both flags are equal, the node is in a consistent state, and if the flags are different, then the node is in the middle of being inserted.
It also has an additional flag indicating whether the node was removed. 

Specifically, the PNode starts off with all three flags having the same value, {\em pInitialValidity}. 
In this case, the PNode is considered {\em valid} and {\em removed}. 
The negation of pInitialValidity is returned to the user of the node after calling \code{alloc}, and is denoted {\em pValidity}. 
From this point on, the state of the persistent node progresses by flipping the flags from pInitialValidity to pValidity.

When a key-value pair is inserted into the data structure, the corresponding PNode is made valid, by setting {\em validStart} to pValidity, assigning the key and the value of the node, and finally setting {\em validEnd} to pValidity.
Only then, the persistent node is written to the NVRAM. 
When validStart differs from validEnd, the node is considered {\em invalid}. 
When validStart equals to validEnd (but is still different from deleted), the node is properly inserted and will be considered during recovery. 

When the PNode is removed from the data structure, the {\em deleted} flag is set and the node is flushed. 
Then, the node is {\em valid} and {\em removed}, so it is not considered during recovery. 
Note that this represents exactly the same state as when the node was allocated, making the persistent node ready for future allocations. 
The only difference is the value of all flags, which was swapped from pInitialValidity to pValidity.
Code for allocating, creating and destroying a PNode appears in Listing~\ref{algo:soft_pnode_funcs}.

\begin{lstlisting}[caption={PNode Member Functions}, label=algo:soft_pnode_funcs]
bool PNode::alloc(){
    return !validStart.load();
}

void PNode::create(long key, long value, bool pValidity){ @\label{code.start_pnode_create}@
    validStart.store(pValidity, memory_order_relaxed); @\label{code.flip_v1}@
    atomic_thread_fence(memory_order_release);
    this->key.store(key, memory_order_relaxed);
    this->value.store(value, memory_order_relaxed);
    validEnd.store(pValidity, memory_order_release); @\label{code.flip_v2}@
    psync(this);
} @\label{code.end_pnode_create}@

void PNode::destroy(bool pValidity){ @\label{code.start_pnode_destroy}@
    deleted.store(pValidity, memory_order_release); @\label{code.destroy_pnode}@
    psync(this);
} @\label{code.end_pnode_destroy}@
\end{lstlisting}

\subsection{Volatile Node}\label{sub_sec.vnode}
Volatile nodes have a key, a value, and a \emph{next} pointer (to the next volatile node). 
In addition, they contain a pointer to a persistent node (i.e., a PNode, explained in Section~\ref{sub_sec.pnode}) and {\em pValidity}, a Boolean flag indicating the pValidity of the persistent node. 
The structure of the volatile node appears in Listing~\ref{algo.soft_node}.

\begin{lstlisting}[caption={Volatile Node}, label=algo.soft_node]
class Node{
	long key;
	long value;
	PNode* pptr;
	bool pValidity;
	atomic<Node*> next;
};
\end{lstlisting}

Similar to the lock-free linked list algorithm by \citet{Harris:10.1007/3-540-45414-4_21}, the last bits of the \emph{next} pointers store whether the node is deleted. 
Unlike Harris' algorithm, a volatile node must be in one of four states: ``intention to insert'', ``inserted'', ``inserted with intention to delete'', and ``deleted'', as discussed in the overview (Section~\ref{sub_sec.soft_overview}). 
We assume standard methods for handling pointers with embedded state (lines~\ref{line:stateMethods} -- \ref{code:last_psuedo} Listing~\ref{algo:soft_contains}). 
In addition, we use \code{trim} and \code{find} to physically unlink removed nodes and find the relevant window, respectively (Listing~\ref{algo:soft_find}).
Unlike its link-free counterpart, find also returns the state of both nodes. 
One is in the second address returned and the other is returned explicitly.
Moreover, trim does not execute a \code{psync} before unlinking a node.

\begin{lstlisting}[caption={find and trim}, label=algo:soft_find]
bool trim(Node *pred, Node *curr) {
	state predState = getState(curr);
	Node *currRef = getRef(curr), *succ = getRef(currRef->next.load());
	succ = createRef(succ, predState);
	return pred->next.compare_exchange_strong(curr, succ); @\label{code.soft_trim_cas}@
}

Node*, Node* find(long key, state *currStatePtr){
	Node *pred = head, *curr = pred->next.load();
	Node *currRef = getRef(curr);
	state predState = getState(curr), cState;
	while (true){
		Node *succ = currRef->next.load(); @\label{code.soft_find_read_next}@
		Node *succRef = getRef(succ);
		cState = getState(succ);
		if (cState != DELETED){ @\label{code.soft_trim_deleted}@
			if (currRef->key >= key) @\label{code.soft_find_check_key}@
				break;
			pred = currRef;
			predState = cState;
		}
		else
			trim(pred, curr); @\label{code.soft_find_trim}@
		curr = createRef(succRef, predState);
		currRef = succRef;
	}
	*currStatePtr = cState;
	return pred, curr;
}
\end{lstlisting}

\subsection{The contains Operation} \label{sub_sec.soft_conatins}

The contains operation checks whether a key resides in the set. 
Unlike the insert and remove operations, contains is wait-free and does not use any \code{psync} operations.

A node is in the set only if its state is either ``inserted'' or ``inserted with intention to delete''.
A node with the state ``inserted with intention to delete'' is still in the set because there is a thread trying to remove it, but it has not finished yet.
Only in these two cases the return value is \code{true}; in all the other cases, it is \code{false}.

\begin{lstlisting}[caption={SOFT List contains}, label=algo:soft_contains]
//Pseudo-code for managing state pointers
#def createRef(address, state) {.ptr=address, .state=state} @\label{line:stateMethods}@
#def getRef(sPointer) {sPointer.ptr}
#def getState(sPointer) {sPointer.state}
#def stateCAS(sPointer, oldState, newState) {old=sPointer.load();
    return sPointer.compare_exchange_strong(createRef(old.ptr, oldState),
    createRef(old.ptr, newState));} @\label{code:last_psuedo}@

bool contains(long key){
	Node *curr = head->next.load(); @\label{code.first_line}@
	while (curr->key < key) @\label{code.soft_contains_loop_start}@
		curr = getRef(curr->next.load()); @\label{code.soft_contains_loop_end}@
	state currState = getState(curr->next.load()); @\label{code.read_state}@
	if(curr->key != key)
        return false;
    if(currState == DELETED || currState == INTEND_TO_INSERT)
        return false;
    return true;
	@\ignore{return (curr->key == key) && (currState == INSERTED || currState == INTEND_TO_DELETE); @\label{code.soft_contains_ret}}@
}
\end{lstlisting}

\subsection{The insert Operation} \label{sub_sec.soft_insert}
Insertion in \SOFT{} follows the standard set API, which is getting a key and a value and inserting them into the set. 
The operation returns whether the insertion was successful. 
Code is provided in Listing~\ref{algo:soft_insert} and is discussed below. 

Similar to link-free, persistent nodes are allocated from a durable area using \code{allocFromArea}.
When allocating a new PNode, all its validity bits have the same value, so its state is deleted.
Volatile nodes can be allocated from the main heap.

The first step of insert is a call to \code{find}, which returns the relevant window (line~\ref{code.soft_insert_find}).
As mentioned above, while traversing the list, if a logically removed node, is found along the way the thread tries to complete its physical removal.
Unlike link-free, however, there is no need to execute a \code{psync} a removed node before unlinking it. 
The volatile node becomes removed only after the corresponding PNode becomes removed and is written to the NVRAM.
Therefore, if the state of a volatile node is ``deleted'', it is always safe to unlink it from the list and it does not require further operations.

Discovering a node with the same key already in the list fails the insertion.
Nonetheless, the thread needs to help complete the insertion operation before returning, if the found node's state is ``intention to insert''. 
In the complementary case, when there is no node with the same key, the thread allocates a new PNode and a new volatile node, and attempts to link the latter node to the list (line~\ref{code.soft_link_node}) using a CAS.
The new volatile node is initialized with the state ``intention to insert'', because we want other threads to help with finishing the insertion.
If the CAS failed, the entire operation starts over.
Otherwise, the thread moves to the helping part (lines~\ref{code.soft_init_pnode} -- \ref{code.soft_complete_insert}), where the node is fully inserted.

The helping part starts by initializing the PNode of the appropriate node (line~\ref{code.soft_init_pnode}).
Afterwards, all the threads try to complete the insertion and make it visible by changing the state of the new node to ``inserted'' (line~\ref{code.soft_complete_insert}).
Finally, the thread returns \code{true} or \code{false} depending on the path taken.

\begin{lstlisting}[caption={SOFT List insert}, label=algo:soft_insert]
bool insert(long key, long value){
    Node *pred, *curr, *currRef, *resultNode;
    state predState, currState;
    
    while(true){ @\label{code.soft_insert_start_link_loop}@
        pred, curr = find(key, &currState); @\label{code.soft_insert_find}@
        currRef = getRef(curr);
        predState = getState(curr);
        bool result = false;
        if(currRef->key == key){ @\label{code.soft_node_exists}@
            if(currState != INTEND_TO_INSERT) @\label{code.soft_fail_insert}@
                return false;
            resultNode = currRef;
            break;
        }
        else{
            PNode* newPNode = allocFromArea(); @\label{code.allocate_new_pnode}@
            Node* newNode = new Node(key, value, newPNode, newPNode->alloc()); @\label{code.allocate_new_node}@
            newNode->next.store(createRef(currRef, INTEND_TO_INSERT),
                memory_order_relaxed); @\label{code.soft_init_state}@
            
            if(!pred->next.compare_exchange_strong(curr,
                createRef(newNode, predState))) @\label{code.soft_link_node}@
                continue;
            resultNode = newNode;
            result = true; @\label{code.soft_insert_result_true}@
            break;
        }
    }@\label{code.soft_insert_end_link_loop}@    
    resultNode->pptr->create(resultNode->key, resultNode->value, @\label{code.soft_init_pnode}@
        resultNode->pValidity);
    while(getState(resultNode->next.load()) == INTEND_TO_INSERT) @\label{code.soft_insert_check_state}@
        stateCAS(&resultNode->next, INTEND_TO_INSERT, INSERTED); @\label{code.soft_complete_insert}@
            
    return result;
}
\end{lstlisting}

\subsection{The remove Operation} \label{sub_sec.soft_remove}
The remove operation unlinks a node from the set with the same key as the given key.
It returns \code{true} when the removal succeeds and \code{false} otherwise.

Similar to the previous operation, remove starts by finding the required window.
If the key is not found in the set, the operation returns \code{false}. 
Recall that a volatile node is removed from the set only after its PNode becomes deleted in the NVRAM, so returning \code{false} is safe. 
Also, if the found node has a state of ``intention to insert'', the remove operation returns \code{false}.
This is because such a node is not guaranteed to have a valid PNode in the NVRAM.

In the case when a node with the correct key is found, the thread attempts to mark the node as ``inserted with intention to delete''. 
At this point, all threads attempting to remove the node compete; the successful thread will return \code{true} while other threads will return \code{false} (line~\ref{code.soft_mark_node}). 
This does not, however, change the logical status of the node (the key is still considered as inserted) or modify the NVRAM.
Once the node is made ``inserted with intention to delete'', the thread calls \code{destroy} on the relevant PNode, so that the deletion is written to the NVRAM. 
Finally, the state is changed to be ``deleted'' to indicate the completion and the result is returned.
Note that calling \code{destroy} and marking the node as ``deleted'' happens even if the thread fails in the ``inserted with intention to delete'' competition, in which case it helps the winning thread.
The final step, executed only by the thread that won the ``inserted with intention to delete'' competition, physically disconnects the node from the list by calling \code{trim}.
This latter step does not change the logical representation of the set and is executed only by a single thread to reduce contention.

\begin{lstlisting}[caption={SOFT List remove}, label=algo:soft_remove]
bool remove(long key){
    bool result = false;
    Node *pred, *curr;
    state predState, currState;
    pred, curr = find(key, &currState); @\label{code.soft_remove_find}@
    Node* currRef = getRef(curr);
    predState = getState(curr);
    if(currRef->key != key) @\label{code.soft_fail_remove_1}@
        return false; @\label{code.soft_fail_remove_1_return}@
    if(currState == INTEND_TO_INSERT) @\label{code.soft_fail_remove_2}@
        return false; @\label{code.soft_fail_remove_2_return}@
    
    while(!result && getState(currRef->next.load()) == INSERTED) @\label{code.soft_while1}@
        result = stateCAS(&currRef->next, INSERTED, INTEND_TO_DELETE); @\label{code.soft_mark_node}@
    currRef->pptr->destroy(currRef->pValidity); @\label{code.flush_delete}@
    while(getState(currRef->next.load()) == INTEND_TO_DELETE) @\label{code.soft_while2}@
        stateCAS(&currRef->next, INTEND_TO_DELETE, DELETED); @\label{code.soft_complete_remove}@
    
    if(result)
        trim(pred, curr); @\label{code.soft_remove_trim}@
    return result; @\label{code.soft_return_result}@
}
\end{lstlisting}
\subsection{Recovery}\label{sub_sec:soft_recovery}
In \SOFT{} only the PNodes are allocated from the durable areas.
All the volatile nodes are lost due to the crash.
This means that the intentions are not available to the recovery procedure, so it decides whether a key is a part of the list based on the validity bits kept in the PNode.
A PNode is valid and a part of the set, if the first two flags (\code{validStart} and \code{validEnd}) have the same value, and the last flag (\code{deleted}) has a different value.

In order to reconstruct the \SOFT{} list, a new and empty list is allocated.
Then the recovery iterates over the durable areas to find valid and not deleted PNodes.
If such a PNode $pn$ is found, a new volatile node $n$ is allocated and its fields are initialized using the $pn$'s data.
The value of $n$'s \code{pValidity} is set to the be $pn$'s \code{validStart}, and \code{pptr} points to $pn$.
Finally, $n$ is linked to the list in a sorted manner and its state is set to ``inserted''.
Similar to link-free, no \code{psync} operations are used to link $n$ since the data in $pn$ already persisted in the NVRAM.
Invalid or deleted PNodes are sent to the memory manager for reclamation.
\section{Memory Management}\label{sec.mm}
Both of our algorithms use durable areas in which we keep the nodes with persistent data, which are used by the recovery procedure.
A memory manager allocates new nodes and new areas, keeps record of old ones, and has free-lists for each thread.
Moreover, since this is a lock-free environment, our algorithms are susceptible to the ABA problem \citep{Michael:564870.564881} and to use-after-free.

To maintain the lock-freedom of our algorithms, lock-free memory reclamation schemes can be used (e.g., \citep{cohen2018every, michael2004hazard, Alistarh:2017:FCM:3064176.3064214, Balmau:2016:FRM:2935764.2935790, Dice:2016:FNM:2926697.2926699, cohen2015efficient, Brown:2015:RML:2767386.2767436}).
Some, however, are complicated to incorporate; some require the data structure to be in a normalized form; and others have significant overhead that commonly deteriorates performance.
We, therefore, chose to employ the very simple \emph{Epoch Based Reclamation} scheme (EBR) \citep{fraser2004practical} that is not lock-free but it performs very well and provides progress for the memory management when the threads are not stuck.

In EBR we have a global counter to indicate the current epoch, and each thread is either in an epoch (when executing a data structure operation) or \emph{idle}.
A thread joins the current epoch at the beginning of each operation, and becomes idle at its end.
When an object is freed, it is added to a free-list for the current epoch.
Whenever a thread runs out of memory, it starts the reclamation of the current epoch, denoted $e$.
When all the threads reach either epoch $e$ or an idle state, all the objects in the free-list related to epoch $e-2$ can safely be reclaimed and reused.
We used a variant of EBR that uses clock vectors. In particular, we used \code{ssmem}, an EBR that accompanies the ASCYLIB algorithms \citep{David:2694344.2694359}.

The \code{ssmem} allocator normally serves volatile memory, allocating objects of fixed predetermined size.
We adapted it to our setting.
In \code{ssmem}, each thread has its own personal allocator so the communication between different threads is minimal.
The allocator provides an interface that allows allocating and freeing of objects of a fixed size in specially allocated designated areas.
It initially allocates a big chuck of memory from which it returns objects to the program using a bump pointer.
When the area fills up, nodes get reclaimed, and holes emerge; a \emph{free-list} is then used to allocate objects.
Each thread has it own free-list so freeing nodes or using free ones does not require any form of synchronization.
The free-lists are volatile and are reconstructed during a recovery.
Invalid or deleted nodes a thread encounters during recovery while traversing the durable areas are inserted into the private free-list of the thread.

The memory manager keeps a list of all the areas it allocated so it can free them at the end of the execution.
Throughout its life, the original \code{ssmem} manager does not free areas back to the operating system.
In our implementation, empty areas can be returned to the operating system during the recovery if all the nodes of an area are free.

Both link-free and \SOFT{} use durable areas as a part of their memory allocation scheme.
These are address spaces in the heap memory that are used solely for node allocation and, therefore, \code{ssmem} can be used with small modifications.
When a thread performs an insertion, it allocates a node from these areas, and when a node is removed, it is returned to the proper free-list.
To reduce false sharing and contention, each thread has its own areas.

Using \code{ssmem}, each thread keeps a private list with one node per allocated area pointing to all the areas it allocated throughout the execution, denoted \emph{area list}.
This list has to be persistent so after a crash the areas will not be lost.
We call nodes is this list \emph{area nodes}.
When allocating an additional area, we write its address in a new area node and write the new area node to the NVRAM.
Then, we link it to the beginning of the area list (there is no need for any synchronization since the area list is thread-local), and flush the link to it, making the new area node persistent.
The area list is persistent and its head is kept in a persistent thread-local space, which a recovery procedure can access.
Thus, all the addresses of the different areas can be traced after a crash and all persistent nodes can be traversed.

There is an inherent problem when using durable algorithms without proper memory management.
When inserting a new node, the node is allocated and only afterwards linked to the set.
In the case of deletion, the node is unlinked from the set, and subsequently can be freed.
Since a crash may occur at any time, we might have a persistent memory leak if a new node was not linked or if a deleted node was not freed.

Typically, this problem is solved by using a logging mechanism that records the intention (inserting or removing) along with the relevant addresses.
This way, in case of a crash, the memory leaks may be fixed by reading the records.
This logging mechanism requires more writes to the NVRAM, which take time, resulting in increased operation latency and worse throughput.

The durable areas solve this problem in a simpler manner since all the memory is allocated only from them.
Therefore, when recovering and traversing the different areas, leaks will be identified using the validity scheme.
Removed or invalid nodes can be freed and reused.
\section{Evaluation}\label{sec.results}

We ran the measurements of the link-free and \SOFT{} algorithms and compared them to the state-of-the-art set algorithm proposed by \citet{David:215967}.
We ran the experiments on a machine with 64 cores, with 4 AMD Opteron(TM) 6376 2.3GHz processors (16 cores each).
The machine has 128GB RAM, 16KB L1 per one core, 2MB L2 for every pair of cores and 6MB LLC per 8 cores (half a processor).
The machine's operating system is Ubuntu 16.04.6 LTS (kernel version 5.0.0). 
All the code was written in C++ 11 and compiled using g++ version 8.3.0 with a -O3 optimization flag.

NVRAM is yet to be commercially available, so following previous work \citep{volos2011mnemosyne, Wang:2014:SLT:2732951.2732960, Chakrabarti:2014:ALL:2714064.2660224, Arulraj:2015:LTS:2723372.2749441, schwalb2015nvc, kolli2016high, Cohen:3152284.3133891, David:215967,  Friedman:3178487.3178490, Cohen:2019:FCI:3297858.3304046, Ben-David:2019:DCF:3323165.3323187}, we measured the performance using a DRAM.
NVRAM is expected to be somewhat slower than DRAM \citep{Arulraj:2015:LTS:2723372.2749441, volos2011mnemosyne, Wang:2014:SLT:2732951.2732960}. Nevertheless, we assume that data becomes durable once it reaches the memory controller\footnote{\href{https://software.intel.com/en-us/blogs/2016/09/12/deprecate-pcommit-instruction}{https://software.intel.com/en-us/blogs/2016/09/12/deprecate-pcommit-instruction}}.
Therefore, we do not introduce additional latencies to NVRAM accesses.

Link-free and \SOFT{} use the \code{clflush} instruction to ensure that data is written back to the NVRAM (or to the memory controller). 
This instruction is ordered with respect to store operations \citep{Intel}, so an additional store fence is not required (unlike the \code{clflushopt} instruction, which does require a fence). 
\citet{David:215967} used a simulation of \code{clwb} (an instruction that forces a write back without invalidating the cache line, which is not supported by all systems). To compare apples to apples, we changed the code to execute a \code{clflush} instead (as other measured algorithms).

\subsection{Throughput Measurements}\label{sub_sec.mops}

We compared the algorithms to each other on three different fronts.
Each test consisted of ten iterations, five seconds each and the results shown in the graphs, are the average of these iterations.
In each test, the set was filled with half of the key range, aiming at a 50-50 chance of success for the insert and remove operations.
Error bars represent 99\% confidence intervals. 

First, we measured the scalability of each algorithm, i.e., the outcome of adding more threads to increase the parallelism. 
The workload was fixed to 90\% read operations (a common practice when evaluating sets \citep{herlihy2011art}), and the key range was fixed as well.
When running the lists, the key ranges were 256 and 1024.
We chose to run two tests with the lists so we could have a closer look at the effect of a longer list on the scalability and performance.
We also evaluated the hash set.
For the hash set, we used a larger key range of 1M keys with a load factor of 1.

\begin{figure}
\begin{subfigure}[t]{\textwidth}
    \begin{subfigure}[t]{0.515\linewidth}
\begin{tikzpicture}
    \begin{axis}[
    xlabel={\#Threads},
    ylabel={MOPS},
    xmin=0, xmax=70,
    ymin=0, ymax=25,
    xtick={0,10,20,30,40,50,60,70},
    ytick={0,5,10,15,20,25},
    legend pos=south east,
]
 
\addplot[
    color=LogFree,
    mark=x,
    error bars/.cd, y dir=both,y explicit
    ]
    coordinates {
(1,3.953252)+-(0.0812773425635537,0.0812773425635537)
(2,6.910791)+-(0.0811533760908366,0.0811533760908366)
(4,10.40007)+-(0.0988109163684742,0.0988109163684742)
(8,15.2716444444444)+-(0.0641517218433672,0.0641517218433672)
(16,18.0444333333333)+-(0.110901753502183,0.110901753502183)
(32,14.6330625)+-(0.0693935568241498,0.0693935568241498)
(64,9.17980333333333)+-(0.78495581125241,0.78495581125241)
    };
    % \addlegendentry{Log Free List}
    
\addplot[
    color=LinkFree,
    mark=o,
    error bars/.cd, y dir=both,y explicit
]
coordinates {
(1,5.33942)+-(0.00242181442680082,0.00242181442680082)
(2,10.82969)+-(0.0271304265285596,0.0271304265285596)
(4,15.25731)+-(0.0922651967322745,0.0922651967322745)
(8,20.20735)+-(0.192814679294223,0.192814679294223)
(16,21.71689)+-(0.0874004579296794,0.0874004579296794)
(32,19.77761)+-(0.0276369996470444,0.0276369996470444)
(64,17.8634)+-(0.0345419113730217,0.0345419113730217)
};
% \addlegendentry{Link Free List}

\addplot[
    color=SOFT,
    mark=square,
    error bars/.cd, y dir=both,y explicit
]
coordinates {
(1,5.563159)+-(0.00175328184050365,0.00175328184050365)
(2,11.13968)+-(0.00497407710737348,0.00497407710737348)
(4,15.94176)+-(0.180510838293969,0.180510838293969)
(8,21.63993)+-(0.184328483512901,0.184328483512901)
(16,23.47221)+-(0.0596702501536291,0.0596702501536291)
(32,20.65547)+-(0.0386901546563054,0.0386901546563054)
(64,17.83098)+-(0.0336816289107587,0.0336816289107587)
};
% \addlegendentry{SOFT List}
 
\end{axis}
\end{tikzpicture}
\end{subfigure}
    \begin{subfigure}[t]{0.45\linewidth}
\begin{tikzpicture}
\begin{axis}[
    ybar,
    legend style={at={(0.5,-0.3)},
      anchor=north,legend columns=-1},
    ylabel={Relative Improvement},
    xlabel={\#Threads},
    symbolic x coords={1,2,4,8,16,32,64},
    xtick=data,
    bar width=\barwidth,
    nodes near coords align={vertical},
    xtick align=inside,
    ]
\addplot[color = LogFree,
        pattern=north east lines
    ] coordinates{
(1,1)
(2,1)
(4,1)
(8,1)
(16,1)
(32,1)
(64,1)
};
\addplot[color=LinkFree] coordinates {
(1,1.35063992884845)
(2,1.56706952937804)
(4,1.46703916415947)
(8,1.32319411138144)
(16,1.20352297014961)
(32,1.35157011732848)
(64,1.9459458281786)
};
\addplot[fill=SOFT] coordinates {
(1,1.40723611851711)
(2,1.6119254655509)
(4,1.5328512211937)
(8,1.41700064316729)
(16,1.30080061625654)
(32,1.41156166045214)
(64,1.94241416210442)
};
\end{axis}
\end{tikzpicture}
\end{subfigure}
\caption{List's Throughput with Range of 256}
\label{graph.list_256_threads}
\end{subfigure}
\begin{subfigure}[t]{\textwidth}
    \begin{subfigure}[t]{0.515\linewidth}
\begin{tikzpicture}
    \begin{axis}[
    xlabel={\#Threads},
    legend style={font=\tiny},
    ylabel={MOPS},
    xmin=0, xmax=70,
    ymin=0, ymax=14,
    xtick={0,10,20,30,40,50,60,70},
    ytick={0,2,4,6,8,10,12,14},
    legend pos= south east,
]
 
\addplot[
    color=LogFree,
    mark=x,
    error bars/.cd, y dir=both,y explicit
    ]
    coordinates {
(1,0.5163163)+-(0.00077824481355104,0.00077824481355104)
(2,1.012096)+-(0.00239379159345398,0.00239379159345398)
(4,1.907014)+-(0.00414535232216318,0.00414535232216318)
(8,3.58103142857143)+-(0.0152606136884973,0.0152606136884973)
(16,6.24988375)+-(0.0584377842612742,0.0584377842612742)
(32,8.226384)+-(0.0226519133341728,0.0226519133341728)
(64,8.330442)+-(0.293620382767267,0.293620382767267)
    };
    
\addplot[
    color=LinkFree,
    mark=o,
    error bars/.cd, y dir=both,y explicit
]
coordinates {
(1,0.7231344)+-(0.000451866412322383,0.000451866412322383)
(2,1.402882)+-(0.00619663679057274,0.00619663679057274)
(4,2.718728)+-(0.0167946665653244,0.0167946665653244)
(8,4.952332)+-(0.0256571467854786,0.0256571467854786)
(16,8.825182)+-(0.0193626492479742,0.0193626492479742)
(32,12.25741)+-(0.0228435905512144,0.0228435905512144)
(64,13.3007)+-(0.029627261369507,0.029627261369507)
};

\addplot[
    color=SOFT,
    mark=square,
    error bars/.cd, y dir=both,y explicit
]
coordinates {
(1,0.5328765)+-(0.00012669844147734,0.00012669844147734)
(2,1.040058)+-(0.00687603013370235,0.00687603013370235)
(4,2.016444)+-(0.0040583730882529,0.0040583730882529)
(8,3.754819)+-(0.00670825579227925,0.00670825579227925)
(16,7.080542)+-(0.00937290456632194,0.00937290456632194)
(32,10.8823)+-(0.0128067163321446,0.0128067163321446)
(64,12.34259)+-(0.0117247022416775,0.0117247022416775)
};
 
\end{axis}
\end{tikzpicture}
\end{subfigure}
    \begin{subfigure}[t]{0.45\linewidth}
\begin{tikzpicture}
\begin{axis}[
    ybar,
    ylabel={Relative Improvement},
    symbolic x coords={1,2,4,8,16,32,64},
    xlabel={\#Threads},
    xtick=data,
    bar width=\barwidth,
    nodes near coords align={vertical},
    xtick align=inside,
    ]
\addplot[color = LogFree,
        pattern=north east lines
    ] coordinates {(1,1)
(2,1)
(4,1)
(8,1)
(16,1)
(32,1)
(64,1)
};
\addplot[color=LinkFree] coordinates{
(1,1.40056473134782)
(2,1.38611554635133)
(4,1.42564658675815)
(8,1.38293452564744)
(16,1.41205538423015)
(32,1.49001189343945)
(64,1.5966379695099)
};
\addplot[fill = SOFT] coordinates {
(1,1.03207375014114)
(2,1.02762781396231)
(4,1.05738290332425)
(8,1.04853003244875)
(16,1.13290779208493)
(32,1.32285339463852)
(64,1.48162486456301)
};
\end{axis}
\end{tikzpicture}
\end{subfigure}
\caption{List's Throughput with Range of 1024}
\label{graph.list_1024_threads}
\end{subfigure}
%Hash Tables
\begin{subfigure}[t]{\textwidth}
    \begin{subfigure}[t]{0.515\linewidth}
\begin{tikzpicture}
    \begin{axis}[
    xlabel={\#Threads},
    ylabel={MOPS},
    xmin=0, xmax=70,
    ymin=0, ymax=210,
    xtick={0,10,20,30,40,50,60,70},
    ytick={0,30,60,90,120,150,180,210},
    legend style={at={(0.5,-0.3)},
      anchor=north,legend columns=-1},
]
 
\addplot[
    color=LogFree,
    mark=x,
    error bars/.cd, y dir=both,y explicit,
    ]
coordinates {
(1,1.738369)+-(0.0120723927230163,0.0120723927230163)
(2,3.278224)+-(0.0137645438092109,0.0137645438092109)
(4,6.197966)+-(0.0195012544239999,0.0195012544239999)
(8,12.0054833333333)+-(0.0999530437355428,0.0999530437355428)
(16,21.720325)+-(0.303900344283967,0.303900344283967)
(32,32.02774)+-(0.196382985472405,0.196382985472405)
(64,8.131385)+-(2.64298455367621,2.64298455367621)
    };
    \addlegendentry{Log-Free}
    
\addplot[
    color=LinkFree,
    mark=o,
    error bars/.cd, y dir=both,y explicit,
]
coordinates {
(1,4.150558)+-(0.00422193941673585,0.00422193941673585)
(2,7.897277)+-(0.0392266733988156,0.0392266733988156)
(4,15.62303)+-(0.0204914637042053,0.0204914637042053)
(8,29.79038)+-(0.0438285825215023,0.0438285825215023)
(16,55.79954)+-(0.0967565430428931,0.0967565430428931)
(32,104.5926)+-(0.123781576750238,0.123781576750238)
(64,189.5669)+-(0.221247510565083,0.221247510565083)
};
\addlegendentry{Link-Free}

\addplot[
    color=SOFT,
    mark=square,
    error bars/.cd, y dir=both,y explicit,
]
coordinates {
(1,4.014681)+-(0.0044130027794997,0.0044130027794997)
(2,7.700124)+-(0.0127053165039193,0.0127053165039193)
(4,15.35201)+-(0.0381133052565284,0.0381133052565284)
(8,29.77929)+-(0.0450719060020927,0.0450719060020927)
(16,57.10533)+-(0.0900632466405301,0.0900632466405301)
(32,108.6705)+-(0.171297266380518,0.171297266380518)
(64,199.3399)+-(0.586235552467297,0.586235552467297)
};
\addlegendentry{SOFT}
 
\end{axis}
\end{tikzpicture}
\end{subfigure}
    \begin{subfigure}[t]{0.45\linewidth}
\begin{tikzpicture}
\begin{axis}[
    ybar,
    legend style={at={(0.5,-0.3)},
      anchor=north,legend columns=-1},
    ylabel={Relative Improvement},
    xlabel={\#Threads},
    symbolic x coords={1,2,4,8,16,32,64},
    xtick=data,
    ymin=0,
    ymax=25,
    ytick={0,3,6,9,12,15,18,21,24},
    nodes near coords align={vertical},
    bar width=\barwidth,
    xtick align=inside,
    ]
\addplot[color = LogFree,
        pattern=north east lines
    ] coordinates{
(1,1)
(2,1)
(4,1)
(8,1)
(16,1)
(32,1)
(64,1)
};
\addplot[color = LinkFree] coordinates {
(1,2.3876162080663)
(2,2.40901079364924)
(4,2.52067049093202)
(8,2.48139780572488)
(16,2.56900115444866)
(32,3.26568780688241)
(64,23.3129903454332)
};
\addplot[fill = SOFT] coordinates {
(1,2.30945271113325)
(2,2.34887060798774)
(4,2.47694324234757)
(8,2.48047406115817)
(16,2.62911949982332)
(32,3.39301180788904)
(64,24.5148766169601)
};
\legend{Log-Free, Link-Free, SOFT}
\end{axis}
\end{tikzpicture}
\end{subfigure}
\caption{Hash Table's Throughput with Range of 1M}
\label{graph.hash_threads}
\end{subfigure}
\caption{Throughput as a Function of the \#Threads}
    %\caption{Throughput as a Function of Number of Threads}
\label{graph.multi_threads}
\end{figure}

The results for the scalability test are displayed in Figure~\ref{graph.multi_threads}. On the left, the graphs show the throughput as a function of the number of threads (in millions of operations per second).
On the right, the improvement relative to log-free set is depicted (the y axis is the improvement factor).

In Figures~\ref{graph.list_256_threads} and \ref{graph.list_1024_threads}, we can see the results for the shorter and longer lists.
When the key range is 256 keys, all algorithms experience a peak with 16 threads and a slow decrease towards 64 threads.
For a single thread, \SOFT{} and link-free outperform log-free by 40\% and 35\%, respectively, for 16 threads by 30\% and 20\%, respectively, and for 64 threads, both by 94\%.
The 16-thread peak can be explained by the nature of a list.
Running many threads on a short list implies contention that hurts performance.
Also, 16 threads can use a single processor but 17 cannot.

\textsc{Soft} achieves the best performance on the short list by a noticeable margin.
In this case, the amount of \code{psync} operations dominates performance as the traversal times are short.
Unlike link-free or log-free, \SOFT{} uses the optimal number of fences per update.
For instance, both link-free and log-free executed a \code{psync} before trimming a logically deleted node (\SOFT{} does not).
Both of our algorithms perform much better than log-free and we can relate this result to the elimination of pointer flushing, which is the main idea behind both algorithms.

For a longer list (Figure~\ref{graph.list_1024_threads}), all the compared lists scale with the additional threads.
When the number of available keys is bigger, most of the time is spent on traversing the list; hence, more threads imply more concurrent traversals and more operations.

As can be seen in the graph, link-free outperforms both \SOFT{} and log-free by a considerable difference.
In contrast to Figure~\ref{graph.list_256_threads}, here the additional overhead of \SOFT{} (using intermediate states and more CAS-es instead of direct marking) degrades its performance.
When the range grows, the additional \code{psync} operations are masked by the traversal times.
Since \SOFT{} uses two additional CAS-es in each update, link-free wins.

Moreover, with higher contention, a node might be flushed more than once in link-free.
As mentioned, link-free prevents redundant \code{psync} operations using a flag after the first necessary \code{psync}.
In a case where multiple threads operate on the same key, it might be flushed more than needed.
So, when contention is high, link-free may perform more \code{psync} operations.
For cases of lower contention, the optimization is more effective.
In effect, link-free does a single \code{psync} per update and zero per read (due to the low contention, all flags are set before other threads help).
In this case, link-free and \SOFT{} execute the same amount of \code{psync} operations, but \SOFT{} is more complicated and uses more CAS-es.
Because of this, for boarder ranges, link-free performs better.

The hash set is evaluated in Figure~\ref{graph.hash_threads}.
Link-free and \SOFT{} are highly scalable (reaching 25.2x and 27x with 32 threads, respectively, and 45.6x and 49.6x with 64 threads, respectively).
Log-free is a lot less scalable (18.4x with 32 threads and 4.6x with 64 threads).
For 32 threads, \SOFT{} and link-free perform better by factors of 3.4x and 3.26x, respectively.
Thus, we obtain a dramatic improvement of the state-of-the-art.

As can be seen, the result of the log-free hash table in the test with 64 threads is oddly low.
We used the authors' implementation and we do not know why this happened.
To make further comparisons fair enough to previous work, we fixed the number of threads at 32 in subsequent hash table evaluations.
The number of threads in the lists' evaluation remained 64. 

\begin{figure}
\begin{subfigure}[t]{\textwidth}
\begin{subfigure}[t]{0.515\linewidth}
    \begin{tikzpicture}
\begin{semilogxaxis}[
    xlabel={Key Range},
    ylabel={MOPS},
    xmin=10, xmax=20000,
    ymin=0, ymax=20,
    xtick={10,100,1000,10000},
    ytick={0,5,10,15,20},
    legend pos=south east,
]
 
\addplot[
    color=LogFree,
    mark=x,
    error bars/.cd, y dir=both,y explicit,
    ]
    coordinates {
(16,6.313657)+-(0.534707471733553,0.534707471733553)
(64,8.46842285714286)+-(0.974069558673894,0.974069558673894)
(256,9.051633)+-(0.588323674757565,0.588323674757565)
(1024,8.35603857142857)+-(0.482625345448654,0.482625345448654)
(4096,2.85248666666667)+-(0.0205296615014676,0.0205296615014676)
(16384,0.754832375)+-(0.00433257050092618,0.00433257050092618)

    };
    %\addlegendentry{Log Free List}
    
\addplot[
    color=LinkFree,
    mark=o,
    error bars/.cd, y dir=both,y explicit,
]
coordinates {
(16,14.44974)+-(0.0315095610624212,0.0315095610624212)
(64,17.78343)+-(0.0437245361269626,0.0437245361269626)
(256,17.83441)+-(0.0274244295639048,0.0274244295639048)
(1024,13.29214)+-(0.0544771103561947,0.0544771103561947)
(4096,5.900796)+-(0.0531470884332558,0.0531470884332558)
(16384,0.9053665)+-(0.034964199225378,0.034964199225378)
};
%\addlegendentry{Link Free List}

\addplot[
    color=SOFT,
    mark=square,
    error bars/.cd, y dir=both,y explicit,
]
coordinates {
(16,15.57707)+-(0.0327265969189625,0.0327265969189625)
(64,19.32487)+-(0.0301728771347499,0.0301728771347499)
(256,17.82062)+-(0.0199573773886201,0.0199573773886201)
(1024,12.33977)+-(0.012534243959686,0.012534243959686)
(4096,5.088652)+-(0.0104969965240812,0.0104969965240812)
(16384,0.7419869)+-(0.0222281438924007,0.0222281438924007)
};
%\addlegendentry{SOFT List}
 
\end{semilogxaxis}
\end{tikzpicture}
    \end{subfigure}
\begin{subfigure}[t]{0.45\linewidth}
\begin{tikzpicture}
\begin{axis}[
    ybar,
    legend style={at={(0.5,-0.3)},
      anchor=north,legend columns=-1},
    ylabel={Relative Improvement},
    xlabel={Key Range},
    symbolic x coords={16,64,256,1K,4K,16K},
    xtick=data,
    ymin=0.5,
    nodes near coords align={vertical},
    bar width=\barwidth,
    xtick align=inside,
    ]
\addplot[color = LogFree,
        pattern=north east lines
    ] coordinates{
(16,1)
(64,1)
(256,1)
(1K,1)
(4K,1)
(16K,1)
};
\addplot[color = LinkFree] coordinates {
(16,2.28864824300718)
(64,2.09996953387846)
(256,1.97029751427173)
(1K,1.59072267155985)
(4K,2.06864980963978)
(16K,1.19942722382569)
};
\addplot[fill = SOFT] coordinates {
(16,2.46720244701288)
(64,2.28199162063572)
(256,1.96877403226578)
(1K,1.47674880800489)
(4K,1.78393542008961)
(16K,0.982982347570876)
};
%\legend{Log Free, Link Free, SOFT}
\end{axis}
\end{tikzpicture}
\end{subfigure}
\caption{List's Throughput}
\label{graph.list_sizes}
\end{subfigure}
\begin{subfigure}[t]{\textwidth}
\begin{subfigure}[t]{0.515\linewidth}
\begin{tikzpicture}
\begin{semilogxaxis}[
    xlabel={Key Range},
    ylabel={MOPS},
    xmin=1000, xmax=10000000,
    ymin=0, ymax=220,
    xtick={10,100,1000,10000,100000,1000000,10000000},
    ytick={0,50,100,150,200},
    legend style={at={(0.5,-0.35)},
      anchor=north,legend columns=-1},
]
 
\addplot[
    color=LogFree,
    mark=x,
    error bars/.cd, y dir=both,y explicit,
    ]
    coordinates {
(1024,55.16775)+-(0.6917132283361,0.6917132283361)
(16384,43.9542571428571)+-(0.401607857869339,0.401607857869339)
(262144,33.9635)+-(0.804715352698311,0.804715352698311)
(4194304,31.9869857142857)+-(0.31853818813772,0.31853818813772)
    };
    \addlegendentry{Log-Free}
    
\addplot[
    color=LinkFree,
    mark=o,
    error bars/.cd, y dir=both,y explicit,
]
coordinates {
(1024,182.1162)+-(0.45539946462183,0.45539946462183)
(16384,186.0508)+-(0.309373722999692,0.309373722999692)
(262144,132.9658)+-(0.142190198544193,0.142190198544193)
(4194304,99.86234)+-(0.111950230249564,0.111950230249564)
};
\addlegendentry{Link-Free}

\addplot[
    color=SOFT,
    mark=square,
    error bars/.cd, y dir=both,y explicit,
]
coordinates {
(1024,194.7893)+-(0.634622063928928,0.634622063928928)
(16384,206.1358)+-(0.435431826477301,0.435431826477301)
(262144,140.9938)+-(0.162554450709559,0.162554450709559)
(4194304,105.156)+-(0.0773934079536462,0.0773934079536462)
};
\addlegendentry{SOFT}
 
\end{semilogxaxis}
\end{tikzpicture}
    \end{subfigure}
\begin{subfigure}[t]{0.45\linewidth}
\begin{tikzpicture}
\begin{axis}[
    ybar,
    legend style={at={(0.5,-0.35)},
      anchor=north,legend columns=-1},
    ylabel={Relative Improvement},
    xlabel={Key Range},
    symbolic x coords={1K,16K,256K,4M},
    xtick=data,
    nodes near coords align={vertical},
    bar width=\barwidth,
    xtick align=inside,
    ]
\addplot[color = LogFree,
        pattern=north east lines
    ] coordinates{
(1K,1)
(16K,1)
(256K,1)
(4M,1)
};
\addplot[color = LinkFree] coordinates {
(1K,3.30113517408268)
(16K,4.23282776444863)
(256K,3.91496165000662)
(4M,3.12196781816176)
};
\addplot[fill = SOFT] coordinates {
(1K,3.53085453004699)
(16K,4.68978008956064)
(256K,4.15133304871406)
(4M,3.28746199905408)
};
\legend{Log-Free, Link-Free, SOFT}
\end{axis}
\end{tikzpicture}
\end{subfigure}
\caption{Hash Tables' Throughput}
\label{graph.hash_sizes}
\end{subfigure}
\caption{Throughput as a Function of Key Range}
\end{figure}

In the second experiment, we examined the effect of different key ranges on the performance of the data structure.
We again fixed the workload to be 90\% read operations, and the number of threads at 64 for the lists and at 32 for the hash maps.
The sizes when running the lists vary from 16 to 16K in multiples of 4.
For hash tables, the size varies between 1K and 4M in multiples of 16.

Figure~\ref{graph.list_sizes} shows that \SOFT{} and link-free are superior to log-free in each key range.
As expected, for shorter ranges, \SOFT{} performs better and for bigger ranges link-free wins.
The reason is that as the key range grows, more time is spent on traversals of the lists and the number of \code{psync} operations used is masked.
We can see this effect in the graph: as the range grows, the difference in performance shrinks, starting with a factor of 2.46x difference between \SOFT{} and log-free and ending with link-free having a 20\% improvement for 16K keys.

As expected, the trend of the graph consists of a single peak point.
We note that the performance improves because contention drops when the range grows but only up to a point.
Beyond this point, most of the time is spent on traversing the list rather than executing actual operations.

Figure~\ref{graph.hash_sizes} depicts the performance of the three hash tables and the improvement relative to log-free.
As explained above, this test was run with 32 threads.
As predicted, the performance of all hash tables worsens as the range grows.
This may be attributed to reduced locality.
For 1K distinct keys, \SOFT{} outperforms log-free by a factor of 3.53x and link-free outperforms log-free by a factor of 3.2x.
For the longest range (4M keys), \SOFT{} is better by a factor of 3.28x and link-free is better by a factor of 3.12x.

\begin{figure}
\begin{subfigure}[t]{\textwidth}
\begin{subfigure}[t]{0.515\linewidth}
\begin{tikzpicture}
    \begin{axis}[
    xlabel={\%Reads},
    ylabel={MOPS},
    ymin=0, ymax=90,
    xmin=50, xmax=100,
    ytick={0,10,20,30,40,50,60,70,80,90},
    xtick={50,60,70,80,90,100},
]
 
\addplot[
    color=LogFree,%rgb,255:red,255;green,215;blue,0
    mark=x,
    error bars/.cd, y dir=both,y explicit,
    ]
    coordinates {
(50,3.044799)+-(0.273147629066349,0.273147629066349)
(60,3.86756222222222)+-(0.409029523358266,0.409029523358266)
(70,4.15925333333333)+-(0.369822465843176,0.369822465843176)
(80,6.00171888888889)+-(0.523011238686492,0.523011238686492)
(90,9.21719666666667)+-(0.469845401820294,0.469845401820294)
(95,15.05692)+-(0.838915126976308,0.838915126976308)
(100,63.58635)+-(0.743613024791129,0.743613024791129)
    };
    % \addlegendentry{Log Free List}
    
\addplot[
    color=LinkFree,
    mark=o,
    error bars/.cd, y dir=both,y explicit,
]
coordinates {
(50,7.919239)+-(0.0225630009095185,0.0225630009095185)
(60,8.823912)+-(0.00783916919765603,0.00783916919765603)
(70,10.13429)+-(0.0204196390743059,0.0204196390743059)
(80,12.39005)+-(0.0211429353983364,0.0211429353983364)
(90,17.85001)+-(0.020977847150187,0.020977847150187)
(95,26.33649)+-(0.0576548761262932,0.0576548761262932)
(100,84.78409)+-(3.53616404947698,3.53616404947698)
};
% \addlegendentry{Link Free List}

\addplot[
    color=SOFT,
    mark=square,
    error bars/.cd, y dir=both,y explicit,
]
coordinates {
(50,7.832174)+-(0.0110425080204557,0.0110425080204557)
(60,8.700902)+-(0.0112332532475386,0.0112332532475386)
(70,10.026404)+-(0.0208564212906948,0.0208564212906948)
(80,12.29131)+-(0.0197071824361568,0.0197071824361568)
(90,17.81122)+-(0.0229832184636144,0.0229832184636144)
(95,26.34169)+-(0.0651534969121752,0.0651534969121752)
(100,74.29904)+-(1.33090339837349,1.33090339837349)
};
% \addlegendentry{SOFT List}
 
\end{axis}
\end{tikzpicture}
\end{subfigure}
\begin{subfigure}[t]{0.45\linewidth}
\begin{tikzpicture}
\begin{axis}[
    ybar,
    legend style={at={(0.5,-0.25)},
      anchor=north,legend columns=-1},
    ylabel={Relative Improvement},
    xlabel={\%Reads},
    symbolic x coords={50,60,70,80,90,95,100},
    xtick=data,
    nodes near coords align={vertical},
    bar width=\barwidth,
    xtick align=inside,
    ]
\addplot[color = LogFree, pattern=north east lines] coordinates{
(50,1)
(60,1)
(70,1)
(80,1)
(90,1)
(95,1)
(100,1)
};
\addplot[color = LinkFree] coordinates {
(50,2.60090698926267)
(60,2.28151778639775)
(70,2.43656473597825)
(80,2.06441691611681)
(90,1.93659858257699)
(95,1.74912863985463)
(100,1.33336934735207)
};
\addplot[fill = SOFT] coordinates {
(50,2.57231232669217)
(60,2.24971222182449)
(70,2.41062594568256)
(80,2.04796496263015)
(90,1.93239014465353)
(95,1.74947399600981)
(100,1.16847468049353)
};
\end{axis}
\end{tikzpicture}
\end{subfigure}
\caption{List's Throughput with Range of 256}
\label{graph.list_256_reads}
\end{subfigure}
\begin{subfigure}[t]{\textwidth}
\begin{subfigure}[t]{0.515\linewidth}
\begin{tikzpicture}
    \begin{axis}[
    legend style={font=\tiny},
    ylabel={MOPS},
    xlabel={\%Reads},
    ymin=0, ymax=25,
    xmin=50, xmax=100,
    ytick={0,5,10,15,20,25},
    xtick={50,60,70,80,90,100},
    legend pos= north west,
]
 
\addplot[
    color=LogFree,
    mark=x,
    error bars/.cd, y dir=both,y explicit,
    ]
    coordinates {
(50,3.275752)+-(0.32783682142265,0.32783682142265)
(60,3.86552285714286)+-(0.358018200071173,0.358018200071173)
(70,4.860223)+-(0.312167239598246,0.312167239598246)
(80,5.81937777777778)+-(0.354172981110511,0.354172981110511)
(90,8.25880428571428)+-(0.602857747605733,0.602857747605733)
(95,10.0984571428571)+-(0.264856426407674,0.264856426407674)
(100,18.2879375)+-(0.212200794419962,0.212200794419962)
    };
    
\addplot[
    color=LinkFree,
    mark=o,
    error bars/.cd, y dir=both,y explicit,
]
coordinates {
(50,7.194257)+-(0.0107316833296274,0.0107316833296274)
(60,7.874833)+-(0.0186473721089165,0.0186473721089165)
(70,8.798358)+-(0.0221010756377255,0.0221010756377255)
(80,10.29176)+-(0.0172538213980998,0.0172538213980998)
(90,13.29528)+-(0.0255452049754183,0.0255452049754183)
(95,16.75033)+-(0.0413565617630137,0.0413565617630137)
(100,22.62162)+-(0.343638332762657,0.343638332762657)
};

\addplot[
    color=SOFT,
    mark=square,
    error bars/.cd, y dir=both,y explicit,
]
coordinates {
(50,6.878772)+-(0.0121444167997434,0.0121444167997434)
(60,7.499011)+-(0.00968763619462425,0.00968763619462425)
(70,8.359847)+-(0.0189321887104923,0.0189321887104923)
(80,9.709498)+-(0.0148499229372324,0.0148499229372324)
(90,12.33213)+-(0.02051439218701,0.02051439218701)
(95,15.22045)+-(0.0215285491111236,0.0215285491111236)
(100,20.7074)+-(0.233762121048246,0.233762121048246)
};
\end{axis}
\end{tikzpicture}
\end{subfigure}
\begin{subfigure}[t]{0.45\linewidth}
\begin{tikzpicture}
\begin{axis}[
    ybar,
    legend style={at={(0.5,-0.25)},
      anchor=north,legend columns=-1},
    ylabel={Relative Improvement},
    xlabel={\%Reads},
    symbolic x coords={50,60,70,80,90,95,100},
    xtick=data,
    nodes near coords align={vertical},
    bar width=\barwidth,
    xtick align=inside,
    ]
\addplot[color = LogFree,
        pattern=north east lines
    ] coordinates{
(50,1)
(60,1)
(70,1)
(80,1)
(90,1)
(95,1)
(100,1)
};
\addplot[color = LinkFree] coordinates {
(50,2.19621540336387)
(60,2.03719737045367)
(70,1.81027866416829)
(80,1.76853271827458)
(90,1.60983110145831)
(95,1.65870189505611)
(100,1.23696945049161)
};
\addplot[fill = SOFT] coordinates {
(50,2.09990622000689)
(60,1.9399732654906)
(70,1.72005420327421)
(80,1.6684770040325)
(90,1.49321010322663)
(95,1.50720548542069)
(100,1.13229827037631)
};
\end{axis}
\end{tikzpicture}
\end{subfigure}
\caption{List's Throughput with Range of 1024}
\label{graph.list_1024_reads}
\end{subfigure}
\begin{subfigure}[t]{\textwidth}
\begin{subfigure}[t]{0.515\linewidth}
\begin{tikzpicture}
    \begin{axis}[
    xlabel={\%Reads},
    ylabel={MOPS},
    xmin=50, xmax=100,
    ymin=0, ymax=150,
    xtick={50,60,70,80,90,100},
    ytick={0,50,100,150},
    legend style={at={(0.5,-0.3)},
      anchor=north,legend columns=-1},
]
 
\addplot[
    color=LogFree,
    mark=x,
    error bars/.cd, y dir=both,y explicit,
    ]
    coordinates {
(50,12.3746857142857)+-(2.40377753917209,2.40377753917209)
(60,14.9318666666667)+-(1.80714340649954,1.80714340649954)
(70,17.6292666666667)+-(1.31648649976503,1.31648649976503)
(80,23.6492625)+-(0.49556538675889,0.49556538675889)
(90,32.1294444444444)+-(0.243866354640379,0.243866354640379)
(95,41.3353625)+-(0.475983931216771,0.475983931216771)
(100,74.0478625)+-(0.118682354161277,0.118682354161277)
    };
    \addlegendentry{Log-Free}
    
\addplot[
    color=LinkFree,
    mark=o,
    error bars/.cd, y dir=both,y explicit,
]
coordinates {
(50,65.7808)+-(0.0667118231436814,0.0667118231436814)
(60,72.63782)+-(0.111968799095928,0.111968799095928)
(70,81.88515)+-(0.14088513524038,0.14088513524038)
(80,93.34042)+-(0.215266149480071,0.215266149480071)
(90,108.5103)+-(0.196137469586582,0.196137469586582)
(95,118.2725)+-(0.323131376511449,0.323131376511449)
(100,130.0672)+-(0.180180058014557,0.180180058014557)
};
\addlegendentry{Link-Free}

\addplot[
    color=SOFT,
    mark=square,
    error bars/.cd, y dir=both,y explicit,
]
coordinates {
(50,65.4041)+-(0.125408978459133,0.125408978459133)
(60,73.25344)+-(0.217643983445876,0.217643983445876)
(70,83.49296)+-(0.153439449306407,0.153439449306407)
(80,96.90269)+-(0.143541789221272,0.143541789221272)
(90,114.2443)+-(0.177228503261594,0.177228503261594)
(95,125.5365)+-(0.162975823252854,0.162975823252854)
(100,139.5182)+-(0.337719890632564,0.337719890632564)
};
\addlegendentry{SOFT} 
\end{axis}
\end{tikzpicture}
\end{subfigure}
\begin{subfigure}[t]{0.45\linewidth}
\begin{tikzpicture}
\begin{axis}[
    ybar,
    legend style={at={(0.5,-0.3)},
      anchor=north,legend columns=-1},
    ylabel={Relative Improvement},
    xlabel={\%Reads},
    symbolic x coords={50,60,70,80,90,95,100},
    xtick=data,
    nodes near coords align={vertical},
    bar width=\barwidth,
    xtick align=inside,
    ymin=0,
    ]
\addplot [color = LogFree,
        pattern=north east lines
    ] coordinates{
(50,1)
(60,1)
(70,1)
(80,1)
(90,1)
(95,1)
(100,1)
};
\addplot[color = LinkFree] coordinates {
(50,5.31575520532701)
(60,4.86461750707659)
(70,4.64484153245172)
(80,3.94686388211895)
(90,3.3772852869469)
(95,2.86129098299307)
(100,1.75652875867956)
};
\addplot[fill = SOFT] coordinates {
(50,5.28531402817734)
(60,4.90584610988579)
(70,4.73604271684585)
(80,4.0974931036433)
(90,3.55575086888109)
(95,3.03702429124699)
(100,1.88416242264927)
};
\legend{Log-Free, Link-Free, SOFT}
\end{axis}
\end{tikzpicture}
\end{subfigure}
\caption{Hash Table's Throughput with Range of 1M}
\label{graph.hash_reads}
\end{subfigure}
\caption{Throughput as a Function of the Percentage of Reads}
    %\caption{Throughput as a Function of Number of Threads}
    %\label{graph.multi_threads}
\end{figure}

The last variable evaluated is the workload.
We measured different distributions of reads (50\% -- 100\% with increments of 10\%, and also the specific values of 95\%).
Note that this covers the standard ``Yahoo! Cloud Serving Benchmark'' (YCSB) \citep{cooper2010benchmarking} workloads A (50\% reads), B (95\% reads), and C (100\% reads). 
In this experiment, the number of threads was fixed at 64 for lists and 32 threads for hash tables, and the key ranges were fixed at 256 or 1024 in the case of the lists and at 1M in the case of the hash tables.

The lists (Figures~\ref{graph.list_256_reads} and \ref{graph.list_1024_reads}) all behaved similarly to one another.
For both ranges, link-free performed slightly better than \SOFT.
Link-free is superior to \SOFT{} since the high amount of threads increases the contention, which increases the cost of the additional CAS-es used in \SOFT{}.
Also, a higher percentage of updates also contributed to more CAS-es in \SOFT{}.

For the shorter range, link-free surpassed log-free by a factor of 2.6x with 50\% reads, and for 100\% reads, it had a 33\% improvement.
With 1k keys, the throughput of link-free was higher by a factor of 2.1x with 50\% reads and higher by 23\% with 100\% reads.

The trend of both graphs can be justified by a few reasons.
First, all algorithms use the least amount of \code{psync} operations in the read operations.
\textsc{Soft} does not use any, link-free uses at most one, and log-free uses at most two.
Moreover, reads are faster since there is no need to invalidate any cache lines of other processors.
Finally, unlike insert and remove, which may restart and theoretically run forever, the contains operation is wait-free and optimized to run as fast as possible.
Accordingly, the gap between the different algorithms shrinks as the percentage of reads grows.

Running with 100\% reads is a special situation where the performance improves tremendously.
Each thread runs in isolation from the others since there are no conflicts between contains operations.
Also, in this case, none of the algorithms execute any \code{psync} operations.
Link-free and log-free both use optimizations to reduce the number of \code{psync} operations and since the nodes in the list were inserted and flushed previous to the beginning of the test, there is no need to flush them again.

We would expect \SOFT{} to be the best in this scenario but due to its implementation, it falls short.
Unlike link-free, each volatile node in \SOFT{} has an additional pointer that makes it larger.
As a result, about one and a half volatile nodes fit in a single cache line, so when traversing the list, we have more cache misses.
\textsc{Soft} is still better than log-free because its contains operation is simpler.
Log-free has a few branches to check whether a node should be flushed or not, which lengthens the function and may cause branch mis-predictions.

The hash tables, depicted in Figure~\ref{graph.hash_reads}, exhibit a trend similar to what we saw in previous tests.
The throughput rises as the number of updates declines.
Moreover, the difference in performance between the three algorithms shrinks as the number of updates decreases.

In according with our expectations, \SOFT{} surpasses link-free and log-free.
The traversal times in the hash tables are minimal so \SOFT{} does not suffer from cache misses and the simplistic contains operation works in \SOFT's favor.
\section{Related Work}\label{sec.related_work}

There has been a lot of research focused on adapting specific concurrent data structures to durable ones \citep{schwalb2015nvc, Friedman:3178487.3178490, David:215967, nawab2017dali}.
Some researchers developed techniques to modify general objects into durable linearizable ones \citep{coburn2012nv, Cohen:3210377.3210400, 10.1007/978-3-662-53426-7_23, volos2011mnemosyne, kolli2016high, avni2016persistent}

\citet{coburn2012nv, volos2011mnemosyne, kolli2016high} used transactions to create a new interface to the NVRAM and, by proxy, make regular objects durable linearizable.
The main disadvantage of their schemes is the need to log operations and other kinds of metadata in the NVRAM, which causes more explicit writes to the memory and uses of synchronization primitives.
Another major disadvantage is the use of locks that limits the scalability of the different implementations and might cause an unbounded rollback effect upon a crash.

\citet{10.1007/978-3-662-53426-7_23} presented a general algorithm to maintain durable linearizability.
This generality, however, comes at the expense of efficiency; their construction inserts a fence before every shared write and a flush after, a fence and a \code{psync} for each CAS, and a \code{psync} after every shared read.
In contrast, our algorithms are optimized in the sense they execute fewer \code{psync} operations, especially \SOFT{}.

\citet{Cohen:3152284.3133891} presented a sequential durable hash table that uses only one \code{psync} per update and none for reads, achieving the lower bound proven by \citet{Cohen:3210377.3210400}.
This paper introduced the validity schemes we used in both algorithms.
Both algorithms rely on the observation made in the paper that the order of writes to the \emph{same cache line} in the program is the same as the order of those writes in the memory.
No extension to concurrency was discussed in their paper.

\citet{nawab2017dali} developed an efficient hash table that supports multiple threads and transactions.
They used fine-grained synchronization, and thus their algorithm is not lock-free.
Their algorithm does not support durable linearizability but only buffered durable linearizability which is a weaker guarantee.
Thus this work is not comparable to ours.

\citet{Friedman:3178487.3178490} presented three variations of a durable lock-free queue.
The first guarantees durable linearizability \citep{10.1007/978-3-662-53426-7_23},
the second guarantees \emph{detectable execution} \citep{Friedman:3178487.3178490}, which is a stronger guarantee than durable linearizability, and the third guarantees buffered durable linearizability \citep{10.1007/978-3-662-53426-7_23}.
The queue is inherently different from a set since it maintains an order between individual keys.

\citet{Cohen:3210377.3210400} introduced a theoretical universal construct to obtain durable lock-free objects with one \code{psync} per update (per conflicting thread) and none for reads.
Their implementation uses a lock-free queue to order all pending operations, then a batch of operations is persisted together and, finally, a flag is set to indicate that the operations were flushed.
This algorithm is theoretical and is not targeted at high performance.
Using a queue to order operations creates contention and hurts scalability.
In addition, the state of the object is a persistent log of all the previous operations, which means that in order to return a result, the whole log has to be traversed, making this algorithm highly inefficient and impractical.

\citet{David:215967} introduced four kinds of sets (\emph{Log-Free Data Structures}), building up from lock-free data structures and adding to them two main optimizations.
\emph{Link-and-persist} is the first optimization and it reduces the number of \code{psync} operations but at the cost of using CAS, which is considered more expensive than a simple \code{store} operation \citep{david2013everything}.
The second is \emph{link-cache}, which writes \emph{next} pointers to the NVRAM only when another operation depends on the persistency of the pointer.
This work represents state-of-the-art durable sets and we compared our constructions to it, showing dramatic improvements.

\section{Conclusion}\label{sec.conclusion}
In this work we presented two algorithms for durable lock-free sets: link-free and \SOFT. 
These two algorithms were shown to outperform existing state-of-the-art by significant factors of up to \maxImprovement{}.
In addition to high efficiency, they also demonstrated excellent scalability.  
The main idea underlying these algorithms was to avoid persisting the data structure's pointers, at the expense of reconstructing the data structures during (infrequent) recoveries from crashes. 
\SOFT{} reduces fences to the minimum theoretical value, at the expense of algorithmic complication and higher (volatile) synchronization. 
The evaluation demonstrated that \SOFT{} outperforms the link-free implementation when \code{psync} operations are often required: For example, for long lists it was better to use the link-free version because traversals were long and \code{psync} operations were infrequent. 
For short lists (which also underlay a hash table), however, operations are short and \code{psync} operations occur frequently.
In this case, \SOFT{} was the best performing method.  

%% Bibliography
\bibliography{main.bib}

%%% -*-BibTeX-*-
%%% Do NOT edit. File created by BibTeX with style
%%% ACM-Reference-Format-Journals [18-Jan-2012].

\begin{thebibliography}{48}

%%% ====================================================================
%%% NOTE TO THE USER: you can override these defaults by providing
%%% customized versions of any of these macros before the \bibliography
%%% command.  Each of them MUST provide its own final punctuation,
%%% except for \shownote{}, \showDOI{}, and \showURL{}.  The latter two
%%% do not use final punctuation, in order to avoid confusing it with
%%% the Web address.
%%%
%%% To suppress output of a particular field, define its macro to expand
%%% to an empty string, or better, \unskip, like this:
%%%
%%% \newcommand{\showDOI}[1]{\unskip}   % LaTeX syntax
%%%
%%% \def \showDOI #1{\unskip}           % plain TeX syntax
%%%
%%% ====================================================================

\ifx \showCODEN    \undefined \def \showCODEN     #1{\unskip}     \fi
\ifx \showDOI      \undefined \def \showDOI       #1{#1}\fi
\ifx \showISBNx    \undefined \def \showISBNx     #1{\unskip}     \fi
\ifx \showISBNxiii \undefined \def \showISBNxiii  #1{\unskip}     \fi
\ifx \showISSN     \undefined \def \showISSN      #1{\unskip}     \fi
\ifx \showLCCN     \undefined \def \showLCCN      #1{\unskip}     \fi
\ifx \shownote     \undefined \def \shownote      #1{#1}          \fi
\ifx \showarticletitle \undefined \def \showarticletitle #1{#1}   \fi
\ifx \showURL      \undefined \def \showURL       {\relax}        \fi
% The following commands are used for tagged output and should be
% invisible to TeX
\providecommand\bibfield[2]{#2}
\providecommand\bibinfo[2]{#2}
\providecommand\natexlab[1]{#1}
\providecommand\showeprint[2][]{arXiv:#2}

\bibitem[\protect\citeauthoryear{Alistarh, Leiserson, Matveev, and
  Shavit}{Alistarh et~al\mbox{.}}{2017}]%
        {Alistarh:2017:FCM:3064176.3064214}
\bibfield{author}{\bibinfo{person}{Dan Alistarh}, \bibinfo{person}{William
  Leiserson}, \bibinfo{person}{Alexander Matveev}, {and} \bibinfo{person}{Nir
  Shavit}.} \bibinfo{year}{2017}\natexlab{}.
\newblock \showarticletitle{Forkscan: Conservative Memory Reclamation for
  Modern Operating Systems}. In \bibinfo{booktitle}{\emph{Proceedings of the
  Twelfth European Conference on Computer Systems}}
  \emph{(\bibinfo{series}{EuroSys '17})}. \bibinfo{publisher}{ACM},
  \bibinfo{address}{New York, NY, USA}, \bibinfo{pages}{483--498}.
\newblock
\showISBNx{978-1-4503-4938-3}
\urldef\tempurl%
\url{https://doi.org/10.1145/3064176.3064214}
\showDOI{\tempurl}


\bibitem[\protect\citeauthoryear{Arulraj, Pavlo, and Dulloor}{Arulraj
  et~al\mbox{.}}{2015}]%
        {Arulraj:2015:LTS:2723372.2749441}
\bibfield{author}{\bibinfo{person}{Joy Arulraj}, \bibinfo{person}{Andrew
  Pavlo}, {and} \bibinfo{person}{Subramanya~R. Dulloor}.}
  \bibinfo{year}{2015}\natexlab{}.
\newblock \showarticletitle{Let's Talk About Storage \&\#38; Recovery Methods
  for Non-Volatile Memory Database Systems}. In
  \bibinfo{booktitle}{\emph{Proceedings of the 2015 ACM SIGMOD International
  Conference on Management of Data}} \emph{(\bibinfo{series}{SIGMOD '15})}.
  \bibinfo{publisher}{ACM}, \bibinfo{address}{New York, NY, USA},
  \bibinfo{pages}{707--722}.
\newblock
\showISBNx{978-1-4503-2758-9}
\urldef\tempurl%
\url{https://doi.org/10.1145/2723372.2749441}
\showDOI{\tempurl}


\bibitem[\protect\citeauthoryear{Avni and Brown}{Avni and Brown}{2016}]%
        {avni2016persistent}
\bibfield{author}{\bibinfo{person}{Hillel Avni} {and} \bibinfo{person}{Trevor
  Brown}.} \bibinfo{year}{2016}\natexlab{}.
\newblock \showarticletitle{Persistent Hybrid Transactional Memory for
  Databases}.
\newblock \bibinfo{journal}{\emph{Proc. VLDB Endow.}} \bibinfo{volume}{10},
  \bibinfo{number}{4} (\bibinfo{date}{Nov.} \bibinfo{year}{2016}),
  \bibinfo{pages}{409--420}.
\newblock
\showISSN{2150-8097}
\urldef\tempurl%
\url{https://doi.org/10.14778/3025111.3025122}
\showDOI{\tempurl}


\bibitem[\protect\citeauthoryear{Balmau, Guerraoui, Herlihy, and
  Zablotchi}{Balmau et~al\mbox{.}}{2016}]%
        {Balmau:2016:FRM:2935764.2935790}
\bibfield{author}{\bibinfo{person}{Oana Balmau}, \bibinfo{person}{Rachid
  Guerraoui}, \bibinfo{person}{Maurice Herlihy}, {and} \bibinfo{person}{Igor
  Zablotchi}.} \bibinfo{year}{2016}\natexlab{}.
\newblock \showarticletitle{Fast and Robust Memory Reclamation for Concurrent
  Data Structures}. In \bibinfo{booktitle}{\emph{Proceedings of the 28th ACM
  Symposium on Parallelism in Algorithms and Architectures}}
  \emph{(\bibinfo{series}{SPAA '16})}. \bibinfo{publisher}{ACM},
  \bibinfo{address}{New York, NY, USA}, \bibinfo{pages}{349--359}.
\newblock
\showISBNx{978-1-4503-4210-0}
\urldef\tempurl%
\url{https://doi.org/10.1145/2935764.2935790}
\showDOI{\tempurl}


\bibitem[\protect\citeauthoryear{Ben-David, Blelloch, Friedman, and
  Wei}{Ben-David et~al\mbox{.}}{2019}]%
        {Ben-David:2019:DCF:3323165.3323187}
\bibfield{author}{\bibinfo{person}{Naama Ben-David}, \bibinfo{person}{Guy~E.
  Blelloch}, \bibinfo{person}{Michal Friedman}, {and} \bibinfo{person}{Yuanhao
  Wei}.} \bibinfo{year}{2019}\natexlab{}.
\newblock \showarticletitle{Delay-Free Concurrency on Faulty Persistent
  Memory}. In \bibinfo{booktitle}{\emph{The 31st ACM on Symposium on
  Parallelism in Algorithms and Architectures}} \emph{(\bibinfo{series}{SPAA
  '19})}. \bibinfo{publisher}{ACM}, \bibinfo{address}{New York, NY, USA},
  \bibinfo{pages}{253--264}.
\newblock
\showISBNx{978-1-4503-6184-2}
\urldef\tempurl%
\url{https://doi.org/10.1145/3323165.3323187}
\showDOI{\tempurl}


\bibitem[\protect\citeauthoryear{Brown}{Brown}{2015}]%
        {Brown:2015:RML:2767386.2767436}
\bibfield{author}{\bibinfo{person}{Trevor~Alexander Brown}.}
  \bibinfo{year}{2015}\natexlab{}.
\newblock \showarticletitle{Reclaiming Memory for Lock-Free Data Structures:
  There Has to Be a Better Way}. In \bibinfo{booktitle}{\emph{Proceedings of
  the 2015 ACM Symposium on Principles of Distributed Computing}}
  \emph{(\bibinfo{series}{PODC '15})}. \bibinfo{publisher}{ACM},
  \bibinfo{address}{New York, NY, USA}, \bibinfo{pages}{261--270}.
\newblock
\showISBNx{978-1-4503-3617-8}
\urldef\tempurl%
\url{https://doi.org/10.1145/2767386.2767436}
\showDOI{\tempurl}


\bibitem[\protect\citeauthoryear{Chakrabarti, Boehm, and Bhandari}{Chakrabarti
  et~al\mbox{.}}{2014}]%
        {Chakrabarti:2014:ALL:2714064.2660224}
\bibfield{author}{\bibinfo{person}{Dhruva~R. Chakrabarti},
  \bibinfo{person}{Hans-J. Boehm}, {and} \bibinfo{person}{Kumud Bhandari}.}
  \bibinfo{year}{2014}\natexlab{}.
\newblock \showarticletitle{Atlas: Leveraging Locks for Non-volatile Memory
  Consistency}.
\newblock \bibinfo{journal}{\emph{SIGPLAN Not.}} \bibinfo{volume}{49},
  \bibinfo{number}{10} (\bibinfo{date}{Oct.} \bibinfo{year}{2014}),
  \bibinfo{pages}{433--452}.
\newblock
\showISSN{0362-1340}
\urldef\tempurl%
\url{https://doi.org/10.1145/2714064.2660224}
\showDOI{\tempurl}


\bibitem[\protect\citeauthoryear{Coburn, Caulfield, Akel, Grupp, Gupta, Jhala,
  and Swanson}{Coburn et~al\mbox{.}}{2012}]%
        {coburn2012nv}
\bibfield{author}{\bibinfo{person}{Joel Coburn}, \bibinfo{person}{Adrian~M
  Caulfield}, \bibinfo{person}{Ameen Akel}, \bibinfo{person}{Laura~M Grupp},
  \bibinfo{person}{Rajesh~K Gupta}, \bibinfo{person}{Ranjit Jhala}, {and}
  \bibinfo{person}{Steven Swanson}.} \bibinfo{year}{2012}\natexlab{}.
\newblock \showarticletitle{NV-Heaps: making persistent objects fast and safe
  with next-generation, non-volatile memories}.
\newblock \bibinfo{journal}{\emph{ACM Sigplan Notices}} \bibinfo{volume}{47},
  \bibinfo{number}{4} (\bibinfo{year}{2012}), \bibinfo{pages}{105--118}.
\newblock


\bibitem[\protect\citeauthoryear{Cohen}{Cohen}{2018}]%
        {cohen2018every}
\bibfield{author}{\bibinfo{person}{Nachshon Cohen}.}
  \bibinfo{year}{2018}\natexlab{}.
\newblock \showarticletitle{Every data structure deserves lock-free memory
  reclamation}.
\newblock \bibinfo{journal}{\emph{Proceedings of the ACM on Programming
  Languages}} \bibinfo{volume}{2}, \bibinfo{number}{OOPSLA}
  (\bibinfo{year}{2018}), \bibinfo{pages}{143}.
\newblock


\bibitem[\protect\citeauthoryear{Cohen, Aksun, Avni, and Larus}{Cohen
  et~al\mbox{.}}{2019}]%
        {Cohen:2019:FCI:3297858.3304046}
\bibfield{author}{\bibinfo{person}{Nachshon Cohen}, \bibinfo{person}{David~T.
  Aksun}, \bibinfo{person}{Hillel Avni}, {and} \bibinfo{person}{James~R.
  Larus}.} \bibinfo{year}{2019}\natexlab{}.
\newblock \showarticletitle{Fine-Grain Checkpointing with In-Cache-Line
  Logging}. In \bibinfo{booktitle}{\emph{Proceedings of the Twenty-Fourth
  International Conference on Architectural Support for Programming Languages
  and Operating Systems}} \emph{(\bibinfo{series}{ASPLOS '19})}.
  \bibinfo{publisher}{ACM}, \bibinfo{address}{New York, NY, USA},
  \bibinfo{pages}{441--454}.
\newblock
\showISBNx{978-1-4503-6240-5}
\urldef\tempurl%
\url{https://doi.org/10.1145/3297858.3304046}
\showDOI{\tempurl}


\bibitem[\protect\citeauthoryear{Cohen, Friedman, and Larus}{Cohen
  et~al\mbox{.}}{2017}]%
        {Cohen:3152284.3133891}
\bibfield{author}{\bibinfo{person}{Nachshon Cohen}, \bibinfo{person}{Michal
  Friedman}, {and} \bibinfo{person}{James~R. Larus}.}
  \bibinfo{year}{2017}\natexlab{}.
\newblock \showarticletitle{Efficient Logging in Non-volatile Memory by
  Exploiting Coherency Protocols}.
\newblock \bibinfo{journal}{\emph{Proc. ACM Program. Lang.}}
  \bibinfo{volume}{1}, \bibinfo{number}{OOPSLA}, Article
  \bibinfo{articleno}{67} (\bibinfo{date}{Oct.} \bibinfo{year}{2017}),
  \bibinfo{numpages}{24}~pages.
\newblock
\showISSN{2475-1421}
\urldef\tempurl%
\url{https://doi.org/10.1145/3133891}
\showDOI{\tempurl}


\bibitem[\protect\citeauthoryear{Cohen, Guerraoui, and Zablotchi}{Cohen
  et~al\mbox{.}}{2018}]%
        {Cohen:3210377.3210400}
\bibfield{author}{\bibinfo{person}{Nachshon Cohen}, \bibinfo{person}{Rachid
  Guerraoui}, {and} \bibinfo{person}{Igor Zablotchi}.}
  \bibinfo{year}{2018}\natexlab{}.
\newblock \showarticletitle{The Inherent Cost of Remembering Consistently}. In
  \bibinfo{booktitle}{\emph{Proceedings of the 30th on Symposium on Parallelism
  in Algorithms and Architectures}} \emph{(\bibinfo{series}{SPAA '18})}.
  \bibinfo{publisher}{ACM}, \bibinfo{address}{New York, NY, USA},
  \bibinfo{pages}{259--269}.
\newblock
\showISBNx{978-1-4503-5799-9}
\urldef\tempurl%
\url{https://doi.org/10.1145/3210377.3210400}
\showDOI{\tempurl}


\bibitem[\protect\citeauthoryear{Cohen and Petrank}{Cohen and Petrank}{2015}]%
        {cohen2015efficient}
\bibfield{author}{\bibinfo{person}{Nachshon Cohen} {and} \bibinfo{person}{Erez
  Petrank}.} \bibinfo{year}{2015}\natexlab{}.
\newblock \showarticletitle{Efficient memory management for lock-free data
  structures with optimistic access}. In \bibinfo{booktitle}{\emph{Proceedings
  of the 27th ACM Symposium on Parallelism in Algorithms and Architectures}}
  \emph{(\bibinfo{series}{SPAA '15})}. \bibinfo{publisher}{ACM},
  \bibinfo{address}{New York, NY, USA}, \bibinfo{pages}{254--263}.
\newblock
\showISBNx{978-1-4503-3588-1}
\urldef\tempurl%
\url{https://doi.org/10.1145/2755573.2755579}
\showDOI{\tempurl}


\bibitem[\protect\citeauthoryear{Colin and Lucia}{Colin and Lucia}{2016}]%
        {colin2016chain}
\bibfield{author}{\bibinfo{person}{Alexei Colin} {and} \bibinfo{person}{Brandon
  Lucia}.} \bibinfo{year}{2016}\natexlab{}.
\newblock \showarticletitle{Chain: Tasks and Channels for Reliable Intermittent
  Programs}. In \bibinfo{booktitle}{\emph{Proceedings of the 2016 ACM SIGPLAN
  International Conference on Object-Oriented Programming, Systems, Languages,
  and Applications}} \emph{(\bibinfo{series}{OOPSLA 2016})}.
  \bibinfo{publisher}{ACM}, \bibinfo{address}{New York, NY, USA},
  \bibinfo{pages}{514--530}.
\newblock
\showISBNx{978-1-4503-4444-9}
\urldef\tempurl%
\url{https://doi.org/10.1145/2983990.2983995}
\showDOI{\tempurl}


\bibitem[\protect\citeauthoryear{Cooper, Silberstein, Tam, Ramakrishnan, and
  Sears}{Cooper et~al\mbox{.}}{2010}]%
        {cooper2010benchmarking}
\bibfield{author}{\bibinfo{person}{Brian~F Cooper}, \bibinfo{person}{Adam
  Silberstein}, \bibinfo{person}{Erwin Tam}, \bibinfo{person}{Raghu
  Ramakrishnan}, {and} \bibinfo{person}{Russell Sears}.}
  \bibinfo{year}{2010}\natexlab{}.
\newblock \showarticletitle{Benchmarking cloud serving systems with YCSB}. In
  \bibinfo{booktitle}{\emph{Proceedings of the 1st ACM Symposium on Cloud
  Computing}} \emph{(\bibinfo{series}{SoCC '10})}. \bibinfo{publisher}{ACM},
  \bibinfo{address}{New York, NY, USA}, \bibinfo{pages}{143--154}.
\newblock
\showISBNx{978-1-4503-0036-0}
\urldef\tempurl%
\url{https://doi.org/10.1145/1807128.1807152}
\showDOI{\tempurl}


\bibitem[\protect\citeauthoryear{David, Dragojevi{\'c}, Guerraoui, and
  Zablotchi}{David et~al\mbox{.}}{2018}]%
        {David:215967}
\bibfield{author}{\bibinfo{person}{Tudor David}, \bibinfo{person}{Aleksandar
  Dragojevi{\'c}}, \bibinfo{person}{Rachid Guerraoui}, {and}
  \bibinfo{person}{Igor Zablotchi}.} \bibinfo{year}{2018}\natexlab{}.
\newblock \showarticletitle{Log-Free Concurrent Data Structures}. In
  \bibinfo{booktitle}{\emph{2018 {USENIX} Annual Technical Conference ({USENIX}
  {ATC} 18)}}. \bibinfo{publisher}{{USENIX} Association},
  \bibinfo{address}{Boston, MA}, \bibinfo{pages}{373--386}.
\newblock
\showISBNx{978-1-931971-44-7}
\urldef\tempurl%
\url{https://www.usenix.org/conference/atc18/presentation/david}
\showURL{%
\tempurl}


\bibitem[\protect\citeauthoryear{David, Guerraoui, and Trigonakis}{David
  et~al\mbox{.}}{2013}]%
        {david2013everything}
\bibfield{author}{\bibinfo{person}{Tudor David}, \bibinfo{person}{Rachid
  Guerraoui}, {and} \bibinfo{person}{Vasileios Trigonakis}.}
  \bibinfo{year}{2013}\natexlab{}.
\newblock \showarticletitle{Everything you always wanted to know about
  synchronization but were afraid to ask}. In
  \bibinfo{booktitle}{\emph{Proceedings of the Twenty-Fourth ACM Symposium on
  Operating Systems Principles}} \emph{(\bibinfo{series}{SOSP '13})}.
  \bibinfo{publisher}{ACM}, \bibinfo{address}{New York, NY, USA},
  \bibinfo{pages}{33--48}.
\newblock
\showISBNx{978-1-4503-2388-8}
\urldef\tempurl%
\url{https://doi.org/10.1145/2517349.2522714}
\showDOI{\tempurl}


\bibitem[\protect\citeauthoryear{David, Guerraoui, and Trigonakis}{David
  et~al\mbox{.}}{2015}]%
        {David:2694344.2694359}
\bibfield{author}{\bibinfo{person}{Tudor David}, \bibinfo{person}{Rachid
  Guerraoui}, {and} \bibinfo{person}{Vasileios Trigonakis}.}
  \bibinfo{year}{2015}\natexlab{}.
\newblock \showarticletitle{Asynchronized Concurrency: The Secret to Scaling
  Concurrent Search Data Structures}. In \bibinfo{booktitle}{\emph{Proceedings
  of the Twentieth International Conference on Architectural Support for
  Programming Languages and Operating Systems}} \emph{(\bibinfo{series}{ASPLOS
  '15})}. \bibinfo{publisher}{ACM}, \bibinfo{address}{New York, NY, USA},
  \bibinfo{pages}{631--644}.
\newblock
\showISBNx{978-1-4503-2835-7}
\urldef\tempurl%
\url{https://doi.org/10.1145/2694344.2694359}
\showDOI{\tempurl}


\bibitem[\protect\citeauthoryear{Debnath, Sengupta, and Li}{Debnath
  et~al\mbox{.}}{2010}]%
        {Debnath:2010:FHT:1920841.1921015}
\bibfield{author}{\bibinfo{person}{Biplob Debnath}, \bibinfo{person}{Sudipta
  Sengupta}, {and} \bibinfo{person}{Jin Li}.} \bibinfo{year}{2010}\natexlab{}.
\newblock \showarticletitle{FlashStore: High Throughput Persistent Key-value
  Store}.
\newblock \bibinfo{journal}{\emph{Proc. VLDB Endow.}} \bibinfo{volume}{3},
  \bibinfo{number}{1-2} (\bibinfo{date}{Sept.} \bibinfo{year}{2010}),
  \bibinfo{pages}{1414--1425}.
\newblock
\showISSN{2150-8097}
\urldef\tempurl%
\url{https://doi.org/10.14778/1920841.1921015}
\showDOI{\tempurl}


\bibitem[\protect\citeauthoryear{Dice, Herlihy, and Kogan}{Dice
  et~al\mbox{.}}{2016}]%
        {Dice:2016:FNM:2926697.2926699}
\bibfield{author}{\bibinfo{person}{Dave Dice}, \bibinfo{person}{Maurice
  Herlihy}, {and} \bibinfo{person}{Alex Kogan}.}
  \bibinfo{year}{2016}\natexlab{}.
\newblock \showarticletitle{Fast Non-intrusive Memory Reclamation for
  Highly-concurrent Data Structures}. In \bibinfo{booktitle}{\emph{Proceedings
  of the 2016 ACM SIGPLAN International Symposium on Memory Management}}
  \emph{(\bibinfo{series}{ISMM 2016})}. \bibinfo{publisher}{ACM},
  \bibinfo{address}{New York, NY, USA}, \bibinfo{pages}{36--45}.
\newblock
\showISBNx{978-1-4503-4317-6}
\urldef\tempurl%
\url{https://doi.org/10.1145/2926697.2926699}
\showDOI{\tempurl}


\bibitem[\protect\citeauthoryear{Fraser}{Fraser}{2004}]%
        {fraser2004practical}
\bibfield{author}{\bibinfo{person}{Keir Fraser}.}
  \bibinfo{year}{2004}\natexlab{}.
\newblock \bibinfo{booktitle}{\emph{Practical lock-freedom}}.
\newblock \bibinfo{type}{{T}echnical {R}eport}.
  \bibinfo{institution}{University of Cambridge, Computer Laboratory}.
\newblock


\bibitem[\protect\citeauthoryear{Friedman, Herlihy, Marathe, and
  Petrank}{Friedman et~al\mbox{.}}{2018}]%
        {Friedman:3178487.3178490}
\bibfield{author}{\bibinfo{person}{Michal Friedman}, \bibinfo{person}{Maurice
  Herlihy}, \bibinfo{person}{Virendra Marathe}, {and} \bibinfo{person}{Erez
  Petrank}.} \bibinfo{year}{2018}\natexlab{}.
\newblock \showarticletitle{A Persistent Lock-free Queue for Non-volatile
  Memory}. In \bibinfo{booktitle}{\emph{Proceedings of the 23rd ACM SIGPLAN
  Symposium on Principles and Practice of Parallel Programming}}
  \emph{(\bibinfo{series}{PPoPP '18})}. \bibinfo{publisher}{ACM},
  \bibinfo{address}{New York, NY, USA}, \bibinfo{pages}{28--40}.
\newblock
\showISBNx{978-1-4503-4982-6}
\urldef\tempurl%
\url{https://doi.org/10.1145/3178487.3178490}
\showDOI{\tempurl}


\bibitem[\protect\citeauthoryear{Harris}{Harris}{2001}]%
        {Harris:10.1007/3-540-45414-4_21}
\bibfield{author}{\bibinfo{person}{Timothy~L. Harris}.}
  \bibinfo{year}{2001}\natexlab{}.
\newblock \showarticletitle{A Pragmatic Implementation of Non-blocking
  Linked-lists}. In \bibinfo{booktitle}{\emph{Distributed Computing}},
  \bibfield{editor}{\bibinfo{person}{Jennifer Welch}} (Ed.).
  \bibinfo{publisher}{Springer Berlin Heidelberg}, \bibinfo{address}{Berlin,
  Heidelberg}, \bibinfo{pages}{300--314}.
\newblock
\showISBNx{978-3-540-45414-4}


\bibitem[\protect\citeauthoryear{Heller, Herlihy, Luchangco, Moir, Scherer, and
  Shavit}{Heller et~al\mbox{.}}{2006}]%
        {heller2005lazy}
\bibfield{author}{\bibinfo{person}{Steve Heller}, \bibinfo{person}{Maurice
  Herlihy}, \bibinfo{person}{Victor Luchangco}, \bibinfo{person}{Mark Moir},
  \bibinfo{person}{William~N. Scherer}, {and} \bibinfo{person}{Nir Shavit}.}
  \bibinfo{year}{2006}\natexlab{}.
\newblock \showarticletitle{A Lazy Concurrent List-based Set Algorithm}. In
  \bibinfo{booktitle}{\emph{Proceedings of the 9th International Conference on
  Principles of Distributed Systems}} \emph{(\bibinfo{series}{OPODIS'05})}.
  \bibinfo{publisher}{Springer-Verlag}, \bibinfo{address}{Berlin, Heidelberg},
  \bibinfo{pages}{3--16}.
\newblock
\showISBNx{3-540-36321-1, 978-3-540-36321-7}
\urldef\tempurl%
\url{https://doi.org/10.1007/11795490_3}
\showDOI{\tempurl}


\bibitem[\protect\citeauthoryear{Herlihy and Shavit}{Herlihy and
  Shavit}{2008}]%
        {herlihy2011art}
\bibfield{author}{\bibinfo{person}{Maurice Herlihy} {and} \bibinfo{person}{Nir
  Shavit}.} \bibinfo{year}{2008}\natexlab{}.
\newblock \bibinfo{booktitle}{\emph{The Art of Multiprocessor Programming}}.
\newblock \bibinfo{publisher}{Morgan Kaufmann Publishers Inc.},
  \bibinfo{address}{San Francisco, CA, USA}.
\newblock
\showISBNx{0123705916, 9780123705914}


\bibitem[\protect\citeauthoryear{Herlihy and Wing}{Herlihy and Wing}{1990}]%
        {Herlihy:1990:LCC:78969.78972}
\bibfield{author}{\bibinfo{person}{Maurice~P. Herlihy} {and}
  \bibinfo{person}{Jeannette~M. Wing}.} \bibinfo{year}{1990}\natexlab{}.
\newblock \showarticletitle{Linearizability: A Correctness Condition for
  Concurrent Objects}.
\newblock \bibinfo{journal}{\emph{ACM Trans. Program. Lang. Syst.}}
  \bibinfo{volume}{12}, \bibinfo{number}{3} (\bibinfo{date}{July}
  \bibinfo{year}{1990}), \bibinfo{pages}{463--492}.
\newblock
\showISSN{0164-0925}
\urldef\tempurl%
\url{https://doi.org/10.1145/78969.78972}
\showDOI{\tempurl}


\bibitem[\protect\citeauthoryear{Intel}{Intel}{2019}]%
        {Intel}
Intel \bibinfo{year}{2019}\natexlab{}.
\newblock \bibinfo{booktitle}{\emph{Intel{\textregistered} 64 and IA-32
  Architectures Software Developer's Manual}}.
\newblock Intel.
\newblock


\bibitem[\protect\citeauthoryear{Izraelevitz, Mendes, and Scott}{Izraelevitz
  et~al\mbox{.}}{2016}]%
        {10.1007/978-3-662-53426-7_23}
\bibfield{author}{\bibinfo{person}{Joseph Izraelevitz},
  \bibinfo{person}{Hammurabi Mendes}, {and} \bibinfo{person}{Michael~L.
  Scott}.} \bibinfo{year}{2016}\natexlab{}.
\newblock \showarticletitle{Linearizability of Persistent Memory Objects Under
  a Full-System-Crash Failure Model}. In \bibinfo{booktitle}{\emph{Distributed
  Computing}}, \bibfield{editor}{\bibinfo{person}{Cyril Gavoille} {and}
  \bibinfo{person}{David Ilcinkas}} (Eds.). \bibinfo{publisher}{Springer Berlin
  Heidelberg}, \bibinfo{address}{Berlin, Heidelberg},
  \bibinfo{pages}{313--327}.
\newblock
\showISBNx{978-3-662-53426-7}


\bibitem[\protect\citeauthoryear{Jayakumar, Raha, Lee, and
  Raghunathan}{Jayakumar et~al\mbox{.}}{2015}]%
        {jayakumar2015quickrecall}
\bibfield{author}{\bibinfo{person}{Hrishikesh Jayakumar},
  \bibinfo{person}{Arnab Raha}, \bibinfo{person}{Woo~Suk Lee}, {and}
  \bibinfo{person}{Vijay Raghunathan}.} \bibinfo{year}{2015}\natexlab{}.
\newblock \showarticletitle{QuickRecall: A HW/SW Approach for Computing Across
  Power Cycles in Transiently Powered Computers}.
\newblock \bibinfo{journal}{\emph{J. Emerg. Technol. Comput. Syst.}}
  \bibinfo{volume}{12}, \bibinfo{number}{1}, Article \bibinfo{articleno}{8}
  (\bibinfo{date}{Aug.} \bibinfo{year}{2015}), \bibinfo{numpages}{19}~pages.
\newblock
\showISSN{1550-4832}
\urldef\tempurl%
\url{https://doi.org/10.1145/2700249}
\showDOI{\tempurl}


\bibitem[\protect\citeauthoryear{Kolli, Pelley, Saidi, Chen, and Wenisch}{Kolli
  et~al\mbox{.}}{2016}]%
        {kolli2016high}
\bibfield{author}{\bibinfo{person}{Aasheesh Kolli}, \bibinfo{person}{Steven
  Pelley}, \bibinfo{person}{Ali Saidi}, \bibinfo{person}{Peter~M Chen}, {and}
  \bibinfo{person}{Thomas~F Wenisch}.} \bibinfo{year}{2016}\natexlab{}.
\newblock \showarticletitle{High-performance transactions for persistent
  memories}.
\newblock \bibinfo{journal}{\emph{ACM SIGPLAN Notices}} \bibinfo{volume}{51},
  \bibinfo{number}{4} (\bibinfo{year}{2016}), \bibinfo{pages}{399--411}.
\newblock


\bibitem[\protect\citeauthoryear{Lucia, Balaji, Colin, Maeng, and Ruppel}{Lucia
  et~al\mbox{.}}{2017}]%
        {lucia2017intermittent}
\bibfield{author}{\bibinfo{person}{Brandon Lucia}, \bibinfo{person}{Vignesh
  Balaji}, \bibinfo{person}{Alexei Colin}, \bibinfo{person}{Kiwan Maeng}, {and}
  \bibinfo{person}{Emily Ruppel}.} \bibinfo{year}{2017}\natexlab{}.
\newblock \showarticletitle{{Intermittent Computing: Challenges and
  Opportunities}}. In \bibinfo{booktitle}{\emph{2nd Summit on Advances in
  Programming Languages (SNAPL 2017)}} \emph{(\bibinfo{series}{Leibniz
  International Proceedings in Informatics (LIPIcs)})},
  \bibfield{editor}{\bibinfo{person}{Benjamin~S. Lerner},
  \bibinfo{person}{Rastislav Bod{\'i}k}, {and} \bibinfo{person}{Shriram
  Krishnamurthi}} (Eds.), Vol.~\bibinfo{volume}{71}.
  \bibinfo{publisher}{Schloss Dagstuhl--Leibniz-Zentrum fuer Informatik},
  \bibinfo{address}{Dagstuhl, Germany}, \bibinfo{pages}{8:1--8:14}.
\newblock
\showISBNx{978-3-95977-032-3}
\showISSN{1868-8969}
\urldef\tempurl%
\url{https://doi.org/10.4230/LIPIcs.SNAPL.2017.8}
\showDOI{\tempurl}


\bibitem[\protect\citeauthoryear{Lynch}{Lynch}{1996}]%
        {lynch1996distributed}
\bibfield{author}{\bibinfo{person}{Nancy~A Lynch}.}
  \bibinfo{year}{1996}\natexlab{}.
\newblock \bibinfo{booktitle}{\emph{Distributed algorithms}}.
\newblock \bibinfo{publisher}{Elsevier}.
\newblock


\bibitem[\protect\citeauthoryear{Maeng, Colin, and Lucia}{Maeng
  et~al\mbox{.}}{2017}]%
        {maeng2017alpaca}
\bibfield{author}{\bibinfo{person}{Kiwan Maeng}, \bibinfo{person}{Alexei
  Colin}, {and} \bibinfo{person}{Brandon Lucia}.}
  \bibinfo{year}{2017}\natexlab{}.
\newblock \showarticletitle{Alpaca: Intermittent Execution Without
  Checkpoints}.
\newblock \bibinfo{journal}{\emph{Proc. ACM Program. Lang.}}
  \bibinfo{volume}{1}, \bibinfo{number}{OOPSLA}, Article
  \bibinfo{articleno}{96} (\bibinfo{date}{Oct.} \bibinfo{year}{2017}),
  \bibinfo{numpages}{30}~pages.
\newblock
\showISSN{2475-1421}
\urldef\tempurl%
\url{https://doi.org/10.1145/3133920}
\showDOI{\tempurl}


\bibitem[\protect\citeauthoryear{Maeng and Lucia}{Maeng and Lucia}{2018}]%
        {maeng2018adaptive}
\bibfield{author}{\bibinfo{person}{Kiwan Maeng} {and} \bibinfo{person}{Brandon
  Lucia}.} \bibinfo{year}{2018}\natexlab{}.
\newblock \showarticletitle{Adaptive Dynamic Checkpointing for Safe Efficient
  Intermittent Computing}. In \bibinfo{booktitle}{\emph{13th {USENIX} Symposium
  on Operating Systems Design and Implementation ({OSDI} 18)}}.
  \bibinfo{publisher}{{USENIX} Association}, \bibinfo{address}{Carlsbad, CA},
  \bibinfo{pages}{129--144}.
\newblock
\showISBNx{978-1-931971-47-8}
\urldef\tempurl%
\url{https://www.usenix.org/conference/osdi18/presentation/maeng}
\showURL{%
\tempurl}


\bibitem[\protect\citeauthoryear{Michael}{Michael}{2002}]%
        {Michael:564870.564881}
\bibfield{author}{\bibinfo{person}{Maged~M. Michael}.}
  \bibinfo{year}{2002}\natexlab{}.
\newblock \showarticletitle{High Performance Dynamic Lock-free Hash Tables and
  List-based Sets}. In \bibinfo{booktitle}{\emph{Proceedings of the Fourteenth
  Annual ACM Symposium on Parallel Algorithms and Architectures}}
  \emph{(\bibinfo{series}{SPAA '02})}. \bibinfo{publisher}{ACM},
  \bibinfo{address}{New York, NY, USA}, \bibinfo{pages}{73--82}.
\newblock
\showISBNx{1-58113-529-7}
\urldef\tempurl%
\url{https://doi.org/10.1145/564870.564881}
\showDOI{\tempurl}


\bibitem[\protect\citeauthoryear{Michael}{Michael}{2004}]%
        {michael2004hazard}
\bibfield{author}{\bibinfo{person}{Maged~M Michael}.}
  \bibinfo{year}{2004}\natexlab{}.
\newblock \showarticletitle{Hazard pointers: Safe memory reclamation for
  lock-free objects}.
\newblock \bibinfo{journal}{\emph{IEEE Transactions on Parallel and Distributed
  Systems}} \bibinfo{volume}{15}, \bibinfo{number}{6} (\bibinfo{year}{2004}),
  \bibinfo{pages}{491--504}.
\newblock


\bibitem[\protect\citeauthoryear{Natarajan and Mittal}{Natarajan and
  Mittal}{2014}]%
        {natarajan2014fast}
\bibfield{author}{\bibinfo{person}{Aravind Natarajan} {and}
  \bibinfo{person}{Neeraj Mittal}.} \bibinfo{year}{2014}\natexlab{}.
\newblock \showarticletitle{Fast Concurrent Lock-free Binary Search Trees}. In
  \bibinfo{booktitle}{\emph{Proceedings of the 19th ACM SIGPLAN Symposium on
  Principles and Practice of Parallel Programming}}
  \emph{(\bibinfo{series}{PPoPP '14})}. \bibinfo{publisher}{ACM},
  \bibinfo{address}{New York, NY, USA}, \bibinfo{pages}{317--328}.
\newblock
\showISBNx{978-1-4503-2656-8}
\urldef\tempurl%
\url{https://doi.org/10.1145/2555243.2555256}
\showDOI{\tempurl}


\bibitem[\protect\citeauthoryear{Nawab, Izraelevitz, Kelly, III, Chakrabarti,
  and Scott}{Nawab et~al\mbox{.}}{2017}]%
        {nawab2017dali}
\bibfield{author}{\bibinfo{person}{Faisal Nawab}, \bibinfo{person}{Joseph
  Izraelevitz}, \bibinfo{person}{Terence Kelly}, \bibinfo{person}{Charles
  B.~Morrey III}, \bibinfo{person}{Dhruva~R. Chakrabarti}, {and}
  \bibinfo{person}{Michael~L. Scott}.} \bibinfo{year}{2017}\natexlab{}.
\newblock \showarticletitle{{Dal{\'i}: A Periodically Persistent Hash Map}}. In
  \bibinfo{booktitle}{\emph{31st International Symposium on Distributed
  Computing (DISC 2017)}} \emph{(\bibinfo{series}{Leibniz International
  Proceedings in Informatics (LIPIcs)})},
  \bibfield{editor}{\bibinfo{person}{Andr{\'e}a~W. Richa}} (Ed.),
  Vol.~\bibinfo{volume}{91}. \bibinfo{publisher}{Schloss
  Dagstuhl--Leibniz-Zentrum fuer Informatik}, \bibinfo{address}{Dagstuhl,
  Germany}, \bibinfo{pages}{37:1--37:16}.
\newblock
\showISBNx{978-3-95977-053-8}
\showISSN{1868-8969}
\urldef\tempurl%
\url{https://doi.org/10.4230/LIPIcs.DISC.2017.37}
\showDOI{\tempurl}


\bibitem[\protect\citeauthoryear{Nishtala, Fugal, Grimm, Kwiatkowski, Lee, Li,
  McElroy, Paleczny, Peek, Saab, Stafford, Tung, and Venkataramani}{Nishtala
  et~al\mbox{.}}{2013}]%
        {nishtala2013scaling}
\bibfield{author}{\bibinfo{person}{Rajesh Nishtala}, \bibinfo{person}{Hans
  Fugal}, \bibinfo{person}{Steven Grimm}, \bibinfo{person}{Marc Kwiatkowski},
  \bibinfo{person}{Herman Lee}, \bibinfo{person}{Harry~C. Li},
  \bibinfo{person}{Ryan McElroy}, \bibinfo{person}{Mike Paleczny},
  \bibinfo{person}{Daniel Peek}, \bibinfo{person}{Paul Saab},
  \bibinfo{person}{David Stafford}, \bibinfo{person}{Tony Tung}, {and}
  \bibinfo{person}{Venkateshwaran Venkataramani}.}
  \bibinfo{year}{2013}\natexlab{}.
\newblock \showarticletitle{Scaling Memcache at Facebook}. In
  \bibinfo{booktitle}{\emph{Presented as part of the 10th {USENIX} Symposium on
  Networked Systems Design and Implementation ({NSDI} 13)}}.
  \bibinfo{publisher}{{USENIX}}, \bibinfo{address}{Lombard, IL},
  \bibinfo{pages}{385--398}.
\newblock
\showISBNx{978-1-931971-00-3}
\urldef\tempurl%
\url{https://www.usenix.org/conference/nsdi13/technical-sessions/presentation/nishtala}
\showURL{%
\tempurl}


\bibitem[\protect\citeauthoryear{Raju, Kadekodi, Chidambaram, and Abraham}{Raju
  et~al\mbox{.}}{2017}]%
        {raju2017pebblesdb}
\bibfield{author}{\bibinfo{person}{Pandian Raju}, \bibinfo{person}{Rohan
  Kadekodi}, \bibinfo{person}{Vijay Chidambaram}, {and} \bibinfo{person}{Ittai
  Abraham}.} \bibinfo{year}{2017}\natexlab{}.
\newblock \showarticletitle{Pebblesdb: Building key-value stores using
  fragmented log-structured merge trees}. In
  \bibinfo{booktitle}{\emph{Proceedings of the 26th Symposium on Operating
  Systems Principles}} \emph{(\bibinfo{series}{SOSP '17})}.
  \bibinfo{publisher}{ACM}, \bibinfo{address}{New York, NY, USA},
  \bibinfo{pages}{497--514}.
\newblock
\showISBNx{978-1-4503-5085-3}
\urldef\tempurl%
\url{https://doi.org/10.1145/3132747.3132765}
\showDOI{\tempurl}


\bibitem[\protect\citeauthoryear{Ruppel and Lucia}{Ruppel and Lucia}{2019}]%
        {ruppel2019transactional}
\bibfield{author}{\bibinfo{person}{Emily Ruppel} {and} \bibinfo{person}{Brandon
  Lucia}.} \bibinfo{year}{2019}\natexlab{}.
\newblock \showarticletitle{Transactional Concurrency Control for Intermittent,
  Energy-harvesting Computing Systems}. In
  \bibinfo{booktitle}{\emph{Proceedings of the 40th ACM SIGPLAN Conference on
  Programming Language Design and Implementation}} \emph{(\bibinfo{series}{PLDI
  2019})}. \bibinfo{publisher}{ACM}, \bibinfo{address}{New York, NY, USA},
  \bibinfo{pages}{1085--1100}.
\newblock
\showISBNx{978-1-4503-6712-7}
\urldef\tempurl%
\url{https://doi.org/10.1145/3314221.3314583}
\showDOI{\tempurl}


\bibitem[\protect\citeauthoryear{Schwalb, Dreseler, Uflacker, and
  Plattner}{Schwalb et~al\mbox{.}}{2015}]%
        {schwalb2015nvc}
\bibfield{author}{\bibinfo{person}{David Schwalb}, \bibinfo{person}{Markus
  Dreseler}, \bibinfo{person}{Matthias Uflacker}, {and} \bibinfo{person}{Hasso
  Plattner}.} \bibinfo{year}{2015}\natexlab{}.
\newblock \showarticletitle{NVC-Hashmap: A Persistent and Concurrent Hashmap
  For Non-Volatile Memories}. In \bibinfo{booktitle}{\emph{Proceedings of the
  3rd VLDB Workshop on In-Memory Data Mangement and Analytics}}
  \emph{(\bibinfo{series}{IMDM '15})}. \bibinfo{publisher}{ACM},
  \bibinfo{address}{New York, NY, USA}, Article \bibinfo{articleno}{4},
  \bibinfo{numpages}{8}~pages.
\newblock
\showISBNx{978-1-4503-3713-7}
\urldef\tempurl%
\url{https://doi.org/10.1145/2803140.2803144}
\showDOI{\tempurl}


\bibitem[\protect\citeauthoryear{Shalev and Shavit}{Shalev and Shavit}{2006}]%
        {shalev2006split}
\bibfield{author}{\bibinfo{person}{Ori Shalev} {and} \bibinfo{person}{Nir
  Shavit}.} \bibinfo{year}{2006}\natexlab{}.
\newblock \showarticletitle{Split-ordered lists: Lock-free extensible hash
  tables}.
\newblock \bibinfo{journal}{\emph{Journal of the ACM (JACM)}}
  \bibinfo{volume}{53}, \bibinfo{number}{3} (\bibinfo{year}{2006}),
  \bibinfo{pages}{379--405}.
\newblock


\bibitem[\protect\citeauthoryear{Volos, Tack, and Swift}{Volos
  et~al\mbox{.}}{2011}]%
        {volos2011mnemosyne}
\bibfield{author}{\bibinfo{person}{Haris Volos}, \bibinfo{person}{Andres~Jaan
  Tack}, {and} \bibinfo{person}{Michael~M. Swift}.}
  \bibinfo{year}{2011}\natexlab{}.
\newblock \showarticletitle{Mnemosyne: Lightweight Persistent Memory}. In
  \bibinfo{booktitle}{\emph{Proceedings of the Sixteenth International
  Conference on Architectural Support for Programming Languages and Operating
  Systems}} \emph{(\bibinfo{series}{ASPLOS XVI})}. \bibinfo{publisher}{ACM},
  \bibinfo{address}{New York, NY, USA}, \bibinfo{pages}{91--104}.
\newblock
\showISBNx{978-1-4503-0266-1}
\urldef\tempurl%
\url{https://doi.org/10.1145/1950365.1950379}
\showDOI{\tempurl}


\bibitem[\protect\citeauthoryear{Wang and Johnson}{Wang and Johnson}{2014}]%
        {Wang:2014:SLT:2732951.2732960}
\bibfield{author}{\bibinfo{person}{Tianzheng Wang} {and} \bibinfo{person}{Ryan
  Johnson}.} \bibinfo{year}{2014}\natexlab{}.
\newblock \showarticletitle{Scalable Logging Through Emerging Non-volatile
  Memory}.
\newblock \bibinfo{journal}{\emph{Proc. VLDB Endow.}} \bibinfo{volume}{7},
  \bibinfo{number}{10} (\bibinfo{date}{June} \bibinfo{year}{2014}),
  \bibinfo{pages}{865--876}.
\newblock
\showISSN{2150-8097}
\urldef\tempurl%
\url{https://doi.org/10.14778/2732951.2732960}
\showDOI{\tempurl}


\bibitem[\protect\citeauthoryear{Woude and Hicks}{Woude and Hicks}{2016}]%
        {van2016intermittent}
\bibfield{author}{\bibinfo{person}{Joel Van~Der Woude} {and}
  \bibinfo{person}{Matthew Hicks}.} \bibinfo{year}{2016}\natexlab{}.
\newblock \showarticletitle{Intermittent Computation without Hardware Support
  or Programmer Intervention}. In \bibinfo{booktitle}{\emph{12th {USENIX}
  Symposium on Operating Systems Design and Implementation ({OSDI} 16)}}.
  \bibinfo{publisher}{{USENIX} Association}, \bibinfo{address}{Savannah, GA},
  \bibinfo{pages}{17--32}.
\newblock
\showISBNx{978-1-931971-33-1}
\urldef\tempurl%
\url{https://www.usenix.org/conference/osdi16/technical-sessions/presentation/vanderwoude}
\showURL{%
\tempurl}


\bibitem[\protect\citeauthoryear{Y{\i}ld{\i}r{\i}m, Majid, Patoukas, Schaper,
  Pawelczak, and Hester}{Y{\i}ld{\i}r{\i}m et~al\mbox{.}}{2018}]%
        {yildirim2018ink}
\bibfield{author}{\bibinfo{person}{Kas{\i}m~Sinan Y{\i}ld{\i}r{\i}m},
  \bibinfo{person}{Amjad~Yousef Majid}, \bibinfo{person}{Dimitris Patoukas},
  \bibinfo{person}{Koen Schaper}, \bibinfo{person}{Przemyslaw Pawelczak}, {and}
  \bibinfo{person}{Josiah Hester}.} \bibinfo{year}{2018}\natexlab{}.
\newblock \showarticletitle{InK: Reactive Kernel for Tiny Batteryless Sensors}.
  In \bibinfo{booktitle}{\emph{Proceedings of the 16th ACM Conference on
  Embedded Networked Sensor Systems}} \emph{(\bibinfo{series}{SenSys '18})}.
  \bibinfo{publisher}{ACM}, \bibinfo{address}{New York, NY, USA},
  \bibinfo{pages}{41--53}.
\newblock
\showISBNx{978-1-4503-5952-8}
\urldef\tempurl%
\url{https://doi.org/10.1145/3274783.3274837}
\showDOI{\tempurl}


\bibitem[\protect\citeauthoryear{{Zhang} and {Swanson}}{{Zhang} and
  {Swanson}}{2015}]%
        {Zhang:7208275}
\bibfield{author}{\bibinfo{person}{Y. {Zhang}} {and} \bibinfo{person}{S.
  {Swanson}}.} \bibinfo{year}{2015}\natexlab{}.
\newblock \showarticletitle{A study of application performance with
  non-volatile main memory}. In \bibinfo{booktitle}{\emph{2015 31st Symposium
  on Mass Storage Systems and Technologies (MSST)}}. \bibinfo{publisher}{IEEE},
  \bibinfo{pages}{1--10}.
\newblock
\showISSN{2160-195X}
\urldef\tempurl%
\url{https://doi.org/10.1109/MSST.2015.7208275}
\showDOI{\tempurl}


\end{thebibliography}

\appendix
\section{Proof Terminology}\label{chap:terminology}
In this section we present the terminology we use in order to prove that the implementations of the link free and soft lists are correct.
The proof for the link-free list appears in Section~\ref{chap:link_free_correctness_gali} and the proof for the soft list appears in Section~\ref{chap:soft_correctness_gali}.
We assume a sequentially consistent execution in both proofs, i.e., we do not consider reordering of loads and stores in our proof.
This simplifies the proofs and is also unavoidable because there is no formal definition of C++11 extended to include \code{psync} operations.

We use the notion of \emph{durable linearizability}~\citep{10.1007/978-3-662-53426-7_23} for correctness.
To show that the proposed lists are durable linearizable, we define linearization points for all of the completed operations (either by returning or after a recovery procedure), as well as some of the operations which are still pending during a crash event.
Note that in our setting, following a crash, recovery is always completed before any new operation begins executing, so there exists a point in time where recovery is completed and no new operation began executing.

We consider a standard shared memory setting with a bounded collection of threads,
communicating via shared memory space~\citep{lynch1996distributed}. 
Data structures are accessed via their operations. 
A data structures includes a (finite or infinite) set of possible configuration (with a distinguished initial configuration) and a sequential specification.  
The \textit{sequential specification} defines the expected behavior of the data structure in a sequential execution.
It specifies, per operation, the preconditions and their respective postconditions. 
It also takes into account the operation's input parameters.
We consider deterministic data structures for which, given an operation and a current configuration (that obeys the operation's preconditions), there exists a single resulting configuration that satisfies the operation's postconditions.

A {\em configuration} at any point during the execution \textit{state} is an instantaneous snapshot of the system, specifying the content of the shared memory space as well as local variables and state for each thread. 
In the \textit{initial configuration} the shared data structure and the local variable are all in their initial state, and each thread's program counter is the set to the beginning of its program.
The threads change the system's configuration by taking \textit{steps}, which may include performing some local computations and, additionally, an operation on the shared memory.
In particular, an invocation of an operation (together with the respective input parameters) and the return from an operation are each considered to be a single step.
We assume each step is atomic.
An \textit{execution} consists of an alternating sequence of configurations and steps, starting with the initial state.
These steps are called {\em events}. 
We often consider only the execution steps, since the configurations can be easily computed given the execution steps. 
Each step is coupled with the executing thread (its identity depends on the \textit{scheduler}).
If it is an invocation of an operation, it also includes the input parameters, and if it is the last step of the operation, it is associated with the operation result.
This last step is denoted the {\em return} or the {\em response} of the operation.
A sub-execution of a given execution $E$ is a sub-sequence of events in $E$.

\subsection{Linearizability}\label{sec:linearizability} 
For each invocation of an operation in an execution we can match a response that follows it in the execution with the same invoking thread and operation type.
We can also match execution steps of this operations (by the same thread) in between the two. 
A invocation is called \emph{pending} in execution $E$ if its matching response does not exist.
We denote \emph{complete($E$)} as a sub-execution of $E$ containing all pairs of matching invocations and responses,
and all execution steps executing these operations.
For each thread $T$ we define a sub-execution by using the notation $E|T$.
This sub-execution is the sub-sequence of steps of operations performed by thread $T$.
Two executions, $E, \hat{E}$ are equivalent if for every thread $T$ they are the same, $E|T = \hat{E}|T$.
A execution $S$ is called \emph{sequential} if it begins with an invocation and a new invocation in the 
execution occurs immediately after the response to the previous invocation occurred. 
We call a execution \emph{well-formed} if for every thread $T$, $E|T$ is sequential. 

We say that an operation $op_0$ \emph{happens before} an operation $op_1$ in a execution $E$, if the response of $op_0$ occurs in $E$ before $op_1$'s invocation. In this case, we write $op_0 \prec_E op_1$.
The relation \emph{happens before} defines a partial order of a execution $E$. 

\emph{Sequential specification}. Given a data structure, we define a \emph{sequential specification} as a set of sequential executions that are legal for this data structure. 

\begin{definition}[Linearizability~\citep{Herlihy:1990:LCC:78969.78972}]
	\label{def:linearizability}
	A execution $E$ of data structure operations is \emph{linearizable} if it can be extended by adding zero or more responses at the end of $E$ to create $\hat{E}$, and there exists a legal sequential execution $S$ for the data structure such that:
	\begin{enumerate}
		\item $complete(\hat{E})$ is equivalent to $S$
		\item if $m_0 \prec_{\hat{E}} m_1$, then  $m_0 \prec_{S} m_1$ in $S$.
	\end{enumerate}
	We refer to $S$ as the linearization of $E$, and the order of the operations in $S$ is called the linearization order.
\end{definition}

Informally, given a execution of invocations, some of which might be pending, and responses, a execution is \emph{linearizable} if there exists a {\em linearization order $L$} of some of the operations in $E$ that satisfies the following.
All complete operations appear in $L$, where complete means that the operation's invocation has a matching response
in the execution.
Some pending invocations are completed by the responses added in $\hat{E}$ and these must also appear in $L$. 
The remaining pending invocations do not appear in $L$.
If we execute the operations according to their linearization order sequentially, we get the same execution 
(equivalent parameters and results), and the sequential execution is legal for the data structure. 

Often, papers use an equivalent definition, in which the second condition is replaced as follows. 
We associate a {\em linearization point} for each operation with a step in the execution, and require that the linearization point of each operation appears between its invocation and its response.
The linearization order is then determined by the order of the linearization points. 
It is easy to see that adequate linearization points satisfy the second condition. 
Showing that adequate linearization points can be specified for each adequate linearization is
also not hard. 

In this paper we sometimes view a linearization of an execution as a specification of linearization
points and at other times as an order of the operations that satisfies Condition (2).
These views are equivalent. 

\subsection{Durable Linearizability}\label{sec-durable-linearizability}
Moving to durable linearizability, we add crashes as new types of events in a execution and a new correctness criterion is required.
We use the notion of \emph{durable linearizability}~\citep{10.1007/978-3-662-53426-7_23} for correctness.
In the setting of durable linearizability, after a crash, new threads are created to continue the tasks of the program.
A recovery procedure is executed after each crash by the new threads to complete some of the pending operations 
that were executed before the crash.
The recovery is completed before new operations start executing. 
Loosely speaking, we require that the execution, minus all crash events and minus operations that did not survive the crashes, be linearizable. 
Given an execution $E$ of data structure operations, we denote $ops(E)$ the sub-execution where all the crashes are omitted. 

\begin{definition}[Durable Linearizability] \label{def:dl}
A execution $E$ is said to be \emph{durably linearizable} if it is well-formed (i.e., for every thread $T$, $E|T$ is sequential) and ops(E) is linearizable.
A data structure is durable linearizable if any concurrent execution of data structure operations yields a durably linearizable execution. 
\end{definition}

Note that after a crash, new threads are generated to continue the program execution.
Therefore, any operation that executes concurrently with a crash is a pending operation that can never complete.
It cannot have a matching response in the execution because the thread that invoked it does not execute beyond the crash.
In contrast, any operation that completes before the crash has a matching response that completes it.
Definition~\ref{def:dl} requires a linearization of all operations in the execution.
By Definition~\ref{def:linearizability} (linearizability), this means that all completed operations must be included in the linearization order, but operations that executed concurrently with a crash are pending at the end of the execution 
and they do not need to be in the linearization order.
Some of these operations may be matched with a response at the end of the execution and be included in the linearization order, whereas other operations may remain pending and be left out of the linearization order. 

In our proof, we specify a linearization order for executions of the proposed durable lists. 
We specify which of the operations that execute concurrently with the crash survive, 
i.e., are matched with a response at the end of the execution and are included in the linearization order.
We denote these operations as operations that {\em survive} the crash. 

The recovery procedure, executed after a crash (and described in Section~\ref{sub_sec:lf_recovery} and~\ref{sub_sec:soft_recovery}), is assumed to terminate before new threads start executing their code.
Given an operation for which a crash event occurs after its invocation and before its response, we consider its response point as the end of the respective recovery procedure.
Notice that in the following definitions, we do not consider recoveries that are interrupted by crash events. We do so for clarity and brevity. The definitions can be easily extended to include such cases.
\section{Link-Free Correctness}
\label{chap:link_free_correctness_gali}
We start by proving some basic list invariants.
In Section~\ref{chap:link_free_dl} we prove the linearizability of our implementation when there are no crash events, and that it is also durable linearizable.
Finally, we show that our implementation is lock-free in Section~\ref{chap:link_free_lock_freedom}.

The content of a node in the volatile memory, can be different from its content in the NVRAM, due to modifications that have not been persisted yet (either by implicit or explicit flushes).
We distinguish between the two representations of a single link-free node: the \emph{volatile node}, and the \emph{persistent copy} which contains only the modifications written back to the NVRAM (implicitly or explicitly).

We start by stating some basic definitions we are going to use throughout our proof.
Notice that, unless stated otherwise, the definitions relate to the volatile nodes (regardless of being written to the non-volatile memory).

\begin{definition}[Reachability] \label{definition:link_free_reachability}
	We say that a node $n$ is \emph{reachable} from a node $n'$ if there exists nodes $n_0,n_1,\ldots,n_k$ such that $n_0=n'$, $n_k=n$ and for every $0 \leq i <k$, $n_i$ is the predecessor of $n_{i+1}$ (via its \emph{next} pointer).
	We say that a node $n$ is \emph{reachable} if it is reachable from the \code{head} sentinel node.
\end{definition}

\begin{definition}[Infant Nodes] \label{definition:link_free_infant}
	We say that a node $n$ is an \emph{infant} if $n$ is neither \code{head} nor \code{tail}, and there does not exist an earlier successful execution of the CAS operation in line~\ref{code.insert_cas} of Listing~\ref{algo:lf_insert}, satisfying $newNode=n$.
\end{definition}

\begin{definition}[A Node's State] \label{definition:link_free_node_state}
Let $n$ be a node (which is neither \code{head} nor \code{tail}), and let $b$ be the initial value of its two validity bits.
\begin{enumerate}
    \item We say that $n$ is at its {\em initial state} if the value of both of its validity bits is $b$.
    \item We say that $n$ is {\em invalid} if the value of its first validity bit is $\neg b$ and the value of its second validity bit is $b$.
    \item We say that $n$ is {\em valid} if the value of both of its validity bits is $\neg b$.
    \item We say the $n$ is {\em marked} if its {\em next} pointer is marked. Otherwise, we say that $n$ is {\em unmarked}.
\end{enumerate}
The \code{head} and \code{tail} sentinel nodes are always considered as valid and unmarked nodes.
\end{definition}

We now prove some basic claims regarding the link-free list implementation.

\begin{claim}[State Transitions] \label{claim:link_free_state_changes}
Let $n$ be a volatile node. Then its state can only go through the following transitions:
	\begin{enumerate}
	    \item From being unmarked and in its initial stage, to being unmarked and invalid. \label{state:from_intial_to_invalid}
	    \item From being unmarked and invalid, to being unmarked and valid. \label{state:from_invalid_to_valid}
	    \item From being unmarked and valid, to being marked and valid. \label{state:from_unmarked_to_marked}
	\end{enumerate}
\end{claim}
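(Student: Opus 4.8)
The plan is to prove Claim~\ref{claim:link_free_state_changes} by a step-by-step (inductive) analysis of the execution, arguing that every event that writes to one of a node's two validity bits or to the marking bit of its \emph{next} pointer either leaves the node's state unchanged or effects exactly one of the three listed transitions. The first step is to enumerate all such events. The validity bits are written only by \code{flipV1} (line~\ref{code.make_node_invalid} of Listing~\ref{algo:lf_insert}) and by \code{makeValid} (line~\ref{code.lf_node_exists} of Listing~\ref{algo:lf_contains}, lines~\ref{code.lf_insert_node_exists} and~\ref{code.insert_make_valid} of Listing~\ref{algo:lf_insert}, and line~\ref{code.remove_make_valid} of Listing~\ref{algo:lf_remove}); the marking bit of a node's \emph{own} \emph{next} pointer is set only by the successful CAS of line~\ref{code.logically_remove} of Listing~\ref{algo:lf_remove} (the CAS of line~\ref{code.insert_cas}, the CAS of line~\ref{code.trim_unlink_cas}, and the relaxed store of line~\ref{code.init_next} write the \emph{next} field of \code{pred} or of an infant, always to a clean reference, so none of them sets a marking bit). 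The base case is that a node returned by \code{allocFromArea} has both validity bits equal (its initial state, by the allocation guarantee of Section~\ref{sec.mm}) and an unmarked \emph{next} pointer, so it starts in the (initial, unmarked) state.

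Next I would establish the key lemma that \code{flipV1} is executed only on an \emph{infant} node (Definition~\ref{definition:link_free_infant}) still in its initial, unmarked state: line~\ref{code.make_node_invalid} is reached only on the \code{newNode} just obtained from \code{allocFromArea} on line~\ref{code.link_free_alloc_node}, strictly before the linking CAS of line~\ref{code.insert_cas}; since the node is not yet linked, no other thread holds a reference to it, so its fields are untouched since allocation. Hence \code{flipV1} flips the first validity bit from $b$ to $\neg b$, moving the node from (initial, unmarked) to (invalid, unmarked) --- transition~\ref{state:from_intial_to_invalid} --- and, because this is the only moment at which \code{flipV1} can run on this node (the enclosing \code{insert} call executes it once and never revisits the node as an infant afterwards), the first validity bit is never flipped back to $b$.

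I would then analyze \code{makeValid}, which writes the second validity bit to equal the first. If the node is in its initial state or already valid this is a no-op; if it is invalid (first bit $\neg b$, second bit $b$) it becomes valid, which is transition~\ref{state:from_invalid_to_valid}; in no case does \code{makeValid} turn a valid node invalid or touch the marking bit. Finally I would analyze the marking CAS of line~\ref{code.logically_remove}: it is preceded, in the same loop iteration, by \code{makeValid(curr)} on line~\ref{code.remove_make_valid}, so --- using the assumed sequentially consistent execution --- \code{curr} is valid (and, by the previous step, stays valid) when the CAS runs, and the CAS succeeds only when the \emph{next} pointer is clean, i.e.\ the node is unmarked; thus a successful CAS takes the node from (valid, unmarked) to (valid, marked), transition~\ref{state:from_unmarked_to_marked}, and no event ever clears a node's own marking bit. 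Combining the four cases, the only states a node can occupy are (initial, unmarked), (invalid, unmarked), (valid, unmarked) and (valid, marked), reached exactly through the three listed transitions.

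I expect the main obstacle to be the rigorous justification that \code{flipV1} acts only on freshly allocated, untouched infant nodes: this rests on a memory-management property (each node is handed out to at most one operation at a time, so there is no concurrent interference before linking) together with the placement of line~\ref{code.make_node_invalid} between allocation and the linking CAS. A secondary but essential point is ensuring that the enumeration of writes to the validity bits and to the marking bit is genuinely exhaustive, since a single missed line would invalidate the case analysis.
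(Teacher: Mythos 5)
Your overall structure is essentially the paper's: enumerate every write to the two validity bits and to the marking bit, analyse \code{flipV1} and \code{makeValid} case by case, and check that the marking CAS of line~\ref{code.logically_remove} can only move a node from (valid, unmarked) to (valid, marked). Your extra observations (that the CASes of lines~\ref{code.insert_cas} and~\ref{code.trim_unlink_cas} and the relaxed store of line~\ref{code.init_next} never set a node's own mark bit, and that \code{flipV1} acts only on a freshly allocated infant still in its initial state) are correct and consistent with the paper's reasoning.

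There is, however, one step in your treatment of transition~\ref{state:from_unmarked_to_marked} that does not follow as written. You argue that because the CAS of line~\ref{code.logically_remove} is preceded in the same iteration by \code{makeValid(curr)} (line~\ref{code.remove_make_valid}), the node \code{curr} is valid when the CAS executes. But by your own analysis, \code{makeValid} is a no-op on a node that is still in its initial state (both bits equal to $b$); if \code{curr} could be in its initial state at line~\ref{code.remove_make_valid}, the CAS could mark an initial-state node, contradicting the claim. The missing observation, stated explicitly in the paper's proof, is that \code{curr} can never be in its initial state at that point: \code{curr} is returned by \code{find}, hence it is a linked, non-infant node, and every node is made invalid by \code{flipV1} (line~\ref{code.make_node_invalid}) strictly before the linking CAS of line~\ref{code.insert_cas}, so any linked node is already invalid or valid before \code{makeValid} runs. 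You already have this ingredient in your key lemma about \code{flipV1}; you simply need to invoke it here (it also justifies the paper's remark that \code{makeValid} executed before \code{flipV1} leaves the state unchanged, a situation that in fact cannot arise for linked nodes). With that connection made, your proof coincides with the paper's.
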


\begin{proof} A node $n$ is always created with an initialized and unmarked state, and its state can only change in line~\ref{code.lf_node_exists} of Listing~\ref{algo:lf_contains}, line~\ref{code.lf_insert_node_exists}, \ref{code.make_node_invalid} or~\ref{code.insert_make_valid} of Listing~\ref{algo:lf_insert}, or in line~\ref{code.remove_make_valid} or~\ref{code.logically_remove} of listing~\ref{algo:lf_remove}.
As explained is Section~\ref{sub_sec.lf_help}, executing the \code{flipV1} or \code{makeValid} auxiliary functions on the same node more than once, would not effect its state. Moreover, when \code{makeValid} is executed before \code{flipV1}, the node's state remains initialized and is also not effected.
Therefore, \code{flipV1} only changes the node's state from being initialized to being invalid, \code{makeValid} only changes the node's state from being invalid to being valid, and it remains to show that the marking of a node does not foil the above transition types.
Since a node can only be marked (line~\ref{code.logically_remove} of listing~\ref{algo:lf_remove}), and is never unmarked throughout the execution, we only need to show that it is valid when marked.
If it is either invalid or valid before executing line~\ref{code.remove_make_valid}, then from the above, it is valid when marked in line~\ref{code.logically_remove}.
Notice that it also cannot be at its initialized state, since a node becomes invalid right after its creation, in line~\ref{code.make_node_invalid} of Listing~\ref{algo:lf_insert}.
\end{proof}

\begin{claim} [Marked Nodes] \label{claim:link_free_marked_node}
Once a node is marked, its \emph{next} pointer does not change anymore.
\end{claim}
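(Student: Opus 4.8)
The plan is to argue syntactically: inspect every program line at which a \emph{next} field is written, and show that once node $n$ is marked none of these writes can change $n$'s \emph{next} pointer. Scanning Listings~\ref{algo:lf_insert},~\ref{algo:lf_remove} and~\ref{algo:lf_help}, a \emph{next} field is modified in exactly four places: the relaxed store \code{newNode->next.store(...)} on line~\ref{code.init_next}; the linking \code{compare\_exchange\_strong} on \code{pred->next} on line~\ref{code.insert_cas}; the marking \code{compare\_exchange\_strong} on \code{curr->next} on line~\ref{code.logically_remove}; and the unlinking \code{compare\_exchange\_strong} on \code{pred->next} inside \code{trim} on line~\ref{code.trim_unlink_cas}. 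It therefore suffices to rule out each of these acting on $n$'s \emph{next} pointer after $n$ has been marked.

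For the three CAS sites I would use the observation that in each of them the expected (old) value handed to \code{compare\_exchange\_strong} is an \emph{unmarked} pointer: in the linking CAS and in \code{trim} the expected value is the \code{curr} returned by \code{find}, which is obtained via \code{getRef} and hence has a clean least-significant bit, and in the marking CAS the expected value is \code{succ}, which equals \code{getRef(curr->next.load())} and is likewise unmarked. On the other hand, by Claim~\ref{claim:link_free_state_changes} a node's state never leaves the marked state, so from the instant $n$ becomes marked the contents of its \emph{next} field are permanently a marked pointer. Consequently, whenever one of these three CAS operations targets $n$'s \emph{next} field, its expected value (unmarked) differs from the current contents (marked); the CAS fails and leaves the pointer untouched.

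It remains to treat the relaxed store on line~\ref{code.init_next}. This store is executed on the node \code{newNode} that the current \code{insert} iteration has just obtained from \code{allocFromArea()}, and it occurs strictly earlier in program order than the linking CAS on line~\ref{code.insert_cas}; hence at that moment \code{newNode} is an \emph{infant} in the sense of Definition~\ref{definition:link_free_infant} and in particular not yet reachable (Definition~\ref{definition:link_free_reachability}). A marked node, however, is one that some \code{remove} reached through \code{find} and then marked on line~\ref{code.logically_remove}, so it was previously linked by a successful execution of the CAS on line~\ref{code.insert_cas} and is therefore \emph{not} an infant. Since the relaxed store writes only infant nodes — and since epoch-based reclamation prevents a node from being freed and re-allocated (and thus re-initialized) while a thread still holds a marked reference to it, ruling out an ABA scenario — the relaxed store cannot fire on $n$'s \emph{next} pointer once $n$ is marked. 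Combining this with the previous paragraph proves the claim.

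I expect the delicate part to be this last paragraph: precisely establishing the chain ``marked $\Rightarrow$ previously linked $\Rightarrow$ non-infant'' (which may warrant being stated as its own small invariant) and confirming that the memory-reclamation scheme excludes the relevant re-allocation ABA. The three-CAS argument, by contrast, is essentially immediate once Claim~\ref{claim:link_free_state_changes} is available.
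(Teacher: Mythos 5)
Your proposal is correct and follows essentially the same route as the paper's proof: enumerate the writes to \emph{next} fields and observe that each CAS (lines~\ref{code.insert_cas}, \ref{code.logically_remove}, \ref{code.trim_unlink_cas}) expects an unmarked pointer obtained via \code{getRef}, while by Claim~\ref{claim:link_free_state_changes} a marked node's \emph{next} field stays marked, so the CAS fails. The only difference is that you additionally rule out the relaxed store at line~\ref{code.init_next} via the infant/non-infant argument (and a reclamation remark), a case the paper's proof silently omits, so your version is, if anything, slightly more thorough.
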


\begin{proof} Let $n$ be a marked node. From Claim~\ref{claim:link_free_state_changes}, it cannot be unmarked. Besides when marked in line~\ref{code.logically_remove} of listing~\ref{algo:lf_remove}, $n$'s \emph{next} pointer can only change during a successful CAS execution in line~\ref{code.trim_unlink_cas} of Listing~\ref{algo:lf_help} or line~\ref{code.insert_cas} of Listing~\ref{algo:lf_insert}.
In both cases, it is assumed that $n$ is unmarked and therefore, the CAS execution is unsuccessful if it is marked, leaving $n$'s \emph{next} pointer unchanged.
\end{proof}

\begin{claim} [The States of the Sentinel Nodes] \label{claim:link_free_sentinel_state}
The \code{head} and \code{tail} sentinel nodes are always unmarked and valid.
\end{claim}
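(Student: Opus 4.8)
The plan is to prove Claim~\ref{claim:link_free_sentinel_state} by exhaustively checking every program line that can modify a node's validity bits or the mark bit of its \emph{next} pointer, and showing that none of these can be applied to the \code{head} or \code{tail} sentinel. By Claim~\ref{claim:link_free_state_changes} (together with Claim~\ref{claim:link_free_marked_node}) the only lines that change a node's state are: the \code{flipV1} call (line~\ref{code.make_node_invalid} of Listing~\ref{algo:lf_insert}); the \code{makeValid} calls (line~\ref{code.lf_node_exists} of Listing~\ref{algo:lf_contains}, lines~\ref{code.lf_insert_node_exists} and \ref{code.insert_make_valid} of Listing~\ref{algo:lf_insert}, and line~\ref{code.remove_make_valid} of Listing~\ref{algo:lf_remove}); and the marking CAS (line~\ref{code.logically_remove} of Listing~\ref{algo:lf_remove}). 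In addition, a node's \emph{next} pointer itself may be overwritten by the insertion CAS (line~\ref{code.insert_cas}) or the trimming CAS (line~\ref{code.trim_unlink_cas} of Listing~\ref{algo:lf_help}), so these must be inspected too. Recall that, by Definition~\ref{definition:link_free_node_state}, the sentinels are treated as valid and unmarked by convention; what we must verify is that this convention is never contradicted by the actual bits.

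The key auxiliary observation I would establish first is that in every invocation of \code{find} (and in the traversal loop of \code{contains}), the local variable \code{curr} never refers to \code{head}, and that whenever \code{curr} is passed to \code{makeValid}, \code{flipV1}, or the marking CAS, it never refers to \code{tail}. The first part is structural: \code{curr} is initialized to \code{head->next} and is only ever reassigned via \code{getRef(curr->next.load())}, which moves strictly forward along the list, so it never becomes \code{head} again. The second part uses the sentinel keys: by the list's setup all stored keys are finite and the arguments of the operations are finite as well, whereas \code{tail->key} $= \infty$; the branches that reach \code{makeValid}/\code{flipV1}/the marking CAS in \code{contains}, \code{insert}, and \code{remove} are all entered only after establishing \code{curr->key == key} for a finite \code{key} (or, in the \code{insert} case of the freshly allocated \code{newNode}, the node is an infant by Definition~\ref{definition:link_free_infant} and hence not a sentinel), so \code{curr} is not \code{tail} there. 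Similarly, \code{pred} is never \code{tail}, since \code{pred} is advanced past a node only when that node's key is strictly smaller than the argument key.

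Given this observation, the validity bits of \code{head} and \code{tail} are never written: \code{flipV1} acts only on a freshly allocated, non-sentinel node, and \code{makeValid} acts only on the non-sentinel \code{curr}. Hence these nodes remain valid, consistent with the convention. For the mark bit, the marking CAS of line~\ref{code.logically_remove} targets \code{curr}, which is not a sentinel, so it cannot mark \code{head} or \code{tail}. The only remaining writes to a sentinel's \emph{next} field are to \code{head->next} via the insertion or trimming CAS (as \code{pred} may equal \code{head}); but in the first case the installed reference is a fresh node whose \emph{next} was just stored unmarked, and in the second it is \code{getRef(curr->next.load())}, which has its mark bit cleared --- so \code{head->next} is always overwritten with an unmarked reference and \code{head} stays unmarked. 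Finally, \code{tail}'s \emph{next} field is never the target of any CAS, because \code{tail} is never \code{pred} and never \code{curr} at a CAS, so \code{tail} stays unmarked as well.

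I expect the main obstacle to be the auxiliary observation that \code{curr} is never a sentinel at the state-changing lines: it relies on the sorted-key layout of the list and on all stored keys lying strictly between $-\infty$ and $+\infty$, and one must take care to avoid an apparent circularity --- that ``\code{head} is never trimmed'' might seem to need the claim being proved --- which is resolved by noting that \code{curr} $=$ \code{head} is excluded purely structurally, and that \code{tail}'s \emph{next} pointer is never marked simply because it is never written. The remainder is a routine line-by-line audit of the three operations and the two auxiliary functions.
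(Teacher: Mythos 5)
Your proof is correct and follows essentially the same route as the paper: the paper's (much shorter) argument likewise reduces to the observation, via Claim~\ref{claim:link_free_state_changes}, that every state-changing line acts either on a freshly allocated node or on a node whose key equals the operation's finite input key, together with the standing assumption that $-\infty$ and $\infty$ are never passed as arguments. Your additional checks (the structural exclusion of \code{head} as \code{curr} and the fact that CASes on \code{head->next} only ever install unmarked references) are harmless elaborations of that same argument rather than a different approach.
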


\begin{proof} As mentioned in the proof of Claim~\ref{claim:link_free_state_changes}, a node's state can only change when its key is sent as an input parameter to one of the list's operations. Assuming the neither $-\infty$ nor $\infty$ are sent as input parameters to the list's operations, the states of the \code{head} and \code{tail} sentinel nodes always remain unmarked and valid.
\end{proof}

\begin{claim} [Nodes Invariants] \label{claim:link_free_nodes}
Let $n_1$ and $n_2$ be two different nodes. Then:
\begin{enumerate}
    \item If $n_2$ is the successor of $n_1$ in the list then $n_2$ is not an infant. \label{link_free_node_inv:no_infant}
    
    \item Right before executing line~\ref{code.insert_cas} in Listing~\ref{algo:lf_insert}, having $newNode=n_2$, it holds that: (1) $n_2$ is an infant, and (2) $n_2$ is invalid. \label{link_free_node_inv:is_infant}
    
    \item If $n_2$ is not an infant and not marked, or marked but not yet flushed since being marked, then $n_2$ is reachable. \label{link_free_node_inv:is_reachable}
    
    \item If $n_2$ is marked, but has not been flushed since being marked, then $n_2$ is reachable. \label{link_free_node_inv:not_flushed_is_reachable}
    
    \item If $n_1$'s key is smaller than or equal to $n_2$'s key, then $n_1$ is not reachable from $n_2$. \label{link_free_node_inv:not_reachable}
    
    \item If $n_2$ is reachable from $n_1$ at a certain point, then as long as $n_2$ is not marked, $n_2$ is still reachable from $n_1$. \label{link_free_node_inv:still_reachable}
    
    \item If $n_1$ is not an infant then the \code{tail} sentinel node is reachable from $n_1$. \label{link_free_node_inv:tail_reachable}
\end{enumerate}
\end{claim}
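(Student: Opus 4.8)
The plan is to prove all seven parts of the claim \emph{simultaneously} by induction on the number of events in the execution. In the initial configuration the list contains only \code{head} and \code{tail}, with \code{head}'s \emph{next} pointer referencing \code{tail}; neither sentinel is an infant (by definition), both are valid and unmarked, so every part holds vacuously or trivially. For the inductive step, I would first note that the only events that can affect any of the seven assertions are: (i) a successful CAS in line~\ref{code.insert_cas} of Listing~\ref{algo:lf_insert}, which links a new node and, in the same atomic step, turns it from an infant into a non-infant; (ii) the store in line~\ref{code.init_next}, which sets an infant's \emph{next} pointer; (iii) a successful CAS in line~\ref{code.logically_remove} of Listing~\ref{algo:lf_remove}, which marks a node without changing the reference part of its \emph{next} field; (iv) a successful CAS in line~\ref{code.trim_unlink_cas} of Listing~\ref{algo:lf_help}, which unlinks a marked node; (v) a \code{FLUSH\_DELETE} on a marked node; and (vi) an allocation followed by \code{flipV1}, which creates a fresh, unreachable, invalid infant. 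Reads and the traversals inside \code{contains} and \code{find} change no pointers and so cannot break anything; event (v) only changes a node from ``marked and not flushed'' to ``marked and flushed'', which merely shrinks the preconditions of parts~\ref{link_free_node_inv:is_reachable} and~\ref{link_free_node_inv:not_flushed_is_reachable}.

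Two auxiliary facts carry most of the weight. The first is a \emph{reconnection} fact: a successful \code{trim} CAS fires only when \code{pred->next} equals \code{curr}, and it then replaces that edge by \code{pred->next}$=$\code{getRef(curr->next)}; since \code{curr} is marked its \emph{next} pointer is frozen (Claim~\ref{claim:link_free_marked_node}), so any path that passed through \code{curr} still exists afterwards, one node shorter — hence unlinking a node leaves the reachability of every \emph{other} node intact. The second is a \emph{flushing} fact: line~\ref{code.trim_flush} runs \code{FLUSH\_DELETE(curr)} before line~\ref{code.trim_unlink_cas}, and \code{FLUSH\_DELETE} is only ever invoked on marked nodes (the call sites in \code{contains}, \code{trim}, and — via \code{trim} — \code{remove} all guard on the node being marked), so a node that becomes unreachable via a trim has been flushed since it was marked. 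I would also record two small consequences of part~\ref{link_free_node_inv:no_infant}: an infant is never reachable (a reachable non-sentinel node is some node's successor), and \code{find} returns only non-infant nodes (it walks \emph{next} pointers from \code{head}, so each node it inspects was reachable — hence non-infant — when inspected, and non-infancy is permanent).

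With those in hand the parts go through mechanically. Part~\ref{link_free_node_inv:no_infant}: new \emph{next}-edges arise only at (i), where the target \code{newNode} becomes non-infant atomically; at (ii), where the edge leaves an (unreachable) infant; and at (iv), where the target is a successor of the non-infant \code{curr}, hence non-infant by the IH. Part~\ref{link_free_node_inv:is_infant}: just before line~\ref{code.insert_cas}, \code{newNode} was freshly allocated with no earlier successful link-CAS (so it is an infant by the allocator discipline) and \code{flipV1} was applied with no intervening \code{makeValid} — impossible since \code{newNode} is unreachable — so it is invalid. Parts~\ref{link_free_node_inv:is_reachable} and~\ref{link_free_node_inv:not_flushed_is_reachable}: a non-infant node leaves the reachable set only through a trim, and by the flushing fact it is then marked and flushed-since-marked, i.e.\ precisely when the hypothesis fails; insert and marking remove nothing. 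Part~\ref{link_free_node_inv:not_reachable}: the only new edges are \code{pred}$\to$\code{newNode}$\to$\code{curr} (insert) and \code{pred}$\to$\code{succ} (trim); \code{find} gives \code{pred->key}$<$\code{key}$\le$\code{curr->key}, and insert links only when \code{curr->key}$\neq$\code{key}, so \code{pred->key}$<$\code{newNode->key}$<$\code{curr->key}, while for the trim edge the IH gives \code{pred->key}$<$\code{curr->key}$<$\code{succ->key} (keys never change) — so \emph{next} stays strictly key-increasing and no node of key $\le n_2$'s is reachable from $n_2$. Part~\ref{link_free_node_inv:still_reachable}: immediate from the reconnection fact, since while $n_2$ stays unmarked it is never the node a trim removes and insertions only lengthen paths. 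Part~\ref{link_free_node_inv:tail_reachable}: when $n_1$ is first linked its \emph{next} pointer is the \code{curr} returned by \code{find}, a non-infant from which \code{tail} is reachable by the IH; thereafter $n_1$'s \emph{next} pointer changes only while $n_1$ is unmarked, always to another non-infant (part~\ref{link_free_node_inv:no_infant}) from which \code{tail} is reachable by the IH, and once $n_1$ is marked its \emph{next} pointer freezes.

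The main obstacle, I expect, is the bookkeeping around the exact instant of each CAS: one must argue that a successful link-CAS implies \code{pred} still pointed to \code{curr} at that instant (so \code{pred} was unmarked, and the key inequalities and reachability facts from \code{find} still held), that a successful trim-CAS implies \code{pred} still pointed to the marked \code{curr}, and that two concurrent trims of adjacent marked nodes still reconnect correctly. These ``the CAS succeeded, therefore the world still looked a certain way'' arguments are where the real subtlety lies; the rest is a routine case analysis driven by Claims~\ref{claim:link_free_state_changes}--\ref{claim:link_free_sentinel_state}. A secondary point to flag is the reliance of part~\ref{link_free_node_inv:is_infant} on the memory manager never recycling a still-referenced node, which is exactly what the epoch-based scheme of Section~\ref{sec.mm} guarantees.
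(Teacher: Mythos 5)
Your proposal is correct and takes essentially the same route as the paper: a simultaneous induction over the execution steps with the same case analysis of the pointer- and state-changing instructions, resting on the frozen \emph{next} pointer of marked nodes (Claim~\ref{claim:link_free_marked_node}) and on the \code{FLUSH\_DELETE} in \code{trim} preceding the unlink CAS. Your explicit ``reconnection'' and ``\code{find} returns non-infants'' lemmas are just packaged versions of the extremal (longest-path / last-broken-node) arguments the paper runs inline, so nothing of substance differs.
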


\begin{proof}
We are going to prove the claim by induction on the length of the execution.
At the initial stage, \code{head} and \code{tail} are the only nodes in the list, having $-\infty$ and $\infty$ keys (respectively), both are reachable by Definition~\ref{definition:link_free_reachability}, and \code{head} is \code{tail}'s predecessor.
Therefore, all of the invariants obviously hold.
Now, assume that all of the invariants hold at a certain point during the execution, at let $s$ be the next execution step, executed by a thread $t$.
\begin{enumerate}
    \item If $n_2$ is not an infant before executing $s$, then by Definition~\ref{definition:link_free_infant}, it is not an infant after executing $s$, and the invariant holds. Otherwise, by the induction hypothesis, $n_2$ does not have a predecessor before executing $s$, and it cannot be the \code{head} sentinel node.
    $n_1$'s successor can only change in line~\ref{code.trim_unlink_cas} of Listing~\ref{algo:lf_help}, or in line~\ref{code.init_next} or~\ref{code.insert_cas} of Listing~\ref{algo:lf_insert}. If $s$ is the execution of line~\ref{code.trim_unlink_cas} in Listing~\ref{algo:lf_help} or line~\ref{code.init_next} in Listing~\ref{algo:lf_insert}, then $n_2$ has already been traversed during a former \code{find} execution, as a node with a predecessor, and by the induction hypothesis, is not an infant. If $s$ is the execution of line~\ref{code.insert_cas} in Listing~\ref{algo:lf_insert}, then $n_2$ is not an infant by Definition~\ref{definition:link_free_infant}.
    \item Since $n_2$ can only be that node during the execution of the \code{insert} operation in which it is created, and which returns in line~\ref{code.insert_last_line}, after a successful CAS execution in line~\ref{code.insert_cas}, by Definition~\ref{definition:link_free_infant}, $n_2$ must be an infant at this point, and (1) holds. 
    Now, assume by contradiction that $n_2$ is not invalid. Since it becomes invalid in line~\ref{code.make_node_invalid} and by Claim~\ref{claim:link_free_state_changes}, its state must be valid. $n_2$'s state can become valid only in line~\ref{code.lf_node_exists} of Listing~\ref{algo:lf_contains}, in line~\ref{code.lf_insert_node_exists} or~\ref{code.insert_make_valid} of Listing~\ref{algo:lf_insert}, or in line~\ref{code.remove_make_valid} of Listing~\ref{algo:lf_remove}. In all cases, it must have a predecessor prior to that change, and by invariant~\ref{link_free_node_inv:no_infant}, it is not an infant -- a contradiction. Therefore, $n_2$'s state is invalid, and the invariant holds.
    \item If $n_2$ was an infant before executing $s$, then $s$ is the execution of line~\ref{code.insert_cas} in Listing~\ref{algo:lf_insert}, making $n_2$ the successor of some node which is reachable by assumption. $n_2$ is reachable in this case. Otherwise, by assumption and Claim~\ref{claim:link_free_state_changes}, it was reachable right before executing $s$. Assume by contradiction that it is no longer reachable after executing $s$. 
    Then $n_2$ is reachable from a node $n_1$ that was reachable right before $s$, and is no longer reachable (may be $n_2$ itself). Assume w.l.o.g that $n_1$ is such a node for which the path of nodes from Definition~\ref{definition:link_free_reachability} is the longest. The node $n_1$ can only become unreachable if the current step is the execution of line~\ref{code.trim_unlink_cas} in Listing~\ref{algo:lf_help}, and if $n_1$ is marked and then flushed in line~\ref{code.trim_flush} of Listing~\ref{algo:lf_help}. This means that $n_1 \neq n_2$. Since $n_1$'s successor stays reachable in this case, we get a contradiction. Therefore, $n_2$ is reachable in this case as well.
    \item By assumption, $n_1$ is not reachable from $n_2$ right before executing $s$. Since all changes of nodes' successors (line~\ref{code.trim_unlink_cas} of Listing~\ref{algo:lf_help}, and line~\ref{code.init_next} and~\ref{code.insert_cas} of Listing~\ref{algo:lf_insert}) preserve keys order (notice the halting condition in line~\ref{code.lf_if_bigger} of Listing~~\ref{algo:lf_help}), the Invariant still holds.
    \item If $n_2$ is not reachable from $n_1$ before executing $s$ then the invariant holds vacuously. Otherwise, assume by contradiction that $n_2$ was reachable from $n_1$ right before executing $s$, and is no longer reachable from $n_1$ after executing it. Let $n_3$ be the first node reachable from $n_1$ after the previous step, that is not reachable from it after executing the current step ($n_3$ must exist). 
    The node $n_3$ can only become unreachable from $n_1$ if the current step is the execution of line~\ref{code.trim_unlink_cas} in Listing~\ref{algo:lf_help}, and if $n_3$ is marked. This means that $n_3 \neq n_2$. Since $n_3$'s successor stays reachable from $n_1$ in this case, we get a contradiction. Therefore, $n_2$ is still reachable from $n_1$.
    \item If $n_1$ was an infant right before executing $s$ then $s$ is executing a successful CAS in line~\ref{code.insert_cas} of Listing~\ref{algo:lf_insert}. In this case, $s$ makes $n_1$ the predecessor of a node whose \code{tail} is reachable from, by assumption. Therefore, \code{tail} is reachable from $n_1$ in this case. Otherwise, assume by contradiction that \code{tail} was reachable from $n_1$ right before executing $s$ (must hold by assumption), but is no longer reachable from it after executing it. Let $n_2$ be the last node reachable from $n_1$, for whom \code{tail} is not reachable from after executing the current step ($n_2$ must exist).
    Then the current step must change $n_2$'s \emph{next} pointer. Since $n_2$ cannot be an infant (by Invariant~\ref{link_free_node_inv:no_infant}), this step is a successful CAS, either in line~\ref{code.trim_unlink_cas} of Listing~\ref{algo:lf_help} or in line~\ref{code.insert_cas} of Listing~\ref{algo:lf_insert}.
    In both cases, $n_2$'s successor is set to be a node that \code{tail} is reachable from, by assumption. Since we get a contradiction to Definition~\ref{definition:link_free_reachability}, \code{tail} is reachable from $n_1$ in this case as well.
\end{enumerate}
\end{proof}

\begin{claim} [The Volatile List Invariant] \label{claim:link_free_volatile_sorted} The list is always sorted by the nodes' keys, no key ever appears twice, and the \code{head} and \code{tail} sentinel nodes are always the first and last members of the list, respectively. 
\end{claim}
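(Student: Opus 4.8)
The plan is to prove the invariant by induction on the number of steps of the execution, viewing ``the list'' as the sequence of nodes reachable from \code{head} by following \emph{next} pointers in the sense of Definition~\ref{definition:link_free_reachability} (marking bits do not affect this, since reachability uses \code{getRef}). In the initial configuration the list is exactly $\code{head}\to\code{tail}$ with keys $-\infty<\infty$, so it is strictly increasing, has no repeated key, and \code{head}, \code{tail} are its first and last members. For the inductive step I assume the invariant and let $s$ be the next step. The only steps that write a \emph{next} pointer are the marking CAS (line~\ref{code.logically_remove} of Listing~\ref{algo:lf_remove}), the initialization store (line~\ref{code.init_next} of Listing~\ref{algo:lf_insert}), the \code{trim} CAS (line~\ref{code.trim_unlink_cas} of Listing~\ref{algo:lf_help}), and the \code{insert} CAS (line~\ref{code.insert_cas} of Listing~\ref{algo:lf_insert}); every other step leaves all \emph{next} pointers, hence the whole list, untouched, so the invariant is preserved trivially, and I only analyze these four.

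For the three ``easy'' writes: the marking CAS only sets the least-significant bit of \code{curr->next}, so the reference of that pointer is unchanged and the reachability chain — its order, its keys, and its endpoints — is identical before and after $s$. The store in line~\ref{code.init_next} writes the \emph{next} field of \code{newNode}, which is an infant (Claim~\ref{claim:link_free_nodes}, invariant~\ref{link_free_node_inv:is_infant}) and hence has no predecessor and is not in the list, so again the list is unchanged. For the \code{trim} CAS, the unlinked node \code{curr} is marked, so by Claim~\ref{claim:link_free_sentinel_state} it is neither \code{head} nor \code{tail}; by Claim~\ref{claim:link_free_marked_node} its \emph{next} pointer is frozen, so \code{succ} is genuinely \code{curr}'s successor; and if \code{pred} is reachable then at the instant of the successful CAS $\code{pred->next}=\code{curr}$, so by the induction hypothesis \code{pred}, \code{curr}, \code{succ} are consecutive with strictly increasing keys, and splicing out \code{curr} keeps the list strictly increasing, removes a key rather than duplicating one, and does not touch the sentinels. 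If \code{pred} is not reachable, redirecting an edge out of an unreachable node changes no reachabilities, so the list is unchanged.

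The crux is the \code{insert} CAS. I would first establish the facts that hold at the instant the CAS succeeds: (i) $\code{pred->next}=\code{curr}$, which is the CAS precondition; (ii) $\code{pred->key}<key$; and (iii) $\code{curr->key}>key$. For (ii) and (iii) I use that keys are immutable once written, together with the structure of \code{find}: \code{pred} is either \code{head} (key $-\infty$) or was observed unmarked with key strictly below $key$ before the assignment $\code{pred}=\code{curr}$, while \code{curr} satisfied the break test in line~\ref{code.lf_if_bigger} giving $\code{curr->key}\ge key$, and the equality test in \code{insert} failed, giving $\code{curr->key}>key$. If \code{pred} is not reachable, then \code{newNode} (freshly allocated, so pointed to by no other node) stays unreachable and the list is unchanged. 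If \code{pred} is reachable, then by (i) \code{curr} is \code{pred}'s successor in the list just before $s$, so after $s$ the list is obtained by inserting \code{newNode} — with key $key$ and $\code{newNode->next}=\code{curr}$ (set earlier in line~\ref{code.init_next}) — strictly between \code{pred} and \code{curr}; by (ii), (iii) and the induction hypothesis the resulting key sequence is still strictly increasing, and since by strict monotonicity every node before \code{pred} has key below $\code{pred->key}<key$ and every node from \code{curr} on has key above $\code{curr->key}>key$, no node carried key $key$ before $s$, so no duplicate is created.

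Finally, ``\code{head} is first'' is immediate because the list is by definition the chain rooted at \code{head}; that \code{tail} is always in the list is Claim~\ref{claim:link_free_nodes}, invariant~\ref{link_free_node_inv:tail_reachable}; and \code{tail} is always the last member because \code{tail->next} is never written (\code{tail} is never the \code{pred} argument of the \code{insert} or \code{trim} CAS, since its key is not below $key$, and is never the node marked in line~\ref{code.logically_remove}, since that node has the finite input key), while \code{tail} is the chain's last node initially. I expect the main obstacle to be exactly the concurrency bookkeeping around the \code{insert} CAS: pinning down that (i)--(iii) hold simultaneously at the atomic instant of the successful CAS, rather than merely when \code{find} returned, and showing that combining them with the induction hypothesis suffices to rule out a concurrently inserted node with the same key; the split on whether \code{pred} is reachable is what makes this go through cleanly, and it is the part I would write out most carefully.
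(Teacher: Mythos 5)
Your proposal is correct, but it takes a genuinely different route from the paper. The paper disposes of this claim in two sentences by citing the previously established Nodes Invariants (Claim~\ref{claim:link_free_nodes}): sortedness and key-uniqueness follow from Invariant~\ref{link_free_node_inv:not_reachable} (a node whose key is $\le$ another's is never reachable from it), and the sentinels being first and last follow from Claim~\ref{claim:link_free_sentinel_state} together with Invariants~\ref{link_free_node_inv:is_reachable} and~\ref{link_free_node_inv:not_reachable}. You instead prove the statement from scratch by induction on the execution, case-splitting on the four writes to \emph{next} pointers (marking CAS, the store in line~\ref{code.init_next}, the \code{trim} CAS, and the \code{insert} CAS), and you only lean on the earlier lemmas for infancy, frozen \emph{next} pointers of marked nodes, sentinel states, and \code{tail}-reachability. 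In effect, your analysis of the \code{insert} and \code{trim} CASes re-derives, in more detail, exactly the content of Invariant~\ref{link_free_node_inv:not_reachable} (whose proof in the paper is itself quite terse: ``all changes of successors preserve key order''), plus the reachability bookkeeping of Invariant~\ref{link_free_node_inv:is_reachable}. What your version buys is self-containment and explicitness: the simultaneity of the facts at the atomic CAS instant, the reachable/unreachable split on \code{pred}, and the argument that no duplicate key can be created are all spelled out, whereas the paper leaves the uniqueness argument implicit in the citation chain (two reachable nodes with equal keys would each have to be unreachable from the other, contradicting that both lie on the chain from \code{head}). What the paper's version buys is brevity, and one small robustness advantage: it gets ``\code{tail} is last'' directly from the key-ordering invariant (no node's key exceeds $\infty$), whereas your argument that \code{tail}\code{->next} is never written additionally relies on the (unstated but reasonable) initialization of \code{tail}'s \emph{next} field; similarly your (ii)--(iii) rely on key immutability under the sequential-consistency assumption, which is fine but worth stating. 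Neither difference is a gap; your induction goes through.
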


\begin{proof} From Invariant~\ref{link_free_node_inv:not_reachable} of Claim~\ref{claim:link_free_nodes}, the volatile list is always sorted by the nodes' keys and no key ever appears twice. By Claim~\ref{claim:link_free_sentinel_state} and Invariant~\ref{link_free_node_inv:is_reachable} of Claim~\ref{claim:link_free_nodes}, the \code{head} and \code{tail} sentinel nodes are always members of the list, and by Invariant~\ref{link_free_node_inv:not_reachable} of Claim~\ref{claim:link_free_nodes}, they are the first and last members, respectively.
\end{proof}

We now move to dealing with the persistent list. The persistent list contains the persistent copies of the volatile list's nodes, as long as their state is valid and not marked, as stated in Definition~\ref{definition:link_free_persistently} below.

\begin{definition} [Persistently in the List] \label{definition:link_free_persistently} Let $n$ be a node. We say that $n$ is {\em persistently in the list} if the state of $n$'s persistent copy is valid and not marked.

\end{definition}

Claim~\ref{claim:link_free_persistently} below asserts that being valid and not marked is sufficient for staying persistently in the list. In particular, the \code{head} and \code{tail} sentinel nodes always remain persistently in the list.

\begin{claim} [Being Persistently in the List] \label{claim:link_free_persistently} Let $n$ be a node which is persistently in the list. As long as $n$'s state is valid and unmarked, $n$ is still persistently in the list.
\end{claim}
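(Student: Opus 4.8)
The plan is to exploit the fact that the \emph{persistent copy} of a cache-line-aligned node $n$ can only ever hold, field by field, a value that the corresponding \emph{volatile} field held at some earlier (or the current) point of the execution: a write-back of $n$'s cache line --- whether an implicit eviction or an explicit \code{psync} --- copies the then-current cache-resident state of $n$ to the NVRAM, and under the sequentially consistent model assumed in Section~\ref{chap:terminology} there are no torn writes, so the persistent copy always exhibits a complete past volatile value of each field. Given this, it suffices to show that, throughout the interval in which $n$'s volatile state stays valid and unmarked, every volatile value its validity byte and its \emph{next} pointer have held since the moment $n$ became persistently in the list is itself ``valid'' and ``unmarked''; monotonicity of the volatile fields will then propagate to the persistent copy.

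For the validity byte I would appeal to Claim~\ref{claim:link_free_state_changes}: the volatile validity bits progress monotonically from the initial state to invalid to valid, and once $n$ is valid its validity bits never change again (the only remaining transition --- marking --- touches the \emph{next} pointer only). Since $n$ is persistently in the list, by Definition~\ref{definition:link_free_persistently} its persistent copy is valid, so at the last write-back preceding this the volatile validity byte already carried the valid value, and by the monotonicity just noted it carries the valid value from that point on. Hence every subsequent write-back of $n$ transfers the valid validity byte, and the persistent copy's validity bits remain valid. (Note that the ``valid'' half of the hypothesis is in fact automatically maintained; it is the ``unmarked'' half that does work below.)

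For the \emph{next} pointer I would use the irreversibility of marking (Claim~\ref{claim:link_free_state_changes} together with Claim~\ref{claim:link_free_marked_node}): a node is marked at most once and is never unmarked afterwards. Consequently, if $n$ is unmarked at the end of the interval in question, it has \emph{never} been marked up to that point, so every volatile value its \emph{next} pointer has held so far is unmarked; in particular every write-back of $n$ so far carries an unmarked \emph{next}, and the persistent copy's \emph{next} pointer is, and remains, unmarked. Combining the two parts, the persistent copy stays valid and not marked, i.e.\ $n$ stays persistently in the list. I would also remark in passing that a valid and unmarked node is never handed to the memory manager during normal operation (a node is freed only after being marked, and the recovery scans hand back only invalid nodes and valid-and-marked nodes), so its persistent bytes are not clobbered by a later reuse.

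The main obstacle is the first paragraph: making rigorous the ``persistent copy equals some past volatile value, field by field'' invariant out of the informal description of implicit and explicit flushes on a dedicated cache line, and confirming that the per-field monotonicity supplied by Claims~\ref{claim:link_free_state_changes} and~\ref{claim:link_free_marked_node} transfers cleanly across that bridge. Once this is established the remainder is a direct invocation of the already-proven state-transition claims.
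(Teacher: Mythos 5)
Your proposal is correct and follows essentially the same route as the paper: the paper also reduces the claim to the fact that, by Claim~\ref{claim:link_free_state_changes}, the volatile node cannot change while it stays valid and unmarked, so every (implicit or explicit) flush in that interval writes back a valid and unmarked state, keeping the persistent copy valid and unmarked. The paper merely phrases this as a short contradiction at the earliest violating point rather than your field-by-field monotonicity argument, and, like you, it leaves the ``persistent copy reflects a past volatile value'' bridge as an unstated modeling assumption.
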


\begin{proof} Assume that $n$ is persistently in the list at some point, and let assume by contradiction that there exists a later point, in which $n$'s state is valid and unmarked, and is not persistently in the list. We are going to consider the earliest such point. 
By Claim~\ref{claim:link_free_state_changes}, $n$ does not change between the mentioned two points. Therefore, each flush of $n$, flushes it with a valid and unmarked -- a contradiction. Thus, $n$ is still persistently in the list.
\end{proof}

\begin{claim} [Persistently in the List Nodes are Reachable] \label{claim:link_free_if_persistently_then_reachable} Let $n$ be a node which is persistently in the list. Then $n$ is reachable.
\end{claim}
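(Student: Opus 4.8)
The plan is to reduce this claim to the reachability invariants already established in Claim~\ref{claim:link_free_nodes}. Recall that Invariant~\ref{link_free_node_inv:is_reachable} guarantees reachability of a node that is not an infant and is either unmarked or marked-but-not-yet-flushed-since-being-marked, and Invariant~\ref{link_free_node_inv:not_flushed_is_reachable} covers the marked-but-not-flushed case directly. Hence it suffices to prove, for a node $n$ that is persistently in the list, the two facts: (a) $n$ is not an infant; and (b) $n$ is either unmarked or has not been flushed since being marked. Throughout, I use that ``$n$ persistently in the list'' means, by Definition~\ref{definition:link_free_persistently}, exactly that $n$'s persistent copy is valid (both validity bits hold the value $\neg b$ of Definition~\ref{definition:link_free_node_state}) and has an unmarked \emph{next} pointer, and I use the paper's memory-model observation that the persistent copy of a cache-line-aligned node always reflects a prefix of the writes performed on it, so every value appearing in the persistent copy of a field was genuinely written to the volatile node at some point.

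For (a) I would argue the contrapositive: an infant cannot be persistently in the list. If $n$ is an infant then, by Definition~\ref{definition:link_free_infant}, no successful CAS of line~\ref{code.insert_cas} with $newNode = n$ has ever occurred, so \code{makeValid} is never executed on $n$: the call in line~\ref{code.insert_make_valid} runs only after such a CAS succeeds, and every other \code{makeValid} call site (line~\ref{code.lf_node_exists} of Listing~\ref{algo:lf_contains}, line~\ref{code.lf_insert_node_exists} of Listing~\ref{algo:lf_insert}, line~\ref{code.remove_make_valid} of Listing~\ref{algo:lf_remove}) acts on a node returned by \code{find}, which has a predecessor and so, by Invariant~\ref{link_free_node_inv:no_infant} of Claim~\ref{claim:link_free_nodes}, is not an infant. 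Since \code{makeValid} is the only operation that writes $\neg b$ into a node's second validity bit, that bit of $n$ is never $\neg b$ on the volatile node, hence never $\neg b$ in the persistent copy either; so the persistent copy is not valid, contradicting the hypothesis. (Equivalently, by Claim~\ref{claim:link_free_state_changes} an infant's state never advances past ``invalid''.)

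For (b), suppose $n$ is marked. By Claim~\ref{claim:link_free_marked_node}, once $n$ is marked its \emph{next} pointer, and in particular its mark bit, never changes again; therefore any write-back of $n$ performed after it was marked---whether the explicit \code{psync} of line~\ref{code.trim_flush} or an implicit eviction---copies a marked \emph{next} pointer into the persistent copy, making it marked and contradicting that $n$ is persistently in the list. Hence $n$ has not been flushed since being marked. (If $n$ is not marked at all, there is no ``being marked'' event, so (b) holds vacuously.) Combining (a) and (b), $n$ is not an infant and falls under one of the hypotheses of Invariants~\ref{link_free_node_inv:is_reachable} and~\ref{link_free_node_inv:not_flushed_is_reachable} of Claim~\ref{claim:link_free_nodes}, so $n$ is reachable.

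The step I expect to need the most care is (a): it rests on the principle that the persistent copy never ``overtakes'' the volatile node, so one must spell out precisely---using that the node occupies a single cache line and that write-backs of a cache line preserve a prefix of the write sequence---that every value ever observed in the persistent copy of a validity bit was actually written there by some instruction, and then combine this with the enumeration of \code{makeValid} call sites ruling out infants. By contrast, part (b) and the final assembly are straightforward once Claims~\ref{claim:link_free_marked_node} and~\ref{claim:link_free_nodes} are in hand.
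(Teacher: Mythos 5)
Your proof is correct and follows essentially the same route as the paper's: case-split on whether $n$ is currently marked, use the fact that a marked node whose persistent copy is unmarked cannot have been flushed since being marked, and reduce both cases to the reachability invariants of Claim~\ref{claim:link_free_nodes}. Your part (a) merely spells out in detail the non-infancy argument that the paper compresses into a citation of Invariants~\ref{link_free_node_inv:is_infant} and~\ref{link_free_node_inv:is_reachable}, and your part (b) invokes Invariant~\ref{link_free_node_inv:not_flushed_is_reachable} directly where the paper re-derives the same fact inline via the flush in \code{trim}.
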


\begin{proof}
Assume that $n$ is persistently in the list. By Definition~\ref{definition:link_free_persistently}, during the last flush of $n$ to the non-volatile memory, $n$'s state was valid and unmarked.
If $n$ is unmarked, then by Invariants~\ref{link_free_node_inv:is_infant} and~\ref{link_free_node_inv:is_reachable} of Claim~\ref{claim:link_free_nodes}, $n$ is reachable.
Otherwise, since $n$ is marked but still persistently in the list, $n$ has not been flushed in line~\ref{code.trim_flush} of Listing~\ref{algo:lf_help}, line~\ref{code.contains_flush_marked} of Listing~\ref{algo:lf_contains}, or implicitly flushed yet, and in particular, it has not become unreachable in line~\ref{code.trim_unlink_cas} of Listing~\ref{algo:lf_help} yet (according to the proof of Claim~\ref{claim:link_free_nodes}, it cannot become unreachable in other scenarios).
Therefore, $n$ is still reachable in this case as well.
\end{proof}

Notice that Claim~\ref{claim:link_free_if_persistently_then_reachable} does not hold temporarily during recovery, until the list is reconstructed. However, this fact does not effect the use of this claim throughout our proof.

\begin{claim} [The Persistent List is a Set] \label{claim:link_free_set} The persistent list never contains two different persistent nodes with the same key.
\end{claim}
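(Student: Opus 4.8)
The plan is to argue by contradiction, leaning entirely on two facts already in place: that a node which is persistently in the list is reachable (Claim~\ref{claim:link_free_if_persistently_then_reachable}), and that the volatile reachable list is sorted and key-unique (Claim~\ref{claim:link_free_volatile_sorted}). Suppose, towards a contradiction, that at some point during a crash-free stretch of the execution there are two distinct nodes $n_1 \neq n_2$, both persistently in the list, whose persistent copies have the same key $k$. I would first note that a persistent copy always carries the same key as its volatile node: the key field is written exactly once, in line~\ref{code.init_node_fields} of Listing~\ref{algo:lf_insert}, strictly before the linking CAS in line~\ref{code.insert_cas}, and it is never modified afterwards; hence $n_1$ and $n_2$ both have key $k$ as volatile nodes as well.

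Next I would invoke Claim~\ref{claim:link_free_if_persistently_then_reachable} on each of $n_1$ and $n_2$ at that point, concluding that both are reachable from the \code{head} sentinel. But then the volatile list simultaneously contains two distinct reachable nodes with key $k$, contradicting Claim~\ref{claim:link_free_volatile_sorted}, which asserts that no key ever appears twice in the list. This settles the claim on any interval of the execution that contains no crash.

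The one subtlety to address --- and I expect it to be the only genuine obstacle --- is the recovery window, since, as remarked right after Claim~\ref{claim:link_free_if_persistently_then_reachable}, reachability need not hold while the list is being reconstructed. Here I would proceed by induction on the number of crash events. A crash does not alter the NVRAM, so immediately after a crash the collection of persistent copies that are valid and unmarked is exactly the one present just before the crash, which by the induction hypothesis (applied to the crash-free suffix preceding it) contains no two distinct nodes with the same key. During recovery the procedure only reads the persistent copies and, at most, sends invalid or marked copies to the memory manager for reclamation; it never turns an invalid or marked persistent copy into a valid unmarked one, nor does it create a fresh valid unmarked persistent copy (recovery relinks the already-existing valid unmarked nodes without any \code{psync} and starts no new \code{insert}). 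Hence the set of persistent copies that are persistently in the list can only shrink during recovery, so it remains duplicate-free throughout that window; once recovery terminates and normal operations resume, the argument of the first two paragraphs applies verbatim, and the claim follows for the entire execution.
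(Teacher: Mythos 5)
Your proof is correct and follows essentially the same route as the paper's, which simply derives the claim directly from Claim~\ref{claim:link_free_if_persistently_then_reachable} (persistently-in-the-list nodes are reachable) and Claim~\ref{claim:link_free_volatile_sorted} (no key appears twice among reachable nodes). Your additional care about the key field being written only once and about the recovery window (where the paper merely remarks that the temporary failure of reachability does not affect its use) goes beyond the paper's one-line argument but does not change the approach.
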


\begin{proof} The claim derives directly from Claim~\ref{claim:link_free_volatile_sorted} and~\ref{claim:link_free_if_persistently_then_reachable}.
\end{proof}

\begin{claim} \label{claim:link_free_find_linearization} Let $n_1$ and $n_2$ be the two volatile nodes returned as output from the find method. Then during the method execution, there exist a point in which (1) $n_1$ is reachable, (2) $n_2$ is $n_1$'s successor, and (3) $n_2$ is unmarked.
\end{claim}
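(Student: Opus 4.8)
The plan is to locate a single instant during the \code{find} call at which all three conditions hold simultaneously, found by tracing backward from the loop exit. Condition~(3) costs almost nothing: \code{find} leaves the loop only after \code{!isMarked(curr->next.load())} evaluated to \code{true} in the exiting iteration, so at the instant $R$ of that \code{load} the value $n_2$ held by \code{curr} is unmarked; by Claim~\ref{claim:link_free_state_changes} marking is irreversible, so $n_2$ is in fact unmarked at \emph{every} instant up to and including $R$. Hence any earlier instant we pick for conditions~(1) and~(2) automatically satisfies~(3). I will also use the observation that every value ever held by \code{curr} is obtained as \code{getRef(x->next.load())} or \code{head->next.load()}, i.e.\ it is some node's current \emph{next}-successor at the moment it is read, so by Invariant~\ref{link_free_node_inv:no_infant} of Claim~\ref{claim:link_free_nodes} it is not an infant; likewise \code{pred} is either \code{head} or a former value of \code{curr}, hence not an infant either.

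For conditions~(1) and~(2) I would focus on the instant $A$ at which \code{curr} was last assigned the value $n_2$: either the initialization \code{curr = head->next.load()} (if the loop exits on its first iteration) or the line \code{curr = getRef(curr->next.load())} at the end of the penultimate iteration. Writing $m$ for the node held by \code{curr} just before $A$ (with $m=\code{head}$ in the initialization case), the value read at $A$ is the \emph{next}-successor of $m$, so $n_2$ is $m$'s successor at $A$. I then split according to the penultimate step. (i)~The loop exits on iteration~1: then $n_1=\code{head}=m$, \code{head} is always reachable, and $A$ already witnesses all three conditions. (ii)~The penultimate iteration executed \code{pred = curr}: then, that being the last write to \code{pred}, $n_1=m$, and $m$ was unmarked at the \code{isMarked} check of that iteration; if $m$ is still unmarked at $A$ it is reachable there by Invariants~\ref{link_free_node_inv:is_reachable} and~\ref{link_free_node_inv:still_reachable}, so $A$ witnesses the claim, whereas if $m$ got marked between that check and $A$ I slide the chosen instant back to the moment $m$ is marked --- by Claim~\ref{claim:link_free_marked_node} its \emph{next} pointer is frozen from then on so its successor there is still $n_2$, and it is reachable there (reachable while unmarked, then still reachable as marked-but-not-yet-flushed by Invariant~\ref{link_free_node_inv:is_reachable}).

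(iii)~The remaining case --- the penultimate iteration executed \code{trim(pred, m)}, which also subsumes the situation where \code{pred} was never reassigned --- is the one I expect to be the main obstacle. Here $m$ is marked, \code{pred} is unchanged and equals $n_1$, and \code{trim} attempts a CAS on \code{pred->next} from $m$ to the (frozen, by Claim~\ref{claim:link_free_marked_node}) \emph{next}-successor of $m$, which is exactly $n_2$. If that CAS succeeds, the instant right after it works: the \emph{next} pointer of $n_1$ becomes $n_2$, giving~(2), and success of the CAS --- whose expected value is the unmarked pointer $m$ --- forces $n_1$ unmarked, hence reachable by Invariant~\ref{link_free_node_inv:is_reachable}. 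The delicate situation is when this CAS, or an earlier one in the run of \code{trim}s that \code{find} performed while \code{pred} sat at $n_1$, fails: then \code{find} has walked \code{curr} down a chain of marked nodes $m_1\to m_2\to\cdots\to n_2$, all frozen by Claim~\ref{claim:link_free_marked_node}, while concurrent \code{trim} and \code{insert} steps moved the \emph{next} pointer of $n_1$ around. I would argue that, so long as $n_1$ remains reachable, its \emph{next} pointer can only move \emph{forward along this frozen chain} (any successful \code{trim} whose predecessor is $n_1$ overwrites it by the frozen successor of the trimmed node) and that $n_2$ stays reachable from $n_1$ throughout (Invariant~\ref{link_free_node_inv:still_reachable} together with the sortedness of Claim~\ref{claim:link_free_volatile_sorted}); consequently the last instant at which the \emph{next} pointer of $n_1$ equals $n_2$ --- necessarily a successful \code{trim} CAS, again forcing $n_1$ unmarked and reachable --- witnesses all three conditions. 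Pinning down this evolution of $n_1$'s \emph{next} pointer precisely, while cleanly accounting for nodes inserted concurrently just after $n_1$, is where the bulk of the care goes.
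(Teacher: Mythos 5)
Your overall strategy is the paper's strategy: exhibit a single witnessing instant by combining the invariants of Claim~\ref{claim:link_free_nodes} with the frozen-pointer property of Claim~\ref{claim:link_free_marked_node} and the irreversibility of marking (Claim~\ref{claim:link_free_state_changes}). Your disposal of condition~(3), the infant bookkeeping, your cases~(i) and~(ii) (including the slide back to the marking instant), and the successful-CAS half of case~(iii) are all correct, and in fact more explicit than the paper's own write-up, which just takes the first read of $n_1$'s mark bit as the candidate instant and is done if $n_2$ happens to be $n_1$'s successor there.

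The gap is the failed-CAS half of case~(iii), and it is not merely a matter of ``care'': the instant you aim for --- the last moment at which $n_1$'s \emph{next} pointer equals $n_2$ --- need not exist, so the ``forward along the frozen chain'' evolution cannot be established. Consider \code{find(10)} on \code{head} $\to a(5) \to b(10) \to$ \code{tail}: it sets \code{pred} $= a$, reads \code{curr} $= b$, and is delayed; other threads then insert key $7$ directly after $a$, remove $10$ (marking $b$ and trimming it from its new predecessor), and remove $5$ (marking $a$ and trimming it from \code{head}). When the delayed \code{find} resumes it sees $b$ marked, its \code{trim}$(a,b)$ CAS at line~\ref{code.trim_unlink_cas} of Listing~\ref{algo:lf_help} fails, it follows $b$'s frozen pointer to \code{tail} and returns $(a,\code{tail})$; yet $a$'s \emph{next} pointer held $b$, then the freshly inserted node, and was then frozen by $a$'s marking --- it never equals \code{tail} at any point of the execution, so the witnessing instant your sketch targets is absent. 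For comparison, the paper's proof covers this very situation with the single unsupported sentence that ``there exists a point between the first read of $n_1$ and the first read of $n_2$ in which $n_2$ becomes $n_1$'s successor,'' which is exactly the statement your sketch tries, and fails, to justify (it is clear only when the \code{trim} CAS installing $n_2$ after $n_1$ succeeds, i.e., your easy sub-case). So your attempt faithfully reconstructs the easy part of the paper's argument and is incomplete precisely at the step the paper leaves unargued; you have located the real crux, but you have not closed it.
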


\begin{proof}
When $n_1$'s marked bit is read for the first time during the execution, it is unmarked (otherwise, it would have been trimmed and not returned). In addition, since it must have had a predecessor at an earlier point (otherwise, it would not have been traversed), from Invariant~\ref{link_free_node_inv:no_infant} of Claim~\ref{claim:link_free_nodes}, it is not an infant, and from Invariant~\ref{link_free_node_inv:is_reachable} of Claim~\ref{claim:link_free_nodes}, it is reachable at this point.
If $n_2$ is $n_1$ successor at this point, then the claim holds for this point. Notice that $n_2$ cannot be marked at this point, since otherwise, it would have been trimmed at a later point and not returned as output.
If $n_2$ is not $n_1$'s successor at this point, then there exists a point between the first read of $n_1$ and the first read of $n_2$ in which $n_2$ becomes $n_1$'s successor. From Claim~\ref{claim:link_free_marked_node}, $n_1$ is unmarked at this point and thus, from Invariant~\ref{link_free_node_inv:is_reachable} of Claim~\ref{claim:link_free_nodes}, it is reachable at this point. In addition, $n_2$ is unmarked at this point as well, and the claim holds in this case.
\end{proof}

\begin{claim} \label{claim:link_free_unsucc_insert}
Let there be an insert execution that returns false in line~\ref{code.lf_insert_node_exists_returning_false} (Listing~\ref{algo:lf_insert}), and let $m$ be the node returned as the second output parameter from the last find call in line \ref{code.insert_find}.
Then at least one of the following holds during the insert execution:
\begin{enumerate}
    \item $m$ is persistently in the list.
    \item $m$ is marked and then flushed.
\end{enumerate}
\end{claim}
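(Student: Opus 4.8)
The plan is to track the persistent copy of $m$ and relate its state to the volatile state of $m$ through the flush operations, exploiting the flag-based mechanism of \code{FLUSH\_INSERT} to pin down a flush of $m$ that occurs while $m$ is valid. Fix the \code{insert} execution in question and let $[t_0,t_3]$ be its interval, from invocation to the response in line~\ref{code.lf_insert_node_exists_returning_false}. This line is reached only through the branch \code{curr->key == key}, so $m$ is exactly the \code{curr} returned by the last \code{find} of line~\ref{code.insert_find}; $m$ is not a sentinel (we assume $\pm\infty$ are never used as keys) and, having a predecessor during that \code{find}, it is not an infant by Invariant~\ref{link_free_node_inv:no_infant} of Claim~\ref{claim:link_free_nodes}, so by Claim~\ref{claim:link_free_state_changes} $m$ is never in its initial state from then on. During $[t_0,t_3]$ the operation runs \code{makeValid(m)} (line~\ref{code.lf_insert_node_exists}) at some time $\tau_1$, which — since $m$ is not in its initial state — makes $m$ \emph{valid} (both validity bits equal $\neg b$) and, by Claim~\ref{claim:link_free_state_changes}, keeps it valid from $\tau_1$ on; then it runs \code{FLUSH\_INSERT(m)} (line~\ref{code.lf_insert_line_after_makevalid}), returning at some time $\tau_2$ with $t_0<\tau_1<\tau_2<t_3$.

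First I would prove a lemma: at $\tau_2$ the persistent copy of $m$ is valid. The call \code{FLUSH\_INSERT(m)} either performs a \code{psync} on $m$, or skips it because $m$'s insert-flush flag is already set; in the latter case the flag was set by an earlier completed \code{FLUSH\_INSERT} on $m$ that did perform a \code{psync}. At every call site of \code{FLUSH\_INSERT} its \code{psync} is immediately preceded, in the same operation, by a \code{makeValid} on the same node, so $m$ is valid at the time of that \code{psync}; since nodes are cache-line aligned, the \code{psync} writes a consistent snapshot of $m$, making the persistent copy valid, and it stays valid because the cache-resident validity bits never revert (Claim~\ref{claim:link_free_state_changes}). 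This \code{psync} occurs no later than $\tau_2$, which proves the lemma.

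Next I would let $F$ be the last flush of $m$ (explicit or implicit) occurring at or before $\tau_2$; it is well defined, since the \code{psync} from the lemma is one such flush. That \code{psync} happens while $m$ is valid, and $m$ stays valid, so $m$ is valid at the time $t^\star$ of $F$; moreover no flush of $m$ occurs in $(t^\star,\tau_2]$, so the persistent copy of $m$ at $\tau_2$ is exactly the state of $m$ at $t^\star$. I would then split on whether $m$ is marked at $t^\star$. If $m$ is unmarked at $t^\star$, the persistent copy is valid and unmarked at $\tau_2$, i.e., $m$ is persistently in the list (Definition~\ref{definition:link_free_persistently}) at $\tau_2\in[t_0,t_3]$ — this is alternative~1. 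If $m$ is marked at $t^\star$, then $F$ flushes $m$ while $m$ is marked — this is alternative~2 — and I must check both the marking and the flush fall inside $[t_0,t_3]$: we have $t^\star\le\tau_2<t_3$, and if $m$ became marked before $t_0$ then by Claim~\ref{claim:link_free_marked_node} $m$ would be marked throughout the \code{find} of line~\ref{code.insert_find}, contradicting Claim~\ref{claim:link_free_find_linearization}, which furnishes a moment during that \code{find} at which $m$ is unmarked; hence $m$ becomes marked at some time in $[t_0,t^\star]\subseteq[t_0,t_3]$.

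The main obstacle is exactly this bookkeeping: because of the \code{FLUSH\_INSERT} optimization our \code{insert} may flush nothing itself, so the flush witnessing alternative~2 could a priori predate the operation; the observation that closes the gap is that such an early flush would require $m$ to be already marked, which is incompatible with \code{find} having returned $m$ (Claims~\ref{claim:link_free_marked_node} and~\ref{claim:link_free_find_linearization}). The remaining steps are a careful but routine tracking of the four node states and of how flushes carry them over to the NVRAM.
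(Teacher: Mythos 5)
Your proposal is correct and takes essentially the same route as the paper's proof: $m$ is made valid no later than line~\ref{code.lf_insert_node_exists} and is flushed while valid via line~\ref{code.lf_insert_line_after_makevalid}, and the case split on whether $m$ is marked at the time of the relevant flush yields alternative~1 (persistently in the list) or alternative~2 (marked and then flushed), with Claims~\ref{claim:link_free_find_linearization} and~\ref{claim:link_free_marked_node} ruling out a pre-existing mark. The additional lemma spelling out the flag-skipping behaviour of \code{FLUSH\_INSERT} (every flag-setting \code{psync} is preceded by a \code{makeValid} in the same operation) and the explicit in-interval bookkeeping are sound refinements of details the paper's proof leaves implicit.
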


\begin{proof}
Claim~\ref{claim:link_free_find_linearization} guarantees that there exists a point during the last find execution in which $m$ is reachable and unmarked.
Since $m$ is made valid no later than the execution of line~\ref{code.lf_insert_node_exists}, and is flushed, while being still valid (by Claim~\ref{claim:link_free_state_changes}), no later than the execution of line~\ref{code.lf_insert_line_after_makevalid}, it is either persistently in the list (by Definition~\ref{definition:link_free_persistently}), or becomes marked before its flush. In both cases, the claim holds.
\end{proof}

\begin{claim} \label{claim:link_free_contains_point} Let $n_2$ be a node which is assigned into the \code{curr} variable in line~\ref{code.lf_end_loop} of Listing~\ref{algo:lf_contains}, and let $n_1$ be the last node assigned into the \code{curr} variable before $n_2$. Then there exists a point during the traversal in which both nodes are reachable and $n_2$ is $n_1$'s successor.
\end{claim}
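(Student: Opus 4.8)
The plan is to follow the chain of nodes that the local variable \code{curr} runs through during the traversal. Denote by $m_1$ the node read at the initialization \code{curr = head->next.load()}, at time $t_1$, and for $i\ge 2$ denote by $m_i$ the node \code{getRef(curr->next.load())} assigned at line~\ref{code.lf_end_loop} during the $(i-1)$-st loop iteration, at the time $t_i$ of that load; thus the pair in the statement is $n_1=m_{j-1}$, $n_2=m_j$ for some $j\ge 2$. By the semantics of these reads, at time $t_i$ the node $m_i$ is the successor of $m_{i-1}$ (for $i\ge 2$), and $m_1$ is the successor of \code{head} at $t_1$. Hence, by Invariant~\ref{link_free_node_inv:no_infant} of Claim~\ref{claim:link_free_nodes}, each $m_i$ is not an infant from time $t_i$ on, and (since a node can be marked only after it has been linked and thus acquired a predecessor) a marked node is never an infant.

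The core step is to prove, by induction on $i$, that there is a time $q_i$ with $t_1\le q_i\le t_i$ at which $m_i$ is reachable, and moreover (when $i\ge 2$) at $q_i$ the node $m_{i-1}$ is reachable and $m_i$ is its successor. For $i=1$ take $q_1=t_1$: \code{head} is always reachable (Claim~\ref{claim:link_free_sentinel_state}) and $m_1$ is its successor at $t_1$. For the inductive step, assume $m_i$ is reachable at $q_i$; at $t_{i+1}$ the node $m_{i+1}$ is read as the successor of $m_i$. If $m_i$ is still reachable at $t_{i+1}$, put $q_{i+1}=t_{i+1}$ and extend the reachable path to $m_i$ by appending $m_{i+1}$. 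Otherwise $m_i$ is unreachable at $t_{i+1}$; since $m_i$ is not an infant, the contrapositive of Invariant~\ref{link_free_node_inv:is_reachable} of Claim~\ref{claim:link_free_nodes} forces $m_i$ to be marked and to have been flushed since being marked by $t_{i+1}$. Let $t^\ast<t_{i+1}$ be the instant $m_i$ was marked; by Claim~\ref{claim:link_free_marked_node} its \emph{next} pointer is frozen from $t^\ast$ on, so $m_i$'s successor is $m_{i+1}$ throughout $[t^\ast,t_{i+1}]$. If $t^\ast\le q_i$, then already at $q_i$ the node $m_i$ is reachable (induction hypothesis) with successor $m_{i+1}$, so $q_{i+1}=q_i$ works; if $t^\ast>q_i$ (so $t^\ast>q_i\ge t_1$), then at the configuration immediately after $m_i$ is marked — which precedes the first flush of $m_i$ after its marking and hence lies in $(q_i,t_{i+1})$ — the node $m_i$ is marked but not yet flushed since being marked, hence reachable by Invariant~\ref{link_free_node_inv:is_reachable}, with successor $m_{i+1}$; take $q_{i+1}$ to be that configuration. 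In all cases $q_{i+1}\in[t_1,t_{i+1}]$ and $m_{i+1}$ is reachable there as the successor of the reachable $m_i$.

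Instantiating the inductive claim at $i=j$ then yields a time $q_j\in[t_1,t_j]$, which lies within the traversal, at which $n_1=m_{j-1}$ and $n_2=m_j$ are both reachable and $n_2$ is the successor of $n_1$ — precisely the assertion. I expect the main difficulty to be exactly what the induction is designed to absorb: unlike \code{find}, the \code{contains} traversal performs no \code{trim} and does not inspect marking bits, so $n_1$ may already be marked while \code{curr} sits on it, or may become marked and then physically unlinked by a concurrent \code{remove}/\code{find} before $n_1$'s \emph{next} pointer is read. The straightforward ``$n_1$ is unmarked at the read'' argument used for Claim~\ref{claim:link_free_find_linearization} therefore fails, and one must rewind to the marking instant, where Claim~\ref{claim:link_free_marked_node} pins the \emph{next} pointer and the ``marked but not yet flushed'' clause of Invariant~\ref{link_free_node_inv:is_reachable} still certifies reachability; the bookkeeping of keeping the rewind target inside $[t_1,t_j]$ is the only thing that needs care.
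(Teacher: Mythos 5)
Your proof is correct and takes essentially the same route as the paper: an induction along the successive assignments to \code{curr}, combining the reachability invariants of Claim~\ref{claim:link_free_nodes} with the frozen-\emph{next} property of Claim~\ref{claim:link_free_marked_node} to rewind to a moment where the predecessor is still reachable with the read successor. The only cosmetic difference is your choice of witness point (the instant just after marking, or the inherited $q_i$) where the paper takes the last step at which $n_1$ was reachable before the assignment, invoking Invariant~\ref{link_free_node_inv:still_reachable} instead of the contrapositive of Invariant~\ref{link_free_node_inv:is_reachable}.
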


\begin{proof}
Assume by contradiction that the claim does not hold. W.l.o.g., Let $n_1$ and $n_2$ be the first two nodes for which (1) $n_1$ and $n_2$ are assigned into the \code{curr} variable sequentially, and (2) the guaranteed point does not exist for them. Since this point does exist for $n_1$ and the former node assigned into \code{curr}, $n_1$ is reachable at some point during the execution (if $n_1$ is the \code{head} sentinel node then it is obviously reachable). From Invariant~\ref{link_free_node_inv:still_reachable} of Claim~\ref{claim:link_free_nodes}, $n_1$ is reachable as long as it is not marked. Since $n_2$ is its successor when assigned into the \code{curr} variable, from Claim~\ref{claim:link_free_marked_node} it was its successor at the last step in which $n_1$ was reachable before this assignment (might be the assignment itself). Therefore, there exists such a point for $n_1$ and $n_2$ -- a contradiction, and the claim holds. 
\end{proof}
\subsection{Durable Linearizability}\label{chap:link_free_dl}
We use the notion of \emph{durable linearizability}~\citep{10.1007/978-3-662-53426-7_23} for correctness.
The recovery procedure, executed after a crash (and described in Section~\ref{sub_sec:lf_recovery}), is assumed to terminate before new threads start executing their code.
Given an operation for which a crash event occurs after its invocation and before its response, we consider its response point as the end of the respective recovery procedure.
Notice that in the following definitions, we do not consider recoveries that are interrupted by crash events.
We do so for clarity and brevity.
The definitions can be easily extended to include such cases.

We are going to prove that, given an execution, removing all crash events would leave us with a linearizable history, including all the operations that were fully executed between two crashes, and some of the operations that were halted due to crashes (and then recovered during recovery).
We are going to define, per operation execution, whether it is {\em a surviving operation}.
A surviving operation is an operation that is linearized in the final crash-free history of the execution (by removing all crash events).
Obviously, operations that were fully executed between two crash events are always considered as surviving operations.
Additionally, we are going to define the linearization points of all surviving operations in the crash-free history.

For each linearized operation, we define its linearization point as a point during its execution in which it takes effect.
For a more accurate definition, we first define, in Definition~\ref{definition:link_free_dl_set_member} below, which nodes are considered as set members.
Given this definition, a successful insertion takes effect when a respective new node becomes a set member, a successful removal takes effect when an existing respective set member is removed from the set, a contains execution returns an answer which respects the set membership definition, and unsuccessful operations fail according to this definition as well.

\begin{definition} [Being a Set Member] \label{definition:link_free_dl_set_member}
Given a node $n$, it is considered as a set member as long as at least one of the following holds:
\begin{enumerate}
    \item $n$ is persistently in the list according to Definition~\ref{definition:link_free_persistently}.
    \item If $n$ is marked and then flushed, for the first time since it becomes valid, then $n$ is considered as a set member during the period in which it is valid and not yet flushed.
\end{enumerate}
\end{definition}

Notice that being persistently in the list is not effected by crash events (since it depends on the state saved in the non-volatile memory).
Moreover, a node which is considered as a set member of the second type, stops being a set member before the next crash event.
Therefore, being a set member is well-defined, even in the presence of crash events.
For using the term of set membership in our durable linearizability proof, we still need to prove that the collection of all set members is indeed a set.
We do so in Claim~\ref{claim:link_free_indeed_a_set}.

\begin{claim}\label{claim:link_free_indeed_a_set}
Let $n_1$ and $n_2$ be two different set members.
Then $n_1$'s key is different from $n_2$'s key.
\end{claim}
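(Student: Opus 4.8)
The plan is to regard Claim~\ref{claim:link_free_indeed_a_set} as a statement about a single instant: fix an arbitrary point $t$ of the execution and show that any two distinct nodes that are set members at $t$ (in the sense of Definition~\ref{definition:link_free_dl_set_member}) have different keys. Since Definition~\ref{definition:link_free_dl_set_member} offers two certificates of membership, I would split into three cases according to which clause certifies $n_1$ and $n_2$ at $t$, and in every case reduce the conclusion either to Claim~\ref{claim:link_free_set} (the persistent list is a set) or to Claim~\ref{claim:link_free_volatile_sorted} (the volatile list carries no duplicate keys). The unifying ingredient is to show that \emph{whichever} clause certifies a node at $t$, that node is reachable in the volatile list at $t$.

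The first thing I would prove is the auxiliary fact: \emph{if $n$ is a set member at $t$ via clause~2 of Definition~\ref{definition:link_free_dl_set_member}, then $n$ is valid and reachable at $t$}. Validity is immediate because clause~2 only certifies $n$ during the window between $n$ becoming valid and the first subsequent flush. For reachability I would first observe that a valid node is never an infant: by Claim~\ref{claim:link_free_state_changes} a node becomes valid only when \code{makeValid} turns an invalid node valid, and \code{makeValid} is invoked only on a node that already has a predecessor from a preceding \code{find} (line~\ref{code.lf_node_exists} of Listing~\ref{algo:lf_contains}, line~\ref{code.lf_insert_node_exists} of Listing~\ref{algo:lf_insert}, line~\ref{code.remove_make_valid} of Listing~\ref{algo:lf_remove}) or on a node just linked by a successful CAS on line~\ref{code.insert_cas} of Listing~\ref{algo:lf_insert}; in the former cases Invariant~\ref{link_free_node_inv:no_infant} of Claim~\ref{claim:link_free_nodes} rules out infancy, in the latter Definition~\ref{definition:link_free_infant} does, and being a non-infant is monotone. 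Given that $n$ is a non-infant: if $n$ is unmarked at $t$, Invariant~\ref{link_free_node_inv:is_reachable} of Claim~\ref{claim:link_free_nodes} gives reachability; if $n$ is marked at $t$, then since marking can only happen after $n$ became valid (Claim~\ref{claim:link_free_state_changes}) and, inside the clause~2 window, $n$ has not been flushed \emph{at all} since becoming valid, in particular it has not been flushed since being marked, so Invariant~\ref{link_free_node_inv:not_flushed_is_reachable} (equivalently the second alternative of Invariant~\ref{link_free_node_inv:is_reachable}) again yields reachability.

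With the auxiliary fact the three cases close immediately. If $n_1$ and $n_2$ are both members via clause~1 (persistently in the list), Claim~\ref{claim:link_free_set} directly gives distinct keys. If both are members via clause~2, the auxiliary fact makes both reachable at $t$, and Claim~\ref{claim:link_free_volatile_sorted} forbids two reachable nodes from sharing a key. In the mixed case, the clause~1 node is reachable at $t$ by Claim~\ref{claim:link_free_if_persistently_then_reachable} and the clause~2 node by the auxiliary fact, so Claim~\ref{claim:link_free_volatile_sorted} applies once more.

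The one delicate point, which I expect to be the main obstacle, is the behaviour during a recovery, since Claim~\ref{claim:link_free_if_persistently_then_reachable} (and hence the reachability argument underlying clause~1) is explicitly flagged as not holding while the volatile list is being rebuilt. I would handle this by noting that clause~2 membership can never be in force during a recovery: its entire certifying window lies between $n$ becoming valid and the first subsequent flush, and both events occur within a single crash-free segment of the execution. Hence at a point $t$ inside a recovery only clause~1 membership is possible, which is a predicate on the NVRAM contents alone; since those contents are not altered by the recovery, the configuration at $t$ agrees with the one at the crash that triggered it --- a non-recovery point --- at which Claim~\ref{claim:link_free_set} already applies. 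Beyond this, the only care needed is bookkeeping: keeping ``flushed since becoming valid'' and ``flushed since being marked'' aligned so the right instance of Invariant~\ref{link_free_node_inv:is_reachable} is invoked, and recalling that any earlier (pre-validity) flushes of $n$ are irrelevant to the marked-node reachability invariants.
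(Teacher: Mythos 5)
Your proposal is correct and takes essentially the same route as the paper's proof: both arguments rest on the same ingredients, namely Claim~\ref{claim:link_free_set} for two persistently-in-the-list members, Claim~\ref{claim:link_free_volatile_sorted} for two reachable nodes, and the observation that a clause-2 member is valid and hence reachable via the invariants of Claim~\ref{claim:link_free_nodes} (with Claim~\ref{claim:link_free_if_persistently_then_reachable} handling the mixed case). The only difference is presentational: the paper argues by contradiction (w.l.o.g.\ assuming one node is not persistently in the list), whereas you do a direct case split and make explicit the non-infancy and recovery-period details that the paper leaves implicit.
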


\begin{proof}
Assume by contradiction that $n_1$ and $n_2$ are two different set members with the same key.
By Claim~\ref{claim:link_free_set}, the persistent list never contains two different persistent nodes with the same key, and therefore, at least one of them is not persistently in the list. Assume, w.l.o.g., that $n_1$ is not persistently in the list.

Since $n_1$ is a set member, by Definition~\ref{definition:link_free_dl_set_member}, it is valid, and either not marked, or marked and not flushed yet. By Invariant~\ref{link_free_node_inv:is_reachable} of Claim~\ref{claim:link_free_nodes}, $n_1$ is reachable.
By Claim~\ref{claim:link_free_volatile_sorted}, there cannot be two reachable nodes with the same key, and therefore, $n_2$ is not reachable, and by Invariant~\ref{link_free_node_inv:is_reachable} of Claim~\ref{claim:link_free_nodes}, it is either not valid, or marked and flushed. In both cases, it is not a set member according to Definition~\ref{definition:link_free_dl_set_member} -- a contradiction. Therefore, there cannot exist two different set members with the same key, and the claim follows.
\end{proof}

We are now going to define, per operation, the terms for being considered as a surviving operation (in the presence of a crash event), its respective linearization point. In addition, we are going to prove that each survivng operation indeed takes effect at its linearization point, and that non-surviving operations do not take effect at all. 

\subsubsection{Insert}
Before defining the conditions for the survival of an insert operation, we need to re-define the success of an insertion in the presence of crash events.

\begin{definition} [A Successful Insert Operation] \label{definition:link_free_dl_successful_insert}
Given an execution of an insert operation, we say that this operation is successful if one of the following holds before any crash event, following its invocation:
\begin{enumerate}
    \item The operation returns \code{true}.
    \item A node $n$ is allocated in line~\ref{code.link_free_alloc_node}, becomes valid, and is flushed afterwards (not necessarily in the scope of the operation in which it is allocated).
\end{enumerate}
The operation is unsuccessful if it returns \code{false}.
\end{definition}

\begin{definition} [A Surviving Insert Operation] \label{definition:link_free_dl_surviving_insert}
An insert operation is considered as a surviving operation if, before the first crash event that follows its invocation, one of the following holds:

\begin{enumerate}
    \item The operation is unsuccessful according to Definition~\ref{definition:link_free_dl_successful_insert}. Let $m$ be the node returned as the second output parameter from the last find call in line \ref{code.insert_find}. The operation's linearization point is set to be a point, during the execution, in which $m$ is a set member according to Definition~\ref{definition:link_free_dl_set_member} (chosen arbitrarily).
    \item The operation is successful according to Definition~\ref{definition:link_free_dl_successful_insert}, and the node allocated in line~\ref{code.link_free_alloc_node} becomes persistently in the list (see Definition~\ref{definition:link_free_persistently}) before the crash event.
    In this case, the linearization point is set to be the flush which inserts it to the persistent list.
    \item The operation is successful according to Definition~\ref{definition:link_free_dl_successful_insert}, and the node allocated in line~\ref{code.link_free_alloc_node} does not become persistently in the list before the first crash event.
    In this case, the linearization point is set to be the step which changes its state to valid.
\end{enumerate}
\end{definition}

\begin{claim} \label{claim:link_free_dl_insert_linearization}
A surviving insert operation takes effect instantaneously at its linearization point. 
\end{claim}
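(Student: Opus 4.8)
The plan is to prove the claim by a case analysis following the three cases of Definition~\ref{definition:link_free_dl_surviving_insert}, after first checking that in every case the designated linearization point indeed lies between the operation's invocation and its response (where, for an operation interrupted by a crash, the response is the end of the recovery procedure). In each case the goal is twofold: to show that the return value---or, for a crash-interrupted surviving operation, the result recovery assigns it---agrees with the abstract set state at the linearization point, and to show that the operation changes the set membership of exactly the right node, namely a new node with the inserted key $k$ for successful operations and nothing for unsuccessful ones, precisely at that point. Throughout, ``set member'' is meant in the sense of Definition~\ref{definition:link_free_dl_set_member}, and I will repeatedly use Claim~\ref{claim:link_free_indeed_a_set} (at most one set member per key, at all times) together with Claim~\ref{claim:link_free_state_changes} (the monotone state transitions of a volatile node).

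For an \emph{unsuccessful} insert (Case~1) the operation returns \code{false} and must leave the abstract set unchanged, so it suffices to exhibit a point inside its execution interval at which a node with key $k$ is a set member. Let $m$ be the node returned by the last \code{find} at line~\ref{code.insert_find}; since the operation returns \code{false} at line~\ref{code.lf_insert_node_exists_returning_false}, $m$ has key $k$. By Claim~\ref{claim:link_free_unsucc_insert}, during the operation either $m$ is persistently in the list, in which case $m$ is a set member by the first clause of Definition~\ref{definition:link_free_dl_set_member}, or $m$ is marked and then flushed, in which case by the second clause $m$ is a set member on the non-empty interval ending at that flush; because the flush happens during the \code{insert} and $m$ is made valid no later than line~\ref{code.lf_insert_node_exists}, this interval contains points inside the operation's interval. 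Either way the linearization point can be placed inside the interval at a moment when $k$ is in the set, and the operation---whose only externally visible effect is to help make $m$ valid at line~\ref{code.lf_insert_node_exists}, an effect on set membership charged to the \code{insert} that allocated $m$---does not itself move any node in or out of the set.

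For a \emph{successful} insert let $n$ be the node allocated at line~\ref{code.link_free_alloc_node}; it is linked by the successful CAS at line~\ref{code.insert_cas}, made valid at line~\ref{code.insert_make_valid} (possibly with help), and flushed via \code{FLUSH\_INSERT}. In Case~2 the linearization point is the flush that makes $n$ persistently in the list, that is, the first flush occurring while $n$ is valid and unmarked. Just before it, $n$ is not persistently in the list by the choice of this flush, and $n$ is not a set member of the second type either, since such a period would require the first flush after $n$ becomes valid to find $n$ marked, whereas this flush finds it unmarked (Claim~\ref{claim:link_free_state_changes} and Definition~\ref{definition:link_free_persistently}); hence $n$ is not a set member just before the linearization point, by Claim~\ref{claim:link_free_indeed_a_set} no other node with key $k$ is one either, so $k\notin\text{set}$ just before and $k\in\text{set}$ just after, and it stays a set member as long as $n$ is valid and unmarked by Claim~\ref{claim:link_free_persistently}. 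In Case~3 the linearization point is the step making $n$ valid; here ``successful but never persistently in the list'' forces the first flush after $n$ becomes valid to find $n$ marked, so by the second clause of Definition~\ref{definition:link_free_dl_set_member} $n$ is a set member exactly on the interval starting at this step, while before it $n$'s persistent copy is invalid (so $n$ is not persistently in the list) and no earlier type-2 membership exists, giving again $k\notin\text{set}$ just before and $k\in\text{set}$ just after. In both successful cases one checks the linearization point lies in the operation's interval: it occurs after line~\ref{code.insert_cas} (hence after invocation) and before the return at line~\ref{code.insert_flush_newnode}, or, if a crash intervenes, before the crash by Definition~\ref{definition:link_free_dl_successful_insert} and therefore before the response, which is the end of recovery.

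The main obstacle I expect is the bookkeeping around the two clauses of Definition~\ref{definition:link_free_dl_set_member} in the successful cases: one must argue carefully that ``persistently in the list'' and ``marked-then-flushed while valid'' split the successful executions into Case~2 and Case~3 exactly so that the chosen linearization point is the unique instant at which $n$ enters the set---with no overlap, no gap, and no interference from a concurrent helper or a concurrent \code{remove} that marks $n$. Everything else---well-definedness of the points, consistency of return values, and the at-most-one-member-per-key argument---should follow fairly directly from Claims~\ref{claim:link_free_state_changes}, \ref{claim:link_free_volatile_sorted}, \ref{claim:link_free_if_persistently_then_reachable}, \ref{claim:link_free_persistently}, \ref{claim:link_free_unsucc_insert} and~\ref{claim:link_free_indeed_a_set}.
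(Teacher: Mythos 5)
Your proposal is correct and follows essentially the same route as the paper's proof: a case analysis over the three surviving-insert types of Definition~\ref{definition:link_free_dl_surviving_insert}, using Claim~\ref{claim:link_free_unsucc_insert} (together with Claim~\ref{claim:link_free_find_linearization}) for the unsuccessful case and Definition~\ref{definition:link_free_dl_set_member} to show the allocated node becomes a set member at the designated point in the two successful cases. The extra bookkeeping you add (non-membership just before the point via Claim~\ref{claim:link_free_indeed_a_set}, and placement of the point within the operation's interval) is sound but goes beyond what the paper's own, terser argument spells out.
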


\begin{proof} We are going to prove the claim for each of the three surviving insertion types.
\begin{enumerate}
    \item Suppose that the operation is unsuccessful according to Definition~\ref{definition:link_free_dl_successful_insert}, and let $m$ be the node returned as the second output parameter from the last find call in line \ref{code.insert_find}. Notice that $m$'s key is equal to the key received as input.
    We are going to show that $m$ is a set member at the linearization point defined in Definition~\ref{definition:link_free_dl_surviving_insert}, and therefore, the (unsuccessful) insert operation indeed takes effect at this point. 
    According to Claim~\ref{claim:link_free_unsucc_insert}, there must exist a point during the execution in which $m$ is either persistently in the list, or that it is marked and then flushed.
    In the first scenario, by Definition~\ref{definition:link_free_dl_set_member}, $m$ is indeed a set member, and we are done.
    In the second scenario, it is guaranteed by Claim~\ref{claim:link_free_find_linearization} that $m$ is marked during the execution (since it is valid and unmarked at some point during the find method execution). Therefore, a point at which it is a set member, exists according to Definition~\ref{definition:link_free_dl_set_member}, and the claim holds.
    \item Suppose that the operation is successful according to Definition~\ref{definition:link_free_dl_successful_insert}, and the node allocated in line~\ref{code.link_free_alloc_node} becomes persistently in the list (see Definition~\ref{definition:link_free_persistently}) before the crash event. By Definition~\ref{definition:link_free_dl_set_member}, the allocated node indeed becomes a set member at the linearization point defined above and thus, the operation takes effect instantaneously at this point.
    \item Suppose that the operation is successful according to Definition~\ref{definition:link_free_dl_successful_insert}, and the node allocated in line~\ref{code.link_free_alloc_node} does not become persistently in the list before the first crash event. By Definition~\ref{definition:link_free_dl_successful_insert}, it becomes valid, then marked, and then flushed, during the execution, and before any crash event. By Definition~\ref{definition:link_free_dl_set_member}, it becomes a set member when its state becomes valid and thus, the operation indeed takes effect at its defined linearization point.
\end{enumerate}
\end{proof}

\begin{claim} \label{claim:link_free_dl_nonsurviving_insert}
A non-surviving insert operation takes no effect. 
\end{claim}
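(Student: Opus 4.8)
The plan is to start from the negation of Definition~\ref{definition:link_free_dl_surviving_insert}. A non-surviving \code{insert} execution $I$ is one for which, before the first crash that follows its invocation, (i) it does not return \code{true}, (ii) it does not return \code{false}, and (iii) if it allocated a node $n$ in line~\ref{code.link_free_alloc_node}, then $n$ never becomes valid-and-subsequently-flushed; equivalently, by Definition~\ref{definition:link_free_dl_successful_insert}, $I$ is neither successful nor unsuccessful, so it is crashed while still pending. I want to show that omitting $I$ from the linearization is consistent, which amounts to showing that $I$ neither adds nor removes a set member (in the sense of Definition~\ref{definition:link_free_dl_set_member}) by a step that is not already attributed to some other, surviving, operation. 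I would do this by enumerating the shared-memory writes $I$ can perform.

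First, the \code{trim} calls issued from \code{find} (line~\ref{code.call_trim}). By Claim~\ref{claim:link_free_state_changes} such a \code{trim} operates on a valid, marked node; it flushes that node (line~\ref{code.trim_flush}) and then CAS-unlinks it. The flush-while-marked makes the node's persistent copy marked and in any case ends that node's set-membership window under Definition~\ref{definition:link_free_dl_set_member}; I would attribute this step to the \code{remove} operation that marked the node (the operation linearized for that deletion), and observe that the subsequent unlink CAS changes no set membership. Hence \code{trim} contributes only helping, not an effect of $I$'s own.

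Second, the branch in which \code{find} returns \code{curr} with \code{curr->key} equal to the input key, where $I$ may run \code{makeValid(curr)} followed by \code{FLUSH\_INSERT(curr)}. By Invariant~\ref{link_free_node_inv:no_infant} of Claim~\ref{claim:link_free_nodes}, \code{curr} has a predecessor, hence is not an infant, hence was linked by an earlier successful CAS in line~\ref{code.insert_cas} belonging to some \code{insert} execution $I'$ that allocated \code{curr} in line~\ref{code.link_free_alloc_node}. If $I$'s \code{makeValid}/\code{FLUSH\_INSERT} makes (or keeps) \code{curr} persistently in the list, then \code{curr} is allocated in line~\ref{code.link_free_alloc_node}, becomes valid, and is flushed afterwards, so by clause~(2) of Definition~\ref{definition:link_free_dl_successful_insert} the operation $I'$ is successful --- hence surviving by Definition~\ref{definition:link_free_dl_surviving_insert} --- and the membership of \code{curr} is charged to $I'$. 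Again $I$ only helps.

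Third, the branch in which \code{find} returns a node with a different key, where $I$ allocates \code{newNode}, makes it invalid, initialises it, and reaches the CAS in line~\ref{code.insert_cas}. If the CAS fails, $I$'s only residue is an unreachable, invalid node, which is not a set member and is reclaimed during recovery. If the CAS succeeds, \code{newNode} is linked but, by Invariant~\ref{link_free_node_inv:is_infant} of Claim~\ref{claim:link_free_nodes}, invalid, hence not a set member. Since $I$ is non-surviving, \code{newNode} never becomes valid-and-then-flushed before the crash: either the crash finds it invalid (not a member), or it becomes valid but is never flushed by anyone, so its persistent copy is not valid, hence \code{newNode} is not persistently in the list and --- never having been flushed --- is never marked-and-then-flushed either, so it is never a set member, and, with a non-valid persistent copy, it is discarded rather than relinked by the recovery of Section~\ref{sub_sec:lf_recovery}. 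Combining the three branches, $I$ never creates or destroys a set member of its own, so it takes no effect. I expect the crux to be exactly the bookkeeping in the first two branches: a non-surviving \code{insert} is not inert --- it still executes CASes and \code{psync}s --- so the argument cannot be that $I$ does nothing, but rather that every mutating step $I$ performs is a helping step whose effect is charged to an operation that the definitions force to be surviving.
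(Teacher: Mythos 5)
Your proposal is correct, but it is organised quite differently from the paper's own argument. The paper's proof of this claim is a two-sentence observation confined to the node allocated by the non-surviving insert: by Definition~\ref{definition:link_free_dl_surviving_insert}, even if that node is linked and made valid, it is never flushed (after becoming valid) before the crash, so by Definition~\ref{definition:link_free_dl_set_member} it is never a set member and in particular not persistently in the list, hence it also does not reappear after recovery. This is exactly your third branch. Your first two branches --- the \code{trim}/\code{FLUSH\_DELETE} calls issued from \code{find}, and the \code{makeValid}/\code{FLUSH\_INSERT} on an already-linked node with the same key --- go beyond what the paper proves here: the paper does not treat these helping writes in this claim at all, but instead absorbs them into the linearization points of the \emph{other} operations (the flush after marking is the linearization point of the surviving remove in Definition~\ref{definition:link_free_dl_surviving_remove}/Claim~\ref{claim:link_free_dl_remove_linearization}, and the flush that makes an existing node persistently in the list is the linearization point of the surviving insert that allocated it, Definition~\ref{definition:link_free_dl_surviving_insert}/Claim~\ref{claim:link_free_dl_insert_linearization}). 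So your decomposition by all shared-memory writes, with each mutating step charged to an operation whose survival is forced by the success definitions, makes explicit an accounting that the paper leaves implicit; it buys a more self-contained justification that ``takes no effect'' really means ``adds or removes no set member not already attributed elsewhere,'' at the cost of duplicating reasoning the paper places in the neighbouring claims. One minor caution: where you write that the node is ``never flushed by anyone,'' note that flushes may also be implicit (cache write-backs); the definitions of successful insert and of the persistent copy are meant to include these, and your argument goes through unchanged under that reading, which is also the reading the paper's own one-line proof tacitly uses.
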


\begin{proof} 
By Definition~\ref{definition:link_free_dl_surviving_insert}, during a none-surviving insert operation, if a volatile node is allocated, and even if it is inserted into the volatile list, and becomes valid, it is not flushed. By Definition~\ref{definition:link_free_dl_set_member}, it is not considered as a set member.
In particular, it is not persistently in the list and thus, will also not be considered as a set member after a crash event.
\end{proof}

\subsubsection{Remove} 
We also re-define the success of a removal in the presence of crash events.

\begin{definition} [A Successful Remove Operation] \label{definition:link_free_dl_successful_remove}
Given an execution of a remove operation, we say that this operation is successful if one of the following holds before any crash event, following its invocation:
\begin{enumerate}
    \item The operation returns \code{true}.
    \item A node $n$ is marked in line~\ref{code.logically_remove} and is flushed afterwards (not necessarily in the scope of the operation in which it is marked).
\end{enumerate}
The operation is unsuccessful if it returns \code{false}.
\end{definition}

\begin{definition} [A Surviving Remove Operation] \label{definition:link_free_dl_surviving_remove}
A remove operation is considered as a surviving operation if, before the first crash event that follows its invocation, one of the following holds:

\begin{enumerate}
    \item The operation is unsuccessful according to Definition~\ref{definition:link_free_dl_successful_remove}. The operation's linearization point is set to be the point guaranteed by Claim~\ref{claim:link_free_find_linearization}.
    \item The operation is successful according to Definition~\ref{definition:link_free_dl_successful_remove}. The operation's linearization point is set to be the first flush of the victim node, after its marking in line~\ref{code.logically_remove}.
\end{enumerate}
\end{definition}

\begin{claim} \label{claim:link_free_dl_remove_linearization}
A surviving remove operation takes effect instantaneously at its linearization point. 
\end{claim}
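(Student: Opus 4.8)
The plan is to split on the two kinds of surviving remove operations from Definition~\ref{definition:link_free_dl_surviving_remove} and, in each, compare the operation's reported outcome against set membership (Definition~\ref{definition:link_free_dl_set_member}) at the declared linearization point. Throughout I will use the fact — already contained in the proof of Claim~\ref{claim:link_free_indeed_a_set} together with Claim~\ref{claim:link_free_if_persistently_then_reachable} — that every set member is reachable: a node persistently in the list is reachable by Claim~\ref{claim:link_free_if_persistently_then_reachable}, and a second-type set member is valid and either unmarked or marked-but-not-yet-flushed, hence reachable by Invariants~\ref{link_free_node_inv:is_infant}, \ref{link_free_node_inv:is_reachable} and \ref{link_free_node_inv:not_flushed_is_reachable} of Claim~\ref{claim:link_free_nodes}.

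\textbf{Unsuccessful remove.} The operation returns \code{false} after a \code{find} that returned $(n_1,n_2)$, with $n_2$ the last node assigned to \code{curr}; since the operation is unsuccessful, $n_2$'s key differs from the input key $k$, and because \code{find} only breaks once \code{curr}'s key is at least $k$ and only advances \code{pred} past keys strictly below $k$, we get that $n_1$'s key is below $k$ and $n_2$'s key is above $k$. The linearization point is the instant provided by Claim~\ref{claim:link_free_find_linearization}, at which $n_1$ is reachable, $n_2$ is $n_1$'s successor, and $n_2$ is unmarked. Then $n_1$ and $n_2$ are consecutive reachable nodes, so by Claim~\ref{claim:link_free_volatile_sorted} every reachable node has key at most $n_1$'s key (hence below $k$) or at least $n_2$'s key (hence above $k$); thus no reachable node — and therefore no set member — has key $k$ at that instant, which is exactly what returning \code{false} asserts. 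Hence the unsuccessful remove takes effect at its linearization point.

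\textbf{Successful remove.} Let $v$ be the victim node, whose key is $k$, marked by the successful CAS of line~\ref{code.logically_remove}; by lines~\ref{code.remove_make_valid}--\ref{code.logically_remove} and Claims~\ref{claim:link_free_state_changes} and~\ref{claim:link_free_marked_node}, $v$ is valid when marked and stays valid, marked, and with a frozen \emph{next} pointer afterwards. Let $f$ be the first flush of $v$ after its marking, which exists by Definition~\ref{definition:link_free_dl_successful_remove}; this is the linearization point. I will show $v$ is a set member at the instant immediately before $f$ and is not a set member at $f$ or afterwards, and then conclude via Claim~\ref{claim:link_free_indeed_a_set} that the key $k$ leaves the set exactly at $f$. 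The argument splits on whether $v$ was flushed while valid and unmarked, before being marked: if so, that flush made $v$ persistently in the list (Definition~\ref{definition:link_free_persistently}), and by Claim~\ref{claim:link_free_persistently} $v$ remained so while valid and unmarked and still at the moment it was marked, since marking touches only the volatile copy; hence $v$ is a first-type set member right up to, but not at, the moment $f$ writes the mark into its persistent copy. If not, $f$ is the first flush of $v$ since it became valid, so by Definition~\ref{definition:link_free_dl_set_member} $v$ is a second-type set member throughout the ``valid and not yet flushed'' window, which ends at $f$. In both cases $v$ is a set member immediately before $f$ and ceases to be one at $f$, and never becomes a set member again, since from $f$ on its persistent copy is permanently valid-and-marked (so never again persistently in the list) and the second-type window is closed. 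By Claim~\ref{claim:link_free_indeed_a_set} no other node with key $k$ is a set member while $v$ is, so the set loses the key $k$ precisely at $f$; the successful remove thus takes effect instantaneously at $f$.

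\textbf{Expected obstacle.} The routine part is the unsuccessful case, once ``every set member is reachable'' is in hand. The delicate part is pinning the exact moment the victim leaves the set in the successful case: this needs the case split between being a first-type set member (already persisted before marking, its membership propagated by Claim~\ref{claim:link_free_persistently}) and being only a second-type set member in the transient ``valid, not yet flushed'' window, and one must verify that both branches agree that membership ends at the designated flush $f$ and does not recur, before closing with the key-uniqueness claim (Claim~\ref{claim:link_free_indeed_a_set}).
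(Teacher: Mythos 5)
Your proposal is correct and follows essentially the same route as the paper's proof: the unsuccessful case via Claim~\ref{claim:link_free_find_linearization} together with sortedness and the fact that set members are reachable, and the successful case via the same case split on whether a valid-and-unmarked persistent copy already exists in the NVRAM versus the flush being the first one since the node became valid. Your version merely adds some extra bookkeeping (the explicit appeal to Claim~\ref{claim:link_free_indeed_a_set} and the observation that membership never recurs), which the paper leaves implicit.
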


\begin{proof} We are going to prove the claim for each of the two surviving removal types.
\begin{enumerate}
    \item Suppose that the operation is unsuccessful according to Definition~\ref{definition:link_free_dl_successful_remove}, and let $m$ be the node returned as the second output parameter from the last find call in line \ref{code.remove_find}. Notice that $m$'s key is different from the key received as input.
    By Claim~\ref{claim:link_free_find_linearization}, $m$ is reachable at the linearization point. Moreover, its key is bigger than the key received as input, its predecessor's key is smaller than this key (by the find specification) and by Claim~\ref{claim:link_free_volatile_sorted}, there does not exist a reachable node with the input key. Since being a set member implies being reachable, there does not exist a set member with the given key at its linearization point and thus, it indeed takes effect this point.
    \item Suppose that the operation is successful according to Definition~\ref{definition:link_free_dl_successful_remove}, and the node marked in line~\ref{code.logically_remove} is flushed afterwards, and before the following crash event.
    If the non-volatile memory already contains a valid and unmarked copy of this node, then the operation's linearization point (according to Definition~\ref{definition:link_free_dl_surviving_remove}) indeed removes this node from the set, according to Definition~\ref{definition:link_free_dl_set_member}. Otherwise, the mentioned flush is the first flush of the victim node, and according to Definition~\ref{definition:link_free_dl_set_member}, it removes it from the set in this case as well.
\end{enumerate}
\end{proof}

\begin{claim} \label{claim:link_free_dl_nonsurviving_remove}
A non-surviving remove operation takes no effect. 
\end{claim}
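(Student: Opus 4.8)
The plan is to argue, in parallel with the proof of Claim~\ref{claim:link_free_dl_nonsurviving_insert}, that a non-surviving remove operation leaves the set-membership status (Definition~\ref{definition:link_free_dl_set_member}) of every node -- in particular of the node carrying the operation's key -- exactly as it was, both immediately and after the subsequent crash and recovery. The first step is to unpack what ``non-surviving'' gives: by the negation of Definition~\ref{definition:link_free_dl_successful_remove} such an operation does not return, and if it ever executes the marking CAS of line~\ref{code.logically_remove} successfully on a node $n$, then $n$ is never flushed -- by any thread -- between the operation's invocation and the following crash (in particular the operation is halted before completing the \code{FLUSH\_DELETE} of its own final \code{trim}). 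If the operation performs no successful marking CAS, then the only writes it can make are a possible \code{makeValid} in line~\ref{code.remove_make_valid} and the \code{trim} calls made inside \code{find}; a \code{makeValid} flips only the volatile copy, and the \code{FLUSH\_DELETE} run by a \code{trim} in line~\ref{code.trim_flush} is, by Definition~\ref{definition:link_free_dl_surviving_remove}, the linearization point of whatever removal marked that node -- so no membership change is attributable to our operation and it plainly takes no effect.

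The main case is that the operation marks $n$ (with $n$'s key equal to the input key) but $n$ is never flushed before the crash. Here I would first observe, using that the CAS of line~\ref{code.logically_remove} succeeds only on an unmarked node together with Claim~\ref{claim:link_free_marked_node}, that our mark is the \emph{first} mark $n$ ever receives, so $n$ has never been flushed while marked. I then case-split on the state of $n$'s persistent copy at the instant of the mark (Claim~\ref{claim:link_free_state_changes} rules out ``marked and invalid''). If it is valid and unmarked, then $n$ is persistently in the list (Definition~\ref{definition:link_free_persistently}); since $n$ is not flushed afterwards its persistent copy stays valid and unmarked through the crash, so $n$ remains persistently in the list and the recovery procedure reinserts it -- its membership is unchanged. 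If it is invalid, then $n$ is not persistently in the list, and because $n$ was unmarked before our CAS it has never been flushed while marked and hence is not a set member of the second type of Definition~\ref{definition:link_free_dl_set_member} either; so $n$ was not a set member before the mark, the mark changes only the volatile copy, recovery discards $n$ as invalid, and $n$ is still not a set member afterwards. In all cases the membership of $n$ (and of every other node) agrees before and after the operation, so it takes no effect.

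I expect the delicate part to be exactly this persistent-copy case analysis, because -- unlike the insert case, where the target node is freshly allocated and trivially never a member -- here $n$ may already be persistently in the list, so I must rule out both that the un-flushed mark removes a genuine member and that the mark was somehow needed for $n$ to have counted as a member. The lever is that non-survival already supplies ``$n$ is never flushed after our mark, before the crash'', from which two things follow: $n$'s persistent copy is unchanged, so recovery behaves as if the operation never ran; and, since \code{FLUSH\_DELETE}$(n)$ is only ever invoked with $n$ marked and $n$'s \code{deleteFlushFlag} can be set only by such a call performing a \code{psync}, the flag stays clear, so no completed operation reaches (and completes) a \code{FLUSH\_DELETE}$(n)$ after our mark -- hence no surviving \code{contains} returns \code{false} by observing $n$ marked and no \code{find} trims $n$, and no surviving operation's behavior reflects the removal. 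A secondary bookkeeping point is that a \code{makeValid} in line~\ref{code.remove_make_valid} may let a later operation (or $n$'s still-pending \code{insert}) flush $n$ valid-and-unmarked and thereby enter it into the persistent list; that change is credited to that other operation by Definitions~\ref{definition:link_free_dl_surviving_insert} and~\ref{definition:link_free_dl_surviving_remove}, not to the non-surviving remove.
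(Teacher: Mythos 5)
Your proposal is correct and follows essentially the same route as the paper's (much terser) proof: a non-surviving remove, by Definitions~\ref{definition:link_free_dl_successful_remove} and~\ref{definition:link_free_dl_surviving_remove}, never has its marked victim flushed before the crash, so by Definition~\ref{definition:link_free_dl_set_member} no node's set membership is revoked, and any membership \emph{gained} via the \code{makeValid} is credited to the pending insert. The extra case analysis on the persistent copy and the flag/trim bookkeeping are elaborations of, not departures from, the paper's argument.
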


\begin{proof} 
By Definition~\ref{definition:link_free_dl_surviving_remove}, during a none-surviving remove operation, if a victim node is found, and even if it is made valid and marked, it is not flushed. By Definition~\ref{definition:link_free_dl_set_member}, whether it is originally a set member or not, it is not removed from the set. 
\end{proof}

\subsubsection{Contains} We do not use the term of success for describing a contains execution, and, therefore, the terms for its survival are straight forward.

\begin{definition} [A Surviving Contains Operation] \label{definition:link_free_dl_surviving_contains}
A contains operation is considered as a surviving operation if it terminates before the first crash event that follows its invocation.
For defining linearization points per contains execution, let $n_1$ and $n_2$ be the last nodes assigned into the \code{curr} variable.
\begin{enumerate}
    \item When the operation returns \code{true}, its linearization point is set to be a point during the execution in which $n_2$ is a set member (chosen arbitrarily).
    \item When the operation returns \code{false} in line~\ref{code.lf_key_not_found_return_false}, its linearization point is set to be the point guaranteed by Claim~\ref{claim:link_free_contains_point} for $n_1$ and $n_2$.
    \item When the operation returns \code{false} in line~\ref{code.contains_flush_marked_return_false}, its linearization point is set to be a point during the execution in which $n_2$ is reachable but not a set member (chosen arbitrarily).
\end{enumerate}

\end{definition}

\begin{claim} \label{claim:link_free_dl_contains_linearization}
A surviving contains operation takes effect instantaneously at its linearization point. 
\end{claim}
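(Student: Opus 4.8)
The plan is to check, case by case along Definition~\ref{definition:link_free_dl_surviving_contains}, that the set's contents at the prescribed linearization point agree with the value the operation returns. Everything rests on two facts from the earlier development. First, every set member is reachable: Claim~\ref{claim:link_free_if_persistently_then_reachable} handles a node that is persistently in the list, while a member of the second type in Definition~\ref{definition:link_free_dl_set_member} is valid (hence not an infant) and either unmarked or not yet flushed, so Invariant~\ref{link_free_node_inv:is_reachable} of Claim~\ref{claim:link_free_nodes} applies; second, by Claim~\ref{claim:link_free_indeed_a_set} set members have pairwise distinct keys. Combining these with Claim~\ref{claim:link_free_volatile_sorted} yields the statement I will use everywhere: at any instant there is at most one reachable node with a given key, and a key is in the set at that instant iff that unique node exists and is a set member. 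I will also use that the final value $n_2$ of \code{curr} carries a key at least the input key, and that the branch taken fixes it exactly --- equal to the input key once line~\ref{code.lf_node_deleted} is reached (having passed the test of line~\ref{code.lf_key_not_found}), strictly larger on the return at line~\ref{code.lf_key_not_found_return_false}. With this, the case in which the operation returns \code{false} at line~\ref{code.lf_key_not_found_return_false} is immediate: at the step $p$ supplied by Claim~\ref{claim:link_free_contains_point}, $n_1$ and $n_2$ are reachable, $n_2$ is $n_1$'s successor, and $n_1$'s key is smaller than the input key while $n_2$'s is larger; a set member carrying the input key would be the unique reachable node with that key and, by Claim~\ref{claim:link_free_volatile_sorted}, would have to sit strictly between $n_1$ and $n_2$ in the sorted list, contradicting that $n_2$ directly follows $n_1$, so the input key is not in the set at $p$ and \code{false} is correct. (If the traversal loop never executes line~\ref{code.lf_end_loop}, use the initial read of \code{head}'s \code{next} pointer, with \code{head} as predecessor, in place of Claim~\ref{claim:link_free_contains_point}.)

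When the operation returns \code{true}, $n_2$'s key equals the input key, so it suffices to exhibit a step inside the operation's execution interval at which $n_2$ is a set member; correctness then follows from the uniqueness statement. On this branch $n_2$ is read \emph{unmarked} at line~\ref{code.lf_node_deleted}, and then \code{makeValid} (line~\ref{code.lf_node_exists}) leaves $n_2$ valid while \code{FLUSH\_INSERT} (line~\ref{code.lf_contains_line_after_makevalid}) forces some flush of $n_2$ while it is valid. I would then reuse the dichotomy of Claim~\ref{claim:link_free_unsucc_insert} and Claim~\ref{claim:link_free_dl_insert_linearization}(1): letting $g$ be the first flush of $n_2$ after it became valid, either $n_2$ is unmarked at $g$, so its persistent copy is valid and unmarked and $n_2$ is persistently in the list (Definition~\ref{definition:link_free_persistently}) hence a set member, or $n_2$ is marked at $g$, so $n_2$ is a set member of the second type throughout the interval from its becoming valid until $g$. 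The extra ingredient here, not needed in the insert case, is to place the witnessing step \emph{within this contains execution}: since $n_2$ is read unmarked at line~\ref{code.lf_node_deleted} --- and hence valid there, because a valid persistent copy forces the volatile state to be valid by Claim~\ref{claim:link_free_state_changes} --- Claim~\ref{claim:link_free_persistently} propagates ``persistently in the list'' forward to that step whenever $g$ precedes it; and whenever $g$ follows it, $g$ does not exceed the flush at line~\ref{code.lf_contains_line_after_makevalid} and so lies within the execution, while the membership window, which ends at $g$, necessarily meets the execution.

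When the operation returns \code{false} at line~\ref{code.contains_flush_marked_return_false}, $n_2$'s key equals the input key, $n_2$ was read \emph{marked} at line~\ref{code.lf_node_deleted}, and \code{FLUSH\_DELETE} at line~\ref{code.contains_flush_marked} guarantees that $n_2$ has been flushed while marked by some step $f$ no later than line~\ref{code.contains_flush_marked}, hence before the response. After $f$ the persistent copy of $n_2$ is permanently valid-and-marked by Claims~\ref{claim:link_free_marked_node} and~\ref{claim:link_free_state_changes}, and the window during which $n_2$ could be a set member of the second type has closed, so $n_2$ is never again a set member. Let $u$ be the successful CAS at line~\ref{code.trim_unlink_cas} that unlinks $n_2$; it is preceded by a flush of the then-marked $n_2$ at line~\ref{code.trim_flush}, so $f<u$. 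Taking $p$ again from Claim~\ref{claim:link_free_contains_point}, $n_2$ is reachable throughout the interval from $p$ up to $u$ --- it can become unreachable only by being unlinked, which is the step $u$, as in the proof of Claim~\ref{claim:link_free_nodes} --- and $p<u$ because $n_2$ is reachable at $p$. Since $p$ and $f$ each precede both $u$ and the response step $r$, the open interval between $\max(p,f)$ and $\min(u,r)$ is nonempty, lies inside the execution, and every step in it finds $n_2$ reachable but not a set member; by the uniqueness statement the input key is not in the set there, so \code{false} is correct.

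The hard part is not the first \code{false} case, which falls straight out of Claim~\ref{claim:link_free_contains_point} and sortedness, but pinning the linearization point of the other two cases to a step \emph{inside} the operation's own execution interval, even though the \code{makeValid} and flush steps that witness (non-)membership of $n_2$ may have been executed earlier by helping threads. This is precisely where Claim~\ref{claim:link_free_persistently} and the branch-specific facts about what is read at line~\ref{code.lf_node_deleted} --- unmarked for the \code{true} branch, marked for the second \code{false} branch --- are indispensable.
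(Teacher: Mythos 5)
Your proposal is correct and follows essentially the same route as the paper's proof: the same three-way case split on the return branch, Claim~\ref{claim:link_free_contains_point} plus sortedness for the return at line~\ref{code.lf_key_not_found_return_false}, and the two-type set-membership definition (persistently in the list vs.\ the valid-but-not-yet-flushed window closed by \code{FLUSH\_INSERT}/\code{FLUSH\_DELETE} or the trim-time flush) for the other two branches. You are merely more explicit than the paper about locating the witnessing step inside the operation's execution interval and about uniqueness of the key among set members, which the paper leaves terse but implicitly relies on as well.
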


\begin{proof} Let $n_1$ and $n_2$ be the last nodes assigned into the \code{curr} variable. We are going to prove the claim for each of the three surviving contains types.
\begin{enumerate}
    \item Suppose that the operation returns \code{true}. We are going to show that there indeed exists a point during the execution in which $n_2$ is a set member.
    Since the operation does not return in line~\ref{code.contains_flush_marked_return_false}, from Claim~\ref{claim:link_free_state_changes}, $n_2$ is not marked during the traversal. 
    In addition, since $n_2$ is made valid at the latest when executing line~\ref{code.lf_node_exists}, from Claim~\ref{claim:link_free_state_changes}, it is also valid when executing line~\ref{code.lf_contains_line_after_makevalid}. There are several possible scenarios:
    \begin{enumerate}
        \item $n_2$ is not marked during the execution. In this case, $n_2$ becomes a set member at the latest when executing line~\ref{code.lf_contains_line_after_makevalid}. In this case, there obviously exists a point during the execution at which $n_2$ is a set member.
        \item $n_2$ is marked during the execution, and is flushed at some point after becoming valid and before becoming marked (by Claim~\ref{claim:link_free_state_changes}, $n_2$ becomes valid before it is marked). There exists a suitable point in this case as well.
        \item The remaining case is when $n_2$ is not flushed after becoming valid and before being marked. In this case, it is flushed at the latest in line~\ref{code.lf_contains_line_after_makevalid}, and therefore, by Definition~\ref{definition:link_free_dl_set_member}, it is a set member at some point, before being marked.
    \end{enumerate}
    There exists a suitable linearization point in every case.
    \item Suppose that the operation returns \code{false} in line~\ref{code.lf_key_not_found_return_false}. Claim~\ref{claim:link_free_contains_point} guarantees that both $n_1$ and $n_2$ are reachable at this point. Since $n_1$'s key must be smaller then the key received as input, and $n_2$ must be bigger, by Claim~\ref{claim:link_free_volatile_sorted}, there does not exist a reachable node with the given key at this point. By Claim~\ref{claim:link_free_if_persistently_then_reachable}, there does not exist a set member with the given key at this point.
    \item Suppose that the operation returns \code{false} in line~\ref{code.contains_flush_marked_return_false}. If it still reachable when executing line~\ref{code.contains_flush_marked}, then a marked copy of $n_2$ resides in the non-volatile memory (i.e., it is not a set member by Definition~\ref{definition:link_free_dl_set_member}), while $n_2$ is still reachable, and the guaranteed point exists.
    Otherwise, before it becomes unreachable (which happens during the contains execution, according to Claim~\ref{claim:link_free_contains_point}), at the latest, it is flushed as a marked node in line~\ref{code.trim_flush} of Listing~\ref{algo:lf_help}. Therefore, the guaranteed point exists in this case as well.
    By Claim~\ref{claim:link_free_volatile_sorted} and~\ref{claim:link_free_if_persistently_then_reachable}, there does not exist a set member with the given key at this point.
\end{enumerate}
\end{proof}

Since a contains operation does not effect the list (it executes flushes, that can also be executed implicitly), there is no need to prove that non-surviving contains executions do not take effect.

\begin{theorem}\label{theorem:link_free_durable_linearizable}
The link-free list is durable linearizable. 
\end{theorem}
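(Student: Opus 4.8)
The plan is to verify the two requirements of Definition~\ref{def:dl} in turn: well-formedness of the execution, and linearizability of $ops(E)$. Well-formedness is immediate from the model of Section~\ref{chap:terminology}: each thread executes its operations sequentially, and after each crash the recovery procedure runs to completion---with its end serving as the response point of every crash-interrupted operation (as fixed in Section~\ref{chap:link_free_dl})---before any freshly spawned thread begins a new operation; hence $E|T$ is sequential for every thread $T$. The bulk of the work is therefore to exhibit a legal sequential execution $S$ for the set that is equivalent to $complete(\hat{E})$ and respects the happens-before order of $ops(E)$.

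First I would fix which operations survive: all operations that return before any crash following their invocation, together with the crash-interrupted operations designated as surviving in Definitions~\ref{definition:link_free_dl_surviving_insert}, \ref{definition:link_free_dl_surviving_remove} and \ref{definition:link_free_dl_surviving_contains}. These are exactly the operations that receive a matching response in $\hat{E}$, and each carries a linearization point supplied by those definitions. The first check is that every linearization point lies between the invocation and the response of its operation: for completed operations this is read off from the code locations named in the definitions, and for surviving crash-interrupted operations the point is, by construction, before the first crash that follows the invocation, hence before the end of the ensuing recovery, which is the designated response. Ordering the surviving operations by their linearization points---breaking ties on a shared step by a fixed rule such as thread id, and placing read operations whose points are ``chosen arbitrarily'' at any admissible step so that ties can always be avoided---yields a candidate sequential order $S$ that satisfies Condition~(2) of Definition~\ref{def:linearizability} by construction.

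It then remains to show $S$ is legal for the set. Here I would track the abstract set $\mathcal{S}(t)$ consisting of the keys of the nodes that are set members at time $t$ in the sense of Definition~\ref{definition:link_free_dl_set_member}; by Claim~\ref{claim:link_free_indeed_a_set} this is a genuine set of keys, and since membership there depends only on NVRAM-resident state it is unaffected by crash events and thus well defined across all of $ops(E)$. One argues that $\mathcal{S}$ changes only at the linearization points of surviving successful insert and remove operations: a node's key enters $\mathcal{S}$ exactly at the step designated as the linearization point of its surviving successful insert, and leaves exactly at the flush designated for a surviving successful remove, using Claims~\ref{claim:link_free_dl_insert_linearization} and \ref{claim:link_free_dl_remove_linearization}, while Claims~\ref{claim:link_free_dl_nonsurviving_insert} and \ref{claim:link_free_dl_nonsurviving_remove} ensure non-surviving operations leave $\mathcal{S}$ untouched. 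Combining this with the return-value halves of Claims~\ref{claim:link_free_dl_insert_linearization}, \ref{claim:link_free_dl_remove_linearization} and \ref{claim:link_free_dl_contains_linearization}---each surviving operation's result is the one prescribed by the sequential specification of a set relative to $\mathcal{S}$ at its linearization point---gives legality of $S$, whence $ops(E)$ is linearizable and the list is durable linearizable.

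The step I expect to be the main obstacle is establishing that $\mathcal{S}$ changes only at the designated points and only in the prescribed direction: that a surviving successful insert's node becomes a member precisely at that step and not earlier (hinging on the invalid-before-initialization discipline of Claim~\ref{claim:link_free_state_changes} together with Claim~\ref{claim:link_free_persistently}), and that a surviving successful remove's flush is indeed the moment the node stops being a member, despite concurrent helpers issuing redundant flushes. Reconciling this with recovery---confirming that recovery rebuilds the volatile list to hold exactly the currently-persistent nodes, so the post-crash continuation observes $\mathcal{S}$ consistently---is where the crash semantics interact most delicately with the invariants, though the fact that Definition~\ref{definition:link_free_dl_set_member} refers only to NVRAM state does most of the heavy lifting.
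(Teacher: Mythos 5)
Your proposal is correct and follows essentially the same route as the paper's own proof, which simply cites the surviving-operation definitions (Definitions~\ref{definition:link_free_dl_surviving_insert}, \ref{definition:link_free_dl_surviving_remove}, \ref{definition:link_free_dl_surviving_contains}) together with the take-effect claims (Claims~\ref{claim:link_free_dl_insert_linearization}, \ref{claim:link_free_dl_remove_linearization}, \ref{claim:link_free_dl_contains_linearization}) and the no-effect claims (Claims~\ref{claim:link_free_dl_nonsurviving_insert}, \ref{claim:link_free_dl_nonsurviving_remove}) to conclude durable linearizability. Your write-up merely makes explicit some bookkeeping the paper leaves implicit (well-formedness, ordering by linearization points, and tracking the abstract set of Definition~\ref{definition:link_free_dl_set_member} via Claim~\ref{claim:link_free_indeed_a_set}), which is a faithful elaboration rather than a different argument.
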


\begin{proof}
By Definition~\ref{definition:link_free_dl_surviving_insert}, \ref{definition:link_free_dl_surviving_remove} and~\ref{definition:link_free_dl_surviving_contains}, all the operations that are fully executed between two crashes (and some of the operations that are halted due to crash events), have a linearization point. 
By Claim~\ref{claim:link_free_dl_insert_linearization}, \ref{claim:link_free_dl_remove_linearization} and~\ref{claim:link_free_dl_contains_linearization}, each operation takes effect instantaneously at its linearization point. By Claim~\ref{claim:link_free_dl_nonsurviving_insert} and~\ref{claim:link_free_dl_nonsurviving_remove}, operations for which we did not define linearization points (non-surviving operations), do not take effect at all.
In summary, the link-free list is durable linearizable by definition~\citep{10.1007/978-3-662-53426-7_23}.
\end{proof}
\subsection{Lock-Freedom}\label{chap:link_free_lock_freedom}

\subsubsection{A Preliminary Discussion}
Lock-freedom is impossible to show in the presence of crashes. To see that this is the case, 
imagine an adversarial schedule of crashes that repeatedly creates a crash one step before the completion 
of an operation. Such crashes can also occur during the recovery process itself. As far as we know, lock-freedom has not
been previously discussed in the presence of crashes. 

One way to deal with this problem is to admit that in the presence of crashes lock-freedom cannot be guaranteed, 
but as crashes are expected to occur infrequently, this still leaves the question of lock-freedom during crash-free
executions. Such lock-freedom is of high value in practice, when crashes are indeed rare. 
A more theoretical approach is to consider 
crashes as progress, as if a crash itself is one of the operations on the data structure. Interestingly, this 
yields the same challenge. While executions with crashes always make progress, crash-free executions 
need a proof of progress. So in what follows we prove that the link-free list is lock-free in the 
absence of crashes. 

We are going to prove that in crash-free executions, at least one of the operations terminates.
To derive a contradiction, assume there is some execution for which no executing operation terminates after a certain point.
Notice that we can assume that no operation is invoked after this point, and that the set of running operations is finite (since there is a finite number of system threads).
The rest of the proof relates to the suffix $\alpha$ of the execution, starting from this point.

\begin{claim} \label{claim:link_free_no_state_changes}
There is a finite number of state changes of reachable nodes during $\alpha$.
\end{claim}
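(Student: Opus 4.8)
The plan is to combine Claim~\ref{claim:link_free_state_changes}, which caps the number of state transitions a single node can ever undergo, with the observation that \emph{no node can become reachable during} $\alpha$. By Claim~\ref{claim:link_free_state_changes} the state of any node passes only through the chain ``initial $\to$ invalid $\to$ valid $\to$ marked'', so it changes at most three times over the entire execution; consequently it suffices to prove that only finitely many distinct nodes are ever reachable at any configuration within $\alpha$.

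First I would observe that the set of nodes reachable at the first configuration of $\alpha$ is finite. The prefix of the execution preceding $\alpha$ consists of finitely many steps, so only finitely many nodes were ever allocated and, in particular, only finitely many successful executions of the CAS in line~\ref{code.insert_cas} of Listing~\ref{algo:lf_insert} occurred before $\alpha$. By Invariants~\ref{link_free_node_inv:no_infant} and~\ref{link_free_node_inv:is_infant} of Claim~\ref{claim:link_free_nodes}, every reachable node other than \code{head} is a non-infant, hence was introduced into the list by such a CAS; therefore the reachable list at the start of $\alpha$ is finite.

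Next I would show that no new node becomes reachable during $\alpha$. As established in the proof of Claim~\ref{claim:link_free_nodes}, the only step that can turn a previously unreachable node into a reachable one is a successful execution of the CAS in line~\ref{code.insert_cas} of Listing~\ref{algo:lf_insert} (the trimming CAS of line~\ref{code.trim_unlink_cas} and the relaxed store to the new node's \emph{next} pointer in line~\ref{code.init_next} only preserve, respectively only affect an infant's, reachability). But in \code{insert} a successful CAS at line~\ref{code.insert_cas} is immediately followed by \code{makeValid}, \code{FLUSH\_INSERT} and the return of \code{true} (line~\ref{code.insert_last_line}), so the invoking operation terminates a bounded number of steps later. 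Since $\alpha$ is, by assumption, a suffix in which no operation terminates, no such successful CAS can occur within $\alpha$: if it occurred during $\alpha$ the operation would also return during $\alpha$, and any such CAS preceding $\alpha$ was already followed by the operation's return before $\alpha$ began. Hence the collection of nodes reachable at any configuration within $\alpha$ is contained in the finite set of nodes reachable at the start of $\alpha$.

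Combining the three observations, at most finitely many nodes are ever reachable during $\alpha$, and each of them changes state at most three times throughout the execution, so the total number of state changes of reachable nodes during $\alpha$ is finite. I expect the delicate step to be the second one above: one must argue carefully, using the reachability invariants of Claim~\ref{claim:link_free_nodes} and the exact control flow of \code{insert}, both that the insert CAS is genuinely the unique reachability-creating event and that a successful such CAS is incompatible with the non-termination of every operation during $\alpha$; the remaining bookkeeping is routine.
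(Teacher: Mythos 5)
Your overall decomposition (cap the number of transitions per node via Claim~\ref{claim:link_free_state_changes}, then bound the number of nodes that are ever reachable during $\alpha$) is a legitimate alternative to the paper's per-line analysis, and your first and third steps are fine. The gap is in your second step: the inference ``a successful CAS at line~\ref{code.insert_cas} during $\alpha$ would force the invoking insert to return during $\alpha$'' is unsound in the asynchronous model that the lock-freedom argument lives in. The contradiction hypothesis is only that no operation \emph{completes} after the chosen point; it does not give you any scheduling guarantee for individual threads. An adversarial scheduler may let a thread execute its successful linking CAS and then never schedule it again, so the insert stays pending forever while its node has nevertheless become reachable inside $\alpha$. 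Hence you cannot conclude that no new node becomes reachable during $\alpha$; ruling this out by appealing to ``the operation terminates a bounded number of steps later'' implicitly assumes the thread keeps taking steps, which is exactly what cannot be assumed here.

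The paper's proof sidesteps this by not excluding such CASes at all: it observes that each pending insert can link at most one new node during $\alpha$, and since no new operations are invoked after the start of $\alpha$ and there are finitely many threads, only finitely many nodes are created and made reachable during $\alpha$; it then argues that, because every operation that performs the other state-changing lines terminates shortly after, the only remaining state changes of reachable nodes are the \code{makeValid} calls in line~\ref{code.remove_make_valid}, and these are bounded because each node becomes valid at most once. Your argument is repaired the same way: replace ``no successful CAS at line~\ref{code.insert_cas} occurs in $\alpha$'' by ``at most one such CAS per pending insert, hence finitely many,'' and the rest of your proof (finitely many reachable nodes, each contributing at most three state transitions by Claim~\ref{claim:link_free_state_changes}) goes through and is, if anything, a cleaner bookkeeping than the paper's case analysis.
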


\begin{proof}
A contains execution must terminate after executing line~\ref{code.lf_node_exists}, an insert execution must terminate after executing line~\ref{code.lf_insert_node_exists} or~\ref{code.insert_make_valid}, and a remove execution must terminate after a successful CAS execution in line~\ref{code.logically_remove}.
In addition, the state change in line~\ref{code.make_node_invalid}, during an insert execution, is of an unreachable node.
Consequently, we can assume that after a certain point, state changes are made only in line~\ref{code.remove_make_valid}, of Listing~\ref{algo:lf_remove}. Since a finite number of new nodes is created and made reachable during $\alpha$ (at most one node per pending insert operation), and since every such node eventually becomes valid in line~\ref{code.insert_make_valid} of Listing~\ref{algo:lf_insert}, we can assume that the number of state changes in line~\ref{code.remove_make_valid} of Listing~\ref{algo:lf_remove} is finite as well. 
\end{proof}

\begin{claim} \label{claim:link_free_no_new_reachables}
There is a finite number of pointer changes of reachable nodes during $\alpha$.
\end{claim}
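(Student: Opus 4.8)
The plan is to mirror the proof of Claim~\ref{claim:link_free_no_state_changes}: enumerate the program lines that can write a \emph{next} pointer, show that in the crash-free suffix $\alpha$ almost all of them can fire only finitely often, and then bound the one remaining family of writes. A node's \emph{next} pointer is written only at line~\ref{code.init_next} (initialization of a freshly allocated node), at line~\ref{code.insert_cas} (the linking CAS of \code{insert}), at line~\ref{code.logically_remove} (the marking CAS of \code{remove}), and at line~\ref{code.trim_unlink_cas} (the unlinking CAS of \code{trim}). The store at line~\ref{code.init_next} targets an \emph{infant} node, which by Invariant~\ref{link_free_node_inv:is_infant} of Claim~\ref{claim:link_free_nodes} is not reachable, so it never counts as a pointer change of a reachable node. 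A successful CAS at line~\ref{code.insert_cas} causes \code{insert} to terminate (it proceeds to \code{return true}), and a successful CAS at line~\ref{code.logically_remove} makes the \code{remove} loop exit and \code{remove} terminate (it calls \code{trim} and then returns \code{true}); neither can happen in $\alpha$. Consequently, the only step in $\alpha$ that changes the \emph{next} pointer of a reachable node is a successful unlinking CAS at line~\ref{code.trim_unlink_cas}; moreover, since a successful marking CAS would terminate \code{remove}, the \code{trim} call at the end of \code{remove} is never reached in $\alpha$, so in $\alpha$ \code{trim} is invoked only from within \code{find} (line~\ref{code.call_trim}), always with a \emph{marked} node as its \code{curr} argument.

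It then remains to bound the number of successful unlinking CASes in $\alpha$. Let $M$ be the set of nodes that are marked at some point during $\alpha$. By Claim~\ref{claim:link_free_state_changes} a node is never unmarked, and by the previous paragraph no node becomes marked during $\alpha$, so every element of $M$ is already marked in the first configuration of $\alpha$; only finitely many nodes have been allocated by then, so $M$ is finite. I would then show that each $n \in M$ can serve as the \code{curr} argument of at most one successful unlinking CAS. Such a CAS sets the \emph{next} pointer of some node \code{pred} from $n$ to $n$'s successor, after which $n$ has no reachable predecessor; by Claim~\ref{claim:link_free_marked_node} the \emph{next} pointer of $n$ never changes again, and the only pointer writes remaining in $\alpha$ are further trim CASes, each of which merely sets some node's \emph{next} pointer to the successor of the marked node \code{curr} it is unlinking -- and that successor equals $n$ only if $n$ is already \code{curr}'s successor, hence already reachable. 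So $n$ never reacquires a predecessor and can never again be the target of a successful trim. Therefore $\alpha$ contains at most $|M|$ successful unlinking CASes, hence at most $|M|$ pointer changes of reachable nodes, which is finite.

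The hard part, exactly as for the state-change claim, will be ruling out regeneration of work: I must be sure that a node detached by \code{trim} cannot later be spliced back into the reachable chain, and that nothing besides the four enumerated write sites can modify a \emph{next} pointer. Both rely on the list-shape invariants of Claim~\ref{claim:link_free_nodes} and the marking invariant of Claim~\ref{claim:link_free_marked_node}; the delicate point is the trim CAS itself, since it simultaneously writes one \emph{next} pointer and reads another (that of \code{curr}), so I must check that this read cannot bring an already-detached marked node back into play. Once that is pinned down the counting argument is immediate and the claim follows.
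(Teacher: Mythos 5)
Your proposal is correct and follows essentially the same route as the paper: reduce all pointer changes of reachable nodes in $\alpha$ to successful unlinking CASes at line~\ref{code.trim_unlink_cas} (since a successful CAS at line~\ref{code.insert_cas} or line~\ref{code.logically_remove} would let an operation terminate, and line~\ref{code.init_next} writes only to an infant), and then bound these by the finitely many marked nodes. The only difference is one of detail: the paper simply invokes Claim~\ref{claim:link_free_no_state_changes} to say the number of marked nodes is bounded, whereas you additionally make explicit the (implicit in the paper) step that each marked node can be the target of at most one successful trim, which is a welcome refinement rather than a different approach.
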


\begin{proof}
The pointers of reachable nodes change either in line~\ref{code.trim_unlink_cas} of Listing~\ref{algo:lf_help} or line~\ref{code.insert_cas} of Listing~\ref{algo:lf_insert}. A state change in line~\ref{code.insert_cas} of Listing~\ref{algo:lf_insert} would cause the termination of an insert execution and thus, the only pointer changes are physical removals of marked nodes, executed in line~\ref{code.trim_unlink_cas} of Listing~\ref{algo:lf_help}.
Since there is a finite number of state changes of reachable nodes during $\alpha$ (by Claim~\ref{claim:link_free_no_state_changes}), the number of marked nodes is bounded and thus, there is a finite number of pointer changes of reachable nodes during $\alpha$. 
\end{proof}

\begin{theorem} \label{theorem:link_free_lock_free}
The link-free list is lock-free. 
\end{theorem}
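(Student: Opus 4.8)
The plan is to argue by contradiction, reusing the setup already fixed for the two preceding claims: assume there is a crash-free execution with a suffix $\alpha$ in which no operation is ever invoked and no running operation ever terminates, and in which only finitely many threads run. By Claim~\ref{claim:link_free_no_state_changes} and Claim~\ref{claim:link_free_no_new_reachables}, only finitely many state changes and \emph{next}-pointer changes of reachable nodes occur during $\alpha$, so there is a point $p$ in $\alpha$ after which no reachable node ever changes its validity state, its mark bit, or its \emph{next} pointer. By Claim~\ref{claim:link_free_volatile_sorted}, after $p$ the reachable nodes form a fixed finite list, sorted by key, running from \code{head} to \code{tail}, and this list stays frozen forever.

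The first step is to show that after $p$ no reachable node is marked. If some reachable node were marked, its predecessor on the frozen list could not be unmarked: otherwise, when a \code{find} call reaches the marked node it will have set \code{pred} to that unmarked predecessor and, seeing the mark, will call \code{trim}, whose CAS on the predecessor's \emph{next} pointer succeeds (that value is frozen and still equals the marked node), changing the \emph{next} pointer of a reachable node after $p$ --- impossible. Since \code{head} is unmarked (Claim~\ref{claim:link_free_sentinel_state}), a short induction along the list shows every reachable node is unmarked after $p$. Consequently, every \code{find} call issued after $p$ walks only through reachable, unmarked nodes along strictly increasing keys (Invariant~\ref{link_free_node_inv:not_reachable}), never triggers a successful \code{trim}, terminates, and returns a pair (\code{pred}, \code{curr}) of consecutive list nodes with \code{pred} unmarked and \code{pred->next} pointing exactly at \code{curr}; the standalone traversal loop of \code{contains} terminates for the same reason.

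The second step is a case analysis on a non-terminating operation in $\alpha$, one of which exists by assumption. A \code{contains} would return once its loop terminates after $p$ --- contradiction. An \code{insert} that never terminates issues \code{find} infinitely often, hence issues a \code{find} that both begins after $p$ and returns a pair (\code{pred}, \code{curr}) as above, with \code{curr} reachable by Claim~\ref{claim:link_free_find_linearization}. If \code{curr->key} equals the searched key, then \code{makeValid(curr)} on the reachable node \code{curr} either changes its state --- impossible after $p$ --- or is a no-op, after which the operation flushes and returns \code{false}; either branch is a contradiction. If \code{curr->key} differs from the searched key, the operation reaches the link CAS of line~\ref{code.insert_cas}, which succeeds because \code{pred->next} is frozen equal to \code{curr}; so the operation returns \code{true} while also changing the \emph{next} pointer of the reachable node \code{pred} after $p$ --- a contradiction. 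A non-terminating \code{remove} similarly reaches a \code{find} (begun after $p$) returning such a pair with \code{curr} reachable; if \code{curr->key} differs from the key it returns \code{false}; otherwise \code{makeValid(curr)} is a no-op (else impossible after $p$) and the marking CAS of line~\ref{code.logically_remove} succeeds (its expected value was just read and is frozen), so the operation proceeds to \code{trim} and returns \code{true} while also marking the reachable node \code{curr} after $p$ --- a contradiction. Since every case is contradictory, some operation must terminate, so the link-free list is lock-free in crash-free executions.

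I expect the delicate point to be the first step, characterizing the post-$p$ configuration, because a priori a marked node can remain reachable indefinitely --- the \code{trim} that should unlink it may be held by a stalled thread. The resolution is precisely the argument above, which converts this potential obstruction into progress: a marked reachable node forces a \emph{successful} \code{trim} CAS on its unmarked predecessor, which is forbidden after $p$, and combined with \code{head} being unmarked this propagates unmarkedness down the whole frozen list. A secondary care point is to use only \code{find} calls (and \code{contains} traversals) that \emph{begin} after $p$, so that the reachability window guaranteed by Claim~\ref{claim:link_free_find_linearization} lies entirely in the frozen regime and pins \code{pred->next} to \code{curr} at the moment of the subsequent CAS.
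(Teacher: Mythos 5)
Your proof follows essentially the same route as the paper's: from Claims~\ref{claim:link_free_no_state_changes} and~\ref{claim:link_free_no_new_reachables} you extract a point after which reachable nodes undergo no state or pointer changes, and then argue that in this frozen regime every pending \code{find}/\code{contains} terminates and every pending \code{insert}/\code{remove} would either return or perform a forbidden change, contradicting the assumption that no operation completes; the paper's own proof is just a terser rendition of this case analysis. The only place you over-reach is the first step: the unconditional claim that after $p$ no reachable node is marked is not actually established, and can be false --- a node marked before $p$ by a remover that then stalls forever stays marked and reachable if its key exceeds every key searched by the pending operations, since your argument tacitly assumes some \code{find} ``reaches the marked node.'' Fortunately the full-strength claim is never needed: what your case analysis uses is only that a \code{find} issued after $p$ which does encounter a marked reachable node would execute a successful \code{trim} CAS (its predecessor's \emph{next} pointer being frozen), which is itself a pointer change after $p$ and hence an immediate contradiction, so either way each pending operation yields the contradiction. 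With the first step weakened to that conditional statement, the argument is correct and matches the paper's.
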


\begin{proof}
From Claims~\ref{claim:link_free_no_state_changes} and~\ref{claim:link_free_no_new_reachables}, after a certain point, there are no state or pointer changes in the list.
Therefore, we consider the suffix $\alpha'$ of the execution that contains no state or pointer changes of reachable nodes.
Obviously, starting from this point, the list becomes stable, and does not change anymore.

Since the list is finite, from Claim~\ref{claim:link_free_volatile_sorted}, every find and contains execution eventually ends.
In addition, every insert and remove operation must be unsuccessful, and also terminate (since calls to the find method always terminate).
We get a contradiction and therefore, the implementation is lock-free.
\end{proof}
\section{SOFT Correctness}
\label{chap:soft_correctness_gali}
In this section we prove the correctness (i.e., durable linearizability) and progress guarantee (lock-freedom) of the \SOFT{} list. We start by proving some volatile list invariants. In Section~\ref{chap:soft_linearizability} we prove the linearizability of our implementation when there are no crash events, followed by a durable linearizability proof in Section~\ref{chap:soft_dl}. Finally, we show our implementation is lock-free in Section~\ref{chap:soft_lock_freedom}.

\begin{claim}[State Transitions] \label{claim:soft_state_changes}
	The state of a volatile node can only go through the following transitions:
	\begin{enumerate}
	    \item From ``intend to insert'' to ``inserted'' \label{state:from_intend_to_insert}
	    \item From ``inserted'' to ``inserted with intention to delete'' \label{state:from_insert_to_intend}
	    \item From ``inserted with intention to delete'' to ``deleted'' \label{state:from_intend_to_delete}
	\end{enumerate}
\end{claim}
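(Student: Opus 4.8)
The plan is to prove the claim by an exhaustive inspection of the program points at which a volatile node's \code{next} field is written, mirroring the structure of the link-free Claim~\ref{claim:link_free_state_changes}. Recall (Section~\ref{sub_sec.vnode}) that a volatile node's state is precisely the low bits of its \code{next} pointer, so a node's state can change only when some thread writes to its \code{next} field. First I would enumerate every such write site: the relaxed store that initializes a brand-new node in line~\ref{code.soft_init_state} of Listing~\ref{algo:soft_insert}; the linking CAS in line~\ref{code.soft_link_node} of Listing~\ref{algo:soft_insert}; the \code{stateCAS} in line~\ref{code.soft_complete_insert} of Listing~\ref{algo:soft_insert}; the \code{stateCAS}s in lines~\ref{code.soft_mark_node} and~\ref{code.soft_complete_remove} of Listing~\ref{algo:soft_remove}; the \code{trim} CAS in line~\ref{code.soft_trim_cas} of Listing~\ref{algo:soft_find}; and the node creation and linking performed by the recovery procedure of Section~\ref{sub_sec:soft_recovery}. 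I would argue that no other instruction touches a volatile node's \code{next} field.

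Next I would dispatch the cases. The relaxed store of line~\ref{code.soft_init_state} and the recovery's node creation are not \emph{transitions}: they merely fix the initial state of a fresh node (``intend to insert'' during \code{insert}, ``inserted'' during recovery). The three \code{stateCAS} invocations do the substantive work. By the pseudo-code for \code{stateCAS} (lines~\ref{line:stateMethods}--\ref{code:last_psuedo} of Listing~\ref{algo:soft_contains}), a \code{stateCAS} succeeds only if the state currently stored equals its \code{oldState} argument, and on success it overwrites the state with \code{newState} while leaving the reference part of the pointer untouched; a failed \code{stateCAS} changes nothing. Hence line~\ref{code.soft_complete_insert} can move a node only from ``intend to insert'' to ``inserted'' (transition~\ref{state:from_intend_to_insert}), line~\ref{code.soft_mark_node} only from ``inserted'' to ``inserted with intention to delete'' (transition~\ref{state:from_insert_to_intend}), and line~\ref{code.soft_complete_remove} only from ``inserted with intention to delete'' to ``deleted'' (transition~\ref{state:from_intend_to_delete}).

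For the two remaining writes, the linking CAS in line~\ref{code.soft_link_node} and the \code{trim} CAS in line~\ref{code.soft_trim_cas}, I would show that they retarget \code{pred}'s \code{next} pointer but leave \code{pred}'s state fixed: in each case the CAS compares \code{pred->next} against the value \code{curr} returned by \code{find} and, on success, stores a state-pointer whose state component equals \code{getState(curr)}, the state component observed for \code{pred}. Because a successful CAS certifies that \code{pred->next} still equals \code{curr}, the state bits of \code{pred->next} coincide with \code{getState(curr)} both immediately before and immediately after the store, so \code{pred}'s state is unchanged; every other node is untouched. Recovery's linking steps are analogous and moreover run before any concurrent operation. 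Combining the cases, every state change of a volatile node is one of transitions~\ref{state:from_intend_to_insert}--\ref{state:from_intend_to_delete}, as claimed.

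The step I expect to be the main obstacle is making the enumeration of write sites airtight and, in the linking and \code{trim} cases, establishing that the state component of the value \code{curr} really is \code{pred}'s state at the instant the CAS succeeds. This relies on an invariant about the outputs of \code{find} --- namely, that \code{find} returns \code{pred} together with a state-pointer \code{curr} whose reference part is \code{pred}'s successor and whose state part is \code{pred}'s observed state --- and on the fact that exactly this \code{curr} is passed as the CAS comparand in line~\ref{code.soft_link_node} and (via \code{trim}) in line~\ref{code.soft_trim_cas}. This is a benign bookkeeping fact, but I would state it explicitly, perhaps as a short companion claim about \code{find}, rather than leave it implicit; beyond it, the proof is a routine case analysis.
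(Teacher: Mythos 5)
Your proposal is correct and follows essentially the same route as the paper: an exhaustive enumeration of the writes to a volatile node's \code{next} field, with the three \code{stateCAS} sites (line~\ref{code.soft_complete_insert} of Listing~\ref{algo:soft_insert}, lines~\ref{code.soft_mark_node} and~\ref{code.soft_complete_remove} of Listing~\ref{algo:soft_remove}) realizing exactly the three allowed transitions, and the linking and \code{trim} CASes (lines~\ref{code.soft_link_node} and~\ref{code.soft_trim_cas}) shown to leave the state bits unchanged. Your closing worry about needing a separate invariant on \code{find} is unnecessary: since the CAS comparand is the very value \code{curr} from which \code{predState} was extracted, a successful CAS by itself certifies that the state bits are the same before and after the store, which is the observation the paper's proof relies on.
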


\begin{proof}
A node's state can change either in line~\ref{code.soft_complete_insert} of Listing~\ref{algo:soft_insert}, or in line~\ref{code.soft_mark_node} or \ref{code.soft_complete_remove} of Listing~\ref{algo:soft_remove}. In all three cases, the state changes according to one of the options mentioned above, and the claim follows immediately.
Notice that in the rest of the assignments into a node's \emph{next} pointer (line~\ref{code.soft_trim_cas} of Listing~\ref{algo:soft_find} and line~\ref{code.soft_link_node} of Listing~\ref{algo:soft_insert}), the state stays unchanged.
\end{proof}

\begin{claim} [Deleted States] \label{claim:soft_deleted_state}
Once the state of a node becomes ``deleted'', its \emph{next} pointer does not change anymore.
\end{claim}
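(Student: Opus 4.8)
The plan is to fix an arbitrary node $n$ and show that, from the first moment $n$ is in the ``deleted'' state, neither component of $n$'s \emph{next} field (the pointer part nor the embedded state part) is ever modified again. The state part is handled immediately by Claim~\ref{claim:soft_state_changes}: its transition graph has no outgoing edge from ``deleted'', so once $n$ is ``deleted'' it stays ``deleted'', and in particular the state embedded in $n$'s \emph{next} pointer is frozen. It therefore remains to argue that the \emph{pointer} component cannot change.

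First I would enumerate every code site that writes to some \code{node->next}: the initializing \code{store} on a freshly allocated node in line~\ref{code.soft_init_state} of Listing~\ref{algo:soft_insert}; the CAS inside \code{trim} in line~\ref{code.soft_trim_cas} of Listing~\ref{algo:soft_find}; the linking CAS in line~\ref{code.soft_link_node} of Listing~\ref{algo:soft_insert}; and the three \code{stateCAS} calls in line~\ref{code.soft_complete_insert} of Listing~\ref{algo:soft_insert} and lines~\ref{code.soft_mark_node} and~\ref{code.soft_complete_remove} of Listing~\ref{algo:soft_remove} (the recovery procedure only links brand‑new nodes). The \code{store} acts on a node just created in state ``intend to insert'', hence not on $n$ within its lifetime. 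Each \code{stateCAS} expects an old state among ``intend to insert'', ``inserted'', and ``inserted with intention to delete'', all different from ``deleted'', so when applied to $n$ they fail; moreover they never touch the pointer component anyway. That leaves the two CAS‑es that genuinely move a \emph{next} pointer, both of the form \code{pred->next.compare\_exchange\_strong(curr, ...)}.

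For those two, the crucial step is to observe that at both call sites the expected value \code{curr} originates from a \code{find} call (directly in \code{insert}, or via the \code{trim} invoked inside \code{find}'s loop), and that the state embedded in \code{curr} equals \code{predState} as maintained by \code{find}. I would then establish, by a short induction over the loop in Listing~\ref{algo:soft_find}, that \code{predState} is never ``deleted'': it is initialized to the state embedded in \code{head->next} (the \code{head} sentinel is never removed, so by Claim~\ref{claim:soft_state_changes} this is not ``deleted'' — an easy sentinel invariant), and it is thereafter reassigned only to a value \code{cState} inside the branch guarded by \code{cState != DELETED}. Hence the expected \code{curr} always carries a non‑``deleted'' state, so it can equal \code{pred->next} only when \code{pred}'s current state is not ``deleted''. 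Consequently, if $n$ plays the role of \code{pred} and is in the ``deleted'' state, the CAS‑es in lines~\ref{code.soft_trim_cas} and~\ref{code.soft_link_node} necessarily fail, leaving $n$'s \emph{next} pointer unchanged. Assembling the cases gives the claim.

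The one mildly delicate point I anticipate is the bookkeeping around \code{find}'s unusual return convention — the state stored in the returned \code{curr} is actually the \emph{predecessor's} state, while the successor's state is delivered separately through \code{currStatePtr} — so some care is needed to confirm that the local \code{predState} read inside \code{trim} is genuinely the predecessor's state on every path that reaches the CAS, and that the \code{head} sentinel's state is indeed never ``deleted'' (an invariant analogous to the link‑free sentinel argument). Everything else is a routine dispatch over the enumerated write sites.
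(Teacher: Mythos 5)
Your proof is correct and follows essentially the same route as the paper's: the pointer-moving CAS-es in \code{trim} and in the insert linking step carry an expected value whose embedded state was checked (via \code{find}) to be non-``deleted'', so together with Claim~\ref{claim:soft_state_changes} they must fail on a ``deleted'' node. You simply make explicit some details the paper leaves implicit (the \code{predState} loop invariant in \code{find}, the sentinel state, and the dispatch over the remaining write sites such as \code{stateCAS} and the initializing store), which is fine.
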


\begin{proof}
A node's \emph{next} pointer changes either in line~\ref{code.soft_trim_cas} of Listing~\ref{algo:soft_find} or in line~\ref{code.soft_link_node} of Listing~\ref{algo:soft_insert}. In both cases, the state of the node whose \emph{next} pointer is to be updated, is checked before the update (guaranteeing that its state is not ``deleted''), and the CAS execution ensures that it does not change until the pointer changes (from Claim~\ref{claim:soft_state_changes}, its state cannot become ``deleted'' and change again afterwards).
Notice that we deal with state changes in Claim~\ref{claim:soft_state_changes}. In this claim we refer only to reference changes.
\end{proof}

\begin{claim} [The States of the Sentinel Nodes] \label{claim:soft_sentinel_state}
The states of the \code{head} and \code{tail} sentinel nodes are always ``inserted''.
\end{claim}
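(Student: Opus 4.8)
The plan is to mirror the argument used for Claim~\ref{claim:link_free_sentinel_state} in the link-free case, leaning entirely on Claim~\ref{claim:soft_state_changes}. First I would observe that the \SOFT{} list is initialized with the \code{head} and \code{tail} sentinels both carrying the state ``inserted'', so the claim holds in the initial configuration; it then suffices to show that no execution step ever changes the state component of either sentinel's \emph{next} pointer. This is an invariant-style argument: I assume the claim holds up to some point in the execution and check that the next step preserves it.

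By Claim~\ref{claim:soft_state_changes}, the only steps that modify the state component of a volatile node's \emph{next} pointer are the \code{stateCAS} in line~\ref{code.soft_complete_insert} of Listing~\ref{algo:soft_insert} and the \code{stateCAS}es in lines~\ref{code.soft_mark_node} and~\ref{code.soft_complete_remove} of Listing~\ref{algo:soft_remove}; every other write to a \emph{next} pointer (line~\ref{code.soft_init_state} and line~\ref{code.soft_link_node} of Listing~\ref{algo:soft_insert}, and line~\ref{code.soft_trim_cas} of Listing~\ref{algo:soft_find}) leaves the state untouched. I would then inspect the target of each of these three state-changing steps. In insert, the state is flipped on \code{resultNode}, which is either a node freshly allocated in line~\ref{code.allocate_new_node} (hence not a sentinel), or the node \code{currRef} reached only under the guard \code{currRef->key == key}. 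In remove, both state-changing CASes act on \code{currRef}, which is reached only after the key checks of lines~\ref{code.soft_fail_remove_1} and~\ref{code.soft_fail_remove_2} have passed, so again \code{currRef->key} equals the input key. Consequently, a sentinel could be the target of a state change only if an operation were invoked with input key $-\infty$ or $\infty$. Under the standing assumption (exactly as in the link-free proof, cf.\ Claim~\ref{claim:link_free_sentinel_state}) that neither $-\infty$ nor $\infty$ is ever supplied as an input key, this never occurs, and so the sentinels' states remain ``inserted'' throughout.

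The argument is essentially routine; the only point I would take care to spell out is that \code{find} and \code{trim} may flip the \emph{reference} stored in a sentinel's \emph{next} pointer (when trimming a deleted successor) without disturbing its state — which is precisely what Claim~\ref{claim:soft_state_changes} already guarantees. I do not anticipate any real obstacle: this is a small lemma whose purpose is to be invoked later when reasoning about reachability of the sentinels and about which nodes count as set members.
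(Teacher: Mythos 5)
Your proposal is correct and follows essentially the same route as the paper: identify (via Claim~\ref{claim:soft_state_changes}) the three state-changing steps, note that each targets a node whose key equals the operation's input key (or a freshly allocated non-sentinel node), and conclude under the standing assumption that $-\infty$ and $\infty$ are never passed as inputs. Your explicit case split on \code{resultNode} and the remark about reference-only updates in \code{trim}/\code{find} merely spell out details the paper leaves implicit.
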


\begin{proof}
As mentioned in the proof of Claim~\ref{claim:soft_state_changes}, a node's state can change either in line~\ref{code.soft_complete_insert} of Listing~\ref{algo:soft_insert}, or in line~\ref{code.soft_mark_node} or \ref{code.soft_complete_remove} of Listing~\ref{algo:soft_remove}. In all three cases, the node's key is sent as an input parameter to the insert or remove operation, respectively. Assuming the neither $-\infty$ nor $\infty$ are sent as input parameters to the insert and remove operations, the states of the \code{head} and \code{tail} sentinel nodes always remain ``inserted''.
\end{proof}

\begin{definition}[Reachability] \label{definition:soft_reachability}
	We say that a volatile node $n$ is \emph{reachable} from a volatile node $n'$ if there exists nodes $n_0,n_1,\ldots,n_k$ such that $n_0=n'$, $n_k=n$ and for every $0 \leq i <k$, $n_i$ is the predecessor of $n_{i+1}$ in the list.
	We say that a node $n$ is \emph{reachable} if it is reachable from the \code{head} sentinel node.
\end{definition}

\begin{definition}[Logically in the List] \label{definition:soft_logically}
	We say that a volatile node $n$ is logically in the list if $n$ is reachable and its state is either ``inserted'' or ``inserted with intention to delete''.
\end{definition}

\begin{definition}[Infant Nodes] \label{definition:soft_infant}
	We say that a volatile node $n$ is an \emph{infant} if $n$ is neither \code{head} nor \code{tail}, and there does not exist an earlier successful execution of the CAS operation in line~\ref{code.soft_link_node} in Listing~\ref{algo:soft_insert}, satisfying $newNode=n$.
\end{definition}

\begin{claim} [Volatile Nodes Invariants] \label{claim:soft_nodes}
Let $n_1$ and $n_2$ be two different volatile nodes. Then:
\begin{enumerate}
    \item If $n_2$ is the successor of $n_1$ then $n_2$ is not an infant. \label{node_inv:no_infant}
    
    \item Right before executing line~\ref{code.soft_link_node} in Listing~\ref{algo:soft_insert}, having $newNode=n_2$, it holds that: (1) $n_2$ is an infant, and (2) $n_2$'s state is ``intend to insert''. \label{node_inv:is_infant}
    
    \item If $n_2$ is not an infant and its state is not ``deleted'', then $n_2$ is reachable. \label{node_inv:is_reachable}
    
    \item If $n_1$'s key is smaller than or equal to $n_2$'s key, then $n_1$ is not reachable from $n_2$. \label{node_inv:not_reachable}
    
    \item If $n_2$ is reachable from $n_1$ at a certain point, then as long as $n_2$'s state is not ``deleted'', $n_2$ is still reachable from $n_1$. \label{node_inv:still_reachable}
    
    \item If $n_1$ is not an infant then the \code{tail} sentinel node is reachable from $n_1$. \label{node_inv:tail_reachable}
\end{enumerate}
\end{claim}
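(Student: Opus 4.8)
The plan is to prove all six invariants simultaneously by induction on the length of the execution, following the template of the link-free proof in Claim~\ref{claim:link_free_nodes}. In the base configuration the list contains only \code{head} and \code{tail}; \code{head} is \code{tail}'s predecessor, both are reachable (Definition~\ref{definition:soft_reachability}), neither is an infant, and no node's state is ``deleted'', so every invariant holds vacuously or trivially. For the inductive step I assume all six invariants hold just before an execution step $s$ performed by a thread $t$, and I use the fact that the only steps that can change a successor relation, alter reachability, or change a node's state are: the successful CAS in line~\ref{code.soft_trim_cas} of Listing~\ref{algo:soft_find} (the physical unlink inside \code{trim}); the successful CAS in line~\ref{code.soft_link_node} of Listing~\ref{algo:soft_insert} (linking a fresh node); the initializing store in line~\ref{code.soft_init_state} of Listing~\ref{algo:soft_insert} (which only writes the \emph{next} field, and hence the state, of a just-allocated, still-infant node); and the state CAS steps in line~\ref{code.soft_complete_insert} of Listing~\ref{algo:soft_insert} and lines~\ref{code.soft_mark_node} and~\ref{code.soft_complete_remove} of Listing~\ref{algo:soft_remove}. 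Any other step preserves all six invariants immediately.

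The engine of the argument is two structural facts. First, reachability is only ever destroyed by the \code{trim} CAS in line~\ref{code.soft_trim_cas}, and \code{find} reaches that CAS only after reading the victim's state as \code{DELETED} in line~\ref{code.soft_trim_deleted}; since Claim~\ref{claim:soft_state_changes} makes ``deleted'' a terminal state and Claim~\ref{claim:soft_deleted_state} freezes the \emph{next} pointer of a deleted node, the unlinked node is still ``deleted'' (and its recorded successor \code{succ} is still its current successor) at the moment of the CAS. Consequently a node whose state is not ``deleted'' is never the one \code{trim} removes, and when \code{trim} does splice \code{pred} onto \code{succ}, everything reachable from \code{head} (resp.\ from $n_1$) other than the removed node remains reachable --- the standard ``longest remaining path'' contradiction used in Claim~\ref{claim:link_free_nodes} goes through verbatim, giving invariants~\ref{node_inv:is_reachable} and~\ref{node_inv:still_reachable}. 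Second, every pointer update keeps the list sorted: the \code{trim} CAS only removes a node, the store in line~\ref{code.soft_init_state} sets \code{newNode->next} to \code{currRef} with \code{newNode->key} $<$ \code{currRef->key}, and the insert CAS in line~\ref{code.soft_link_node} links \code{newNode} between \code{pred} and \code{currRef} with \code{pred->key} $<$ \code{newNode->key} $<$ \code{currRef->key}; the strict inequalities come from the postcondition of \code{find} (the halting test in line~\ref{code.soft_find_check_key}) together with the fact that the inserting branch is entered only when \code{currRef->key} $\neq$ \code{key}. This yields invariant~\ref{node_inv:not_reachable}.

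The remaining invariants are discharged by the same case analysis as in Claim~\ref{claim:link_free_nodes}. For invariant~\ref{node_inv:no_infant}: if $n_2$ was already $n_1$'s successor before $s$ it is non-infant by hypothesis; if $s$ is the insert CAS with $newNode = n_2$ then $n_2$ becomes non-infant by Definition~\ref{definition:soft_infant}; and if $s$ is a \code{trim} CAS or the store in line~\ref{code.soft_init_state}, the new successor was traversed as a node with a predecessor by some earlier \code{find}, hence is non-infant by the inductive hypothesis. For invariant~\ref{node_inv:is_infant}: right before line~\ref{code.soft_link_node} the node \code{newNode} has just been produced by \code{new Node(\ldots)} and its \emph{next} field --- which in \SOFT{} carries the state bits --- was set to ``intend to insert'' in line~\ref{code.soft_init_state}; it is an infant because no earlier CAS in line~\ref{code.soft_link_node} used it, and by Claim~\ref{claim:soft_state_changes} a node can leave the ``intend to insert'' state only through the CAS in line~\ref{code.soft_complete_insert}, which acts on \code{resultNode}, and \code{resultNode} becomes \code{newNode} only \emph{after} the CAS in line~\ref{code.soft_link_node} succeeds; hence the state is still ``intend to insert''. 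For invariant~\ref{node_inv:tail_reachable}: a node turns non-infant only via the insert CAS, whose effect makes its successor \code{currRef}, from which \code{tail} is reachable by hypothesis; for an already non-infant node, both the \code{trim} CAS and the insert CAS reset its successor to a node from which \code{tail} is still reachable, so the property is maintained by the ``last node from which \code{tail} is no longer reachable'' contradiction.

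The step I expect to be the main obstacle is invariant~\ref{node_inv:still_reachable} (and the closely related invariant~\ref{node_inv:is_reachable}): one must argue with care that the state \code{find} observes in line~\ref{code.soft_trim_deleted} and the state at the instant of the \code{trim} CAS are consistent enough to conclude that only a ``deleted'' node is ever physically unlinked. This is precisely where the terminality of ``deleted'' (Claim~\ref{claim:soft_state_changes}) and the pointer-freeze of deleted nodes (Claim~\ref{claim:soft_deleted_state}) are indispensable, and it is harder to get right than in the link-free list because \SOFT{} packs the four-valued state into the very \emph{next} pointer that \code{trim} is CAS-ing, so one has to reason about the reference bits and the state bits of the same word at once.
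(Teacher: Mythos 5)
Your proposal is correct and follows essentially the same route as the paper's proof: induction on the length of the execution with the same base case, the same classification of the steps that can alter successors or states (the \code{trim} CAS, the insert CAS, the store in line~\ref{code.soft_init_state}, and the state CASes), the same ``longest/first remaining node'' contradiction for invariants~\ref{node_inv:is_reachable}, \ref{node_inv:still_reachable} and~\ref{node_inv:tail_reachable}, and the same appeal to Claims~\ref{claim:soft_state_changes} and~\ref{claim:soft_deleted_state} to argue that only ``deleted'' nodes are ever unlinked. The one spot where the paper is a bit more explicit is invariant~\ref{node_inv:is_infant}(2): it rules out a \emph{helping} thread having \code{resultNode} $= n_2$ (via the \code{currRef} branch) by observing that such a thread's \code{find} would have seen $n_2$ as some node's successor, contradicting infancy through invariant~\ref{node_inv:no_infant} --- the same reasoning your abbreviated argument implicitly relies on.
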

 
\begin{proof}
In the initial stage, the \code{head} and \code{tail} sentinels are the only volatile nodes in the list, both with an ``inserted'' state, and \code{tail} is \code{head}'s successor. Invariant~\ref{node_inv:no_infant} holds since \code{tail} is not an infant, Invariant~\ref{node_inv:is_infant} holds vacuously, Invariants~\ref{node_inv:is_reachable}, \ref{node_inv:still_reachable} and~\ref{node_inv:tail_reachable} hold since both \code{head} and \code{tail} are reachable, and Invariant~\ref{node_inv:not_reachable} holds since \code{head} is not reachable from \code{tail}.

Now, assume all invariants hold until a certain point during the execution. We are going to prove that they also hold after executing the next step by one of the system threads.
\begin{enumerate}
    \item If $n_2$ was also $n_1$'s successor before the current step, then by assumption, it is not an infant. Otherwise, $n_1$'s \emph{next} pointer was updated to point to $n_2$ in the current step, either in line~\ref{code.soft_trim_cas} of Listing~\ref{algo:soft_find}, in line~\ref{code.soft_init_state} of Listing~\ref{algo:soft_insert},
    or in line~\ref{code.soft_link_node} of Listing~\ref{algo:soft_insert}.  
    In the first two cases, there exists an earlier point during the execution, in which $n_2$ is the successor of a certain node (during the execution of the find method). By assumption, $n_2$ is not an infant in these cases.
    In the third case, after executing the current step, $n_2$ is not an infant by Definition~\ref{definition:soft_infant}.
    
    \item Assume that the next step will execute line~\ref{code.soft_link_node} of Listing~\ref{algo:soft_insert}, having $newNode=n_2$. Assume by contradiction that $n_2$ is not an infant. Since the CAS in line~\ref{code.soft_link_node} can only be executed on nodes created in line~\ref{code.allocate_new_node}, by the creating thread, $n_2$ is an infant and (1) holds. Now, assume by contradiction that $n_2$'s state is not ``intend to insert''. Then it had been changed in line~\ref{code.soft_complete_insert} of Listing~\ref{algo:soft_insert}, during another insert execution, implying that, by Invariant~\ref{node_inv:no_infant} and the choice of the \code{resultNode} variable, $n_2$ is the successor of some node and thus, is not an infant -- a contradiction. Therefore, $n_2$'s state is ``intend to insert'' and (2) holds as well.
    
    \item If $n_2$ was an infant before the current step, then the current step is the execution of line~\ref{code.soft_link_node} in Listing~\ref{algo:soft_insert}, making $n_2$ the successor of some node which is reachable by assumption. $n_2$ is reachable in this case. Otherwise, by assumption and Claim~\ref{claim:soft_state_changes}, it was reachable during the former step. Assume by contradiction that it is no longer reachable after executing the current step. 
    Then $n_2$ is reachable from a node $n_1$ that was reachable after the previous step, and is no longer reachable (may be $n_2$ itself). Assume w.l.o.g that $n_1$ is such a node for which the path of nodes from Definition~\ref{definition:soft_reachability} is the longest. The node $n_1$ can only become unreachable if the current step is the execution of line~\ref{code.soft_trim_cas} in Listing~\ref{algo:soft_find}, and if $n_1$'s state is ``deleted''. This means that $n_1 \neq n_2$. Since $n_1$'s successor stays reachable in this case, we get a contradiction. Therefore, $n_2$ is reachable in this case as well.
    
    \item By assumption, $n_1$ is not reachable from $n_2$ after the previous step. Since all changes of nodes' successors (line~\ref{code.soft_trim_cas} in Listing~\ref{algo:soft_find} and lines~\ref{code.soft_init_state} and~\ref{code.soft_link_node} in Listing~\ref{algo:soft_insert}) preserve keys order (notice the halting condition in line~\ref{code.soft_find_check_key} of Listing~~\ref{algo:soft_find}), the Invariant still holds.
    
    \item If $n_2$ is not reachable from $n_1$ after the previous step then the invariant holds vacuously. Otherwise, assume by contradiction that $n_2$ was reachable from $n_1$ after the previous step, and is no longer reachable from $n_1$ after the current step. Let $n_3$ be the first node reachable from $n_1$ after the previous step, that is not reachable from it after executing the current step ($n_3$ must exist). 
    The node $n_3$ can only become unreachable from $n_1$ if the current step is the execution of line~\ref{code.soft_trim_cas} in Listing~\ref{algo:soft_find}, and if $n_3$'s state is ``deleted''. This means that $n_3 \neq n_2$. Since $n_3$'s successor stays reachable from $n_1$ in this case, we get a contradiction. Therefore, $n_2$ is still reachable from $n_1$.
    
    \item If $n_1$ was an infant after the previous step then the current step (executing a successful CAS in line~\ref{code.soft_link_node} of Listing~\ref{algo:soft_insert}) makes $n_1$ the predecessor of a node whose \code{tail} is reachable from, by assumption. Therefore, \code{tail} is reachable from $n_1$ in this case. Otherwise, assume by contradiction that \code{tail} was reachable from $n_1$ after the previous step (must hold by assumption), but is no longer reachable from it after the current step. Let $n_2$ be the last node reachable from $n_1$, for whom \code{tail} is not reachable from after executing the current step ($n_2$ must exist).
    Then the current step must change $n_2$'s \emph{next} pointer. Since $n_2$ cannot be an infant (by Invariant~\ref{node_inv:no_infant}), this step is a successful CAS, either in line~\ref{code.soft_trim_cas} of Listing~\ref{algo:soft_find} or in line~\ref{code.soft_link_node} of Listing~\ref{algo:soft_insert}.
    In both cases, $n_2$'s successor is set to be a node that \code{tail} is reachable from, by assumption. Since we get a contradiction to Definition~\ref{definition:soft_reachability}, \code{tail} is reachable from $n_1$ in this case as well.
\end{enumerate}
\end{proof}

\begin{claim} [The Volatile List Invariant] \label{claim:soft_sorted} The volatile list is always sorted by the nodes' keys, no key ever appears twice, and the \code{head} and \code{tail} sentinel nodes are always the first and last members of the list, respectively. 
\end{claim}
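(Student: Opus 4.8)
The plan is to obtain this claim as an almost immediate corollary of the structural invariants already established in Claim~\ref{claim:soft_nodes}, combined with Claim~\ref{claim:soft_sentinel_state}, mirroring the argument used for the link-free list in Claim~\ref{claim:link_free_volatile_sorted}. The three assertions — sortedness, no repeated keys, and \code{head}/\code{tail} being the first/last members — each follow from a single invariant, so the work is entirely in citing the right statement and being careful about the direction of the reachability relation of Definition~\ref{definition:soft_reachability}.

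First I would handle sortedness and uniqueness of keys. The ``list'' at any point is exactly the chain of volatile nodes reachable from \code{head} by following \emph{next} pointers. I would observe that along such a chain the keys are strictly increasing: if a node $m'$ appears later in the list than a node $m$, then $m'$ is reachable from $m$, and by Invariant~\ref{node_inv:not_reachable} of Claim~\ref{claim:soft_nodes} this forces $\neg(\text{key}(m) \geq \text{key}(m'))$, i.e.\ $\text{key}(m) < \text{key}(m')$. Hence consecutive (and indeed all ordered) pairs in the list have strictly increasing keys, which simultaneously gives that the list is sorted, that no key occurs twice, and (as a byproduct) that the list is acyclic.

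Next I would argue \code{head} and \code{tail} are members and occupy the two extreme positions. \code{head} is the anchor of the list and is reachable from itself via the empty path. For \code{tail}: by Definition~\ref{definition:soft_infant} the sentinels are never infants, so Invariant~\ref{node_inv:tail_reachable} of Claim~\ref{claim:soft_nodes} applied to \code{head} gives that \code{tail} is reachable from \code{head}, hence \code{tail} is in the list (one may equally invoke Claim~\ref{claim:soft_sentinel_state}, which says \code{tail}'s state is ``inserted'' $\neq$ ``deleted'', together with Invariant~\ref{node_inv:is_reachable}). For the positions: \code{head} has key $-\infty$, so for any other list node $m$ we have $\text{key}(\code{head}) \leq \text{key}(m)$, and Invariant~\ref{node_inv:not_reachable} of Claim~\ref{claim:soft_nodes} then says \code{head} is not reachable from $m$; thus \code{head} cannot appear after any node, i.e.\ it is first. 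Symmetrically, \code{tail} has key $\infty$, so $\text{key}(m) \leq \text{key}(\code{tail})$ for every node $m$, and Invariant~\ref{node_inv:not_reachable} gives that $m$ is not reachable from \code{tail}; hence nothing appears after \code{tail}, i.e.\ it is last.

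I do not expect a genuine obstacle here: the statement is a straightforward consequence of Claim~\ref{claim:soft_nodes} and Claim~\ref{claim:soft_sentinel_state}. The only point requiring care is bookkeeping the orientation of ``reachable from'' so that Invariant~\ref{node_inv:not_reachable} is applied in the correct direction in each of the three uses above, and noting explicitly that ``the list'' is identified with the set of \code{head}-reachable nodes (so that, e.g., Claim~\ref{claim:soft_state_changes} is not even needed — this is a purely static structural fact at every configuration).
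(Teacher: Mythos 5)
Your proposal is correct and follows essentially the same route as the paper: the paper also derives sortedness and key-uniqueness from Invariant~\ref{node_inv:not_reachable} of Claim~\ref{claim:soft_nodes}, membership of the sentinels from Claim~\ref{claim:soft_sentinel_state} together with Invariant~\ref{node_inv:is_reachable}, and their first/last positions from Invariant~\ref{node_inv:not_reachable} again. Your version merely spells out the directional bookkeeping that the paper leaves implicit.
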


\begin{proof}
From Invariant~\ref{node_inv:not_reachable} of Claim~\ref{claim:soft_nodes}, the volatile list is always sorted by the nodes' keys and no key ever appears twice. By Claim~\ref{claim:soft_sentinel_state} and Invariant~\ref{node_inv:is_reachable} of Claim~\ref{claim:soft_nodes}, the \code{head} and \code{tail} sentinel nodes are always members of the list, and by Invariant~\ref{node_inv:not_reachable} of Claim~\ref{claim:soft_nodes}, they are the first and last members, respectively.
\end{proof}

\begin{claim} [Being Logically in the Volatile List] \label{claim:soft_logically} A volatile node $n$ is logically in the list if and only if its state is either ``inserted'' or ``inserted with intention to delete''.
\end{claim}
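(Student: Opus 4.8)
The plan is to observe that the ``only if'' direction is immediate and to put all the effort into the converse. By Definition~\ref{definition:soft_logically}, a volatile node that is logically in the list is in particular in state ``inserted'' or ``inserted with intention to delete'', which is exactly the ``only if'' direction. For the ``if'' direction I must upgrade a purely state-based hypothesis to reachability: assuming $n$'s state is ``inserted'' or ``inserted with intention to delete'', I want to conclude that $n$ is reachable, whence $n$ is logically in the list by Definition~\ref{definition:soft_logically}.

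The engine for the converse is Invariant~\ref{node_inv:is_reachable} of Claim~\ref{claim:soft_nodes}: a node that is not an infant and whose state is not ``deleted'' is reachable. The ``deleted'' clause is excluded by hypothesis, and the sentinels \code{head} and \code{tail} are reachable outright by Definition~\ref{definition:soft_reachability}, so I may assume $n$ is a non-sentinel node and it remains only to show $n$ is not an infant. A non-sentinel node is created in state ``intend to insert'' (its \emph{next} pointer is initialized that way in line~\ref{code.soft_init_state} of Listing~\ref{algo:soft_insert}), so by the state machine of Claim~\ref{claim:soft_state_changes} the hypothesis forces $n$ to have already undergone the transition from ``intend to insert'' to ``inserted'', i.e.\ some \code{stateCAS} at line~\ref{code.soft_complete_insert} of Listing~\ref{algo:soft_insert} fired with \code{resultNode} $=n$. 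I then split on how \code{resultNode} was bound to $n$ in that \code{insert} call: either \code{resultNode} was set to \code{getRef(curr)} for a \code{curr} returned by \code{find}, in which case $n$ was, at some earlier point, the successor of a node and hence is not an infant by Invariant~\ref{node_inv:no_infant} of Claim~\ref{claim:soft_nodes}; or \code{resultNode} was set to \code{newNode} right after a successful CAS at line~\ref{code.soft_link_node} of Listing~\ref{algo:soft_insert} with $newNode=n$, in which case $n$ is not an infant directly by Definition~\ref{definition:soft_infant}. Either way $n$ is not an infant, Invariant~\ref{node_inv:is_reachable} applies, and $n$ is reachable.

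I expect the only delicate point to be this last case analysis: tying the helping-phase \code{stateCAS} that flips the state to ``inserted'' back to a concrete linking of $n$ into the volatile list, or to $n$ having been reached through \emph{next} pointers by a traverser. Everything else is routine bookkeeping over Claim~\ref{claim:soft_state_changes} and the reachability invariants already established in Claim~\ref{claim:soft_nodes}; in particular I do not need Claim~\ref{claim:soft_deleted_state} or the sortedness of Claim~\ref{claim:soft_sorted} here. One could also package the case analysis as a standalone sub-observation --- ``an infant is always in state `intend to insert' '' --- and take the contrapositive, but the cost is the same.
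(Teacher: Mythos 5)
Your proposal is correct and follows essentially the same route as the paper: the ``only if'' direction is immediate from Definition~\ref{definition:soft_logically}, and the ``if'' direction traces the state back through the transition at line~\ref{code.soft_complete_insert} to conclude $n$ is not an infant (via Invariant~\ref{node_inv:no_infant} of Claim~\ref{claim:soft_nodes} or Definition~\ref{definition:soft_infant}) and then applies Invariant~\ref{node_inv:is_reachable} to get reachability. Your explicit case split on how \code{resultNode} was bound, and the separate treatment of the sentinels, only spell out details the paper compresses into ``it must have had a predecessor.''
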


\begin{proof}
By Definition~\ref{definition:soft_logically}, if $n$ is logically in the list then its state is either ``inserted'' or ``inserted with intention to delete''. 
It remains to show that if its state is either ``inserted'' or ``inserted with intention to delete'' then it is reachable and, thus, logically in the list by Definition~\ref{definition:soft_logically}.
When $n$'s state was changed from ``intend to insert'' to ``inserted'' in line~\ref{code.soft_complete_insert} of Listing~\ref{algo:soft_insert}, it must have had a predecessor. From Invariant~\ref{node_inv:no_infant} of Claim~\ref{claim:soft_nodes}, it is not an infant. From Invariant~\ref{node_inv:is_reachable} of Claim~\ref{claim:soft_nodes}, it is reachable.
\end{proof}
\subsection{Linearizability}\label{chap:soft_linearizability}
We define linearization points for the insert, remove and contains operations, as well as for the find auxiliary method. We explicitly specify the linearization points of the linked-list when no crashes occur.

\subsubsection{Find} \label{chap:soft_find_linearization}
We define the linearization point of the find method to be the point guaranteed from Claim~\ref{claim:soft_dl_find} below.

\begin{claim} \label{claim:soft_dl_find} Let $n_1$ and $n_2$ be the two volatile nodes returned as output from the find method. Then during the method execution, there exist a point in which (1) $n_1$ is reachable, (2) $n_2$ is $n_1$'s successor, and (3) $n_2$'s state is not ``deleted''.
\end{claim}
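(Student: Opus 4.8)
The plan is to mirror, step for step, the proof of Claim~\ref{claim:link_free_find_linearization}, translating the link-free vocabulary into the \SOFT{} one: the state ``deleted'' plays the role of ``marked'', the monotonicity of marking becomes Claim~\ref{claim:soft_state_changes} (a node's state only progresses, and ``deleted'' is absorbing), ``once marked the \emph{next} pointer is frozen'' becomes Claim~\ref{claim:soft_deleted_state}, and the reachability of non-infant, non-removed nodes is Invariant~\ref{node_inv:is_reachable} of Claim~\ref{claim:soft_nodes}. Write $n_1$ for the first output of \code{find} (the variable \code{pred}) and $n_2$ for the node referenced by the second output (\code{getRef(curr)}). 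One observation worth recording up front: at the iteration in which the loop breaks (line~\ref{code.soft_find_check_key}) the read in line~\ref{code.soft_find_read_next} reports $n_2$'s state as $\neq$ \code{DELETED}; hence, by Claim~\ref{claim:soft_state_changes}, $n_2$'s state is $\neq$ \code{DELETED} at every earlier instant of the execution as well.

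First I would pin down the reachability of $n_1$ at a convenient instant. If $n_1 = \code{head}$ it is reachable throughout and always in state ``inserted'' (Claim~\ref{claim:soft_sentinel_state}), so the point may be chosen freely. Otherwise $n_1$ was installed as \code{pred} by an assignment \code{pred = currRef}; immediately before that assignment the loop had read $n_1$'s \emph{next} pointer in line~\ref{code.soft_find_read_next} and found its state $\neq$ \code{DELETED}. Call that instant $t_1$. Since $n_1$ was reached by dereferencing a \emph{next} pointer it is not an infant (Invariant~\ref{node_inv:no_infant} of Claim~\ref{claim:soft_nodes}), and being non-deleted at $t_1$ it is reachable there by Invariant~\ref{node_inv:is_reachable}. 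Now split on whether $n_2$ is $n_1$'s successor at $t_1$. If it is, then $t_1$ simultaneously witnesses (1) $n_1$ reachable, (2) $n_2$ is $n_1$'s successor, and (3) $n_2$ not in state \code{deleted} (by the observation above), and we are done.

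In the remaining case I claim that $n_2$ nonetheless becomes $n_1$'s successor at some later instant $t'$; at such a $t'$ the \emph{next} pointer of $n_1$ has just been written, so $n_1$ is not in state \code{deleted} at $t'$ (Claim~\ref{claim:soft_deleted_state}), hence reachable by Invariant~\ref{node_inv:is_reachable}, $n_2$ is its successor by construction, and $n_2$ is not \code{deleted} at $t'$ by the observation above --- so all three conditions again hold at a single point. Establishing this sub-claim is the main obstacle, and it is precisely the step that the proof of Claim~\ref{claim:link_free_find_linearization} dispatches in one sentence. The argument I would give: once \code{pred} has been fixed to $n_1$ (the last \code{pred = currRef} that occurs), every node the loop visits strictly before reaching $n_2$ must be in state \code{deleted} --- a non-deleted node with smaller key would re-advance \code{pred}, contradicting that $n_1$ is the final \code{pred}, and a non-deleted node with key $\ge$ the search key would break the loop before $n_2$, forcing $n_2$ to equal it. Each such node is therefore handed to \code{trim} in line~\ref{code.soft_find_trim}; by Claim~\ref{claim:soft_deleted_state} its \emph{next} pointer is frozen, so the re-read inside \code{trim} agrees with the loop's read and the chain of deleted nodes strictly between $n_1$ and $n_2$ is stable. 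Consequently the only way \code{currRef} can march down that chain to $n_2$ is for the CAS in line~\ref{code.soft_trim_cas}, invoked with \code{pred} $= n_1$, to set $n_1$'s \emph{next} pointer to $n_2$ at some instant. The delicate part is handling the case where some of those \code{trim} CAS-es fail --- tracking what $n_1$'s \emph{next} pointer can have changed to in the meantime --- and this is where the stability granted by Claim~\ref{claim:soft_deleted_state} must be carefully leveraged.
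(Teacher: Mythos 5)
Your reduction of the problem is the right one, and the easy parts are handled exactly as in the paper: the instant $t_1$ at which $n_1$'s state is read as non-\code{deleted} (so $n_1$ is a non-infant, hence reachable by Invariant~\ref{node_inv:is_reachable} of Claim~\ref{claim:soft_nodes}), the case split on whether $n_2$ is already $n_1$'s successor at $t_1$, and the use of Claim~\ref{claim:soft_state_changes} to push ``$n_2$ not \code{deleted}'' backwards from the final read. The paper's proof is the same skeleton, and in the hard case it, too, concludes via Claim~\ref{claim:soft_deleted_state} and Invariant~\ref{node_inv:is_reachable} once a point is exhibited at which $n_2$ becomes $n_1$'s successor; the paper obtains that point by asserting, in one sentence, that such an instant exists between the read of $n_1$ and the read of $n_2$.

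The gap is in your attempt to actually establish that existential step. The assertion ``the only way \code{currRef} can march down that chain to $n_2$ is for the CAS in line~\ref{code.soft_trim_cas}, invoked with \code{pred} $=n_1$, to set $n_1$'s \emph{next} pointer to $n_2$'' is false: \code{find} (Listing~\ref{algo:soft_find}) ignores the return value of \code{trim} and advances \code{curr}/\code{currRef} through the \emph{frozen} next pointers of the deleted nodes regardless of whether any CAS succeeds, so the traversal reaching $n_2$ by itself forces no successful CAS at all; and even when all of this thread's trims succeed, only the last one in the chain (the one whose \code{curr} is the deleted node pointing to $n_2$) would install $n_2$ as $n_1$'s successor. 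You then explicitly leave open the case in which that CAS fails because $n_1$'s next pointer was changed by other threads (concurrent trims, or inserts of new nodes after $n_1$), which is precisely the scenario that must be analyzed: after such interference it is no longer immediate that $n_1$'s next field ever holds $n_2$ during the execution, and the ``stability'' of the deleted chain granted by Claim~\ref{claim:soft_deleted_state} constrains only the deleted nodes' pointers, not $n_1$'s. So, as written, the proposal does not prove the claim; it reduces it to the same unproven existential statement that the paper dispatches by assertion, and the partial argument offered for it does not go through.
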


\begin{proof}
When $n_1$'s state is read for the first time during the execution, it is not ``deleted'' (otherwise, it would have been trimmed and not returned). In addition, since it must have had a predecessor at an earlier point (otherwise, it would not have been traversed), from Invariant~\ref{node_inv:no_infant} of Claim~\ref{claim:soft_nodes}, it is not an infant, and from Invariant~\ref{node_inv:is_reachable} of Claim~\ref{claim:soft_nodes}, it is reachable at this point.
If $n_2$ is $n_1$ successor at this point, then the claim holds for this point. Notice that $n_2$'s state cannot be ``deleted'' at this point, since otherwise, it would have been trimmed at a later point and not returned as output.
If $n_2$ is not $n_1$'s successor at this point, then there exists a point between the first read of $n_1$ and the first read of $n_2$ in which $n_2$ becomes $n_1$'s successor. From Claim~\ref{claim:soft_deleted_state}, $n_1$'s state is not ``deleted'' at this point and thus, from Invariant~\ref{node_inv:is_reachable} of Claim~\ref{claim:soft_nodes}, it is reachable at this point. In addition, $n_2$'s state is not ``deleted'' at this point as well, and the claim holds in this case.
\end{proof}

\subsubsection{Insert} \label{chap:soft_insert_linearization}
Let $n$ be the volatile node created during a successful execution of the insert operation (line~\ref{code.allocate_new_node} in Listing~\ref{algo:soft_insert}). 
Since the operation returns {\em true}, it is guaranteed that $n$'s state changes from ``intend to insert'' to ``inserted'' in line~\ref{code.soft_complete_insert}.
We define the linearization point of a successful insert operation at this point. From Claim~\ref{claim:soft_state_changes} and~\ref{claim:soft_logically}, this is indeed the first point during the execution in which $n$ is logically in the list.

Now, let there be an unsuccessful execution of the insert operation, and let $m$ be the volatile node returned as the second output parameter from the find call in line \ref{code.soft_insert_find}. Since the condition checked in line~\ref{code.soft_node_exists} must hold, its key is equal to the key received as input. Claim~\ref{claim:soft_dl_unsucc_insert} below guarantees that during the execution there exists a point in which $m$ is logically in the list. We set this point as the operation's linearization point in this case.

\begin{claim} \label{claim:soft_dl_unsucc_insert} There exists a point between the linearization point of the mentioned find execution and the return of the operation in which $m$'s state is either ``inserted'' or ``inserted with intention to delete''.
\end{claim}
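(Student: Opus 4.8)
The plan is to case-split on the value of the state variable \code{currState} that the last \code{find} call (line~\ref{code.soft_insert_find}) writes out for the returned node $m$. Let $t_0$ be the step of that \code{find} execution at which \code{currState} is set: the step that loads $m$'s \emph{next} pointer and extracts its embedded state, inside the loop iteration in which \code{find} breaks. I would first record three preliminary facts. (i) Since $m$'s key equals the input key, $m$ is never assigned to the \code{pred} variable during \code{find}, and, being returned rather than trimmed, its state is never observed to be ``deleted'' during the traversal; hence $t_0$ is the first step at which $m$'s state is read in that \code{find} execution. (ii) By the way \code{find}'s linearization point is chosen (Section~\ref{chap:soft_find_linearization} and the proof of Claim~\ref{claim:soft_dl_find}), that point is at or before the first read of the returned second node, hence at or before $t_0$. (iii) \code{find} breaks inside the branch guarded by a non-``deleted'' state, so \code{currState} is not ``deleted''. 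Together with Claim~\ref{claim:soft_state_changes}, this gives that $m$'s state at $t_0$ equals \code{currState}, which is one of ``intend to insert'', ``inserted'', or ``inserted with intention to delete''.

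Next I would handle the two cases. If \code{currState} is ``inserted'' or ``inserted with intention to delete'' --- equivalently, the operation returns at line~\ref{code.soft_fail_insert} --- then $t_0$ is itself the required witness: at $t_0$ node $m$ is in one of those two states, and $t_0$ lies between \code{find}'s linearization point and the operation's return. If instead \code{currState} is ``intend to insert'', the operation sets \code{resultNode} to $m$, calls \code{create} on $m$'s persistent node, and then repeatedly attempts the \code{stateCAS} from ``intend to insert'' to ``inserted'' on $m$'s \emph{next} pointer until $m$'s state is no longer ``intend to insert'', and only then returns. Since the execution returns, this loop terminates, so at the last evaluation of its guard --- an event strictly before the return --- $m$'s state is not ``intend to insert''. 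By Claim~\ref{claim:soft_state_changes}, $m$'s state was ``intend to insert'' at $t_0$ and at every earlier step, and the only transition out of ``intend to insert'' leads to ``inserted''; let $t_1$ be the step at which $m$'s state first becomes ``inserted''. Then $t_0 < t_1$, and $t_1$ precedes the loop-exit and hence the return, so $t_1$ is the required witness. In both cases, combining with Claim~\ref{claim:soft_logically} yields a point inside the stated window at which $m$ is logically in the list, as the enclosing linearization argument needs.

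The step I expect to be the main obstacle is the temporal bookkeeping, not the case analysis: making precise that $t_0$ is not earlier than \code{find}'s linearization point (so that the first-case witness lies in the window), and that $t_1$ occurs strictly after $t_0$ and strictly before the return (so that the second-case witness does too). Both rely on the monotonicity of state transitions in Claim~\ref{claim:soft_state_changes} together with the location of \code{find}'s linearization point established in the proof of Claim~\ref{claim:soft_dl_find}; once those are fixed, the remainder is a direct reading of the \code{insert} code. A secondary care point is confirming that ``the find call in line~\ref{code.soft_insert_find}'' denotes the call in \code{insert}'s final outer-loop iteration --- the one in which \code{currRef->key == key} holds --- so that $m$ and \code{currState} are indeed the objects it produces.
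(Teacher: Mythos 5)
Your proposal is correct and follows essentially the same route as the paper: a case split on the value of \code{currState} produced by the last \code{find}, the monotonicity of state transitions (Claim~\ref{claim:soft_state_changes}), and termination of the helping loop guarded by line~\ref{code.soft_insert_check_state} for the ``intend to insert'' case. The only cosmetic difference is that when \code{currState} is ``inserted'' or ``inserted with intention to delete'' you take the read point $t_0$ itself as the witness, whereas the paper sub-splits on $m$'s state at \code{find}'s linearization point and, if needed, uses the transition to ``inserted'' before the check; the two arguments are interchangeable.
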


\begin{proof}
If $m$'s state, read in line~\ref{code.soft_fail_insert}, is not ``intend to insert'', then From Claim~\ref{claim:soft_dl_find} it is guaranteed that at the linearization point of the find execution, $m$'s state is not ``deleted''.
If it is either ``inserted'' or ``inserted with intention to delete'', then from Definition~\ref{definition:soft_logically}, we are done.
Otherwise, it is ``intend to insert''.
However, when checking its state in line~\ref{code.soft_fail_insert}, it is not ``intend to insert'' (since the operation is unsuccessful, and the condition checked in line~\ref{code.soft_fail_insert} must hold). 
From Claim~\ref{claim:soft_state_changes}, it is guaranteed that before checking this condition, there exists a point in which $m$'s state became ``inserted'', and the claim holds.

The remaining case is when the state read in line~\ref{code.soft_fail_insert} is ``intend to insert''.
In this case, the executing thread does not return before $m$'s state changes (the condition checked in line~\ref{code.soft_insert_check_state} holds).
From Claim~\ref{claim:soft_state_changes}, it is guaranteed that there exists a point in which $m$'s state is ``inserted'', and the claim holds in the case as well.
\end{proof}

\subsubsection{Remove} \label{chap:soft_remove_linearization}
Let $n$ be the volatile node returned from the find method call in line~\ref{code.soft_remove_find} of Listing~\ref{algo:soft_remove}.

If the operation returned in line~\ref{code.soft_fail_remove_1_return} then its linearization point is defined at the linearization point of the find call from line~\ref{code.soft_remove_find}. 
The find call returned two nodes that, from Claim~\ref{claim:soft_dl_find}, are guaranteed to be reachable and successive at its linearization point. From Claim~\ref{claim:soft_sorted} it is guaranteed that there does not exist a reachable node with the given key, and in particular, there does not exist a node with the given key which is logically in the list at this point.

If the operation returned in line~\ref{code.soft_fail_remove_2_return}, then the linearization point is the read of \code{currState} during the find execution. Since it was returned from the find call, from Invariant~\ref{node_inv:no_infant} of Claim~\ref{claim:soft_nodes}, it is not an infant. In addition, since its state is ``intend to insert'', from Invariant~\ref{node_inv:is_reachable} of Claim~\ref{claim:soft_nodes}, it is reachable. By Definition~\ref{definition:soft_logically}, it is not logically in the list, and by Claim~\ref{claim:soft_sorted}, there does not exist another node with the given key, which is reachable and in particular, logically in the list at this point.

Otherwise, the operation returned in line~\ref{code.soft_return_result}. It is guaranteed from Claim~\ref{claim:soft_dl_find} that at the linearization point of the find call, $n$'s state was not ``deleted''.
Since the loops in lines \ref{code.soft_while1}--\ref{code.soft_mark_node} and \ref{code.soft_while2}--\ref{code.soft_complete_remove} terminated before the return from the operation in line~\ref{code.soft_return_result}, from Claim~\ref{claim:soft_state_changes}, $n$'s state was changed from ``inserted with intention to delete'' to ``deleted'' at some point between the linearization point of the find method and the return from the operation. This is the operation's linearization point in this case. 
From Claim~\ref{claim:soft_logically}, it is guaranteed that the node stopped being logically in the list exactly at this step.

\subsubsection{Contains} \label{chap:soft_contains_linearization}
Let $n$ be the last volatile node assigned into the \code{curr} variable in line~\ref{code.soft_contains_loop_end} of Listing~\ref{algo:soft_contains}.

\begin{claim} \label{claim:soft_dl_contains} Let $m$ be the last node assigned into the \code{curr} variable before $n$. Then there exists a point during the traversal in which both nodes are reachable and $n$ is $m$'s successor.
\end{claim}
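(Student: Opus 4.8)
The plan is to adapt, almost verbatim, the argument of Claim~\ref{claim:link_free_contains_point} from the link-free list, replacing the notion of a ``marked'' node by that of a node whose state is \code{DELETED} and using the SOFT-specific structural facts in place of their link-free counterparts. Write $c_0, c_1, c_2, \ldots$ for the successive values held by the \code{curr} variable during the traversal: $c_0$ is the value read by \code{head->next.load()} on line~\ref{code.first_line}, and each $c_i$ (for $i \geq 1$) is the value produced by the read \code{getRef(curr->next.load())} on line~\ref{code.soft_contains_loop_end} while \code{curr} holds $c_{i-1}$; let $s_i$ be the step at which $c_i$ is read (with $s_0$ the step of \code{head->next.load()}). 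Then $m = c_{k-1}$ and $n = c_k$ for the largest index $k$, and it suffices to prove, by induction on $i$, that each consecutive pair $(c_{i-1}, c_i)$ has a witness configuration -- both nodes reachable and $c_i$ the successor of $c_{i-1}$ -- occurring no later than step $s_i$; the ``no later than $s_i$'' clause keeps the witness inside the traversal and feeds the next step of the induction.

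The anchor is that at step $s_0$ the node $c_0$ is \code{head}'s successor, hence reachable; also, whenever $c_{i-1}$ is read into \code{curr} it is some node's successor at that moment, so by Invariant~\ref{node_inv:no_infant} no $c_j$ is an infant. For the pair $(c_{i-1}, c_i)$: by the induction hypothesis (or, when $i = 1$, by the anchor) there is a configuration $p$, no later than step $s_{i-1}$, at which $m = c_{i-1}$ is reachable; and at step $s_i$ the address stored in $m$'s \emph{next} field is $n$. By Invariant~\ref{node_inv:still_reachable} (taken with \code{head} as the first node) $m$ stays reachable, from $p$ onward, as long as its state is not \code{DELETED}; by Claim~\ref{claim:soft_deleted_state} once $m$ turns \code{DELETED} its \emph{next} pointer is frozen; and by Claim~\ref{claim:soft_state_changes} the transition into \code{DELETED} is a state-only change, so it does not alter the address in \emph{next}. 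Hence either $m$ is still reachable at $s_i$, and that configuration is the witness, or $m$ turned \code{DELETED} at some step $d$ with $p < d \leq s_i$, and in the configuration just before $d$ the node $m$ is reachable while $m$'s \emph{next} address is already $n$ -- that configuration is then the witness. In both cases the chosen configuration is no later than $s_i$, and $n$, being the successor of the reachable node $m$, is reachable there as well by Definition~\ref{definition:soft_reachability}.

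The main obstacle I expect is precisely this ``stay inside the traversal'' bookkeeping. Because \code{contains} (unlike \code{find}) never trims \code{DELETED} nodes, a contains traversal may walk into a node that is logically deleted and possibly already unlinked, so one cannot simply invoke Invariant~\ref{node_inv:is_reachable} at step $s_i$ to obtain reachability there; the reachability of $m$ must be imported from an earlier, traversal-internal configuration via the induction and then transported forward to a moment where $n$ is simultaneously $m$'s successor. Getting the ordering relations right -- that the hypothesis delivers $p \leq s_{i-1} < s_i$, that the \code{DELETED}-transition step $d$ (when it exists) satisfies $d \leq s_i$, and that ``just before $d$'' still lies within the traversal -- is the fiddly part; everything else is a direct substitution into the link-free proof.
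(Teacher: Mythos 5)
Your argument is, in substance, the paper's own proof of this claim: the paper packages it as a minimal-counterexample contradiction over consecutive \code{curr} values, while you run the same reasoning as a direct induction with an explicit ``witness no later than $s_i$'' strengthening (which the paper leaves implicit), and both rest on the same ingredients --- reachability of the previous \code{curr} imported from the preceding pair, Invariant~\ref{node_inv:still_reachable} of Claim~\ref{claim:soft_nodes} to carry that reachability forward while the state is not \code{DELETED}, and Claim~\ref{claim:soft_deleted_state} (plus the fact that state transitions preserve the pointer part) to freeze the \emph{next} pointer afterwards.

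One small slip in your case analysis: you split into ``$m$ is reachable at $s_i$'' versus ``$m$ turned \code{DELETED} at some $d$ with $p<d\le s_i$'', but these do not exhaust the possibilities, because $m$ may already be \code{DELETED} at or before the witness configuration $p$ --- a deleted-but-not-yet-trimmed node is still reachable, e.g.\ $c_0$ may be \code{DELETED} at the moment it is read as \code{head}'s successor --- and if such an $m$ is unlinked before $s_i$, neither of your two cases applies. The repair is immediate with your own tools: in that sub-case $m$'s \emph{next} pointer has been frozen since before $p$, so it already equals $n$ at $p$, and $p$ itself serves as the witness. Equivalently, phrase the second case as the paper does, taking the last configuration before $s_i$ at which $m$ is reachable (where its state must already be \code{DELETED} and its pointer frozen) rather than the configuration just before the \code{DELETED} transition. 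With that adjustment the proof is correct.
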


\begin{proof}
Assume by contradiction that the claim does not hold. Let $n_1$ and $n_2$ be the first two nodes for which (1) $n_1$ and $n_2$ are assigned into the \code{curr} variable sequentially, and (2) the guaranteed point does not exist for them. Since this point does exist for $n_1$ and the former node assigned into \code{curr}, $n_1$ is reachable at some point during the execution. From Invariant~\ref{node_inv:still_reachable} of Claim~\ref{claim:soft_nodes}, $n_1$ is reachable as long as its state is not ``deleted''. Since $n_2$ is its successor when assigned into the \code{curr} variable, from Claim~\ref{claim:soft_deleted_state} it was its successor at the last step in which $n_1$ was reachable before this assignment (might be the assignment itself). Therefore, there exists such a point for $n_1$ and $n_2$ -- a contradiction, and the claim holds. 
\end{proof}

If $n$'s key is not equal to the key received as input, then the linearization point is set to be the point guaranteed from Claim~\ref{claim:soft_dl_contains}. From Claim~\ref{claim:soft_sorted}, it is guaranteed that there does not exist a reachable node with the given key at this point. 

Otherwise, $n$'s key is equal to the key received as input. 
If its state, when executing line~\ref{code.read_state}, is either ``inserted'' or ``inserted with intention to delete'', then the operation's linearization point is the read of its state in line~\ref{code.read_state}. 
From Claim~\ref{claim:soft_logically}, $n$ is logically in the list at this point.

If its state is ``intend to insert'' when executing line~\ref{code.read_state}, then the linearization point is set to be the one guaranteed from Claim~\ref{claim:soft_dl_contains}, in which $n$ is reachable. From Claim~\ref{claim:soft_state_changes}, $n$'s state at this point is ``intend to insert'' as well and, thus, it is not logically in the list.
From Claim~\ref{claim:soft_sorted}, since $n$ is reachable, there does not exist another reachable (and in particular, which is logically in the list) node with the given key at this point.

If $n$'s state is ``deleted'' when executing line~\ref{code.read_state} and its state at the point guaranteed from Claim~\ref{claim:soft_dl_contains} is also ``deleted'', then this point is the operation's linearization point. From the above reasons, there does not exist a node with the given key which is logically in the list at this point.

The remaining case is when $n$'s state is not ``deleted'' at the point guaranteed from Claim~\ref{claim:soft_dl_contains}, but it is ``deleted'' when executing line~\ref{code.read_state}. Since its state is eventually ``deleted'', there exists a point between the guaranteed point and the execution of line~\ref{code.read_state} in which $n$ state was changed to ``deleted'' and this is the operation's linearization point in this case.
From Invariant~\ref{node_inv:still_reachable} of Claim~\ref{claim:soft_nodes}, $n$ is reachable at this point and therefore, from the above reasons, there does not exist a node with the given key which is logically in the list at this point in this case as well.
\subsection{Durable Linearizability}\label{chap:soft_dl}
As in Section~\ref{chap:link_free_dl}, we use the notion of \emph{durable linearizability}~\citep{10.1007/978-3-662-53426-7_23} for correctness.
The recovery procedure, executed after a crash (and described in Section~\ref{sub_sec:soft_recovery}), is assumed to terminate before new threads start executing their code.
Given an operation for which a crash event occurs after its invocation and before its response, we consider its response point as the end of the respective recovery procedure.
Notice that in the following definitions, we do not consider recoveries that are interrupted by crash events. We do so for clarity and brevity. The definitions can be easily extended to include such cases.

Before diving into the durable linearizability proof, we prove some basic claims regarding the persistent nodes, used during recovery.

\begin{claim}[State Transitions of Persistent Nodes] \label{claim:soft_persistent_state_changes}
	The state of a persistent node can only go through the following transitions:
	\begin{enumerate}
	    \item From {\em valid} and {\em removed} to {\em invalid}  \label{pstate:from_removed_to_invalid}
	    \item From {\em invalid} to {\em valid} and not {\em removed}  \label{pstate:from_invalid_to_valid}
	    \item From {\em valid} and not {\em removed} to {\em valid} and {\em removed}  \label{pstate:from_valid_to_removed}
	\end{enumerate}
\end{claim}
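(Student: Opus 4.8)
The plan is to prove the claim by induction on the execution, tracking the triple $(\mathtt{validStart},\mathtt{validEnd},\mathtt{deleted})$ of a single persistent node. The starting observation is that the only lines in the whole implementation that write a PNode's flags are the two stores of \code{PNode::create} (lines~\ref{code.flip_v1} and~\ref{code.flip_v2} of Listing~\ref{algo:soft_pnode_funcs}) and the single store of \code{PNode::destroy} (line~\ref{code.destroy_pnode}); the key/value stores in between do not touch the flags, and recovery never rewrites a PNode. Since the proof (as stated in Section~\ref{chap:terminology}) assumes a sequentially consistent execution, I may reason directly about the order in which these writes take effect. I would fix one \emph{incarnation} of a PNode -- the span between two consecutive allocations of that cell from a durable area -- let $\mathrm{pInit}$ be the common value of its three flags at the start of the incarnation (the genuine initial state, or, for a reused cell, the value left by the \code{destroy} that ended the previous incarnation, which by Section~\ref{sub_sec.pnode} is exactly the previous $\mathrm{pValidity}$), and set $\mathrm{pValidity}=\neg\mathrm{pInit}$.

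Next I would establish \emph{idempotence}. The volatile node that owns this PNode is created at line~\ref{code.allocate_new_node} of Listing~\ref{algo:soft_insert} with its \code{pValidity} field set to $\mathtt{alloc}()=\neg\mathtt{validStart}=\mathrm{pValidity}$, and every helping thread as well as the matching \code{destroy} use that same field. Hence every execution of line~\ref{code.flip_v1} or~\ref{code.flip_v2} stores $\mathrm{pValidity}$ into \code{validStart} resp.\ \code{validEnd}, and every execution of line~\ref{code.destroy_pnode} stores $\mathrm{pValidity}$ into \code{deleted}. Consequently concurrent helpers running the same method cannot drive any flag to a value other than $\mathrm{pValidity}$, and re-executions change nothing; this lets me ignore interleavings among \code{create} calls and among \code{destroy} calls.

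The main obstacle -- and the step I expect to need the most care -- is ruling out that a (slow) \code{create} finishes \emph{after} a \code{destroy} on the same incarnation: if the line~\ref{code.flip_v2} store took effect after the line~\ref{code.destroy_pnode} store, the node would pass straight from \emph{invalid} to \emph{valid and removed}, skipping \emph{valid and not removed} and violating the allowed transition list. I would close this by linking the PNode to its volatile node. \code{destroy} is invoked only from \code{remove} (line~\ref{code.flush_delete} of Listing~\ref{algo:soft_remove}), and only after the corresponding volatile node has been observed in state ``inserted'' -- otherwise \code{remove} returns at line~\ref{code.soft_fail_remove_2_return}. By Claim~\ref{claim:soft_state_changes} a volatile node reaches ``inserted'' only via the CAS at line~\ref{code.soft_complete_insert}, and in program order every thread that executes that CAS first executes \code{create} at line~\ref{code.soft_init_pnode} and waits for it to return; so at least one \code{create} has fully returned before the state becomes ``inserted'', hence before any \code{destroy} begins. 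Since by Claim~\ref{claim:soft_state_changes} volatile states never return to ``intend to insert'', no new \code{create} is triggered afterwards, and a straggler helper that still reaches line~\ref{code.soft_init_pnode} merely re-stores $\mathrm{pValidity}$ into flags that already hold it, which by the idempotence paragraph is a no-op (for reused cells this also needs that the memory manager does not reclaim the PNode while a thread holds a reference, which I would invoke from Section~\ref{sec.mm}).

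With these two facts the conclusion is a short bookkeeping argument. The triple starts at $(\mathrm{pInit},\mathrm{pInit},\mathrm{pInit})$, which is \emph{valid and removed}. The first flag write that takes effect must be a line~\ref{code.flip_v1} store -- a line~\ref{code.flip_v2} store is preceded in program order by a line~\ref{code.flip_v1} store of the same \code{create}, and line~\ref{code.destroy_pnode} stores come strictly later by the ordering fact -- giving $(\mathrm{pValidity},\mathrm{pInit},\mathrm{pInit})$, which is \emph{invalid}: transition~\ref{pstate:from_removed_to_invalid}. The intervening key/value stores leave the flags untouched. The next flag write that changes anything is a line~\ref{code.flip_v2} store, giving $(\mathrm{pValidity},\mathrm{pValidity},\mathrm{pInit})$, which is \emph{valid and not removed}: transition~\ref{pstate:from_invalid_to_valid}. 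Finally the line~\ref{code.destroy_pnode} store gives $(\mathrm{pValidity},\mathrm{pValidity},\mathrm{pValidity})$, \emph{valid and removed}: transition~\ref{pstate:from_valid_to_removed}. All remaining writes are idempotent repetitions, so no other transition can occur, and after transition~\ref{pstate:from_valid_to_removed} the cell is exactly ready for a fresh incarnation with $\mathrm{pValidity}$ now in the role of $\mathrm{pInit}$.
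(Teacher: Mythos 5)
Your proposal is correct and follows essentially the same route as the paper's proof: the flags start at \emph{valid} and \emph{removed}, every \code{create}/\code{destroy} call on the node uses the same \code{pValidity} value so repeated and helping calls are idempotent, and the first \code{create} completes before any \code{destroy} begins (argued through the volatile node's state machine), which forces exactly the three listed transitions. One wording nit: a helping \code{remove} may reach \code{destroy} having observed only ``inserted with intention to delete'' rather than ``inserted'' (it returns at line~\ref{code.soft_fail_remove_2_return} only on ``intention to insert''), but since Claim~\ref{claim:soft_state_changes} says that state is reachable only from ``inserted'', your ordering conclusion still holds --- this is precisely how the paper closes the same step.
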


\begin{proof}
Let $p$ be a persistent node, allocated in line~\ref{code.allocate_new_node} of Listing~\ref{algo:soft_insert}, and let $v$ be the negation of its \code{validStart} bit, when allocated (i.e., $v$ is assigned into the \code{pValidity} field of the respective volatile node).
When $p$ is allocated, its state is {\em valid} and {\em removed}.
The state of $p$ can only change when creating or destroying it (Listing~\ref{algo:soft_pnode_funcs}).
The \code{create} method can only be called from line~\ref{code.soft_init_pnode} of Listing~\ref{algo:soft_insert}, and the \code{destroy} method can only be called from line~\ref{code.flush_delete} of Listing~\ref{algo:soft_remove}, both with $v$ as their \code{pValidity} input parameter.
Notice that the first \code{create} execution terminates before the first \code{destroy} invocation, since the state of the respective volatile node is set to ``inserted'' in line~\ref{code.soft_complete_insert} of Listing~\ref{algo:soft_insert}, only after the termination of the first \code{create} call, and is set to ``inserted with intention to delete'' in line~\ref{code.soft_mark_node} of Listing~\ref{algo:soft_remove}, before the first invocation of the \code{destroy} method (and by Claim~\ref{claim:soft_state_changes}, a volatile node's state can be ``inserted'' only before it becomes ``inserted with intention to delete'').

The first \code{create} execution changes $p$'s state to {\em invalid} and then {\em valid} and not {\em removed}. Any further \code{create} calls do not change its state at all (since the value of the \code{validStart} and \code{validEnd} bits is already $v$). Therefore, any \code{destroy} call can only change it from {\em valid} and not {\em removed} to {\em valid} and {\em removed} (since it only changes the \code{deleted} bit), and the claim follows.
\end{proof}

\begin{claim}[Non-Removed Persistent Nodes] \label{claim:soft_persistent_inserted_nodes}
Let $n$ be a volatile node, and assume its representing persistent node has already been created in line~\ref{code.soft_init_pnode} of Listing~\ref{algo:soft_insert}. If $n$'s state is either ``intention to insert'' or ``inserted'', then the state of its representing persistent node is {\em valid} and not {\em removed}. 
\end{claim}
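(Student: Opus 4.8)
The plan is to derive the statement from the monotonicity of persistent-node states (Claim~\ref{claim:soft_persistent_state_changes}) together with the monotonicity of volatile-node states (Claim~\ref{claim:soft_state_changes}), by showing that the single transition which could make $n$'s persistent node \emph{removed} cannot have occurred while $n$'s volatile state is still ``intention to insert'' or ``inserted''. First I would record what the hypothesis gives us: since the representing persistent node $p$ of $n$ has already been created in line~\ref{code.soft_init_pnode}, the reasoning in the proof of Claim~\ref{claim:soft_persistent_state_changes} shows that once the first \code{create} call on $p$ returns, $p$ is \emph{valid} and not \emph{removed}, and every later \code{create} call leaves $p$ unchanged (its \code{validStart}/\code{validEnd} bits already hold the value~$v$). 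By Claim~\ref{claim:soft_persistent_state_changes}, the only way $p$ can leave the ``valid and not removed'' state is transition~\ref{pstate:from_valid_to_removed}, i.e.\ a \code{destroy} call that flips the \code{deleted} bit. Hence it suffices to show that, as long as $n$'s state is ``intention to insert'' or ``inserted'', no \code{destroy} has flipped $p$'s \code{deleted} bit.

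Next I would trace the control flow of \code{remove} (Listing~\ref{algo:soft_remove}), which is the only place \code{destroy} is called (line~\ref{code.flush_delete}), where the argument is \code{currRef->pptr}; since a persistent node is associated with a unique volatile node between its creation and its destruction, a \code{destroy} on $p$ comes from a \code{remove} whose \code{currRef} is exactly $n$. To reach line~\ref{code.flush_delete}, such a thread must have passed the early returns of lines~\ref{code.soft_fail_remove_1_return} and~\ref{code.soft_fail_remove_2_return}; in particular the state it read for $n$ was not ``intention to insert'', so by Claim~\ref{claim:soft_state_changes} (monotone progression intend-to-insert $\to$ inserted $\to$ inserted-with-intention-to-delete $\to$ deleted) $n$'s state was already ``inserted'' or later at that moment. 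Then the loop of lines~\ref{code.soft_while1}--\ref{code.soft_mark_node} exits only when \code{getState} of $n$ is no longer ``inserted'': either this thread CAS-ed $n$ from ``inserted'' to ``inserted with intention to delete'', or it already observed a later state; combined with monotonicity, at loop exit $n$'s state is at least ``inserted with intention to delete''. Therefore every invocation of \code{destroy} on $p$ happens at a time when $n$'s state has already reached ``inserted with intention to delete''. Applying Claim~\ref{claim:soft_state_changes} once more: if $n$'s current state is ``intention to insert'' or ``inserted'', then by monotonicity it has never been ``inserted with intention to delete'', so no \code{destroy} on $p$ has been invoked (hence none has flipped its \code{deleted} bit), $p$ has not undergone transition~\ref{pstate:from_valid_to_removed}, and $p$ is \emph{valid} and not \emph{removed}, as claimed.

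The main obstacle I anticipate is the helping path in \code{remove}: a thread that \emph{loses} the \code{INSERTED}~$\to$~\code{INTEND\_TO\_DELETE} competition still proceeds to call \code{destroy}, so I must argue that even on that path $n$'s volatile state had already been advanced (by the winning thread) to ``inserted with intention to delete'' before any such \code{destroy} runs — which is precisely what the exit condition of the loop in line~\ref{code.soft_while1} guarantees, once coupled with the fact that volatile states never move backwards. A secondary point to handle carefully is the quantification: the statement has to be read as ``at every instant at which $n$ is `intention to insert' or `inserted' and $p$ has already been created, $p$ is valid and not removed'', and the two monotonicity claims are exactly what make this instant-wise statement go through cleanly rather than needing a separate induction on the execution length.
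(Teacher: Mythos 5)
Your proposal is correct and follows essentially the same route as the paper: both rest on the monotonicity of persistent-node states (Claim~\ref{claim:soft_persistent_state_changes}) and volatile-node states (Claim~\ref{claim:soft_state_changes}), plus the fact that \code{destroy} is invoked only after the volatile node's state has reached ``inserted with intention to delete''. The only difference is presentational: you re-derive that last fact by tracing the \code{remove} code (including the helping path and the exit condition of the loop at line~\ref{code.soft_while1}), whereas the paper simply cites it as established in the proof of Claim~\ref{claim:soft_persistent_state_changes}.
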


\begin{proof}
As shown in the proof of Claim~\ref{claim:soft_persistent_state_changes}, the state of $n$'s representing persistent node becomes {\em valid} and not {\em removed} when it is created in line~\ref{code.soft_init_pnode} of Listing~\ref{algo:soft_insert}. In addition, from Claim~\ref{claim:soft_persistent_state_changes}, it can only become {\em valid} and {\em removed}, after $n$'s state becomes ``inserted with intention to delete''. Since $n$'s state is either ``intention to insert'' or ``inserted'', the state of its representing persistent node remains {\em valid} and not {\em removed}.
\end{proof}

\begin{claim}[Removed Persistent Nodes] \label{claim:soft_persistent_removed_nodes}
Let $n$ be a volatile node, and assume its representing persistent node has already been marked as {\em removed} in line~\ref{code.flush_delete} of Listing~\ref{algo:soft_remove}.
Then the state of its representing persistent node does not become  {\em valid} and not {\em removed} anymore.
\end{claim}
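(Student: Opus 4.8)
The plan is to reduce the statement to the state-transition bookkeeping already carried out inside the proof of Claim~\ref{claim:soft_persistent_state_changes}. There, for a fixed persistent node $p$, we let $v$ denote the value written into the \code{pValidity} field of the volatile node representing $p$ --- i.e.\ the negation of $p$'s \code{validStart} bit at allocation time --- and we observed that $p$'s flags are modified only inside a \code{create} or a \code{destroy} call, each invoked with exactly this $v$ as its \code{pValidity} argument. In particular \code{create} only ever stores $v$ into \code{validStart} and \code{validEnd}, and \code{destroy} only ever stores $v$ into \code{deleted}.

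First I would observe that the step in line~\ref{code.flush_delete} of Listing~\ref{algo:soft_remove} that marks $p$ as removed is a \code{destroy} call, and that once it has executed all three of $p$'s flags \code{validStart}, \code{validEnd}, \code{deleted} equal $v$ --- precisely the state {\em valid} and {\em removed}. From this point on, every remaining write to $p$'s flags, by any \code{create} or \code{destroy} call (including helping ones that may still be lagging), stores $v$ into a field that already holds $v$, hence leaves $p$'s state unchanged. So $p$ stays {\em valid} and {\em removed} forever after and in particular never becomes {\em valid} and not {\em removed}. Equivalently, phrased through Claim~\ref{claim:soft_persistent_state_changes}: the only transition leaving {\em valid}-and-{\em removed} flips \code{validStart}, which would require a \code{create} call that writes the negation of $v$ into \code{validStart}; no such call exists for $p$, so that transition never fires, and consequently neither does the subsequent transition into {\em valid}-and-not-{\em removed}.

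The part that needs care is the scope of the phrase ``its representing persistent node''. A destroyed persistent node sits in exactly the same flag configuration as a freshly allocated one (all flags equal, only the common value swapped from pInitialValidity to pValidity), so after the memory manager reclaims $p$ it could be re-allocated and cycle through the transitions again with the roles of $v$ and $\neg v$ exchanged. This does not contradict the claim, since a re-allocation produces a {\em new} volatile node whose \code{pptr} points to $p$, so $p$ is no longer ``$n$'s representing persistent node''. I would therefore either state and use the claim with this scoping understood, or lean on the memory-management invariant that $p$ is not reclaimed while the volatile node $n$ pointing to it is still live; with that in hand the idempotence argument above is immediate and the claim follows.
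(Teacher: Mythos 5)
Your proposal is correct and follows essentially the same route as the paper, which likewise disposes of the claim in one line by appealing to the bookkeeping in Claim~\ref{claim:soft_persistent_state_changes}: once all three flags equal pValidity, any further \code{create} or \code{destroy} call rewrites a value already present and leaves the state \emph{valid} and \emph{removed}. Your additional remark about reallocation and the scope of ``representing persistent node'' is a reasonable extra precaution that the paper leaves implicit, but it does not change the argument.
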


\begin{proof}
As shown in Claim~\ref{claim:soft_persistent_state_changes}, any further \code{create} or \code{destroy} calls would not effect the persistent node's state.
\end{proof}

We are going to prove that, given an execution, removing all crash events would leave us with a linearizable history, including all the operations that were fully executed between two crashes, and some of the operations that were halted due to crashes (and then recovered during recovery).
We are going to define, per operation execution, whether it is {\em a surviving operation}. A surviving operation is an operation that is linearized in the final crash-free history of the execution (by removing all crash events). Obviously, operations that were fully executed between two crash events are always considered as surviving operations.
Additionally, we are going to define the linearization points of all surviving operations in the crash-free history.

\subsubsection{Insert}
Before defining the conditions for the survival of an insert operation, we need to re-define the success of an insertion in the presence of crash events.

\begin{definition} [A Successful Insert Operation] \label{definition:soft_dl_successful_insert}
Given an execution of an insert operation, we say that this operation is successful if one of the following holds:
\begin{enumerate}
    \item The operation returns \code{true}.
    \item A volatile node $n$ is allocated in line~\ref{code.allocate_new_node}, the \code{result} variable is assigned with \code{true} in line~\ref{code.soft_insert_result_true}, and the respective persistent node of $n$ is created in line~\ref{code.soft_init_pnode} by some thread before any crash event.
\end{enumerate}
The operation is unsuccessful if it returns \code{false}.
\end{definition}

\begin{definition} [A Surviving Insert Operation] \label{definition:soft_dl_surviving_insert}
An insert operation is considered as a surviving operation if, before the first crash event that follows its invocation, one of the following holds:

\begin{enumerate}
    \item The operation is unsuccessful according to Definition~\ref{definition:soft_dl_successful_insert}. In this case, its linearization point is set to be its original linearization point, presented in Section~\ref{chap:soft_insert_linearization}.
    \item The operation is successful according to Definition~\ref{definition:soft_dl_successful_insert}, and some thread (not necessarily the one that executes the successful insertion) changes the state of the node allocated in line~\ref{code.allocate_new_node}, in line~\ref{code.soft_complete_insert}.
    In this case, the linearization point is set to be its original linearization point as well.
    \item The operation is successful according to Definition~\ref{definition:soft_dl_successful_insert}, and no thread changes the state of the node allocated in line~\ref{code.allocate_new_node}, in line~\ref{code.soft_complete_insert}.
    In this case, the linearization point is set to be the insertion of a new respective volatile node to the list during recovery.
\end{enumerate}
\end{definition}

\begin{claim} \label{claim:soft_dl_insert_linearization}
A surviving insert operation takes effect instantaneously at its linearization point. 
\end{claim}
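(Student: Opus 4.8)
The plan is to follow the three-way split of Definition~\ref{definition:soft_dl_surviving_insert} and, in each case, show that the designated step is precisely the instant at which the operation's key $k$ enters the set, where ``set membership'' is the durable notion for \SOFT{} --- the analogue of Definition~\ref{definition:link_free_dl_set_member}, governed by the validity and removal flags of the representing PNode (together with the ``inserted with intention to delete'' volatile window). The common toolkit is the crash-free linearizability analysis of Section~\ref{chap:soft_insert_linearization}, the volatile invariants (Claims~\ref{claim:soft_state_changes}, \ref{claim:soft_logically} and~\ref{claim:soft_sorted}), and the persistent-node invariants (Claims~\ref{claim:soft_persistent_state_changes}, \ref{claim:soft_persistent_inserted_nodes} and~\ref{claim:soft_persistent_removed_nodes}); the latter pin down that \code{create} (line~\ref{code.soft_init_pnode}) finishes before a volatile node turns ``inserted'' and that \code{destroy} (line~\ref{code.flush_delete}) cannot run before a volatile node turns ``inserted with intention to delete''.

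For Case~1 (unsuccessful insert of $k$), the linearization point is the original one, which by Claim~\ref{claim:soft_dl_unsucc_insert} is a point where the node $m$ carrying $k$ has state ``inserted'' or ``inserted with intention to delete''. I would refine the choice so that $m$ is actually ``inserted'' there (the proof of Claim~\ref{claim:soft_dl_unsucc_insert} exhibits such a point, or else $m$ is already ``inserted'' at the \code{find} linearization point of Claim~\ref{claim:soft_dl_find}); at such a point the PNode of $m$ is valid and not removed by Claim~\ref{claim:soft_persistent_inserted_nodes}, hence $k$ is a set member, which is exactly the precondition making the \code{false} return legal. For Case~2 (successful insert whose allocated node $n$ is advanced to ``inserted'' in line~\ref{code.soft_complete_insert} by some thread before the first crash), the linearization point is that first advance: immediately before it, $n$ is not logically in the list (Claim~\ref{claim:soft_logically}) and, by Claim~\ref{claim:soft_sorted} together with the persistent invariants, no valid-not-removed PNode carries $k$, so $k$ is absent; immediately after, $n$ is logically in the list and, since line~\ref{code.soft_init_pnode} precedes line~\ref{code.soft_complete_insert}, its PNode is already valid and not removed (Claim~\ref{claim:soft_persistent_state_changes}), so $k$ becomes present exactly at that step.

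For Case~3 (successful insert but no thread ever advances $n$ to ``inserted'' before the crash), the linearization point is the insertion of a fresh volatile node for $k$ performed by the recovery procedure of Section~\ref{sub_sec:soft_recovery}. Because the operation is successful in the sense of Definition~\ref{definition:soft_dl_successful_insert}, the PNode of $n$ was created, and by Claim~\ref{claim:soft_persistent_state_changes} it is still valid and not removed at the crash, so recovery indeed reinserts $k$; I would then check that $k$ was not a set member just before this recovery step (no valid-not-removed PNode with key $k$ --- ruling out a concurrent surviving \code{remove} that persisted a deletion via Claims~\ref{claim:soft_persistent_state_changes} and~\ref{claim:soft_persistent_removed_nodes} --- and no surviving volatile node with key $k$) and is one immediately after.

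The main obstacle, recurring in Cases~1 and~3, is reconciling the volatile ``logically in the list'' picture with the persistent picture during the two asymmetric windows in which they disagree: between \code{create} and the first ``inserted'' transition, and between \code{destroy} and the first ``deleted'' transition. I must ensure the set-membership notion assigns $k$ the correct status at every candidate linearization point even when a concurrent remove has already persisted (or not yet persisted) its deletion; the relative-ordering facts of Claims~\ref{claim:soft_persistent_state_changes}--\ref{claim:soft_persistent_removed_nodes} are the lever for this, and the rest is bookkeeping parallel to the link-free proof (Claim~\ref{claim:link_free_dl_insert_linearization}).
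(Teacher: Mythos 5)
There is a genuine gap, and it is precisely the point you defer as ``bookkeeping''. Your plan ports the link-free framework by positing a durable set-membership predicate based on the PNode flags (plus an ``inserted with intention to delete'' window) and then arguing each linearization point is the instant the key enters that set. But in \SOFT{}, unlike link-free, persistence \emph{precedes} volatile publication: \code{create} (line~\ref{code.soft_init_pnode}) completes before any thread can move the node to ``inserted'' (line~\ref{code.soft_complete_insert}). Consequently, in your own Case~2 the assertion that ``immediately before the state change no valid-not-removed PNode carries $k$'' is false --- the operation's \emph{own} PNode is already valid and not removed at that moment, so under your membership notion the key entered the set at \code{create}, not at the designated linearization point. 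The same window also clashes with contains: a contains that reads ``intention to insert'' after \code{create} is linearized returning \code{false}, yet your predicate says $k$ is present. In Case~3 the mismatch recurs in the other direction: your predicate makes $k$ a member from \code{create} until the crash, whereas the operation is linearized only at the recovery insertion. Fixing this forces the membership predicate to depend on whether the ``inserted'' transition eventually occurs before the crash --- i.e., it stops being a state-based predicate and collapses into the case analysis the paper already performs. The paper avoids the issue by never defining a durable membership notion for \SOFT{}: it argues ``takes effect'' with respect to the volatile notion of being logically in the list (Definition~\ref{definition:soft_logically}) for the crash-free cases, and for the persisted-but-unpublished case it linearizes the insert at the recovery insertion, using Claims~\ref{claim:soft_persistent_inserted_nodes} and~\ref{claim:soft_persistent_removed_nodes} only to show that recovery reconstructs exactly the right nodes.

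A secondary flaw: in Case~1 you claim the point of Claim~\ref{claim:soft_dl_unsucc_insert} can be refined to one where $m$ is ``inserted''. It cannot in general --- $m$ may be ``inserted with intention to delete'' throughout the entire insert execution (the transition from ``inserted'' having occurred before the invocation), so no such point lies inside the operation's window. The paper needs no refinement, since ``inserted with intention to delete'' already means logically in the list; under your durable predicate you would instead have to lean on the intention-to-delete window clause, which again exposes the volatile/persistent discrepancy described above.
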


\begin{proof}
First, let $n$ be the last volatile node allocated during a successful insert operation (according to Definition~\ref{definition:soft_dl_successful_insert}). We are going to show that $n$ is logically inserted into the volatile list at the operation's linearization point (presented in Definition~\ref{definition:soft_dl_surviving_insert}).

If some thread (not necessarily the one that executes the successful insertion) changes the state of $n$ in line~\ref{code.soft_complete_insert}, then by Definition~\ref{definition:soft_dl_surviving_insert}, the operation's linearization point is this change.
As proved in Section~\ref{chap:soft_insert_linearization}, $n$ is indeed logically inserted into the list at this point.
Notice that in this case, we do not consider the insertion of a new representing node during recovery, as a logical insertion of $n$ into the list.
By Invariant~\ref{node_inv:still_reachable} of Claim~\ref{claim:soft_nodes}, $n$ is still reachable when the crash occurs.
In addition, notice that as long as this node is not removed from the list, its state remains ``inserted'' and by Claim~\ref{claim:soft_persistent_inserted_nodes}, the state of its representing persistent node is indeed {\em valid} and not {\em removed} during recovery.

Otherwise, no thread changes $n$'s state from ``intention to insert'' to ``inserted'' before the crash event. 
By Definition~\ref{definition:soft_dl_successful_insert}, a respective persistent node of $n$ is created in line~\ref{code.soft_init_pnode}.
From Claim~\ref{claim:soft_state_changes}, $n$'s state does not change at all before the first crash and therefore, by Definition~\ref{definition:soft_logically}, it is not logically in the list.
As described in Section~\ref{sub_sec:soft_recovery}, during recovery, a new volatile node, representing $n$, is logically inserted into the list, and by Definition~\ref{definition:soft_dl_surviving_insert}, this is the linearization point of the operation in this case.
Notice that by Claim~\ref{claim:soft_persistent_inserted_nodes}, the state of its representing persistent node is indeed {\em valid} and not {\em removed} during recovery, in this case as well.

When an insert operation is unsuccessful by Definition~\ref{definition:soft_dl_successful_insert}, it is also unsuccessful by the original definition.
From Section~\ref{chap:soft_insert_linearization}, there exists a point during its execution for which a node with the given key is already logically in the list and thus, the unsuccessful operation indeed returns a correct answer.
Additionally, notice that even if a representing persistent node is allocated, its state remains {\em valid} and {\em removed}, since the \code{create} method is only called after the volatile node is successfully inserted into the list, and the \code{destroy} method is only called when the state of the volatile node is either ``inserted with intention to delete'' or ``deleted'' (by Claim~\ref{claim:soft_persistent_inserted_nodes}).
\end{proof}

\begin{claim} \label{claim:soft_dl_nonsurviving_insert}
A non-surviving insert operation takes no effect. 
\end{claim}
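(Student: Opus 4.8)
The plan is to start from the negation of Definition~\ref{definition:soft_dl_surviving_insert}: a non-surviving insert is, before the first crash that follows its invocation, neither unsuccessful (so it never returns \code{false}) nor successful in the sense of Definition~\ref{definition:soft_dl_successful_insert} (so it never returns \code{true}, and it is never the case that some thread completes the \code{create} call of line~\ref{code.soft_init_pnode} on its allocated persistent node before the crash). In particular the operation is still pending when the crash happens, so the only way it could ``take effect'' is by contributing a \code{PNode} to the recovered list; since every volatile node is lost at the crash, it suffices to rule out that any \code{PNode} this operation touches is left in the \emph{valid and not removed} state that recovery (Section~\ref{sub_sec:soft_recovery}) resurrects.

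First I would split on whether the operation ever executes line~\ref{code.soft_insert_result_true}, i.e., ever performs a successful linking CAS in line~\ref{code.soft_link_node}. If it does not, then every \code{PNode} it allocated in line~\ref{code.allocate_new_pnode} belongs to an iteration whose CAS failed (or to no iteration at all) and was never installed into the reachable volatile list; \code{create} is only ever applied to \code{resultNode}, which is set to a freshly allocated node only when \code{result} becomes \code{true}, so none of these \code{PNode}s is ever passed to \code{create} --- not by this operation, and not by any helper, since they are unreachable. By Claim~\ref{claim:soft_persistent_state_changes} such a \code{PNode} stays \emph{valid and removed} (its allocation state) and is discarded by recovery, and no volatile node was linked on this operation's behalf, so it takes no effect. (If the operation instead reached the ``help the existing node'' branch and called \code{create} on another node's \code{PNode}, that is the effect of the corresponding successful, surviving insert, not of this one, which was in any case about to return \code{false}.)

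Second, suppose the operation did reach line~\ref{code.soft_insert_result_true}; let $n$ be the volatile node it linked, with state ``intend to insert''. Since the operation is non-surviving, no thread completed \code{create} on $n$'s \code{PNode} before the crash (otherwise, by Definition~\ref{definition:soft_dl_successful_insert}, the operation would be successful and hence surviving), hence no thread reached line~\ref{code.soft_complete_insert} for $n$, so by Claim~\ref{claim:soft_state_changes} $n$'s state is still ``intend to insert'' at the crash; consequently every concurrent remove of this key returns in line~\ref{code.soft_fail_remove_2_return} without calling \code{destroy}, and since the \code{deleted} flag changes only inside \code{destroy} (Claim~\ref{claim:soft_persistent_state_changes}) it is never flipped for $n$'s \code{PNode}. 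It then remains to check that $n$'s \code{PNode} is not \emph{valid and not removed} in the NVRAM: under the sequential-consistency assumption and the layout of \code{create} (flip \code{validStart} in line~\ref{code.flip_v1}, release fence, write key and value, flip \code{validEnd} in line~\ref{code.flip_v2}, \code{psync}), the states a not-yet-completed \code{create} can leave in memory are \emph{valid and removed} (nothing flipped) or \emph{invalid} (\code{validStart} flipped but \code{validEnd} not) --- reaching \emph{valid and not removed} requires the \code{validEnd} flip, which is precisely the event excluded by ``$n$'s \code{PNode} is created''. Hence recovery discards $n$'s \code{PNode}, and since $n$ itself (like all volatile nodes) is lost, the operation contributes nothing to the recovered list.

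I expect the last step to be the main obstacle: proving that a partially executed \code{create} cannot leave $n$'s \code{PNode} in the \emph{valid and not removed} state that recovery keeps. This is exactly where the precise reading of ``the persistent node of $n$ is created in line~\ref{code.soft_init_pnode}'' in Definition~\ref{definition:soft_dl_successful_insert} matters, and it must be combined with the sequential-consistency assumption (writes reaching NVRAM in program order) and with Claim~\ref{claim:soft_persistent_state_changes}; the accompanying bookkeeping --- which threads may invoke \code{create} or \code{destroy} on $n$'s \code{PNode}, and when, relative to $n$'s state changes --- is routine given Claims~\ref{claim:soft_state_changes}, \ref{claim:soft_persistent_state_changes} and~\ref{claim:soft_persistent_inserted_nodes} but must still be written out.
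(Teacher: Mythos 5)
Your proposal is correct and follows essentially the same route as the paper's own (much terser) proof: a non-surviving insert leaves its volatile node, even if linked, with state ``intention to insert'' (hence never logically in the list), and its PNode at most partially created, so its NVRAM state is either \emph{valid and removed} or \emph{invalid} and recovery discards it. Your explicit case split on whether the linking CAS succeeded, the treatment of the helping branch, and the observation that ``created'' must be read as excluding the \code{validEnd} flip are just careful elaborations of what the paper compresses into the parenthetical ``(or partially created, and thus, in an \emph{invalid} state)''.
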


\begin{proof} 
By Definition~\ref{definition:soft_dl_surviving_insert}, during a none-surviving insert operation, if a volatile node is allocated, and even if it is inserted into the list, its state remains ``intention to insert'' and, thus, it is not logically in the list by Definition~\ref{definition:soft_logically}.
In addition, by Definition~\ref{definition:soft_dl_surviving_insert}, during a non-surviving insert operation, a persistent node may be allocated, but not created (or partially created, and thus, in an {\em invalid} state). 
Therefore, during recovery, even if the persistent node is allocated, its state is either {\em valid} and {\em removed}, or {\em invalid}, and therefore, the represented volatile node is not inserted into the new list.
\end{proof}

\subsubsection{Remove} 
We also re-define the success of a removal in the presence of crash events.

\begin{definition} [A Successful Remove Operation] \label{definition:soft_dl_successful_remove}
Given an execution of an remove operation, we say that this operation is successful if one of the following holds:
\begin{enumerate}
    \item The operation returns \code{true}.
    \item The \code{result} variable is assigned with \code{true} in line~\ref{code.soft_mark_node}, and the respective persistent node is marked as deleted in line~\ref{code.flush_delete} by some thread before any crash event.
\end{enumerate}
The operation is unsuccessful if it returns \code{false}.
\end{definition}

\begin{definition} [A Surviving Remove Operation] \label{definition:soft_dl_surviving_remove}
A remove operation is considered as a surviving operation if, before the first crash event that follows its invocation, one of the following holds:

\begin{enumerate}
    \item The operation is unsuccessful according to Definition~\ref{definition:soft_dl_successful_remove}.
    In this case, its linearization point is set to be its original linearization point, presented in Section~\ref{chap:soft_remove_linearization}.
    \item The operation is successful according to Definition~\ref{definition:soft_dl_successful_remove}, and some thread (not necessarily the one that executes the successful removal) changes the state of the victim node in line~\ref{code.soft_complete_remove}.
    In this case, the linearization point is set to be its original linearization point as well.
    \item The operation is successful according to Definition~\ref{definition:soft_dl_successful_remove}, and no thread changes the state of the victim node in line~\ref{code.soft_complete_remove}.
    In this case, the linearization point is set to be immediately after the crash event (if there is more than one such removal, they are linearized in an arbitrary order).
\end{enumerate}
\end{definition}

\begin{claim} \label{claim:soft_dl_remove_linearization}
A surviving remove operation takes effect instantaneously at its linearization point. 
\end{claim}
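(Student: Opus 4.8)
The plan is to follow the three-way case split of Definition~\ref{definition:soft_dl_surviving_remove} and, in each case, check two things: that at the linearization point the node carrying the input key $k$ transitions out of the set (for successful removals) or is already absent from the set (for unsuccessful removals), and that this is consistent with what the recovery procedure would reconstruct should a later crash occur. Throughout, I would use ``member of the set'' to mean \emph{logically in the list} (Definition~\ref{definition:soft_logically}) during a crash-free stretch and ``key of a valid, not-removed PNode'' after a crash, with Claims~\ref{claim:soft_persistent_inserted_nodes} and~\ref{claim:soft_persistent_removed_nodes} bridging the two. A preliminary observation (which I would record once) is that a successful removal, by Definition~\ref{definition:soft_dl_successful_remove}, always runs \code{destroy} on the victim's PNode before any crash, so by Claim~\ref{claim:soft_persistent_removed_nodes} that PNode stays \emph{valid} and \emph{removed} and is discarded by recovery.

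For the unsuccessful case (item~1), the operation returns \code{false} before the crash, so its linearization point is the one fixed in Section~\ref{chap:soft_remove_linearization} and lies before any crash; the linearizability argument there already shows that no node with key $k$ is logically in the list at that point, which is exactly ``takes no effect.'' For item~2 (successful, with some thread running line~\ref{code.soft_complete_remove} on the victim), I would take the linearization point to be that state change of the victim to ``deleted''. By Claim~\ref{claim:soft_state_changes} the victim's state is ``inserted with intention to delete'' immediately before and ``deleted'' immediately after, so by Claim~\ref{claim:soft_logically} it is logically in the list just before and not just after — the removal takes effect exactly there — and by the preliminary observation its PNode is \emph{valid} and \emph{removed} at any subsequent crash, so recovery agrees.

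For item~3 (successful, but line~\ref{code.soft_complete_remove} is never executed on the victim), the linearization point is placed immediately after the crash. Just before the crash the victim's state is ``inserted with intention to delete'' (set by the \code{stateCAS} of line~\ref{code.soft_mark_node} and, by Claim~\ref{claim:soft_state_changes}, unable to advance), so it is non-infant and not ``deleted'', hence reachable by Invariant~\ref{node_inv:is_reachable} of Claim~\ref{claim:soft_nodes}, hence logically in the list — a member right before the linearization point. After the crash, recovery rebuilds the list only from valid, not-removed PNodes, and the victim's PNode is \emph{valid} and \emph{removed} (preliminary observation), so $k$ leaves the set precisely at the crash.

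The main obstacle is the second half of the last case: I must rule out that \emph{some other} PNode carrying key $k$ is \emph{valid} and not \emph{removed} at the crash, which would make recovery re-insert $k$ and break the linearization. I would argue this from the invariants: if a PNode is \emph{valid} and not \emph{removed}, then \code{destroy} was never called on it (Claim~\ref{claim:soft_persistent_removed_nodes}) and \code{create} completed on it (Claim~\ref{claim:soft_persistent_state_changes}), so its volatile node has been linked (line~\ref{code.soft_link_node} precedes line~\ref{code.soft_init_pnode}), is non-infant, and has state ``intention to insert'', ``inserted'', or ``inserted with intention to delete'' — in particular not ``deleted'' — hence reachable by Invariant~\ref{node_inv:is_reachable} of Claim~\ref{claim:soft_nodes}. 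Such a node with key $k$, distinct from the victim (whose PNode is \emph{removed}), would give two reachable nodes with the same key, contradicting Claim~\ref{claim:soft_sorted}. A minor routine check I would also include is that the stated linearization point lies between the operation's invocation and its response, where for item~3 the response is the end of the recovery procedure as stipulated in Section~\ref{chap:terminology}.
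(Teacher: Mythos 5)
Your proof is correct and follows essentially the same route as the paper's: the same three-case split from Definition~\ref{definition:soft_dl_surviving_remove}, using the Section~\ref{chap:soft_remove_linearization} linearization argument for the first two cases and Claims~\ref{claim:soft_state_changes}, \ref{claim:soft_logically} and~\ref{claim:soft_persistent_removed_nodes} to place the effect at the crash and match what recovery reconstructs in the third. Your additional step ruling out another \emph{valid}, non-\emph{removed} PNode with the same key (via Claim~\ref{claim:soft_sorted}) makes explicit something the paper leaves implicit, but it is not a different approach.
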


\begin{proof}
First, assume a successful remove operation (according to Definition~\ref{definition:soft_dl_successful_remove}), and let $n$ be the node whose state is updated from ``inserted'' to ``inserted with intention to delete'' in line~\ref{code.soft_mark_node}.
We are going to show that $n$ is logically removed from the volatile list at the operation's linearization point (presented in Definition~\ref{definition:soft_dl_surviving_remove}).

If some thread (not necessarily the one that executes the successful removal) changes the state of $n$ from ``inserted with intention to delete'' to ``deleted'' in line~\ref{code.soft_complete_remove}, then by Definition~\ref{definition:soft_dl_surviving_remove}, the operation's linearization point is this change.
As proved in Section~\ref{chap:soft_remove_linearization}, $n$ is indeed logically removed from the list at this point.
Notice that in this case, it is guaranteed that $n$ will not be re-added into the volatile list during recovery, since by Claim~\ref{claim:soft_persistent_removed_nodes}, it has a persistent representative, marked as deleted.

Otherwise, no thread changes $n$'s state from ``inserted with intention to delete'' to ``deleted'' before the crash event. 
By Definition~\ref{definition:soft_dl_successful_remove}, the respective persistent node of $n$ is marked as removed in line~\ref{code.flush_delete}.
From Claim~\ref{claim:soft_state_changes}, $n$'s state remains ``inserted with intention to delete'' until the first crash event.
By Claim~\ref{claim:soft_logically}, it is logically in the list until this crash event.
As described in Section~\ref{sub_sec:soft_recovery}, during recovery, a node representing $n$ will not be inserted into the list (since its representative is marked as deleted, by Claim~\ref{claim:soft_persistent_removed_nodes}), and in particular, will not be reachable.
By Definition~\ref{definition:soft_logically}, it will no longer be logically in the list.
Therefore, it is indeed logically removed from the list at the crash event, right before its linearization point, as presented in Definition~\ref{definition:soft_dl_surviving_remove}.

When a remove operation is unsuccessful by Definition~\ref{definition:soft_dl_successful_remove}, it is also unsuccessful by the original definition.
From Section~\ref{chap:soft_remove_linearization}, there exists a point during its execution for which there is no node with the given key which is logically in the list (and from Claim~\ref{claim:soft_persistent_removed_nodes}, any persistent representative would have a {\em valid} and {\em removed} state) and thus, the unsuccessful operation indeed returns a correct answer.
\end{proof}

\begin{claim} \label{claim:soft_dl_nonsurviving_remove}
A non-surviving remove operation takes no effect. 
\end{claim}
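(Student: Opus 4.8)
The plan is to follow the skeleton of the analogous link-free argument (Claim~\ref{claim:link_free_dl_nonsurviving_remove}), while accounting for the extra layer that \SOFT{} introduces by separating volatile nodes from persistent nodes. The first step is to unpack what ``non-surviving'' means here. Cases~1--3 of Definition~\ref{definition:soft_dl_surviving_remove} together exhaust every remove execution that is \emph{either} unsuccessful \emph{or} successful in the sense of Definition~\ref{definition:soft_dl_successful_remove}, so a non-surviving remove is one that is \emph{neither}: it never returns (neither \code{true} nor \code{false}), and it is \emph{not} the case that both (i) its local \code{result} variable was set to \code{true} by a successful CAS in line~\ref{code.soft_mark_node}, and (ii) the victim's persistent node was marked as deleted (i.e.\ the store inside \code{destroy}, invoked in line~\ref{code.flush_delete}, was executed by some thread) before the first subsequent crash. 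I would then branch on which of (i), (ii) fails.

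If (i) fails, then this execution never won the CAS in line~\ref{code.soft_mark_node}, so by inspection of Listing~\ref{algo:soft_remove} it never performed the state change from ``inserted'' to ``inserted with intention to delete'' that initiates a removal; any \code{destroy} call, or line~\ref{code.soft_complete_remove} step, it may reach is reached only after the first \code{while} loop has exited with \code{result} still \code{false}, i.e.\ only when the victim's state is already ``inserted with intention to delete'' or ``deleted'', so the execution is merely helping a removal that some other operation already owns. I would argue that the single logical effect of that removal (the victim leaving the set, or being re-inserted at recovery, as the case may be) is attributed to the owning operation, and nothing about the abstract set need be blamed on this execution; if instead it crashed before reaching \code{destroy}, it clearly did nothing at all.

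If (i) holds but (ii) fails, then the CAS in line~\ref{code.soft_mark_node} did succeed, so just before it the victim's volatile state was ``inserted'' (Claim~\ref{claim:soft_state_changes}), hence the victim was logically in the list (Claim~\ref{claim:soft_logically}) and its persistent node was \emph{valid} and not \emph{removed} (Claim~\ref{claim:soft_persistent_inserted_nodes}); and since no thread executed \code{destroy}'s store into the \code{deleted} flag before the crash, Claim~\ref{claim:soft_persistent_state_changes} keeps that persistent node \emph{valid} and not \emph{removed} through the crash. Consequently the recovery of Section~\ref{sub_sec:soft_recovery} re-links a fresh volatile node for the victim's key, so the key is in the set after recovery exactly as it was before the operation began. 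I would also note that no helper can have pushed the victim to ``deleted'' in line~\ref{code.soft_complete_remove} here, since reaching that line is preceded by the \code{destroy} call of line~\ref{code.flush_delete}, contradicting the failure of~(ii); hence, by Invariant~\ref{node_inv:still_reachable} of Claim~\ref{claim:soft_nodes} and Claim~\ref{claim:soft_logically}, the victim remains logically in the list up to the crash as well. Either way the abstract set is left unchanged.

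The step I expect to be the main obstacle is the helping case, where~(i) fails: making precise --- rather than hand-waving --- the assertion that a non-surviving execution which has actually helped complete someone else's removal (by running \code{destroy} and/or the loop at line~\ref{code.soft_complete_remove}) still ``takes no effect.'' The resolution I would spell out is that a removal has a single logical effect, which is always charged to whichever operation won the original CAS on that victim --- necessarily a surviving operation once the removal is persisted (by Definition~\ref{definition:soft_dl_successful_remove}, case~2, which is satisfied precisely when the \code{deleted} store happens before the crash) --- so no value observed by a surviving contains, and no state present at the end of the execution, ever requires this non-surviving operation to appear in the linearization. This is the same phenomenon already handled in the link-free proof, where a non-surviving removal may still flush a node that a different operation marked.
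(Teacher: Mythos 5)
Your proposal is correct, and its central case --- the operation won the CAS of line~\ref{code.soft_mark_node} but \code{destroy} was not executed before the crash --- is exactly the paper's argument: no transition to ``deleted'' can occur without a preceding \code{destroy}, so by Claims~\ref{claim:soft_persistent_inserted_nodes} and~\ref{claim:soft_persistent_state_changes} the PNode stays \emph{valid} and not \emph{removed}, the victim remains logically in the list (Claims~\ref{claim:soft_state_changes} and~\ref{claim:soft_logically}) until the crash, and recovery re-inserts it, leaving the abstract set unchanged. The difference is that the paper's two-sentence proof, read strictly, covers only this case: its blanket assertion that the victim does not become ``deleted'' and that no thread destroys its PNode is not literally true for a non-surviving remove that lost (or never reached) the CAS and merely helps another thread's removal by executing line~\ref{code.flush_delete} or~\ref{code.soft_complete_remove}. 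Your explicit second case handles this by observing that such a helper performs no logical change of its own --- the single removal effect belongs to the operation that won the CAS, which, once \code{destroy} has run, is successful by Definition~\ref{definition:soft_dl_successful_remove} and hence surviving by Definition~\ref{definition:soft_dl_surviving_remove}, with the corresponding linearization point already accounting for that effect. So you follow the same route as the paper, but with a sharper case split that makes precise the helping scenario the paper leaves implicit.
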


\begin{proof}
By Definition~\ref{definition:soft_dl_surviving_remove}, during a none-surviving remove operation, even if the state of the victim node becomes ``inserted with intention to delete'', by Definition~\ref{definition:soft_dl_surviving_remove}, it does not become ``deleted'', and no thread executes the destruction of its respective persistent node (i.e., by Claim~\ref{claim:soft_persistent_inserted_nodes}, it is still {\em valid} and not {\em removed}).

Therefore, by Definition~\ref{definition:soft_logically}, it is still logically in the list until the crash event occurs, and during recovery, it is re-added to the new list.
\end{proof}

\subsubsection{Contains} 
As opposed to the insert and remove operations, a contains operation is considered as a surviving operation only when it terminates:

\begin{definition} [A Surviving Contains Operation] \label{definition:soft_dl_surviving_contains}
A contains operation is considered as a surviving operation if and only if it terminates before the first crash event occurring after its invocation. 
If it survives, its linearization point is set to be its original linearization point, presented in Section~\ref{chap:soft_contains_linearization}.
\end{definition}

\begin{claim} \label{claim:soft_dl_contains_linearization}
A surviving contains operation takes effect instantaneously at its linearization point. 
\end{claim}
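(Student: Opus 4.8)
The plan is to reduce the claim, as was essentially done when the linearization points were assigned in Section~\ref{chap:soft_contains_linearization}, to the single statement: \emph{at the operation's linearization point, a node whose key equals the queried key is logically in the list (Definition~\ref{definition:soft_logically}) if and only if the contains operation returns \code{true}}. This reduction is legitimate because, by Definition~\ref{definition:soft_dl_surviving_contains}, a surviving contains operation keeps the linearization point it received in the crash-free analysis, and since such an operation terminates before the first crash following its invocation, that point lies strictly between its invocation and its response, inside a crash-free stretch of the execution. I would then fix notation once: let $n$ be the last volatile node assigned into \code{curr} in line~\ref{code.soft_contains_loop_end}, and let $m$ be the node assigned into \code{curr} immediately before $n$, and argue by the same case split used to define the linearization point.

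First, the easy cases. If $n$'s key equals the input key and its state read in line~\ref{code.read_state} is ``inserted'' or ``inserted with intention to delete'', contains returns \code{true} and the linearization point is that read; Claim~\ref{claim:soft_logically} gives that $n$ is logically in the list exactly at that step, and Claim~\ref{claim:soft_sorted} forbids any other node with that key. If $n$'s key differs from the input key, contains returns \code{false} and the linearization point is the moment guaranteed by Claim~\ref{claim:soft_dl_contains} at which $m$ and $n$ are reachable and consecutive; since the traversal stops at the first node with key $\ge$ the input key, $m$'s key is strictly smaller and $n$'s strictly larger, so by Claim~\ref{claim:soft_sorted} no reachable node carries the input key there, hence none is logically in the list.

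The remaining cases are where $n$'s key matches but its state in line~\ref{code.read_state} is ``intend to insert'' or ``deleted'', and contains returns \code{false}. For ``intend to insert'' the linearization point is the reachability point of Claim~\ref{claim:soft_dl_contains}; by Claim~\ref{claim:soft_state_changes} the state is still ``intend to insert'' there (states only move forward), so $n$ is reachable but not logically in the list, and Claim~\ref{claim:soft_sorted} rules out a second node with the key. For ``deleted'' there are two sub-cases: if $n$ is already ``deleted'' at the guaranteed point, that point works directly; otherwise, because the state is monotone (Claim~\ref{claim:soft_state_changes}) and ends up ``deleted'' by line~\ref{code.read_state}, there is a step strictly between the guaranteed point and line~\ref{code.read_state} that flips $n$ to ``deleted'', and Invariant~\ref{node_inv:still_reachable} of Claim~\ref{claim:soft_nodes} shows $n$ is still reachable at that step, so again no node with the key is logically in the list. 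I expect this last sub-case to be the only delicate one: one must pin the state-flipping step to a concrete instant inside the operation's execution interval and confirm it is consistent with the linearization points assigned to the concurrent remove operations in Section~\ref{chap:soft_remove_linearization}; every other part is a direct appeal to the invariants already established.
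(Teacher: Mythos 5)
Your proposal is correct and follows essentially the same route as the paper: the paper's own proof is a one-line observation that a surviving contains operation terminates before any crash following its invocation, so the crash-free linearization argument of Section~\ref{chap:soft_contains_linearization} applies verbatim, and your write-up simply inlines that section's case analysis (using the same appeals to Claims~\ref{claim:soft_logically}, \ref{claim:soft_sorted}, \ref{claim:soft_dl_contains}, \ref{claim:soft_state_changes} and Invariant~\ref{node_inv:still_reachable} of Claim~\ref{claim:soft_nodes}). No gap; the extra worry about the ``deleted''-flip sub-case is already handled exactly as you describe in the paper's Section~\ref{chap:soft_contains_linearization}.
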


\begin{proof}
Since we only consider contains operations that terminate without being interrupted by crash events, the claim follows directly from Section~\ref{chap:soft_contains_linearization}
\end{proof}

\begin{claim} \label{claim:soft_dl_nonsurviving_contains}
A non-surviving contains operation takes no effect. 
\end{claim}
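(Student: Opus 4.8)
The plan is to observe that \code{contains} in \SOFT{} is purely read-only, so the argument is essentially immediate. First I would walk through the code of \code{contains} (Listing~\ref{algo:soft_contains}) and note that every step is either a local computation or a \code{load} of a \emph{next} pointer: the loop reads \code{curr->key} and \code{curr->next}, line~\ref{code.read_state} reads the embedded state, and the remaining lines only compare the locally stored key and state and return. No line writes to a shared location — in particular there is no store to a volatile node's fields or \emph{next} pointer, no allocation or mutation of a PNode, and no \code{psync}. Hence a \code{contains} execution never causes a transition of any volatile node's state (Claim~\ref{claim:soft_state_changes}) or of any persistent node's state (Claim~\ref{claim:soft_persistent_state_changes}), and never alters which nodes are reachable.

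Given this, I would conclude that a \code{contains} execution leaves both the set of nodes that are logically in the volatile list (Definition~\ref{definition:soft_logically}) and the collection of \emph{valid} and not \emph{removed} persistent nodes — exactly the ones the recovery procedure of Section~\ref{sub_sec:soft_recovery} reinstates — completely unchanged. Consequently a non-surviving \code{contains}, i.e.\ one interrupted by a crash before it returns (Definition~\ref{definition:soft_dl_surviving_contains}), contributes nothing to the abstract state of the set: whatever prefix of its execution ran before the crash performed no shared writes, so the post-crash recovery and all subsequent operations behave exactly as if that \code{contains} had never been invoked. This is precisely what ``takes no effect'' means, which establishes the claim.

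The only point needing a sentence of care is the interaction with the crash event itself, since a crash may fall at any line of \code{contains}; but because every prefix of the operation is a sequence of reads only, nothing has reached the NVRAM and nothing has been written to volatile shared memory either, so there is genuinely nothing to undo or account for. This mirrors (and is in fact stronger than) the analogous remark in the link-free proof, where \code{contains} may issue flushes that are harmless because they could have occurred implicitly anyway; in \SOFT{} no flushes occur at all, so no obstacle arises and the claim is immediate.
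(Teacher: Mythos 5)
Your proposal is correct and takes essentially the same route as the paper, which disposes of this claim in one sentence by noting that a contains operation does not change the list; your line-by-line verification that \SOFT{}'s \code{contains} performs no shared writes, no PNode mutations, and no \code{psync} is simply a more detailed spelling-out of that observation (and your remark contrasting it with the link-free \code{contains}, which may flush, matches the paper's treatment there as well).
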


\begin{proof}
The claim follows directly from the fact that a contains operation (and in particular, an operation with no response), does not change the list.
\end{proof}

\begin{theorem}\label{theorem:soft-durable}
The \SOFT{} list is durable linearizable. 
\end{theorem}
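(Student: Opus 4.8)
The plan is to assemble the theorem from the per-operation machinery already established, exactly in the style of the link-free case (Theorem~\ref{theorem:link_free_durable_linearizable}). First I would recall that, by Definition~\ref{def:dl}, it suffices to show that every concurrent execution $E$ of \SOFT{} operations is well-formed and that $ops(E)$ is linearizable. Well-formedness is immediate: each thread issues one operation at a time, and the recovery threads start only after the recovery of the previous crash has completed, so $E|T$ is sequential for every thread $T$.

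For the linearizability of $ops(E)$ I would build the linearization order $L$ from the linearization points. The operations placed in $L$ are exactly the surviving operations of Definitions~\ref{definition:soft_dl_surviving_insert}, \ref{definition:soft_dl_surviving_remove} and~\ref{definition:soft_dl_surviving_contains}: every operation that completes between two crashes (in particular every completed operation of $ops(E)$) is surviving by those definitions, and the pending operations declared surviving are precisely the ones we match with an added response at the end of $\hat{E}$. Each surviving operation carries a linearization point; by Claims~\ref{claim:soft_dl_insert_linearization}, \ref{claim:soft_dl_remove_linearization} and~\ref{claim:soft_dl_contains_linearization} these points lie between the operation's invocation and its response (for operations linearized during recovery, recall that the response point is taken to be the end of recovery), the operation takes effect atomically there, and its return value is consistent with the logical contents of the list at that point. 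Ordering the surviving operations by their linearization points therefore yields a sequential execution $S$ that is legal for the set and refines the happens-before order of $\hat{E}$, which is exactly what Definition~\ref{def:linearizability} demands. Claims~\ref{claim:soft_dl_nonsurviving_insert}, \ref{claim:soft_dl_nonsurviving_remove} and~\ref{claim:soft_dl_nonsurviving_contains} supply the complementary fact that the omitted (non-surviving) operations leave no trace on the logical state, so dropping them from $L$ does not invalidate $S$.

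The only genuinely delicate point is the crash/recovery boundary, and I would devote the bulk of the argument to it, checking three things. (i) The ``logically in the list'' predicate (Definition~\ref{definition:soft_logically}), on which all the correctness claims are phrased, survives crashes consistently: the recovery procedure of Section~\ref{sub_sec:soft_recovery} rebuilds the volatile list with exactly one node per \emph{valid and not removed} PNode, and by Claims~\ref{claim:soft_persistent_state_changes}, \ref{claim:soft_persistent_inserted_nodes} and~\ref{claim:soft_persistent_removed_nodes} these are precisely the keys whose last pre-crash state places them logically in the list; combined with Claim~\ref{claim:soft_sorted} this set is genuinely a set (no key twice). (ii) Every linearization point is well placed in time: the points that fall ``during recovery'' or ``immediately after the crash event'' all precede the invocation of any post-crash operation (recovery finishes first) and follow the invocation of the surviving operation that owns them, so no happens-before edge of $\hat{E}$ is violated, and among themselves these recovery-time points may be ordered arbitrarily since the corresponding operations are pairwise incomparable under $\prec_{\hat{E}}$. (iii) The logical-state transition effected by recovery equals the cumulative effect of all surviving-but-not-yet-completed updates, so that $S$ stays legal across the boundary. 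I expect (i) to be the main obstacle: one must match, case by case, each of the four volatile-node states and the three PNode states at the instant of the crash against what recovery reconstructs, using the state-transition claims to exclude the intermediate configurations. Once that correspondence is pinned down, the theorem follows by stitching the claims above together, verbatim in the manner of Theorem~\ref{theorem:link_free_durable_linearizable}.
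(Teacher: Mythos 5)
Your proposal is correct and follows essentially the same route as the paper: the paper's proof of Theorem~\ref{theorem:soft-durable} likewise just assembles Definitions~\ref{definition:soft_dl_surviving_insert}, \ref{definition:soft_dl_surviving_remove}, \ref{definition:soft_dl_surviving_contains} with Claims~\ref{claim:soft_dl_insert_linearization}, \ref{claim:soft_dl_remove_linearization}, \ref{claim:soft_dl_contains_linearization} (surviving operations take effect at their linearization points) and Claims~\ref{claim:soft_dl_nonsurviving_insert}, \ref{claim:soft_dl_nonsurviving_remove}, \ref{claim:soft_dl_nonsurviving_contains} (non-surviving operations have no effect). Your extra discussion of the crash/recovery boundary is sound but is already discharged inside those per-operation claims, so it adds detail rather than a different argument.
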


\begin{proof}
By Definition~\ref{definition:soft_dl_surviving_insert}, \ref{definition:soft_dl_surviving_remove} and~\ref{definition:soft_dl_surviving_contains}, all the operations that are fully executed between two crashes (and some of the operations that are halted due to crash events), have a linearization point. 
By Claim~\ref{claim:soft_dl_insert_linearization}, \ref{claim:soft_dl_remove_linearization} and~\ref{claim:soft_dl_contains_linearization}, each operation takes effect instantaneously at its linearization point. By Claim~\ref{claim:soft_dl_nonsurviving_insert}, \ref{claim:soft_dl_nonsurviving_remove} and~\ref{claim:soft_dl_nonsurviving_contains}, operations for which we did not define linearization points (non-surviving operations), do not take effect at all.
In summary, the \SOFT{} list is durable linearizable by definition~\citep{10.1007/978-3-662-53426-7_23}.
\end{proof}

\subsection{Lock-Freedom}\label{chap:soft_lock_freedom}
Similarly to (and following the discussion in) Section~\ref{chap:link_free_lock_freedom}, in this section we prove that in crash-free executions, at least one of the operations terminates.
To derive a contradiction, assume there is some execution for which no executing operation terminates after a certain point.
Notice that we can assume that no operation is invoked after this point, and that the set of running operations is finite (since there is a finite number of system threads).
The rest of the proof relates to the suffix $\alpha$ of the execution, starting from this point.

\begin{claim} \label{claim:no_state_changes}
There is a finite number of state changes during $\alpha$.
\end{claim}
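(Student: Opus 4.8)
The plan is to reduce the statement to a finite counting argument resting on Claim~\ref{claim:soft_state_changes}. That claim tells us the state of any single volatile node is monotone along the three-step chain ``intend to insert'' $\to$ ``inserted'' $\to$ ``inserted with intention to delete'' $\to$ ``deleted'', so each node changes state at most three times in the \emph{whole} execution. Hence it suffices to prove that only finitely many distinct volatile nodes undergo a state change during the suffix $\alpha$; then the total number of state changes in $\alpha$ is at most three times that finite bound.

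To bound the number of nodes involved, I would first observe that a node's state is modified only in line~\ref{code.soft_complete_insert} of Listing~\ref{algo:soft_insert} and in lines~\ref{code.soft_mark_node} and~\ref{code.soft_complete_remove} of Listing~\ref{algo:soft_remove}, and that in each of these the node being changed is the local variable (\code{resultNode} or \code{currRef}), which is either a node returned by a preceding \code{find} call or the \code{newNode} of a successful CAS in line~\ref{code.soft_link_node} (even when the line is executed by a helping thread, this variable is still one of those two kinds, so helping does not affect the argument). Now: the volatile nodes already in existence when $\alpha$ begins are finitely many, since only finitely many steps precede $\alpha$. Among the nodes \emph{allocated during} $\alpha$ (in line~\ref{code.allocate_new_node}), even though a single livelocked \code{insert} may allocate unboundedly many of them, each \code{insert} operation performs \emph{at most one} successful CAS in line~\ref{code.soft_link_node} — after a success it leaves the loop — so at most one fresh node per pending \code{insert}, hence only finitely many, is ever linked into the list during $\alpha$. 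A node allocated in $\alpha$ that is never linked can never be returned by \code{find} (by Claim~\ref{claim:soft_dl_find} a node returned by \code{find} is reachable, hence, by Invariants~\ref{node_inv:no_infant} and~\ref{node_inv:is_reachable} of Claim~\ref{claim:soft_nodes}, it is not an infant and so has been linked in line~\ref{code.soft_link_node}) and cannot be the \code{newNode} of a successful link CAS either; so it is never the argument of lines~\ref{code.soft_complete_insert}, \ref{code.soft_mark_node} or~\ref{code.soft_complete_remove}, and its state never changes. Combining, the set of nodes whose state changes during $\alpha$ is contained in (nodes present at the start of $\alpha$) $\cup$ (fresh nodes linked during $\alpha$), a finite set, and three changes per node finishes the proof.

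The main obstacle I anticipate is precisely that $\alpha$ is \emph{not} assumed free of node allocations: a thread spinning in the \code{insert} retry loop manufactures new volatile nodes without bound, so the naive ``finitely many nodes, three changes each'' reasoning fails and one genuinely needs the observation above that these transient, never-linked nodes never reach the three state-changing lines (crucially, using that an \code{insert} does at most one successful link CAS regardless of whether it later terminates, which avoids having to argue termination of the helping loop in lines~\ref{code.soft_insert_check_state}--\ref{code.soft_complete_insert}). A minor point to handle carefully is that the relaxed store in line~\ref{code.soft_init_state} that stamps a fresh node with ``intend to insert'' is an initialization rather than a transition between the four states — consistent with its omission from the transition list of Claim~\ref{claim:soft_state_changes} — and therefore is not counted as a state change.
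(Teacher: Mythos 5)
Your proposal is correct, but it takes a genuinely different route from the paper. The paper's proof is a per-operation count: an insert must return after a successful CAS at line~\ref{code.soft_complete_insert} (so it contributes at most one successful state change), a remove must return after line~\ref{code.soft_complete_remove} and contributes at most two successful state changes (lines~\ref{code.soft_mark_node} and~\ref{code.soft_complete_remove}), and since the set of pending operations in $\alpha$ is finite and no new operations are invoked, the total is finite -- no node-counting is needed at all, and the unbounded allocation of never-linked nodes by a livelocked insert is simply irrelevant to that tally. You instead count per node: by the monotone transition chain of Claim~\ref{claim:soft_state_changes} each node changes state at most three times ever, and you bound the set of nodes that can be the target of lines~\ref{code.soft_complete_insert}, \ref{code.soft_mark_node}, \ref{code.soft_complete_remove} during $\alpha$ by (nodes existing at the start of $\alpha$) $\cup$ (at most one freshly linked node per pending insert, since the linking loop is left after one successful CAS at line~\ref{code.soft_link_node}), using Claim~\ref{claim:soft_dl_find} and Invariants~\ref{node_inv:no_infant} and~\ref{node_inv:is_reachable} of Claim~\ref{claim:soft_nodes} to rule out never-linked allocations as targets. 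Your version is heavier machinery but makes explicit why unbounded allocation cannot produce state changes, and it does not need the ``the operation would then terminate'' observation; the paper's version is shorter and needs neither the monotonicity claim nor the reachability invariants. Both hinge on the finiteness of the set of running operations, and both correctly treat the store at line~\ref{code.soft_init_state} as initialization rather than a transition.
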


\begin{proof}
An insert operation must terminate after executing line~\ref{code.soft_complete_insert} in Listing~\ref{algo:soft_insert}.
Likewise, a remove operation must terminate after executing line~\ref{code.soft_complete_remove} in Listing~\ref{algo:soft_remove}).
In addition, any remove operation includes at most two successful state changes (in lines~\ref{code.soft_mark_node} and~\ref{code.soft_complete_remove} of Listing~\ref{algo:soft_remove}).
Since the number of running operations is finite by assumption, the number of state changes is finite as well.
\end{proof}

\begin{claim} \label{claim:no_new_reachables}
There is a finite number of pointer changes during $\alpha$.
\end{claim}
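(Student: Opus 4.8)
The plan is to mirror the link-free argument of Claim~\ref{claim:link_free_no_new_reachables}. First I would enumerate the only lines that write a volatile node's \emph{next} pointer: line~\ref{code.soft_trim_cas} of Listing~\ref{algo:soft_find} (the \code{trim} CAS that unlinks a ``deleted'' node), line~\ref{code.soft_init_state} of Listing~\ref{algo:soft_insert} (initializing the \emph{next} pointer of a freshly allocated, still unlinked node), and line~\ref{code.soft_link_node} of Listing~\ref{algo:soft_insert} (the CAS that links a new node into the list). A successful CAS at line~\ref{code.soft_link_node} is immediately followed by the helping block (lines~\ref{code.soft_init_pnode}--\ref{code.soft_complete_insert}): the \code{create} call terminates, and the \code{stateCAS} loop terminates because its guard fails once the node's state is ``inserted'', and at most one \code{stateCAS} can effect that transition per node. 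Hence such a CAS forces the executing \code{insert} to return, contradicting the assumption that no operation terminates during $\alpha$; so no successful CAS at line~\ref{code.soft_link_node} occurs in $\alpha$.

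Next I would bound the number of successful \code{trim} CASes in $\alpha$. Each of them unlinks a node whose state is ``deleted''. By Claim~\ref{claim:soft_deleted_state} such a node's \emph{next} pointer is frozen forever, and by Claim~\ref{claim:soft_sorted} the reachable nodes form a single sorted chain, so a given ``deleted'' node can be successfully unlinked at most once. A node acquires the ``deleted'' state only through a state change, and by Claim~\ref{claim:no_state_changes} there are only finitely many state changes in $\alpha$; adding the finitely many nodes that are already reachable and ``deleted'' at the start of $\alpha$ (the prefix before $\alpha$ is finite), the number of successful \code{trim} CASes in $\alpha$ is finite, which bounds all \emph{next}-pointer changes of reachable nodes.

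Finally I would deal with the initializing writes at line~\ref{code.soft_init_state}, which occur once per iteration of the \code{while(true)} loop of \code{insert}. Since no successful CAS at line~\ref{code.soft_link_node} happens in $\alpha$, every such iteration entered during $\alpha$ ends with a failed CAS at line~\ref{code.soft_link_node}; that failure witnesses a successful CAS on the same \code{pred->next} after the preceding \code{find} read it -- necessarily a successful \code{trim}. Because a thread re-reads \code{pred->next} inside the \code{find} of its next iteration, consecutive failed CASes of one thread are witnessed by \emph{distinct} successful trims, so each running \code{insert} performs at most $1 + (\text{number of successful trims in }\alpha)$ iterations; summing over the finitely many running threads bounds the number of line~\ref{code.soft_init_state} writes, and hence all pointer changes, by a finite number.

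The step I expect to be the main obstacle is this last charging argument: it is tempting to ``bound the number of pointer changes by the number of pointer changes'', and the care needed is to observe that trims are bounded \emph{independently} (via Claim~\ref{claim:no_state_changes}) and that a single thread cannot reuse the same witnessing trim across two of its failed CASes, so the count of \code{insert} restarts is genuinely finite rather than circular.
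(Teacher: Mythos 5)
Your first step (no successful CAS at line~\ref{code.soft_link_node} can occur in $\alpha$, since the helping block after it terminates and would complete the insert) matches the paper. After that your route diverges, and the divergent steps are where the trouble lies. The paper bounds the trims by a shrinking argument: a successful CAS at line~\ref{code.soft_trim_cas} implies \code{pred} was not ``deleted'' at the moment of the CAS (Claim~\ref{claim:soft_deleted_state}, because a node's state bits live in its \emph{own} \code{next} pointer and are part of the compared value), hence \code{pred} is reachable by Invariants~\ref{node_inv:no_infant} and~\ref{node_inv:is_reachable} of Claim~\ref{claim:soft_nodes}, and the new target \code{succ} is already reachable; so no node ever becomes reachable during $\alpha$, the list only shrinks, and the trims are finite. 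Your substitute step --- ``a given deleted node can be successfully unlinked at most once'' --- is true, but the reasons you give (the \emph{unlinked} node's frozen \code{next} pointer, the sortedness of the reachable chain) do not establish it: nothing in your write-up rules out the same deleted node being CASed out of several different predecessors' \code{next} fields. What rules it out is exactly the pred-side argument above (a successful trim has a reachable, non-deleted \code{pred} at CAS time, so the victim loses its unique reachable predecessor and, since nothing becomes reachable in $\alpha$, can never again appear in a \code{next} field that hosts a successful trim). Once you have that, you are essentially back at the paper's shrinking argument, and the detour through ``one trim per deleted node, charged to Claim~\ref{claim:no_state_changes}'' buys nothing.

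Your third part also contains a step that fails as stated: a failed CAS at line~\ref{code.soft_link_node} is \emph{not} necessarily witnessed by a successful trim on \code{pred->next}. The expected value of that CAS includes \code{pred}'s state bits, so a \code{stateCAS} on \code{pred} (another thread completing \code{pred}'s insertion, or moving it to ``inserted with intention to delete'' or ``deleted'') also changes \code{pred->next} and makes the link CAS fail. The repair is easy --- admit state changes as witnesses, which are finitely many by Claim~\ref{claim:no_state_changes}, and your disjoint-interval argument still yields distinct witnesses per thread --- but as written the charging claim is wrong. Note finally that the paper avoids this whole third part by counting (as in the link-free analogue, Claim~\ref{claim:link_free_no_new_reachables}) only pointer changes of list nodes, so the per-iteration stores at line~\ref{code.soft_init_state}, which touch only infant nodes, never enter the count; your explicit bound on insert restarts is only needed if one insists on counting those stores as well.
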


\begin{proof}
The loop in lines~\ref{code.soft_insert_start_link_loop}--\ref{code.soft_insert_end_link_loop} of Listing~\ref{algo:soft_insert} must eventually terminate after a successful CAS execution in line~\ref{code.soft_link_node} and therefore, there are no pointer updates in lines~\ref{code.soft_init_state} and~\ref{code.soft_link_node} of Listing~\ref{algo:soft_insert}. 
Thus, pointer updates can only occur in line~\ref{code.soft_trim_cas} of Listing~\ref{algo:soft_find}.
When executing this update, the \code{pred} node is reachable from Claim~\ref{claim:soft_deleted_state} and Invariants~\ref{node_inv:no_infant} and~\ref{node_inv:is_reachable} of Claim~\ref{claim:soft_nodes}.
Since \code{curr} is \code{pred}'s successor and \code{succ} is \code{curr}'s successor right before this change, \code{succ} is also reachable before this step.
Therefore, no node becomes reachable when executing this CAS.
Since this is the only possible pointer change, the list can only shrink, and the number of such pointer changes is finite.
\end{proof}

\begin{theorem} \label{theorem:soft_lock_free}
The \SOFT{} list is lock-free. 
\end{theorem}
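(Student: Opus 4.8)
The plan is to derive a contradiction exactly as in the proof of Theorem~\ref{theorem:link_free_lock_free}, now using Claims~\ref{claim:no_state_changes} and~\ref{claim:no_new_reachables} as the two ingredients. Those claims say that in the suffix $\alpha$ there are only finitely many state changes and finitely many pointer changes; therefore $\alpha$ itself has a suffix $\alpha'$ during which the volatile list is completely \emph{stable} --- no reachable node's \emph{next} pointer changes and no node's state changes. I would fix such an $\alpha'$ and show that, despite the standing assumption that no operation terminates in $\alpha$, some operation running in $\alpha'$ is nonetheless forced to return, which is the contradiction.

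The core observation is that on a stable list every control path of an operation either reaches a \code{return} or would force a state or pointer change. By Claim~\ref{claim:soft_sorted} the list is finite and strictly sorted by key, so the traversal loop of \code{find} (from line~\ref{code.soft_find_read_next}) and the traversal loop of \code{contains} (lines~\ref{code.soft_contains_loop_start}--\ref{code.soft_contains_loop_end}) advance monotonically and must terminate, while \code{trim} and \code{destroy} are straight-line. It then remains to chase the remaining paths. A \code{contains} that reaches line~\ref{code.read_state} returns immediately. An \code{insert} whose \code{find} returns a node carrying the queried key either returns at line~\ref{code.soft_fail_insert}, or --- if that node's state is ``intend to insert'' --- enters the helping loop at line~\ref{code.soft_insert_check_state}; but in $\alpha'$ the \code{stateCAS} at line~\ref{code.soft_complete_insert} either succeeds (a state change) or fails because the state or the embedded pointer moved (again a change), both impossible, so the loop cannot spin. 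An \code{insert} whose \code{find} returns a different key attempts the linking CAS at line~\ref{code.soft_link_node}: success is a pointer change and failure means \code{pred->next} moved, so in $\alpha'$ the \code{continue} branch is unreachable and the CAS must already have succeeded --- contradiction. The same reasoning handles \code{remove}: it returns at line~\ref{code.soft_fail_remove_1_return} or line~\ref{code.soft_fail_remove_2_return}, and otherwise the \code{stateCAS} calls at lines~\ref{code.soft_mark_node} and~\ref{code.soft_complete_remove} would witness a state change forbidden in $\alpha'$; so in every case the operation reaches its \code{return}.

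Finally I would assemble the cases: take any operation still running in $\alpha'$; by the above it performs only finitely many further steps before returning, contradicting the assumption that no operation terminates in $\alpha$. Hence the \SOFT{} list is lock-free (in the absence of crashes, as discussed in Section~\ref{chap:link_free_lock_freedom}). The step I expect to need the most care is ruling out an infinite spin in the helping loops (lines~\ref{code.soft_insert_check_state}--\ref{code.soft_complete_insert} and lines~\ref{code.soft_while1}--\ref{code.soft_mark_node}, \ref{code.soft_while2}--\ref{code.soft_complete_remove}): one must verify, using the semantics of \code{stateCAS} together with the monotone state progression of Claim~\ref{claim:soft_state_changes} and the stability guaranteed by Claim~\ref{claim:soft_deleted_state}, that any iteration of such a loop that does not immediately exit necessarily performs a \code{stateCAS} whose success or failure entails an observable state or pointer change --- precisely what $\alpha'$ excludes.
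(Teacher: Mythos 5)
Your proposal is correct and follows essentially the same route as the paper's proof: use Claims~\ref{claim:no_state_changes} and~\ref{claim:no_new_reachables} to obtain a stable suffix $\alpha'$, invoke Claim~\ref{claim:soft_sorted} to show all traversals terminate on the finite sorted list, and conclude every pending operation must terminate, contradicting the assumption. Your additional case analysis of the helping loops and CAS outcomes merely spells out what the paper summarizes as ``every insert and remove operation must be unsuccessful, and also terminate.''
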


\begin{proof}
From Claims~\ref{claim:no_state_changes} and~\ref{claim:no_new_reachables}, after a certain point, there are no state or pointer changes.
Therefore, we consider the suffix $\alpha'$ of the execution that contains no state or pointer changes.
Obviously, starting from this point, the list becomes stable, and does not change anymore.

Since the list is finite, from Claim~\ref{claim:soft_sorted}, every find and contains execution eventually ends.
In addition, every insert and remove operation must be unsuccessful, and also terminate (since calls to the find method always terminate).
We get a contradiction and therefore, the implementation is lock-free.
\end{proof}

\end{document}